\definecolor{BeauBlue}{rgb}{0, 0.2, .9}
\definecolor{BeauOrange}{rgb}{.8, .1, 0}
\numberwithin{equation}{section}
\DeclareRobustCommand\widecheck[1]{{\mathpalette\@widecheck{#1}}}
\def\@widecheck#1#2{%
    \setbox\z@\hbox{\m@th$#1#2$}%
    \setbox\tw@\hbox{\m@th$#1%
       \widehat{%
          \vrule\@width\z@\@height\ht\z@
          \vrule\@height\z@\@width\wd\z@}$}%
    \dp\tw@-\ht\z@
    \@tempdima\ht\z@ \advance\@tempdima2\ht\tw@ \divide\@tempdima\thr@@
    \setbox\tw@\hbox{%
       \raise\@tempdima\hbox{\scalebox{1}[-1]{\lower\@tempdima\box
\tw@}}}%
    {\ooalign{\box\tw@ \cr \box\z@}}}
\newcommand{\leqnos}{\tagsleft@true\let\veqno\@@leqno}
\newcommand{\reqnos}{\tagsleft@false\let\veqno\@@eqno}
\newcommand{\triple}[1]{{\left\vert\kern-0.25ex\left\vert\kern-0.25ex\left\vert #1 
    \right\vert\kern-0.25ex\right\vert\kern-0.25ex\right\vert}}
\newcommand{\dd}{\mathrm{d}}
\newcommand{\R}{\mathbb{R}}
\newcommand{\RR}{\mathcal{R}}
\newcommand{\N}{\mathbb{N}}
\newcommand{\tr}{\mathrm{tr}}
\newcommand{\op}{\mathrm{op}}
\newcommand{\veps}{\varepsilon}
\newcommand{\W}{\mathcal{W}}
\newcommand{\U}{\mathcal{U}}
\newcommand{\F}{\mathcal{F}}
\newcommand{\cN}{\mathcal{N}}
\newcommand{\h}{\mathfrak{h}}
\def\tr{\mathrm{tr}}
\def\bR{\mathbb{R}}
\def\bN{\mathbb{N}}
\def\cH{\mathcal{H}}
\def\cN{\mathcal{N}}
\newtheorem{theorem}{Theorem}[section] 
\newtheorem{proposition}[theorem]{Proposition} 
\newtheorem{corollary}[theorem]{Corollary}
\newtheorem{lemma}[theorem]{Lemma}
\newtheorem{remark}[theorem]{Remark}
\newtheorem{assumption}[theorem]{Assumption}
\title{Effective Dynamics of Local Observables for Extended Fermi Gases in the High-Density Regime}
\author[1]{Luca Fresta}
\author[2]{Marcello Porta}
\author[3]{Benjamin Schlein}
\affil[1]{Hausdorff Center for Mathematics,
University of Bonn, Endenicher Allee 60
53115 Bonn, Germany}
\affil[2]{SISSA, Via Bonomea 265, 34136 Trieste, Italy}
\affil[3]{Institute of Mathematics, University of Zurich, Winterthurerstrasse 190, 8057 Zurich, Switzerland}
\begin{document}

\maketitle

\begin{abstract}
We give a rigorous derivation of the Hartree equation for the many-body dynamics of pseudo-relativistic Fermi systems at high density $\varrho \gg 1$, on arbitrarily large domains, at zero temperature. With respect to previous works, we show that the many-body evolution can be approximated by the Hartree dynamics locally, proving convergence of the expectation of observables that are supported in regions with fixed volume, independent of $\varrho$. The result applies to initial data describing fermionic systems at equilibrium confined in arbitrarily large domains, under the assumption that a suitable local Weyl-type estimate holds true. The proof relies on the approximation of the initial data through positive temperature quasi-free states, that satisfy strong local semiclassical bounds, which play a key role in controlling the growth of the local excitations of the quasi-free state along the many-body dynamics.
\end{abstract}

\tableofcontents     

\section{Introduction}

We are interested in the time evolution of extended Fermi gases, at high density $\varrho$. We consider systems of $N$ fermions with a relativistic dispersion relation (pseudo-relativistic fermions), interacting through a smooth, rapidly decaying, two-body potential $V : \bR^3 \to \bR$. The Hamilton operator generating the dynamics has the form 
\begin{equation}\label{eq:ham0} H_{N} = \sum_{j=1}^N \sqrt{1-\veps^2 \Delta_{x_j}} + \veps^3 \sum_{i<j}^{N} V (x_i -x_j) \end{equation} 
and, according to fermionic statistics, it acts on the Hilbert space $L^2_a (\bR^{3N})$, the subspace of $L^2 (\bR^{3N})$ consisting of functions that are antisymmetric with respect to permutations. In (\ref{eq:ham0}) and throughout the paper, we shall choose $\veps = O(\varrho^{-1/3})$. This choice guarantees that, for physically relevant states, both kinetic and potential energy per particle are of order one, in the limit of large $\varrho$. 

Initially, the $N$ particles are trapped by an external potential in a set $\Lambda$ with volume $|\Lambda| = N / \varrho$. At zero temperature, the Fermi gas relaxes into the ground state of the trapped Hamiltonian. At high density, we can expect that the many-body interaction can be effectively replaced by an averaged one-body potential and that the ground state of the trapped Hamiltonian can be approximated by a Slater determinant, minimizing the corresponding Hartree-Fock energy (or the reduced Hartree-Fock energy, since the exchange term is expected to be subleading in the limit of large $\varrho$). 

Motivated by these observations, we are going to study the solution of the many-body Schr\"odinger equation 
\begin{equation}\label{eq:schr} i \veps \partial_t \psi_t = H_{N} \psi_t =  \Big[  \sum_{j=1}^N \sqrt{1-\veps^2 \Delta_{x_j}} + \veps^3 \sum_{ i<j }^{N} V (\hat x_i - \hat x_j) \Big] \psi_t \end{equation} 
for initial data $\psi_{t=0}$ that are close to a Slater determinant, with reduced one-particle density matrix having the form $\omega_\mu = \chi (H \leq \mu)$, for a one-particle Hamiltonian $H = \sqrt{1-\veps^2 \Delta} + V_\text{ext}$ (where the external potential can also include the contribution of the direct term in the interaction) and with the chemical potential $\mu \in \bR$ chosen so that $\tr  \,\omega_{\mu} = N$. Our goal will be to show that, for large $\varrho$, the solution of (\ref{eq:schr}) remains close to a Slater determinant, with reduced one-particle density matrix evolved through the time-dependent Hartree equation
\begin{equation}\label{eq:hartree0} i\veps \partial_t \omega_t = \big[ \sqrt{1-\veps^2 \Delta} + \veps^{3} (V * \varrho_t) , \omega_t \big] 
\end{equation} 
with $\varrho_t (x) = \omega_t (x;x)$ and with initial datum $\omega_{t=0} = \omega_\mu$. 

In the last decades, there have been substantial efforts in the mathematical physics community to show that the many-body Schr\"odinger evolution of Fermi gases can be approximated by the Hartree dynamics. Most results have been obtained for particles with non-relativistic dispersion in the mean-field regime, where $\varrho = N$ (or equivalently, where the $N$ particles are initially trapped in a volume of order one, so that each particle interacts with all other particles in the system). In this setting, closeness to the Hartree evolution was first established in \cite{EESY} for analytic interaction potentials and for short times (convergence to the Vlasov equation, which approximates the Hartree evolution in the semiclassical limit, has been  known since \cite{Sp,NS}). This result has been extended to a larger class of regular interaction potentials and to arbitrary fixed times in \cite{BPS} (and later in \cite{PP}), for initial data describing Slater determinants (or perturbations thereof) with reduced one-particle density matrices satisfying certain semiclassical commutator estimates. For mixed quasi-free initial data, which are relevant at positive temperature, convergence towards Hartree dynamics in the mean-field regime was established in \cite{BJPSS}, again under the assumption that the initial data exhibit an appropriate semiclassical structure. Results for singular interactions have been later obtained in \cite{PRSS,CLS}. A norm approximation for the many-body dynamics of a homogeneous Fermi gas in terms of a quasi-free bosonic evolution has been obtained in \cite{BNPSS}, using rigorous bosonization ideas developed to study the correlation energy of mean-field fermions in \cite{BNPSSa,BNPSSb,BPSScorr} (for an alternative approach to the correlation energy, see also \cite{CHN,CHN2,CHN3}). For mean-field fermions with a relativistic dispersion, whose many-body evolution is described by (\ref{eq:schr}), with $\veps = N^{-1/3}$, the Hartree equation (\ref{eq:hartree0}) was derived in \cite{BPS2}, adapting the ideas of \cite{BPS}. Other mean-field type scalings have been considered in \cite{BGGM,FK,BBPPT}. 

It is a natural question to understand how to extend these results to the thermodynamic setting. In \cite{FPS}, we recently established convergence towards Hartree dynamics also for extended Fermi gases, where $\varrho$ is large but now independent of $N$. In that work, we considered fermions with both non-relativistic and relativistic dispersions. In the relativistic case, we studied the solution of (\ref{eq:schr}) for many-body initial data close to a Slater determinant, with reduced one-particle density matrix $\omega$ satisfying the local trace-norm bounds 
\begin{equation}\label{eq:semi} 
\begin{split} 
\Big\| \frac{1}{1+ |\hat x - z|^{4n}} \omega \Big\|_\text{tr} \leq C &\veps^{-3} , \qquad \Big\| \frac{1}{1+|\hat x - z|^{4n}} [ e^{i p \cdot \hat{x} }, \omega ] \Big\|_\text{tr}  \leq C \veps^{-2} (1+|p|),\\& \Big\| \frac{1}{1+|\hat x - z|^{4n}} [ \veps \nabla , \omega ] \Big\|_\text{tr} \leq C \veps^{-2} , \end{split} 
\end{equation} 
for all $z \in \bR^3$ and $n$ large enough. While the first estimate in (\ref{eq:semi}) implies that the local density of particles around the point $z \in \bR^3$ is at most of order $\veps^{-3}$, for all $z \in \bR^3$, the other two bounds guarantee that $\omega$ exhibits a local semiclassical structure, in the sense that its integral kernel $\omega (x;y)$ varies on the length scale $\veps$, in the $(x-y)$ direction, and on a scale of order one, in the $(x+y)$ direction. Assuming the bounds (\ref{eq:semi}) to hold at time $t=0$ and propagating them along the Hartree dynamics (\ref{eq:hartree0}) (showing, in other words, that the solution $\omega_t$ of (\ref{eq:hartree0}) still satisfies (\ref{eq:semi}), though with a worse constant $C_t = C \exp (c |t|)$), in \cite{FPS} we proved that the solution $\psi_t$ of (\ref{eq:schr}) remains close to a Slater determinant, with reduced one-particle density matrix determined by the Hartree equation (\ref{eq:hartree0}). More precisely, if $\gamma^{(1)}_{t}$ denotes the one-particle reduced density matrix associated with $\psi_t$, we showed that 
\begin{equation}\label{eq:FPS} \| \gamma^{(1)}_t - \omega_t \|_\text{HS} \leq C \exp (c \exp (c |t|)) \, \veps^{1/2}  \sqrt{N}  \end{equation} 
for all $t \in \bR$, which is small if compared with $\| \gamma^{(1)}_t \|_\text{HS} , \| \omega_t \|_\text{HS} \simeq \sqrt{N}$.

Equation (\ref{eq:FPS}) establishes convergence towards the Hartree dynamics in a global sense. Namely, it shows that the total number of excitations of the Slater determinant at time $t \in \bR$ is small, compared to the total number of particles $N$: while the number of excitations is still proportional to $N$, the ratio tends to zero as $\veps \to 0$. A more refined question is whether it is possible to establish convergence towards the Hartree dynamics locally, when testing against a local observable. This is the question we address in the present paper, giving a positive answer. In our main theorem, Theorem \ref{thm:main} below, we consider the expectation in the state $\psi_t$, solution of (\ref{eq:schr}), of one-particle observables $\sum_{j=1}^N \mathcal{O}_z (\hat{x}_j)$, where $\mathcal{O}_{z}(x)$ is a smooth and fast decaying function centered at $z$, essentially measuring the number of fermions in a region with volume of order one around the point $z \in \bR^3$, and we prove that it is close to the expectation in the Slater determinant with reduced density $\omega_t$, solving (\ref{eq:hartree0})  with initial datum $\omega_{t=0} = \omega_\mu$. More precisely, we show that: 
\begin{equation}\label{eq:mainres}  \big| \tr \, \mathcal{O}_{z} (\gamma^{(1)}_t - \omega_t) \big| \leq C \exp (c \exp (c|t|)) \veps^{-3+\delta} \end{equation} 
for some $0<\delta \leq 1$, which is again small compared to $\tr \, \mathcal{O}_{z} \gamma^{(1)}_t , \tr \, \mathcal{O}_{z} \omega_t \simeq \veps^{-3}$. Remarkably, this result only requires $\psi_0$ to be close to the Slater determinant with reduced density $\omega_0$ in a ball centered at $z$ with radius of order $\veps^{-\delta} |t|$.

The proof of (\ref{eq:mainres}) is more involved than the proof of (\ref{eq:FPS}) obtained in \cite{FPS}. In particular, it requires stronger control on the local density and on the local semiclassical structure of the solution $\omega_t$ of the Hartree equation, which translates into stronger assumptions on the initial data. Instead of (\ref{eq:semi}), we need, roughly speaking, bounds of the form 
\begin{equation}\label{eq:semi2} 
\begin{split} 
\Big\| \frac{1}{1+| \hat x - z |^{4n}} \omega  \frac{1}{1+| \hat x - z' |^{2n}} \Big\|_{\text{tr}} &\leq \frac{C \veps^{-3}}{1+|z-z'|^{2n - 4}}\;,\\
\Big\|  \frac{1}{1+| \hat x - z |^{4n}} \big[ e^{i p \cdot \hat{x}} , \omega \big]  \frac{1}{1+| \hat x - z' |^{2n}} \Big\|_\text{tr}  &\leq \frac{C \veps^{-2} (1+|p|)}{1+|z-z'|^{2n}}\;, \\  
\Big\|  \frac{1}{1+| \hat x - z |^{4n}} \big[ \veps \nabla , \omega \big]  \frac{1}{1+| \hat x - z' |^{2n}} \Big\|_\text{tr} &\leq \frac{C \veps^{-2}}{1+|z-z'|^{2n}} ,
\end{split}
\end{equation}
for $n$ large enough and for $z_1, z_2 \in \bR^3$, capturing also the decay of correlations in space. Because of the lack of regularity of the characteristic function, it is not clear whether the initial Slater determinants we are interested in, with one-body reduced density matrix of the form $\omega_\mu = \chi (H \leq \mu)$, satisfies (\ref{eq:semi2}). To circumvent this issue, we use an approximation argument, approximating the Slater determinant described by $\omega_\mu$ by the quasi-free thermal state with reduced density $\omega_{\mu, \beta} =1/ (1+e^{\beta (H-\mu)})$, at inverse temperature $\beta = O (\veps^{-1})$.
The proof of (\ref{eq:mainres}) consists therefore of three main parts. First, in Theorem \ref{thm:T>0}, we prove that the many-body evolution of a mixed, approximately quasi-free state, whose reduced density satisfies (\ref{eq:semi2}), fulfills (\ref{eq:mainres}). Second, in Proposition \ref{prop:localsc}, we show that $\omega_{\mu,\beta}$ satisfies the strong local bounds (\ref{eq:semi2}); more precisely, since we are forced to work with mixed states, we need some additional estimates, listed in Assumption \ref{ass:sc}, controlling also the semiclassical structure of $\sqrt{\omega}$ and $\sqrt{1-\omega}$, and the overlap $\sqrt{\omega} \sqrt{1-\omega}$. This part of our analysis relies on a suitable assumption on the one-body Hamiltonian $H = \sqrt{1-\veps^2 \Delta} + V_\text{ext}$, which can be viewed as a local version of the Weyl law, and which is expected to hold true for a large class of confining external potentials $V_\text{ext}$, relevant for extended Fermi gases. Finally, in Section \ref{sec:proofmain}, we show that, for $\beta = O (\veps^{-1})$, the many-body evolution of initial data that are close to a Slater determinant with reduced density $\omega_\mu$ can  be approximated, in a local sense, by the many-body evolution of an approximately quasi-free mixed state with reduced density $\omega_{\mu,\beta}$ (and that the solution of the Hartree equation with initial datum $\omega_\mu$ can be approximated by the solution with initial datum $\omega_{\mu,\beta}$). 

In \cite{FPS}, where we established global convergence towards Hartree dynamics for extended Fermi gases, we also considered the case of particle with non-relativistic dispersion. In fact, the result we prove in \cite{FPS} for non-relativistic fermions was a bit weaker, as it only holds for sufficiently short times. In the current paper, we focus only on the relativistic case. Establishing local convergence (in the sense of (\ref{eq:mainres})) for non-relativistic particles is a more challenging problem, and we do not address it here. The reason why the non-relativistic problem is more difficult is ultimately due to the unbounded group velocity of the particles, which makes it hard to control the solutions of the Hartree equation and of the many-body dynamics locally. This is also the reason for the short-time limitation of the result about the non-relativistic case in \cite{FPS}: because of the unboundedness of the velocity of the particles, it is difficult to rule out the excessive concentration of a large number of particles in a small region of space. Concretely, when computing the variation of the expectation value of the local observable $\mathcal{O}_z$, we have to control its commutator with the dispersion law of the particles. For a relativistic dispersion, this commutator can be controlled by another bounded observable, localized close to $z$. In the non-relativistic case, on the other hand, we find $[ \Delta , \mathcal{O}_z ] = \nabla \cdot \nabla \mathcal{O}_z + \nabla \mathcal{O}_z \cdot \nabla$ which is not bounded. The absence of local conservation laws makes it difficult to control this contribution, and the conservation of the total energy is not useful to control local quantities. 

The paper is organized as follows. In Section \ref{sec: effective-zero-temp} we introduce the fermionic Fock space, the formalism of second quantization and the fermionic Bogoliubov transformations, and we state our main result, Theorem \ref{thm:main}. The theorem describes the evolution of initial data close to Slater determinants defined as ground states of the second quantization of a one-particle Hamiltonian $H$, satisfying a local Weyl-type estimate, as stated in Assumption \ref{ass:Weyl}. In Section~\ref{sec:T>0} we introduce positive-temperature states, which will be needed as a regularization of the initial zero-temperature state. We will describe them as pure states on a doubled Fock space, via the Araki-Wyss representation of mixed quasi-free states. The main technical result of this section is Theorem \ref{thm:T>0}, which proves convergence from many-body dynamics to the Hartree evolution for initial data at temperature $O(\varepsilon)$, exhibiting the local semiclassical structure of Assumption \ref{ass:locsc}. In Section \ref{sec:Tproof} we prove Theorem \ref{thm:T>0}; the proof is based on the propagation of the local semiclassical structure, Proposition \ref{prop: semiclassical_prop-main}, and on the control of the growth of fluctuations between many-body and Hartree dynamics at a local scale, Proposition~\ref{thm:bound-localized-fluctuation}. Then, in Section \ref{sec:proofmain} we use the positive-temperature result of Theorem \ref{thm:T>0} to prove our main result, Theorem \ref{thm:main}, via an approximation argument. In Appendix \ref{sec: appen-apriori} we prove a priori bounds for the density of the Fermi-Dirac distribution in the semiclassical scaling. In Appendix \ref{app:gibbs} we prove the validity of the local semiclassical estimates of Assumption \ref{ass:locsc} for the Fermi-Dirac distribution associated with the Hamiltonian $H = \sqrt{1 - \varepsilon^{2}\Delta} + V_{\text{ext}}$ at temperature $O(\varepsilon)$, as a consequence of Assumption \ref{ass:Weyl}. Finally, in Appendix \ref{app:DG} we discuss the relation of Assumption \ref{ass:Weyl} with other known results in semiclassical analysis, such as the sharp version of the pointwise Weyl law.

\paragraph{Acknowledgements.} The work of L. F. has been supported by the German Research Foundation (DFG) under Germany's Excellence Strategy - GZ 2047/1, Project-ID 390685813, and under SFB 1060 - Project-ID 211504053. M. P. acknowledges financial support by the European Research Council through the ERC-StG MaMBoQ, n. 802901. B. S. acknowledges financial support from the Swiss National Science Foundation through the Grant “Dynamical and energetic properties of Bose-Einstein condensates”, from the NCCR SwissMAP and from the European Research Council through the ERC-AdG CLaQS. The work of M. P. has been carried out under the auspices of the GNFM of INdAM. L.~F.~and M.~P.~gratefully acknowledge hospitality from the University of Z\"urich. We thank Gaultier Lambert for useful discussions on local semiclassical estimates.

\section{Effective dynamics of zero-temperature states}
\label{sec: effective-zero-temp}

\subsection{Fock space formalism}
\label{sec: second-quantisation}

We henceforth set $\mathfrak{h}:=L^{2}(\R^{3})$. The antisymmetric (or fermionic) Fock space $\mathcal{F}(\mathfrak{h})$ associated with $\mathfrak{h}$ is defined as:
\begin{equation*}
\mathcal{F}(\mathfrak{h}):= \mathbb{C}  \oplus \bigoplus _{n\geq 1} \mathfrak{h}^{\wedge n} \;,
\end{equation*}
where, for $n\in \mathbb{N}$, $\mathfrak{h}^{\wedge n}:=\mathfrak{h} \wedge \cdots \wedge \mathfrak{h}$ denotes the Hilbert space given by the $n$-fold antisymmetric tensor product of $\mathfrak{h}$. Vectors in $\mathcal{F}(\mathfrak{h})$ are sequences $\Psi = (\psi^{(n)})_{n \in \bN}$ with $\psi^{(0)} \in \mathbb{C}$ and $\psi^{(n)} \in \mathfrak{h}^{\wedge n}$ for $n \geq 1$. A distinguished vector is the vacuum $\Omega := (1,0,\dots,0,\dots)$. As a Hilbert space, $\mathcal{F}(\mathfrak{h})$ is equipped with the scalar product
\begin{equation*}
\langle \Psi_{1}, \Psi_{2}\rangle = \overline{\psi_{1}^{(0)}} \psi_{2}^{(0)} +  \sum_{n \in \mathbb{N}} \langle \psi_{1}^{(n)}, \psi_{2}^{(n)} \rangle_{\mathfrak{h}^{\otimes n}}
\end{equation*}
for any $\Psi_{1}= (\psi_{1}^{(n)})_{n \in \bN}, \Psi_{2}= (\psi_{2}^{(n)})_{n \in \bN} \in \mathcal{F}(\mathfrak{h})$. 

For $f \in \mathfrak{h}$, we introduce the creation operator $a^{*}(f)$ and the annihilation operator $a(f)$, whose actions on $\Psi = ( \psi^{(n)})_{n \in \bN}\in \mathcal{F}(\mathfrak{h})$ are defined by 
\[ \begin{split} 
(a^* (f) \Psi)^{(n)} (x_1, \dots ,x_n) &= \frac{1}{\sqrt{n}} \sum_{j=1}^n (-1)^j f(x_j) \psi^{(n-1)} (x_1, \dots , x_{j-1}, x_{j+1}, \dots , x_n) \\
(a(f) \Psi)^{(n)} (x_1, \dots , x_n) &= \sqrt{n+1} \int \bar{f} (x) \, \psi^{(n+1)} (x, x_1, \dots, x_n) dx \end{split} \]
for all $n \in \bN$. It is not difficult to see that these operators are the adjoint of each other, and that they satisfy the canonical anticommutation relations:
\begin{equation}\label{eq:CAR}
\{a(f), a(g) \} = \{a^{*}(f), a^{*}(g) \} = 0 \;,\qquad \{a^{*}(f), a(g) \} = \langle g,f \rangle_{\mathfrak{h}}\;,\qquad \forall f,g \in \mathfrak{h} \;,
\end{equation}
with $\{A,B \}:=AB+BA$. These relations imply that  creation and annihilation operators are bounded, 
\begin{equation*}
\|a(f) \|, \| a^* (f) \|  \leq \| f\|_{\mathfrak{h}} \;.
\end{equation*}
It is also convenient to introduce operator-valued distributions $a^{*}_{x}$, $a_{x}$, for $x \in \bR^3$. They allow us to write 
\begin{equation}\label{eq:opval}
a(f) = \int dx\, a_{x} \overline{f(x)}\;,\qquad a^{*}(f) = \int dx\, a^{*}_{x} f(x)\;.
\end{equation}
For any (closable) operator $O$ on $\frak{h}$, we denote by $d\Gamma(O)$ its second quantization, as an operator on $\mathcal{F} (\mathfrak{h})$. That is:
\begin{equation}
d\Gamma(O) = 0 \oplus \bigoplus_{n\geq 1} \Big(\sum_{j=1}^{n} 1^{\otimes j} \otimes O\otimes 1^{n-j-1}\Big)\qquad \text{on $\mathcal{F}(\mathfrak{h})$.}
\end{equation}
In terms of the creation and annihilation operators, we have:
\begin{equation}\label{eq:op-second-quantization}
d \Gamma(O) = \sum_{i,j} \langle f_i, O f_j  \rangle_{\frak{h}} a^{*}(f_{i}) a(f_{j})\;.
\end{equation}
If the operator $O$ has an integral kernel $O(x;y)$, we can also write 
\begin{equation*}
d\Gamma(O) = \int dxdy\, O(x;y) a^{*}_{x} a_{y}\;.
\end{equation*}
In this case, we also introduce the notation 
\[ d\Gamma^+ (O) = \int dx dy \, O (x;y) \, a^*_x a^*_y\;, \qquad d\Gamma^- (O) = \int dx dy \, O (x;y) \, a_x a_y\;. \]
Finally, on $\mathcal{F} (\mathfrak{h})$, we define the Hamilton operator
\begin{equation*}
\mathcal{H} = 0 \oplus \bigoplus_{n\geq 1} H_{n}\;,
\end{equation*}
where $H_{n}$ is the operator on $\frak{h}^{\wedge n}$ given by:
\begin{equation*}
H_{n} = \sum_{i=1}^{n} \sqrt{1-\varepsilon^{2}\Delta_{i}} + \varepsilon^{3} \sum_{i<j}^{n} V(\hat x_{i} - \hat x_{j})\;.
\end{equation*}
Recall that, as discussed in the introduction, we will consider initial data for which the density of particles $\varrho$ is of order $\varepsilon^{-3}$. Equivalently,
\begin{equation}\label{eq:calH} 
\mathcal{H} = d\Gamma(\sqrt{1-\varepsilon^{2}\Delta}) + \frac{\varepsilon^{3}}{2} \int dxdy\, V(x-y) a^{*}_{x} a^{*}_{y} a_{y} a_{x}\;.
\end{equation}
To simplify our analysis, we are going to assume that $V \in \mathcal{S} (\bR^3)$, with 
\[ \mathcal{S} (\bR^3) = \{ f \in C^\infty (\bR^3 ; \bR) : \sup_{x \in \bR^3} |x^\alpha \partial^\beta f (x)| < \infty \text{ for all multi-indices $\alpha, \beta \in \bN^3$} \} \]
denoting the space of real-valued Schwarz functions on $\bR^3$. 
 
An important class of vectors in $\mathcal{F} (\mathfrak{h})$ are Slater determinants, having the form
\begin{equation}\label{eq:Slater}
(f_{1} \wedge \cdots \wedge f_{n})(x_{1},\ldots, x_{n}) = \frac{1}{\sqrt{n!}}\sum_{\pi\in S_{n}} \text{sgn}(\pi) f_{1}(x_{\pi(1)}) \cdots f_{n}(x_{\pi(n)})\;,
\end{equation}
with $\{ f_{i} \}_{i=1}^n$ an orthonormal system in $\frak{h}$. The Slater determinant (\ref{eq:Slater}) describes a state with exactly $n$ particles; it can   also be written as $a^* (f_1) \dots a^* (f_n) \Omega$. If $\{ f_i \}_{i\in \bN}$ is an orthonormal basis for $\mathfrak{h}$, then the set $\{ f_{i_1} \wedge \cdots \wedge f_{i_n} : n \in \bN, i_1 < i_2 < \dots < i_n\}$ forms an orthonormal basis of $\mathcal{F} (\mathfrak{h})$. 

Slater determinants are quasi-free states; the expectation of an arbitrary product of creation and annihilation operator in the state (\ref{eq:Slater}) can be computed through Wick's theorem, using the one-particle reduced density matrix 
\begin{equation}\label{eq:omega0} \omega = \tr_{2, \dots , n} |f_{1} \wedge \cdots \wedge f_{n} \rangle \langle f_{1} \wedge \cdots \wedge f_{n}| = \sum_{i=1}^n |f_i \rangle \langle f_i| \end{equation} 
associated with $f_1 \wedge \dots \wedge f_n$, which is just the orthogonal projection onto the $n$-dimensional subspace of $\mathfrak{h}$ spanned by $f_1, \dots , f_n$. 

We will make use of the fact that Slater determinants in $\mathcal{F} (\mathfrak{h})$ can be generated by the action of unitary transformations on the vacuum. More precisely, given an arbitrary orthogonal projection of the form (\ref{eq:omega0}), we can find a unitary Bogoliubov transformation $R_\omega$ on $\mathcal{F} (\mathfrak{h})$ such that 
\begin{equation} \label{eq:Romega} 
R_\omega  \Omega = a^* (f_1) \dots a^* (f_n) \Omega \end{equation}
is the Slater determinant (\ref{eq:Slater}) and 
\begin{equation}\label{eq:Romega2} R^*_\omega a (f) R_\omega =  a (uf) + a^* (v \bar{f}) \end{equation} 
with $u = 1-\omega$ the projection on the orthogonal complement of the space spanned by $f_1, \dots , f_n$, and $v = \sum_{i=1}^n |\bar{f}_i \rangle \langle f_i|$.

\subsection{Main result}
\label{subsec: main result}

Our main theorem describes the many-body evolution of initial data that are close, in a local sense, to a Slater determinant describing the ground state of a non-interacting pseudo-relativistic Fermi gas trapped by an external potential, at high density. Let us specify more precisely the class of initial data we are going to consider. For a confining, non-negative and smooth potential $V_\text{ext}$, we consider the one-particle Hamiltonian 
\begin{equation}\label{eq:H-noint} H = \sqrt{1-\veps^2 \Delta} + V_\text{ext} \end{equation}
acting on the Hilbert space $\mathfrak{h}$. We will require precise local semiclassical properties of spectral functions of $H$ (describing equilibrium states of $d\Gamma (H)$). To derive these properties, we will assume a bound for the trace norm of the product of a function of $H$, localized in a small interval of size $O (\veps)$ around a fixed $\mu > 0$ (which will play the role of the chemical potential), and a function of the position operator $\hat{x}$, localized around a point $z \in \bR^3$. For $2n \in \N$, $z,x \in \R^{3}$, we define the spatial weight 
\begin{equation}\label{eq: polynomial weight}
\W_{z}^{(n)}(x):=\big(1+ |x-z|^{4n}\big)^{-1} \; .
\end{equation}
Note that the functions $\W_{z}^{(n)}$ satisfy $\W_{z}^{(m)} \leq 2\W_{z}^{(n)}$ for all $m>n$, and this will be repeatedly used throughout this work. Another property we shall frequently use is that $\W_{z_{1}}^{(n)}(x) \W_{z_{2}}^{(n)}(x) \leq C_{n} \W^{(n)}_{z_{1}}(z_{2})$.

Moreover, we introduce the notation $\W_{z}^{(n)} = \W_{z}^{(n)}( \hat{x} )$ for the multiplication operator on $\mathfrak{h}$ associated with (\ref{eq: polynomial weight}).  
\begin{assumption}[Assumption on $(H,\mu)$]\label{ass:Weyl} Let $\beta = O (\varepsilon^{-1})$ and let $H$ be defined as in (\ref{eq:H-noint}), with $V_\text{ext} \in C^\infty (\bR^3)$, $V_\text{ext} \geq 0$ and $\sup_{x \in \bR^3} |\partial^\alpha V_\text{ext} (x)| < \infty$, for all $\alpha \in \bN^3$ with $|\alpha | \geq 2$ (i.e. $V_\text{ext}$ must grow at most quadratically at infinity). We assume that there exists a domain $\Lambda \subset \R^{3}$ such that 
\begin{equation}\label{eq:localWeyl}
\sup_{z \in \Lambda} \Big\| \frac{1}{(\beta (H - \mu))^{2m} + 1} \mathcal{W}^{(n)}_{z} \Big\|_{\tr} \leq C\varepsilon^{-2}
\end{equation}
for all $n,m \in \bN$ large enough. 
\end{assumption}
\begin{remark} 
\item[(i)] It is not difficult to see that the bound (\ref{eq:localWeyl}) holds true for $V_{\text{ext}} = 0$. 
\item[(ii)] A similar estimate is known to hold for $V_{\text{ext}}\neq 0$, for non-relativistic fermions; see {\it e.g.} \cite{DG}, replacing the localized functions of $H$ and of $\hat x$ by smooth compactly supported functions. We believe that the estimates could be extended to the pseudorelativistic setting considered here. In general, the constant $C$ depends on the potential $V_{\text{ext}}$. It would be interesting to show that the constant can be made independent of the size of the classically confined region $\{x \in \mathbb{R}^{3}: V_{\text{ext}}(x) - \mu \leq 0\}$, which is arbitrarily large for extended systems.
\item[(iii)] In Appendix \ref{app:DG}, we will show that the estimate (\ref{eq:localWeyl}) is actually implied by an analogous bound in Hilbert-Schmidt norm. Furthermore, in Appendix \ref{app:DG} we also show that the Hilbert-Schmidt bound is implied by the validity of the pointwise sharp Weyl law, which is known to hold for non-relativistic fermions, see {\it e.g.} \cite{DG0}. 
\end{remark}

We are now ready to state our main result.

\begin{theorem}[Derivation of the Hartree Equation on a Local Scale]\label{thm:main} Let $(H,\mu)$ satisfy Assumption \ref{ass:Weyl} in a bounded domain $\Lambda \subset \bR^3$. Let $\omega_\mu =  \chi (H \leq \mu)$ denote the Fermi projection associated with $H$ and with the chemical potential $\mu$. We consider the Fock space vector $\psi = R_{\omega_\mu} \xi$, where $R_{\omega_\mu}$ is the Bogoliubov transformation defined in (\ref{eq:Romega}), (\ref{eq:Romega2}), generating the Slater determinant with reduced density $\omega_\mu$, and where we assume that the excitation vector $\xi \in \mathcal{F} (\mathfrak{h})$ is such that 
\begin{equation}\label{eq:assxi}
\sup_{z \in \R^{3} }\langle \xi, d\Gamma(\mathcal{W}^{(n)}_{z}) \xi \rangle \leq C\varepsilon^{-3}\;,
\qquad
\sup_{z \in \Lambda }\langle \xi, d\Gamma(\mathcal{W}^{(n)}_{z}) \xi \rangle \leq C\varepsilon^{-3+\delta}\; , 
\end{equation}
for a constant $C > 0$, independent of $\Lambda$, for some $0< \delta \leq 1$ and for all $n \in \bN$ large enough. Let $\cH$ denote the Hamilton operator (\ref{eq:calH}) on $\mathcal{F} (\frak{h})$, with $V \in \mathcal{S} (\bR^3)$, consider the evolution $\psi_{t} = e^{-i\mathcal{H} t/\varepsilon} \psi$, and let $\gamma^{(1)}_{t}$ denote the reduced one-particle density matrix of $\psi_{t}$. Let $\omega_{t}$ be the solution of the time-dependent Hartree equation 
\begin{equation}\label{eq:hartree-main} 
i \veps \partial_{t} \omega_{t} = \Big[ \sqrt{1-\varepsilon^{2} \Delta } + \varepsilon^{3} (V * \varrho_{t})\,,\, \omega_{t} \Big]\;,  
\end{equation} 
with $\varrho_t (x) = \omega_t (x;x)$ and with initial datum $\omega_{t=0} = \omega_\mu$. 
 Let $\mathcal{O}\in L^{1}(\mathbb{R}^{3})$ such that
\begin{equation}\label{eq: assumptions-O}
\big\| (1 + |\cdot|) D^{\alpha} \hat{\mathcal{O}}\big \|_{1} \leq C ,
\end{equation}
for $|\alpha| \leq \ell$ and with $\ell\in \mathbb{N}$ large enough, and let $\mathcal{O}_{z}(x) = \mathcal{O}(x - z)$. Then, there exist constants $C,c > 0$,  independent of $\Lambda$, such that 
\begin{equation}\label{eq:mainT>0}
\Big| \tr\, \mathcal{O}_{z} ( \gamma^{(1)}_{t} - \omega_{t} ) \Big| \leq C   \exp(c \exp(c |t|)) \, \veps^{-3+\delta} 
\end{equation}
for all $n\in \bN$ large enough, for all $t\in \R$, and for all $z \in \Lambda$ such that $B_{\veps^{-\delta} |t|}(z) \subset \Lambda$.  
\end{theorem}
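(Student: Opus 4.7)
The plan is to reduce Theorem \ref{thm:main} to the positive-temperature result of Theorem \ref{thm:T>0} via a three-step approximation. First I would fix $\beta = c \varepsilon^{-1}$ for a small constant $c > 0$ to be tuned, and set $\omega_{\mu,\beta} = (1 + e^{\beta(H-\mu)})^{-1}$. By Proposition \ref{prop:localsc}, which relies on Assumption \ref{ass:Weyl}, the Fermi--Dirac density $\omega_{\mu,\beta}$ satisfies the strong local semiclassical bounds of Assumption \ref{ass:locsc}, and is therefore admissible as initial datum for Theorem \ref{thm:T>0}. Let $\tilde\gamma^{(1)}_t$ denote the reduced one-particle density of the many-body evolution of the mixed quasi-free state with reduced density $\omega_{\mu,\beta}$ (modified by the excitation $\xi$, lifted to the doubled Fock space via the Araki--Wyss representation), and let $\tilde\omega_t$ be the Hartree flow with initial datum $\omega_{\mu,\beta}$. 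Then I would split
\begin{equation*}
\tr\, \mathcal{O}_z (\gamma^{(1)}_t - \omega_t) = \tr\, \mathcal{O}_z (\gamma^{(1)}_t - \tilde\gamma^{(1)}_t) + \tr\, \mathcal{O}_z (\tilde\gamma^{(1)}_t - \tilde\omega_t) + \tr\, \mathcal{O}_z (\tilde\omega_t - \omega_t)
\end{equation*}
and estimate each term by $C \exp(c \exp(c|t|)) \varepsilon^{-3+\delta}$ separately.

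The middle term is handled directly by Theorem \ref{thm:T>0}, once its hypotheses are verified for the thermal initial state: the local semiclassical structure comes from Proposition \ref{prop:localsc}, and the local excitation bounds come from (\ref{eq:assxi}). For the third term, the two Hartree flows start from data differing by $\omega_{\mu,\beta} - \omega_\mu$, a function of $H$ concentrated on an energy window of width $O(\beta^{-1}) = O(\varepsilon)$ around the chemical potential. Assumption \ref{ass:Weyl} then yields a uniform local bound $\| \mathcal{W}^{(n)}_z (\omega_{\mu,\beta} - \omega_\mu) \|_{\tr} \leq C \varepsilon^{-2}$, which is $\varepsilon \cdot \varepsilon^{-3}$, i.e. a factor $\varepsilon$ smaller than the local density. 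A Gr\"onwall argument on the Hartree equation, exploiting the bounded group velocity of the pseudo-relativistic dispersion and the rapid decay of $V$ and $\hat{\mathcal{O}}$, then propagates this local closeness from time $0$ to time $t$ while only requiring the ball around $z$ of radius proportional to $|t|$ to lie in $\Lambda$.

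The main obstacle is the first term, $|\tr\, \mathcal{O}_z (\gamma^{(1)}_t - \tilde\gamma^{(1)}_t)|$, which compares two many-body evolutions starting from genuinely different reference quasi-free states. The plan is to intertwine the Bogoliubov transformation $R_{\omega_\mu}$ on $\mathcal{F}(\mathfrak{h})$ with the Araki--Wyss transformation associated with $\omega_{\mu,\beta}$ on the doubled Fock space, writing their difference as a residual Bogoliubov rotation whose generator is essentially controlled by $\omega_{\mu,\beta} - \omega_\mu$. Since this perturbation has local trace norm $O(\varepsilon^{-2})$ and is spectrally concentrated near the Fermi surface, the intertwining creates at most $O(\varepsilon^{-2})$ local extra excitations on top of those already allowed by (\ref{eq:assxi}). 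These excitations are then transported along the many-body flow via a Duhamel comparison, reusing the local commutator and number-operator estimates that underlie Proposition \ref{thm:bound-localized-fluctuation}. The difficulty is that the two flows are generated by the same $\mathcal{H}$ but act on two different Fock spaces; the comparison must be implemented by regarding both as pure-state evolutions on the doubled space and tracking the residual rotation as a source term in the Duhamel expansion.

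Finally, the Schwartz-class regularity of $\hat{\mathcal{O}}$ encoded in (\ref{eq: assumptions-O}) is used, through a partition of unity into localized pieces of size one plus a discrete Cauchy--Schwarz inequality, to turn the global weighted estimates into the pointwise bound on $\tr\, \mathcal{O}_z (\gamma^{(1)}_t - \omega_t)$. The fact that $\sqrt{1-\varepsilon^2 \Delta}$ has bounded group velocity (of order one, since $\varepsilon$ has been absorbed into the time variable $t/\varepsilon$) is what forces the radius $\varepsilon^{-\delta} |t|$ of the required buffer around $z$: excitations created initially outside $B_{\varepsilon^{-\delta}|t|}(z)$ reach $\mathcal{O}_z$ within time $t$ only through the rapidly decaying tails of the observable, producing errors that, thanks to the polynomial decay guaranteed by (\ref{eq: assumptions-O}), are smaller than the target $\varepsilon^{-3+\delta}$.
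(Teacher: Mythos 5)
Your overall strategy --- approximate $\omega_\mu$ by the Fermi--Dirac state $\omega_{\mu,\beta}$ with $\beta = O(\veps^{-1})$, invoke Proposition \ref{prop:localsc} and Theorem \ref{thm:T>0} for the thermal reference, and compare the two Hartree flows by a localized Gr\"onwall argument --- is the right one, and your treatment of the Hartree--Hartree term essentially matches the paper (there the error is $C_t\veps^{-3+\delta}$, coming from the far-field contribution $\int_0^t ds\,\veps^{-3}(1+(\veps^{-\delta}s)^4)^{-1}$, which is also where the condition $B_{\veps^{-\delta}|t|}(z)\subset\Lambda$ enters; it is not forced by the group velocity alone, which would only give a radius $\sim|t|$).

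The genuine gap is in your first term, $\tr\,\mathcal{O}_z(\gamma_t^{(1)}-\tilde\gamma_t^{(1)})$. You introduce a second, genuinely different many-body evolution (that of the thermal quasi-free state perturbed by $\xi$) and propose to compare the two interacting flows by a Duhamel expansion with the ``residual Bogoliubov rotation'' as a source. You do not give a mechanism that controls this comparison: the residual rotation is not small as an operator, and propagating an $O(\veps^{-2})$ local discrepancy in the initial data through the full interacting dynamics while keeping it locally $O(\veps^{-3+\delta})$ for all times is precisely as hard as the main theorem --- the fluctuation estimates of Proposition \ref{thm:bound-localized-fluctuation} apply to the dynamics of excitations around a single quasi-free reference with good local semiclassical structure, not to the difference of two many-body evolutions. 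The paper avoids this entirely: there is no $\tilde\gamma_t^{(1)}$. One keeps the \emph{original} state, lifts it to the doubled Fock space, and rewrites its initial datum as $\mathcal{R}_{\omega_{\mu,\beta}}\Theta$ with $\Theta := \mathcal{R}_{\omega_{\mu,\beta}}^*\,U(R_{\omega_\mu}\otimes R_{\bar\omega_\mu})(\xi\otimes\bar\xi)$, i.e.\ the zero-temperature Slater state is itself regarded as the thermal quasi-free state dressed by the excitation vector $\Theta$. Then Theorem \ref{thm:T>0} applies \emph{directly} to $\gamma_t^{(1)}$ (versus $\tilde\omega_t$), and the only remaining work --- which is where your ``$O(\veps^{-2})$ extra local excitations'' computation actually belongs --- is to verify $\sup_{z\in\bR^3}\langle\Theta, d\Gamma_\sigma(\W_z^{(n)})\Theta\rangle\le C\veps^{-3}$ and $\sup_{z\in\Lambda}\langle\Theta, d\Gamma_\sigma(\W_z^{(n)})\Theta\rangle\le C\veps^{-3+\delta}$, using Lemma \ref{lemma: Bogolubov-observables}, $\omega_\mu\le 2\omega_{\mu,\beta}$, the bound $\|\W_z^{(n)}(\omega_\mu-\omega_{\mu,\beta})\|_\tr\le C\veps^{-2}$ from Assumption \ref{ass:Weyl}, and \eqref{eq:assxi}. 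Without this reorganization your argument does not close.
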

The proof of Theorem \ref{thm:main} is based on an approximation argument of the zero-temperature pure state with reduced density $\omega_{\mu}$ by a mixed state with reduced density given by the Fermi-Dirac distribution $\omega_{\mu, \beta} = 1/(1+e^{\beta (H-\mu)})$, at low temperature $T=1/\beta = O(\varepsilon)$. Informally, the reason for introducing this approximation is that at positive temperature the kernel of the reduced one-particle density matrix has better decay properties; in particular, the improved locality of the positive temperature state, combined with Assumption \ref{ass:Weyl}, will allow us to show that $\omega_{\mu,\beta}$ exhibit a local semiclassical structure, which is preserved by the Hartree equation and which will play a crucial role to prove local convergence towards the Hartree dynamics. At the same time, we will prove that, for $1/\beta = O(\varepsilon)$, the positive-temperature state and the zero-temperature state are close in the sense of expectation of local observables.

\section{Effective dynamics of low-temperature states}\label{sec:T>0}

In this section, we establish convergence towards the Hartree dynamics for initial data approximating equilibrium states at low but positive temperature $T = O(\varepsilon)$. We formulate our result in the general setting of mixed states. We will be interested in the evolution of (approximately) quasi-free mixed initial data. To describe such initial data, we will use the Araki-Wyss representation \cite{AW}, switching to a doubled Fock space. To introduce these tools, we will follow the discussion of \cite{BJPSS}. 

\subsection{Mixed states}
\label{sec: mixed-states}
A density matrix $\rho$ is a non-negative trace class operator on $\mathcal{F}(\frak{h})$, with $\tr_{\mathcal{F}(\frak{h})}\, \rho = 1$. Through the spectral theorem, it is always possible to write the density matrix $\rho$ as 
\begin{equation*}
\rho = \sum_{n} \lambda_{n} |\psi_{n} \rangle \langle \psi_{n} |\;,
\end{equation*}
with $\lambda_{n} \geq 0$, $\sum_{n} \lambda_{n} = 1$ and with an appropriate orthonormal sequence $\psi_n \in \mathcal{F} (\frak{h})$. The expectation value of a bounded operator $\mathcal{O}$ on $\mathcal{F}(\mathfrak{h})$ in the state $\rho$ is determined by 
\begin{equation*}
\tr_{\mathcal{F}(\mathfrak{h})} \, \mathcal{O} \rho = \sum_{n} \lambda_{n} \langle \psi_{n}, \mathcal{O} \psi_{n} \rangle\;.
\end{equation*}
The density matrix $\rho$ is said to describe a pure state if $\rho = |\psi \rangle \langle \psi |$, for a normalized $\psi \in \mathcal{F} (\frak{h})$. In this case, $\rho$ provides an equivalent representation of the state described by the vector $\psi$. If $\rho$ is not an orthogonal projection, but a convex combination of orthogonal projections, then it is said to describe a mixed state.  

Given the density matrix $\rho$, we define the operator $\widetilde{\kappa}$ on $\mathcal{F}(\frak{h})$ as
\begin{equation*}
\widetilde{\kappa} = \sum_{n} \varepsilon_{n} |\psi_{n} \rangle \langle \phi_{n} |\;,
\end{equation*}
where $|\varepsilon_{n}|^{2} = \lambda_{n}$ and $\{ \phi_{n} \}_{n}$ is another arbitrary orthonormal set. Clearly,
\begin{equation*}
\widetilde{\kappa} \widetilde{\kappa}^{*} = \rho\;. 
\end{equation*}
Observe that the set of Hilbert--Schmidt operators on $\mathcal{F}(\frak{h})$, denoted by $\mathcal{L}^{2}(\mathcal{F}(\frak{h}))$, is isomorphic to $\mathcal{F} \otimes \mathcal{F}$, the isomorphism being simply defined by $|\psi \rangle \langle \phi |\to \psi \otimes \overline{\phi}$ and extended by linearity to the whole $\mathcal{L}^{2}(\mathcal{F}(\frak{h}))$. Therefore, the mixed state with density matrix $\rho$ is described on $\mathcal{F}(\frak{h}) \otimes \mathcal{F}(\frak{h})$ by the vector:
\begin{equation}\label{eq:kappadef}
\kappa = \sum_{n} \varepsilon_{n} \psi_{n} \otimes \overline{\phi}_{n}\;.
\end{equation}
Thus, the expectation value of the operator $\mathcal{O}$ on $\mathcal{F}(\frak{h})$ can be rewritten as:
\begin{equation}\label{eq:exp-kappa}
\tr_{\mathcal{F}(\mathfrak{h})}\, \mathcal{O} \rho = \tr_{\mathcal{F}(\mathfrak{h})}\, \mathcal{O}\widetilde{\kappa} \widetilde{\kappa}^{*} = \langle \kappa, (\mathcal{O} \otimes 1) \kappa \rangle_{\mathcal{F}(\frak{h}) \otimes \mathcal{F}(\frak{h})}\;.
\end{equation}
We further observe that the doubled Fock space $\mathcal{F}(\frak{h}) \otimes \mathcal{F}(\frak{h})$ is isomorphic to the Fock space $\mathcal{F}(\frak{h} \oplus \frak{h})$; the unitary $U$ that realizes the isomorphism is called the exponential law, and it is defined by the relations
\begin{equation*}
U ( \Omega_{\mathcal{F}(\frak{h})} \otimes \Omega_{\mathcal{F}(\frak{h})} ) = \Omega_{\mathcal{F}(\frak{h} \oplus \frak{h})}
\end{equation*}
and
\begin{equation*}
U (a(f) \otimes 1)U^{*} = a(f\oplus 0) =: a_{l}(f)\;,\qquad U( (-1)^{\mathcal{N}} \otimes a(f) ) U^{*} = a(0\oplus f) =: a_{r}(f)\; ,
\end{equation*}
where $\cN = d\Gamma (1)$ is the number of particles operator on $\mathcal{F} (\frak{h})$. We call the operators $a_{l}(f)$ and $a_{r}(f)$ respectively the left and right representation of $a(f)$. We also denote $a^{*}_{\sigma}(f) = (a_{\sigma}(f))^{*}$, $\sigma = l,r$. We will also use the representation of the creation and annihilation operators in terms of the operator-valued distributions $a^{*}_{\sigma,x}$, similarly to (\ref{eq:opval}). Furthermore, it is useful to introduce the left and right representations of the second quantization of an operator $O$ on $\frak{h}$, defining 
\begin{equation}\label{eq:sec-U}
\begin{split}
d\Gamma_{l}(O) :=U ( d \Gamma(O) \otimes 1 ) U^{*} &= d\Gamma(O\oplus 0) \\
d\Gamma_{r}(O):=U ( 1\otimes d \Gamma(O)) U^{*} &= d\Gamma(0\oplus O) \;.
\end{split}
\end{equation}
If the operator $O$ has an integral kernel, we can write 
\begin{equation*}
d\Gamma_{\sigma}(O) = \int dxdy\, O(x;y) a^{*}_{x,\sigma} a_{y,\sigma}\;,\qquad \sigma = l,r\;.
\end{equation*}
Furthermore, we define 
\[ d\Gamma^+_{\sigma \sigma'} (O) = \int dx dy \, O (x;y) a^*_{\sigma ,x} a_{\sigma' , y}^* , \qquad d\Gamma^-_{\sigma \sigma'} (O) = \int dx dy \, O (x;y) a_{\sigma ,x} a_{\sigma' , y} \]

If $O$ is a trace-class operator over $\frak{h}$, then $d\Gamma_\sigma (O), d\Gamma^+_{\sigma \sigma'} (O), d\Gamma^-_{\sigma \sigma'} (O)$ are bounded operators. The proof of the following lemma is similar to the proof of \cite[Lemma 3.1]{BPS}. 
\begin{lemma} \label{lemma: bound-Fock-operator}
Let $J$ be a trace-class operator on $\mathfrak{h}$. Then, for any $\sigma,\sigma'= r,l$ the following bounds hold true:
\begin{equation*}
\big\|d\Gamma_{\sigma} (J) \big\| \leq 2  \left\| J \right\|_{\tr}  \;,\qquad \big\| d \Gamma^{\pm}_{\sigma \sigma'} (J) \big\|  \leq 2 \left\| J \right\|_{\tr}  \;.
\end{equation*}
\end{lemma}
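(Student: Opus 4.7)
The plan is to reduce both estimates to the elementary CAR bound $\|a(f)\|, \|a^{*}(f)\| \leq \|f\|_{\mathfrak{h}}$ stated before (\ref{eq:opval}) via a singular value decomposition of the trace-class kernel. Since the exponential law $U$ in (\ref{eq:sec-U}) is unitary and maps $d\Gamma_{l/r}(J)$ to $d\Gamma(J \oplus 0)$ or $d\Gamma(0 \oplus J)$ on $\mathcal{F}(\mathfrak{h} \oplus \mathfrak{h})$, and similarly for the $\pm$ versions, it suffices to prove the inequalities for the ordinary second quantizations $d\Gamma(J)$ and $d\Gamma^{\pm}(J)$ on $\mathcal{F}(\mathfrak{h})$, with the understanding that all bounds transfer verbatim to the $\sigma,\sigma'$ representations.

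First, I would write the singular value decomposition
\[
J \;=\; \sum_{n} \lambda_{n}\, |f_{n}\rangle\langle g_{n}|, \qquad \lambda_{n}\geq 0,\quad \sum_{n}\lambda_{n}=\|J\|_{\tr},
\]
with $\{f_{n}\}_{n}$ and $\{g_{n}\}_{n}$ orthonormal systems in $\mathfrak{h}$, so that the integral kernel reads $J(x;y) = \sum_{n} \lambda_{n} f_{n}(x)\overline{g_{n}(y)}$. Plugging this into the defining integrals and recognising $\int f_{n}(x)\, a^{*}_{x}\, dx = a^{*}(f_{n})$, $\int \overline{g_{n}(y)}\, a_{y}\, dy = a(g_{n})$ (and the analogous identities in the $\pm$ cases), I obtain the rank-one factorisations
\[
d\Gamma(J) = \sum_{n}\lambda_{n}\, a^{*}(f_{n})\,a(g_{n}),\quad d\Gamma^{+}(J) = \sum_{n}\lambda_{n}\, a^{*}(f_{n})\,a^{*}(\overline{g_{n}}),\quad d\Gamma^{-}(J) = \sum_{n}\lambda_{n}\, a(\overline{f_{n}})\,a(g_{n}).
\]

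Second, I would apply the CAR bound to each factor. Since $\|f_{n}\|_{\mathfrak{h}} = \|g_{n}\|_{\mathfrak{h}} = 1$, each summand has operator norm at most $1$, and the triangle inequality yields the bound $\sum_{n}\lambda_{n} = \|J\|_{\tr}$ for all three operators. The factor of $2$ appearing in the statement is a safe upper bound that absorbs the alternative and equally short proof via the Hermitian decomposition $J = \tfrac{1}{2}(J+J^{*}) + \tfrac{1}{2}(J-J^{*})$, in which each self-adjoint summand is diagonalised in a \emph{single} orthonormal basis; this variant gives $\|J+J^{*}\|_{\tr} + \|J-J^{*}\|_{\tr} \leq 2\|J\|_{\tr}$ directly, and is the form used in \cite{BPS}.

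There is no real obstacle here: the only mildly subtle point is bookkeeping with the doubled Fock space, namely that $a_{l}$ and $a_{r}$ anticommute under the exponential law (because of the Klein factor $(-1)^{\mathcal{N}}$ in the definition of $a_{r}$). This sign structure is irrelevant for the individual norm bounds of the rank-one terms and thus plays no role in the estimate; it only becomes relevant when computing commutators of $d\Gamma^{\pm}_{\sigma\sigma'}$ with other operators, which is not needed for this lemma.
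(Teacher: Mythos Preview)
Your proof is correct and follows essentially the same route as the paper, which simply refers to \cite[Lemma 3.1]{BPS}: singular value decomposition of $J$ followed by the CAR bound $\|a^{\sharp}(f)\|\leq\|f\|$ on each rank-one summand. Your remark that the factor $2$ is merely a safe upper bound (the SVD argument in fact yields $\|J\|_{\tr}$ directly) and your bookkeeping with the exponential law are both accurate.
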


Setting $\Psi = U \kappa \in \mathcal{F} (\frak{h} \oplus \frak{h})$, with $\kappa$ as in (\ref{eq:kappadef}), we can write, from (\ref{eq:exp-kappa}),  
\[ \tr_{\mathcal{F}(\frak{h})} \mathcal{O} \rho = \langle \Psi, \big[ U (\mathcal{O} \otimes 1) U^* \big] \Psi \rangle_{\mathcal{F} (\frak{h} \oplus \frak{h})}\;. \]
This allows us to represent the mixed state $\rho$ on $\mathcal{F} (\frak{h})$ as a pure state on $\mathcal{F} (\frak{h} \oplus \frak{h})$. In particular, the one-particle reduced density and the pairing density of $\rho$, which allow us to compute the expectation of observables that are quadratic in creation and annihilation operators, can be expressed in terms of $\Psi$, by 
\begin{equation}\label{eq:1partPsi} 
\begin{split} 
\tr_{\mathcal{F} (\frak{h})} a_y^* a_x \, \rho & = \langle \Psi, a_{l,y}^* a_{l,x} \Psi \rangle_{\mathcal{F} (\frak{h} \oplus \frak{h})} =: \gamma^{(1)}_\Psi (x;y) \\
\tr_{\mathcal{F} (\frak{h})} a_y a_x \, \rho & = \langle \Psi, a_{l,y} a_{l,x} \Psi \rangle_{\mathcal{F} (\frak{h} \oplus \frak{h})} =: \alpha^{(1)}_\Psi (x;y)  \;.\end{split} 
\end{equation} 
From (\ref{eq:sec-U}), we obtain 
\begin{equation} \label{eq:dG} 
\begin{split}  \tr_{\mathcal{F} (\frak{h})} d\Gamma (O) \rho &= \langle \Psi, d\Gamma_l (O) \Psi \rangle_{\mathcal{F} (\frak{h} \oplus \frak{h})} = \tr_\frak{h} \, O \gamma^{(1)}_\Psi \\ 
\tr_{\mathcal{F} (\frak{h})} d\Gamma^- (O) \rho &= \langle \Psi, d\Gamma^-_{ll} (O) \Psi \rangle_{\mathcal{F} (\frak{h} \oplus \frak{h})} = \tr_\frak{h} \, O \alpha^{(1)}_\Psi \\
 \tr_{\mathcal{F} (\frak{h})} d\Gamma^* (O) \rho &= \langle \Psi, d\Gamma^+_{ll} (O) \Psi \rangle_{\mathcal{F} (\frak{h} \oplus \frak{h})} = \overline{\tr_\frak{h} \, O \alpha^{(1)}_\Psi} \;.
\end{split} 
\end{equation} 

To conclude this section, let us discuss the time evolution of mixed states, after the mapping into the doubled Fock space. The time evolution of the density matrix $\rho$ is:
\begin{equation*}
\rho_{t} = e^{-i\mathcal{H} t / \varepsilon} \rho \, e^{i \mathcal{H} t /\varepsilon}\;,
\end{equation*}
with the Fock-space Hamiltonian $\mathcal{H}$ defined in (\ref{eq:calH}). Thus, it is natural to define the time evolution of the vector $\kappa \in \mathcal{F}(\frak{h}) \otimes \mathcal{F}(\frak{h}) $ as:
\begin{equation*}
\kappa_{t} = (e^{-i \mathcal{H} t / \varepsilon} \otimes e^{i \mathcal{H} t / \varepsilon}) \kappa\;.
\end{equation*}
Letting $\Psi_{t} = U \kappa_{t} \in \mathcal{F}(\frak{h} \oplus \frak{h})$, we find 
\[  \tr_{\mathcal{F}(\mathfrak{h})}\, \mathcal{O} \rho_{t} = \langle \Psi_{t}, U (\mathcal{O} \otimes 1) U^* \Psi_{t} \rangle \;. \]
We observe that 
\begin{equation*}
\Psi_{t} = U \kappa_{t} = e^{-i\mathcal{L} t / \varepsilon} \Psi\;,
\end{equation*}
where the Liouvillian $\mathcal{L}$ is defined as:
\begin{equation*}
\mathcal{L} = U ( \mathcal{H} \otimes 1 - 1 \otimes \mathcal{H} ) U^{*}\;.
\end{equation*}
In terms of the operator valued distributions, we find 
\begin{equation}\label{eq:liou}
\begin{split}
\mathcal{L} &= d\Gamma_{l}( \sqrt{1-\varepsilon^{2} \Delta } ) + \frac{\varepsilon^{3}}{2} \int dxdy\, V(x-y) a^{*}_{l, x} a^{*}_{l,y} a_{l,y} a_{l,x} \\ 
& \quad - d\Gamma_{r}( \sqrt{1-\varepsilon^{2} \Delta } ) - \frac{\varepsilon^{3}}{2} \int dxdy\, V(x-y) a^{*}_{r, x} a^{*}_{r,y} a_{r,y} a_{r,x}\;.
\end{split}
\end{equation}
%

\subsection{Bogoliubov transformations and quasi-free states}\label{sec:bogpure}

A density matrix $\rho$ on $\mathcal{F} (\frak{h})$ describes a quasi-free state, if it satisfies the fermionic Wick's theorem, which states that 
\begin{equation}\label{eq:wick}
\tr_{\mathcal{F} (\frak{h})} \; a^{\sharp_{1}}_{x_{1}} a^{\sharp_{2}}_{x_{2}} \cdots a^{\sharp_{2j}}_{x_{2j}}  \rho  = \sum_{\sigma \in P_{2j}}  \text{sgn}(\sigma) \tr_{\mathcal{F} (\frak{h})}  \, a^{\sharp_{\sigma(1)}}_{x_{\sigma(1)}} a^{\sharp_{\sigma(2)}}_{x_{\sigma(2)}} \rho \cdots 
\tr_{\mathcal{F} (\frak{h})} a^{\sharp_{\sigma(2j-1)}}_{x_{\sigma(2j-1)}} a^{\sharp_{\sigma(2j)}}_{x_{\sigma(2j)}} \rho \;,
\end{equation}
where $a^{\sharp}(f)$ denotes either $a(f)$ or $a^{*}(f)$ and where 
\begin{equation*}
P_{2j} = \Big\{ \sigma \in S_{2j} \; \big|\; \sigma(2\ell-1) < \sigma(2\ell + 1)\;,\; \ell=1, \ldots, j-1\;,\; \sigma(2\ell-1) < \sigma(2\ell)\;,\; \ell = 1,\ldots, j \Big\}
\end{equation*}
is the set of possible pairing of $2j$ elements ($S_{2j}$ is the set of permutations). From (\ref{eq:wick}), we conclude that quasi-free states are completely characterized by their reduced density matrix and their pairing density, defined by the integral kernels 
\[ \begin{split} \gamma^{(1)}_\rho (x;y) &= \tr_{\mathcal{F} (\frak{h})} a_x^* a_y \rho \\ 
\alpha^{(1)}_\rho (x;y) &= \tr_{\mathcal{F} (\frak{h})} a_x a_y \rho \;.\end{split} \]
From the canonical anticommutation relations (\ref{eq:CAR}), it is easy to check that the hermitian operator $\gamma^{(1)}_\rho$ and the symmetric operator $\alpha^{(1)}_\rho$ satisfy the inequality \[ (\alpha^{(1)}_\rho)^* \alpha^{(1)}_\rho \leq \gamma^{(1)}_\rho (1- \gamma^{(1)}_\rho) .\] 

On the other hand, for every hermitian operator $\omega$ and symmetric operator $\alpha$ on $\frak{h}$, with $0 \leq \omega \leq 1$, $\tr \; \omega < \infty$ and $\alpha^* \alpha \leq \omega (1-\omega)$, there exists a unique quasi-free state $\rho$ on $\mathcal{F} (\frak{h})$ with $\omega$ as reduced density matrix and $\alpha$ as pairing density. The quasi-free state $\rho$ is pure if and only if $\alpha^* \alpha = \omega (1-\omega)$ holds as an equality, rather than just as an inequality; see {\it e.g.} \cite{BLS, Solovej} for mathematical reviews.

As discussed in Section \ref{sec: effective-zero-temp}, quasi-free states associated with an orthogonal projection $\omega$ and with $\alpha =0$ are Slater determinants. This class of quasi-free states is relevant to describe equilibrium states of non-interacting Fermi gases at zero temperature. 
In this section, we are interested in another class of quasi-free states, describing equilibrium states at positive temperature. We can still restrict our attention to $\alpha = 0$, but now $0 \leq \omega \leq 1$ will be a Fermi-Dirac distribution, not an orthogonal projection. The corresponding quasi-free state is mixed and it cannot be represented by a vector in the Hilbert space $\mathcal{F} (\frak{h})$. It can, however, be represented by a vector $\Psi = \mathcal{R}_\omega \Omega \in \mathcal{F} (\frak{h} \oplus \frak{h})$, where (in analogy with (\ref{eq:Romega2}) for Slater determinant), the Bogoliubov transformation $\mathcal{R}_\omega$ is a unitary operator, satisfying   
\begin{equation}\label{eq:Ra}
\mathcal{R}_\omega^{*} a_{l,x} \mathcal{R}_\omega = a_{l}(u_{x}) - a^{*}_{r}(\overline{v}_{x})\;,\qquad \mathcal{R}^{*}_\omega a_{r,x} \mathcal{R}_\omega = a_{r}(\overline{u}_{x}) + a^{*}_{l}(v_{x})
\end{equation}
with $u = \sqrt{1-\omega}$ and $v = \sqrt{\omega}$. Using these rules, together with the anticommutation relations (\ref{eq:CAR}), it is easy to check that $\Psi$ is indeed a quasi-free state and, recalling (\ref{eq:1partPsi}), that $\gamma_\Psi = \omega$ and $\pi_\Psi = 0$. The vector $\Psi$ is known as the Araki-Wyss representation of the mixed quasi-free state with reduced density $\omega$ and vanishing pairing density. From (\ref{eq:Ra}), we can also compute the action of the Bogoliubov transformation $\mathcal{R}_\omega$ on operators of the form $d\Gamma_\sigma (O)$, which will play an important role in our analysis. 
\begin{lemma}\label{lemma: Bogolubov-observables}
Let $0 \leq \omega \leq 1$ and $\mathcal{R}_\omega$ denote the Bogoliubov transformation defined in (\ref{eq:Ra}), generating the quasi-free state with reduced density $\omega$ and with vanishing pairing density. Let $u = \sqrt{1-\omega}$ and $v = \sqrt{\omega}$. Then, for any operator $O$ on $\mathfrak{h}$, we have the identities
\begin{equation*}
\begin{split}
\mathcal{R}^{*}_\omega d \Gamma_{l}(O) \mathcal{R}_\omega & = \tr ( O v^{*} v) +  d \Gamma_{l}(u O u^{*}) - d \Gamma _{r} (\overline{v} O  \overline{v}^{*}) 
- d \Gamma^{+}_{l r}(u O v^{*}) -d \Gamma^{-}_{r l}(v O u^{*})  \;,
\\
\mathcal{R}^{*}_\omega d \Gamma_{r}(O) \mathcal{R_\omega} & = \tr ( O \overline{v}^{*} \overline{v}) +  d \Gamma_{l}(\overline{u} O \overline{u}^{*}) - d \Gamma _{l} (v O v^{*}) 
+ d \Gamma^{+}_{r l}(\overline{u} O \overline{v}^{*}) +d \Gamma^{-}_{l r}(\overline{v} O \overline{u}^{*})  \;.
\\
\end{split}
\end{equation*}
\end{lemma}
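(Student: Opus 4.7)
The plan is a direct algebraic computation. Start from the kernel representation $d\Gamma_l(O) = \int dx\, dy\; O(x;y)\, a^*_{l,x} a_{l,y}$, and apply the Bogoliubov rule (\ref{eq:Ra}) together with its adjoint, giving
\[
\mathcal{R}_\omega^* a^*_{l,x} \mathcal{R}_\omega = a_l^*(u_x) - a_r(\bar v_x), \qquad
\mathcal{R}_\omega^* a_{l,y} \mathcal{R}_\omega = a_l(u_y) - a_r^*(\bar v_y).
\]
Substituting both into the bilinear produces four operator contributions in the conjugated fields,
\[
a_l^*(u_x)\, a_l(u_y),\quad -a_l^*(u_x)\, a_r^*(\bar v_y),\quad -a_r(\bar v_x)\, a_l(u_y),\quad a_r(\bar v_x)\, a_r^*(\bar v_y).
\]
The first three are already normal-ordered. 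Writing out the smeared-field definitions, integrating against $O(x;y)$, and using the self-adjointness $u = u^*$ and $v = v^*$, they identify respectively with $d\Gamma_l(uOu^*)$, $-d\Gamma^+_{lr}(uOv^*)$, and $-d\Gamma^-_{rl}(vOu^*)$; in the third case one additionally uses that $a_l$ and $a_r$ anticommute to bring the factors into the orderings fixed in Section~\ref{sec: mixed-states}.

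The fourth term is the only one that requires normal ordering. Applying the CAR (\ref{eq:CAR}) yields
\[
a_r(\bar v_x)\, a_r^*(\bar v_y) = -a_r^*(\bar v_y)\, a_r(\bar v_x) + \langle \bar v_x, \bar v_y\rangle_{\mathfrak h}.
\]
Using self-adjointness of $v$ and $v^*v = \omega$, the contraction evaluates to $\langle \bar v_x, \bar v_y\rangle_{\mathfrak h} = \omega(y;x)$, so that $\int dx\, dy\, O(x;y)\, \langle \bar v_x, \bar v_y\rangle_{\mathfrak h} = \tr(O v^* v)$, matching the scalar in the lemma. The remaining normal-ordered piece reassembles into $-d\Gamma_r(\bar v O \bar v^*)$ once the effective kernel is read off from the two smeared fields. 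Collecting all four contributions yields the first identity.

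The identity for $\mathcal{R}_\omega^* d\Gamma_r(O) \mathcal{R}_\omega$ follows from the completely analogous computation, using the second rule in (\ref{eq:Ra}) in place of the first. The appearance of $\bar u, \bar v$ instead of $u, v$, and the reversed signs of the pairing terms, reflect precisely the complex conjugations and sign differences between the two Bogoliubov rules; the trace term is produced in the same way, now from the anticommutator $\{a_l(v_x), a_l^*(v_y)\}$ and giving $\tr(O \bar v^* \bar v)$. The only real subtlety throughout is notational bookkeeping: carefully tracking the antilinearity of $f \mapsto a(f)$ and the conjugations it introduces, and identifying the resulting four-index integrals with the matrix products $u O u^*$, $u O v^*$, $v O u^*$, and $\bar v O \bar v^*$. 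There is no analytic difficulty; the proof is entirely algebraic and term by term.
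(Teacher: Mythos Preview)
Your approach is correct and is exactly the standard direct computation; the paper does not give an explicit proof of this lemma, treating it as a routine consequence of the Bogoliubov rules (\ref{eq:Ra}) and the CAR. One small remark: in your third term $-a_r(\bar v_x)\,a_l(u_y)$ no anticommutation is actually needed, since the paper's convention $d\Gamma^{-}_{rl}(J)=\int dx\,dy\,J(x;y)\,a_{r,x}a_{l,y}$ already places the $r$-annihilator to the left; the term matches that definition directly once you read off the kernel.
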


\subsection{Local convergence to the Hartree dynamics at positive temperature}
\label{sec: conve-hartree-mixed}

We will consider the time evolution of initial data that can be approximated by a mixed quasi-free state. We will assume a uniform estimate for the local density of the quasi-free state.
\begin{assumption}[Bounded Density]
\label{ass:bd} 
Let $0 \leq \omega \leq 1$ and let $ \mathcal{W}_z^{(1)}$ denote the spatial weight, localizing around the point $z \in \bR^3$, introduced in (\ref{eq: polynomial weight}). We say that $\omega$ has bounded density over $\bR^3$ if there exists $C > 0$ such that 
\[ \tr \, \W^{(1)}_z \omega \leq C \veps^{-3} \]
for all $z \in \bR^3$.
\end{assumption}
\begin{remark}
Note that by monotonicity in $n$, if Assumption \ref{ass:bd}  holds true, we also have $\tr \, \W_z^{(n)} \omega \leq C \veps^{-3}$, for all $n>1 $.
\end{remark}
Furthermore, we will assume that the initial quasi-free state exhibits a local semiclassical structure in a domain $\Lambda \subset \bR^3$, in the sense specified by the following assumption.
\begin{assumption}[Local Semiclassical Structure]\label{ass:locsc}
\label{ass:sc} Let $0 \leq \omega \leq 1$. Let $\pi$ indicate either $\omega$, $\sqrt{\omega}$ or $1-\sqrt{1-\omega}$.
We say that $\omega$ exhibits a local semiclassical structure in $\Lambda \subset \bR^3$ if, for all $n \in \bN$ large enough, there exist a constant $C_{n} > 0$, independent of $\Lambda$, such that
\begin{align} 
 (1+|z - z'|^{2n - 4}) \| \mathcal{W}^{(n)}_{z} \omega \mathcal{W}_{z'}^{(n/2)} \|_\tr  &\leq C_{n} \varepsilon^{-3} \label{eq:sc1} \\
(1 + |z - z'|^{2n})  \big \| \W_{z}^{(n)} \big[ \veps \nabla, \pi \big]  \W_{z'}^{(n/2)} \big\|_{\tr}  &\leq C_{n} \veps^{-2}\label{eq:sc4}\\
(1 + |z - z'|^{2n}) \big \| \W_{z}^{(n)} \big[ e^{i p \cdot \hat{x}}, \pi \big] \W_{z'}^{(n/2)}\big\|_{\tr} &\leq C_{n} \veps^{-2}(1+|p|) \label{eq:sc2} \\
 (1+|z- z'|^{2n}) \| \mathcal{W}^{(n)}_{z} \sqrt{\omega} \sqrt{1-\omega} \mathcal{W}_{z'}^{(n/2)} \|_\tr &\leq C_{n} \veps^{-2}  \label{eq:sc1b} 
\end{align} 
for all $p \in \bR^3$, $z \in \Lambda$ and $z' \in \bR^3$. Here $\mathcal{W}_z^{(n)}$ denotes the spatial weight defined in (\ref{eq: polynomial weight}). 
\end{assumption} 
\begin{remark}
For typical $H$ of the form (\ref{eq:H-noint}) and chemical potential $\mu > 0$, the Fermi projection $\omega_\mu = \chi (H \leq \mu)$ is not expected to satisfy this assumption; this is why we are switching here to positive-temperature states. The problem in establishing the bounds (\ref{eq:sc1})-(\ref{eq:sc1b}) at zero temperature is already evident for the free Fermi gas ($V_{\text{ext}} = 0$). There, the reduced one-particle density matrix at density $\varepsilon^{-3}$ is:
\begin{equation}
\begin{split}
\omega_{\text{FFG}}(x,y) &= \frac{1}{\varepsilon^{3}} \varphi\Big( \frac{x-y}{\varepsilon} \Big)\;,\qquad \text{where} \\
\varphi(z) &= \frac{4\pi}{|z|^{2}} \Big( \frac{\sin(|z|)}{|z|} - \cos (|z|) \Big)\;.
\end{split}
\end{equation} 
The non-integrable decay in $|x-y|$ of $\omega_{\text{FFG}}(x,y)$ has to be compared with the fast decay assumed for the localized trace norms in the bounds (\ref{eq:sc1})-(\ref{eq:sc1b}).

%
%
%
%
\end{remark}

The next theorem establishes the validity of the Hartree equation as an approximation for the time evolution of approximately quasi-free initial data, exhibiting a local semiclassical structure. 
\begin{theorem}[Derivation of the Hartree Equation on a Local Scale - Mixed States]\label{thm:T>0}  
Let $0 \leq \omega \leq 1$ be an operator over $\frak{h}$, satisfying Assumption \ref{ass:bd} (bounded density) and Assumption~\ref{ass:sc} (local semiclassical structure) in a domain $\Lambda \subset \bR^3$. Let $\Xi \in \mathcal{F}(\frak{h} \oplus \frak{h})$ be such that:
\begin{equation}\label{eq:assT>0}
\sup _{z \in \R^{3}} \,
\langle \Xi, d\Gamma_\sigma (\mathcal{W}^{(n)}_{z}) \Xi \rangle \leq C\varepsilon^{-3}\;, \qquad  \sup _{z \in \Lambda} \, 
\langle \Xi, d\Gamma_\sigma (\mathcal{W}^{(n)}_{z}) \Xi \rangle \leq C\varepsilon^{-3+\delta}\;,
\end{equation}
for all $n \in \bN$ large enough, for $\sigma = l,r$, for some $0 < \delta \leq 1$ and for some $C>0$ independent of $\Lambda$. Let $\mathcal{L}$ denote the  Liouvillian  (\ref{eq:liou}), with $V \in \mathcal{S} (\bR^3)$. Consider the time evolution 
\begin{equation*}
\Psi_{t} = e^{-i \mathcal{L} t / \varepsilon} \mathcal{R}_{\omega} \Xi\;, 
\end{equation*}
of the initial datum $\mathcal{R}_{\omega} \Xi$ approximating the quasi-free state with reduced density $\omega$ and vanishing pairing density ($\mathcal{R}_\omega$ denotes the Bogoliubov transformation introduced in (\ref{eq:Ra})). 

Let $\gamma_t^{(1)}$ denote the one-particle reduced density matrix associated with $\Psi_t$ and let $\omega_{t}$ be the solution of the pseudo-relativistic Hartree equation (\ref{eq:hartree-main}), with initial datum $\omega_{t=0} = \omega$. Let $\mathcal{O}$ be as in Theorem \ref{thm:main} and let $\mathcal{O}_{z}(x) = \mathcal{O}(x - z)$.
%
%
Then, there exist constants $C, c>0$, independent of $\Lambda$, such that 
\begin{equation}
\Big| \tr\, \mathcal{O}_{z} ( \gamma^{(1)}_{t} - \omega_{t} ) \Big| \leq C \exp (c \exp (c|t|)) \, \varepsilon^{-3+\delta} \; , 
\end{equation}
for all $t\in \mathbb{R}$ and $z \in \Lambda$ with $B_{\veps^{-\delta} |t|}(z) \subset \Lambda$. 
\end{theorem}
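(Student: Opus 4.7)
The plan is to compare the many-body and Hartree dynamics through a fluctuation vector around the Hartree state. Set
\[ \Xi_t := \mathcal{R}_{\omega_t}^{*}\, e^{-i\mathcal{L}t/\varepsilon}\, \mathcal{R}_{\omega}\, \Xi, \]
so that $\Psi_t = \mathcal{R}_{\omega_t}\Xi_t$ and $\Xi_0 = \Xi$. Using Lemma~\ref{lemma: Bogolubov-observables} to compute $\mathcal{R}_{\omega_t}^{*}\, d\Gamma_l(\mathcal{O}_z)\,\mathcal{R}_{\omega_t}$, the scalar piece is exactly the Hartree prediction $\tr(\mathcal{O}_z\omega_t)$, and the remaining terms are quadratic in the creation/annihilation operators on $\mathcal{F}(\mathfrak{h}\oplus\mathfrak{h})$, with kernels $u_t\mathcal{O}_z u_t^{*}$, $\overline{v}_t\mathcal{O}_z\overline{v}_t^{*}$, $u_t\mathcal{O}_z v_t^{*}$ and $v_t\mathcal{O}_z u_t^{*}$, where $u_t = \sqrt{1-\omega_t}$ and $v_t = \sqrt{\omega_t}$. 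This reduces the theorem to the estimate
\[ \bigl|\langle\Xi_t, \mathcal{A}_{\mathcal{O}_z,\omega_t}\,\Xi_t\rangle\bigr| \;\leq\; C\exp(c\exp(c|t|))\,\varepsilon^{-3+\delta} \]
for the resulting quadratic operator $\mathcal{A}_{\mathcal{O}_z,\omega_t}$.

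Next, I would bound the quadratic form via Lemma~\ref{lemma: bound-Fock-operator} together with Cauchy--Schwarz: the diagonal (particle-number) terms are dominated by $\langle\Xi_t, d\Gamma_\sigma(\mathcal{W}_{z}^{(n)})\Xi_t\rangle$ after using (\ref{eq: assumptions-O}) to majorize $\mathcal{O}_z$ by Schwartz-type weights, while the pairing terms pick up the small factor provided by the trace bound (\ref{eq:sc1b}) on $\sqrt{\omega_t}\sqrt{1-\omega_t}$. Two propagation results then close the argument. Proposition~\ref{prop: semiclassical_prop-main} carries Assumption~\ref{ass:sc} forward along the Hartree flow, with constants degrading as $\exp(c\exp(c|t|))$ and with the domain shrunk by a $\varepsilon^{-\delta}|t|$-tube; the shrinking reflects the bounded group velocity of the pseudo-relativistic kinetic term. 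Proposition~\ref{thm:bound-localized-fluctuation} propagates the enhanced local bound on the fluctuations,
\[ \sup_{z\in\Lambda(t)}\langle\Xi_t, d\Gamma_\sigma(\mathcal{W}_z^{(n)})\Xi_t\rangle \;\leq\; C\exp(c\exp(c|t|))\,\varepsilon^{-3+\delta}, \]
from the initial bound (\ref{eq:assT>0}). Substituting both inputs into the pointwise estimate yields the claim on every $z\in\Lambda$ with $B_{\varepsilon^{-\delta}|t|}(z)\subset\Lambda$.

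The main obstacle is Proposition~\ref{thm:bound-localized-fluctuation}. Its proof requires a Gronwall-type argument for $\langle\Xi_t, d\Gamma_\sigma(\mathcal{W}_z^{(n)})\Xi_t\rangle$ with respect to the generator $\mathcal{L}_t := \mathcal{R}_{\omega_t}^{*}\,\mathcal{L}\,\mathcal{R}_{\omega_t} - i\varepsilon\,\mathcal{R}_{\omega_t}^{*}\partial_t\mathcal{R}_{\omega_t}$. The linear terms in creation/annihilation operators cancel precisely because $\omega_t$ solves (\ref{eq:hartree-main}), but the remaining quadratic and quartic contributions must be controlled uniformly as $z$ sweeps the $\varepsilon^{-\delta}|t|$-ball. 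The commutator $[\sqrt{1-\varepsilon^2\Delta},\,\mathcal{W}_z^{(n)}]$ remains bounded with the correct spatial decay thanks to the bounded group velocity of the pseudo-relativistic dispersion, while the interaction commutator is tamed via the propagated bounded density (Assumption~\ref{ass:bd}) and the Schwartz decay of $V$. The subtle point is extracting the enhanced factor $\varepsilon^{\delta}$ relative to the global bound in (\ref{eq:assT>0}): this requires the full strength of the off-diagonal decay (\ref{eq:sc1})--(\ref{eq:sc1b}) supplied by Proposition~\ref{prop: semiclassical_prop-main}, so that cross-terms from points outside $\Lambda(t)$ decay fast enough to be reabsorbed, which is what ultimately dictates the geometric constraint $B_{\varepsilon^{-\delta}|t|}(z)\subset\Lambda$ in the statement.
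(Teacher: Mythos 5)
Your proposal is correct and follows essentially the same route as the paper: conjugation by $\mathcal{R}_{\omega_t}$ to reduce to the fluctuation vector $\Xi_t$, control of the resulting quadratic operators via the propagated local semiclassical bounds, and a Gronwall argument on $\langle\Xi_t, d\Gamma_\sigma(\mathcal{W}_z^{(n)})\Xi_t\rangle$ with a time-shrinking ball $R_s=\varepsilon^{-\delta}(t-s)$ that produces the constraint $B_{\varepsilon^{-\delta}|t|}(z)\subset\Lambda$. The only imprecision is that the fermionic generator has no linear terms to cancel (it is even in $a^\sharp$); the Hartree equation instead removes the dangerous quadratic off-diagonal pieces, and one also needs the rough global a priori bound of order $\varepsilon^{-3}$ (the paper's Lemma \ref{prop: a priori bound}) to absorb the contributions from outside the shrinking ball.
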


An important observation, which allows us to apply Theorem~\ref{thm:T>0} to study the evolution of approximate Slater determinants and to prove Theorem~\ref{thm:main}, is the fact that equilibrium states of non-interacting Fermi gases with Hamiltonian $d\Gamma (H)$ (and $H$ as defined in (\ref{eq:H-noint})), at chemical potential $\mu > 0$ and at positive temperature $\beta^{-1} = O (\veps)$, do satisfy Assumption~\ref{ass:bd} and Assumption~\ref{ass:sc}, if $H$ and $\mu$ satisfy Assumption \ref{ass:Weyl}. 

\begin{proposition}[Bounded Density and Local Semiclassical Structure for Equilibrium States]\label{prop:localsc} 
Let $H$ be the Hamiltonian defined in (\ref{eq:H-noint}), $\beta = O (\varepsilon^{-1})$. Suppose that $(H,\mu )$ satisfy Assumption \ref{ass:Weyl} in the region $\Lambda \subset \bR^3$. Then, the Fermi-Dirac distribution 
\begin{equation}\label{eq:FD}
\omega_{\mu,\beta}  = \frac{1}{1 + e^{\beta (H - \mu)}}\;.
\end{equation}
at chemical potential $\mu$ and at inverse temperature $\beta$, satisfies Assumption \ref{ass:bd} and Assumption \ref{ass:sc} in the domain $\Lambda$.  
\end{proposition}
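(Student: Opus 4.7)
The proposition has two parts: the bounded-density estimate of Assumption~\ref{ass:bd} and the local semiclassical structure of Assumption~\ref{ass:locsc}. The plan is to establish the first by standard semiclassical a priori bounds not using Assumption~\ref{ass:Weyl}, and the second by reducing to Assumption~\ref{ass:Weyl} through functional calculus and commutator expansions. Concretely, for the bounded density, I would bound the density pointwise, $\omega_{\mu,\beta}(x;x) \leq C\veps^{-3}$, by applying a Golden-Thompson / Lieb-Thirring type estimate to the bound $f(\beta(\lambda-\mu)) \leq e^{-\beta(\lambda-\mu)_{+}}$ together with the structure $H = \sqrt{1-\veps^{2}\Delta}+V_{\text{ext}} \geq 1$; integration of this pointwise bound against the integrable weight $\W^{(1)}_z$ then yields the claim uniformly in $z$. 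This is the content I would prove in Appendix~\ref{sec: appen-apriori}.

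For Assumption~\ref{ass:locsc}, the key observation is the decomposition
\[
\omega_{\mu,\beta} \;=\; \chi(H \leq \mu) + g(\beta(H-\mu)),\qquad g(x) := f(x) - \chi_{(-\infty,0]}(x),
\]
where $g$ is smooth across the origin and satisfies $|g(x)| \leq C_{m}(1+x^{2m})^{-1}$ for every $m$. Analogous decompositions hold for $\sqrt{\omega_{\mu,\beta}}$ and $1-\sqrt{1-\omega_{\mu,\beta}}$ (the same $\chi_{(-\infty,0]}$ is subtracted), while $\sqrt{\omega_{\mu,\beta}(1-\omega_{\mu,\beta})}$ is already rapidly decaying with no subtraction. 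In the sense of functional calculus one has $|g(\beta(H-\mu))| \leq C_m(1+(\beta(H-\mu))^{2m})^{-1}$, so Assumption~\ref{ass:Weyl} immediately yields trace bounds of order $\veps^{-2}$, matching the correct scaling of the commutator and overlap bounds \eqref{eq:sc4}--\eqref{eq:sc1b}; the sharp-projection contribution is needed only in \eqref{eq:sc1} for $\omega$, whose $\veps^{-3}$ scaling is consistent with the bulk density from the first part.

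The $|z-z'|$ decay on the left-hand sides of \eqref{eq:sc1}--\eqref{eq:sc1b} and the commutators with $\veps\nabla$ and $e^{ip\cdot\hat{x}}$ are handled through the Helffer-Sjöstrand representation
\[
F(H) \;=\; \frac{1}{\pi}\int_{\mathbb{C}} \bar{\partial}\tilde{F}(z)\,(z-H)^{-1}\, d^{2}z,
\]
applied to $F(\lambda) = g(\beta(\lambda-\mu))$ and its variants, with $\tilde F$ an almost-analytic extension concentrated in a strip around the real axis and inheriting the rapid decay of $g$ around zero. Commutators become integrals of $(z-H)^{-1}[H,A](z-H)^{-1}$, and iterated applications of $[\W^{(n)}_{z},(z-H)^{-1}] = (z-H)^{-1}[H,\W^{(n)}_{z}](z-H)^{-1}$ against the spatial weights generate the required powers of $(1+|z-z'|)^{-1}$. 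All the commutators $[H,A]$ that appear are well controlled: $[\veps\nabla, H] = -\veps\nabla V_{\text{ext}}$ is bounded by the hypothesis on $V_{\text{ext}}$, $[e^{ip\cdot\hat{x}},H]$ has operator norm $O(1+|p|)$ by symbol calculus on $\sqrt{1-\veps^{2}\Delta}$, and $[\W^{(n)}_z, H]$ decays pointwise away from $z$. For the projection contribution in \eqref{eq:sc1}, I would complement the above by a layer-cake expansion $f(\beta(H-\mu)) = \int (-\beta f'(\beta(\mu'-\mu)))\,\chi(H\leq\mu')\,d\mu'$, realising $\omega_{\mu,\beta}$ as an average of projections weighted by a function of $\mu'$ concentrated in a shell of width $O(\veps)$ around $\mu$; the spatial decay is then obtained by integration by parts in $\mu'$ against this smooth weight, which transfers the Weyl-type concentration into off-diagonal decay in position.

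The hardest part will be executing these commutator expansions so as to deliver the \emph{asymmetric} polynomial decay $(1+|z-z'|^{2n-4})^{-1}$ (respectively $(1+|z-z'|^{2n})^{-1}$), with different powers of the spatial weight on the left and right of the operator, uniformly over $z \in \Lambda$ and $z' \in \mathbb{R}^{3}$, while keeping the trace-norm estimates quantitative and with the correct $\veps$-scaling; in particular, controlling the sharp projection piece beyond the pointwise density bound, without any hypothesis below $\mu$ in Assumption~\ref{ass:Weyl}, requires some care.
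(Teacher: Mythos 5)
There is a genuine gap, and it sits at the heart of your argument for Assumption \ref{ass:locsc}: the decomposition $\omega_{\mu,\beta} = \chi(H\leq\mu) + g(\beta(H-\mu))$ with $g = f - \chi_{(-\infty,0]}$ does not do what you need. First, $g$ is not smooth across the origin: $f(0^{\pm})=1/2$ while $\chi_{(-\infty,0]}(0^{\pm})=1,0$, so $g$ has a unit jump at $0$, and the Helffer--Sj\"ostrand representation (which requires an almost-analytic extension controlled by derivatives of $g$) is simply not available for it. Second, and more fundamentally, the two pieces individually have slowly decaying kernels whose bad tails cancel only in the sum. As the remark after Assumption \ref{ass:locsc} points out, already for $V_{\text{ext}}=0$ the kernel of the Fermi projection decays only like $|x-y|^{-2}$, which is incompatible with the off-diagonal decay $(1+|z-z'|^{2n-4})^{-1}$ demanded in (\ref{eq:sc1}); the entire purpose of the positive-temperature regularization is that $\omega_{\mu,\beta}$ as a whole has a rapidly decaying kernel while neither $\chi(H\leq\mu)$ nor the remainder does. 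So neither summand can be estimated separately with the required decay, and the same objection applies to the commutators $[\veps\nabla,\chi(H\leq\mu)]$, $[e^{ip\cdot\hat x},\chi(H\leq\mu)]$, which you do not address. Note also that Assumption \ref{ass:Weyl} is a one-sided localized trace bound carrying no off-diagonal information, and your layer-cake representation with integration by parts in $\mu'$ does not manufacture spatial decay of the spectral projections out of it.

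The paper avoids the split entirely: it represents the full Fermi function (and likewise $\sqrt{f}$, $\sqrt{1-f}$, $\sqrt{f(1-f)}$) by the exact Cauchy integral along the contour $\operatorname{Im} z = \pm\pi/(2\beta)$, inside the strip free of poles of $f$, so no smoothness issue arises. The $(1+|z-z'|)^{-\cdots}$ factors are then extracted exactly by the mechanism you correctly identify -- iterated commutators $\text{ad}^{j;\alpha}_{\hat x}$ of the resolvent, producing strings $R_H(z)\,\text{ad}^{\ell_1}_{\hat x}(H)\,R_H(z)\cdots$ -- but applied to the resolvent of the full Fermi function. These strings are fed into Assumption \ref{ass:Weyl} in two ways: for terms with all resolvents at the left the $z$-integral is computed explicitly, giving $\partial_\mu^{q}\omega$, which is controlled by $\pm\partial_\mu^q\omega\leq C\beta^q\,\omega(1-\omega)\leq C\beta^q((\beta(H-\mu))^{2m}+1)^{-1}$; for the remaining terms a high power $R_H(z)^{q_0+1}$ is kept at the left and estimated after inserting $\chi(H\leq\mu)+\chi(H>\mu)$ and using the exponential decay of the Fermi function along the contour. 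As a minor point, the bounded-density part in the paper is proved via a Feynman--Kac comparison $\tr\,\W_z e^{-(H-\mu)}\leq \tr\,\W_z e^{-(H_0-\mu)}$ (plus a cutoff argument for the unbounded $V_{\text{ext}}$), not by Golden--Thompson, which controls the full trace rather than the localized one; your sketch there is vague but repairable.
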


We postpone the proof of this proposition to Appendix \ref{sec: appen-apriori} (for the verification of Assumption \ref{ass:bd}) and to Appendix \ref{app:gibbs} (for the proof of Assumption \ref{ass:sc}).

\section{Proof of Theorem \ref{thm:T>0}}
\label{sec:Tproof} 

\subsection{Propagation of the local semiclassical structure}
\label{sec: prop-semiclassical}

The first important ingredient in the proof of Theorem \ref{thm:T>0} is the propagation of the bound on the density in Assumption \ref{ass:bd} and of the local semiclassical bounds in Assumption \ref{ass:sc} along the Hartree evolution. This is the content of the next proposition.  

\begin{proposition}[Propagation of the Local Semiclassical Structure at Positive Temperature]\label{prop: semiclassical_prop-main} 
Let $V \in \mathcal{S} (\bR^3)$. Suppose that $0 \leq \omega \leq 1$ satisfies Assumption \ref{ass:bd} and Assumption \ref{ass:sc} in a domain $\Lambda \subset \bR^3$. Let $\omega_{t}$ denote the solution of the Hartree equation
\begin{equation}\label{eq:hartree2}  
i \varepsilon \partial_t \omega_t = \big[ \sqrt{1-\veps^2 \Delta} + \veps^{3} (V* \varrho_t) , \omega_t ] \end{equation}
with $\varrho_t (x) = \omega_t (x;x)$ and initial datum $\omega_{t=0} = \omega$. Then, there exist $C,c > 0$ such that 
\begin{equation}\label{pro: a priori-propagation}
\sup_{z \in \R^{3}} \tr\, \W^{(1)}_{z} \omega_{t} \leq C \exp(c|t|) \veps^{-3} \; 
\end{equation}
for all $t \in \R$. Moreover, for all $n \in \bN$ large enough there are constants $C_{n},c_{n}>0$ independent of $\Lambda$, such that  
\begin{equation}\label{eq:prop-uv}
\begin{split}
\sup_{z \in \Lambda, z' \in \bR^3} (1+|z-z'|^{2n-4})  \| \mathcal{W}_z^{(n)} \omega_t \mathcal{W}_{z'}^{(n/2)}\|_\tr  & \leq C_{n} \exp(c_{n}|t|) \veps^{-3}   \\
\end{split} \end{equation} 
and, with $\pi_t$ denoting either $\omega_t, \sqrt{\omega_t}$ or $\sqrt{1-\omega_t} $,
\begin{equation}\label{eq:eipx-u-v}
\begin{split}
\sup _{z\in \Lambda, z' \in \bR^3, p \in \bR^3}  \frac{(1+|z-z'|^{2n})}{1+|p|}  \Big\| \mathcal{W}_z^{(n)} \big[\pi_{t}, e^{i p \cdot \hat{x}}\big] \mathcal{W}_{z'}^{(n/2)} \Big\|_\tr &\leq C_{n} \exp(c_{n}|t|) \veps^{-2}\;, \\ 
\sup _{z\in \Lambda, z' \in \bR^3}  (1+|z-z'|^{2n})  \Big\| \mathcal{W}_z^{(n)} \big[\pi_{t}, \veps \nabla \big] \mathcal{W}_{z'}^{(n/2)} \Big\|_\tr &\leq C_{n} \exp(c_{n}|t|) \veps^{-2} 
\end{split}
\end{equation}
for every $t \in \bR$. Finally, for all $n \in \bN$, there are constants $C_n, c_n > 0$ such that 
\begin{equation}\label{eq:overuv}
\sup_{z \in \Lambda, z' \in \bR^3} (1+|z-z'|^{2n}) \| \W_z^{(n)} \sqrt{\omega_t} \sqrt{1-\omega_t} \W_{z'}^{(n/2)} \|_\tr \leq C_{n} \exp (c_{n}|t|) \veps^{-2} 
\end{equation}  
for every $t \in \bR$. 
\end{proposition}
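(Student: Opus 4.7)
The plan is to differentiate each of the weighted trace norms appearing in the statement along the Hartree flow and to derive a closed system of Gronwall-type differential inequalities whose solution gives the claimed exponential bounds. The crucial structural observation is that, since $h_t = \sqrt{1-\veps^2\Delta} + \veps^{3} V * \varrho_t$ is self-adjoint for each $t$, the Hartree evolution is unitary conjugation: there is a two-parameter family $U_t$ of unitaries with $\omega_t = U_t \omega U_t^*$, hence $f(\omega_t) = U_t f(\omega) U_t^*$ for any Borel function $f$. In particular $\sqrt{\omega_t}$, $\sqrt{1-\omega_t}$ and the overlap $\sqrt{\omega_t}\sqrt{1-\omega_t}$ all evolve by Heisenberg-type equations $i\veps\partial_t \pi_t = [h_t,\pi_t]$. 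This reduces the propagation of all four families of estimates to a single mechanism.

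For the density bound (\ref{pro: a priori-propagation}) I would differentiate $\tr \W_z^{(1)} \omega_t$ in time. Since $V*\varrho_t$ is a multiplication operator commuting with $\W_z^{(1)}$, only the kinetic term survives:
\begin{equation*}
i\veps \partial_t \tr \W_z^{(1)} \omega_t = \tr \bigl[\W_z^{(1)},\sqrt{1-\veps^2\Delta}\bigr]\, \omega_t\;.
\end{equation*}
Using an integral representation of the square root, the commutator $[\W_z^{(1)},\sqrt{1-\veps^2\Delta}]$ is reduced to commutators $[\W_z^{(1)},\veps\nabla]$ sandwiched between resolvents $(s+1-\veps^2\Delta)^{-1}$; the former is multiplication by an $O(\veps)$ derivative of the weight, which itself can be written as an integral of translated weights $\W_{z+y}^{(1)}$ against a rapidly decaying profile in $y$. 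Re-expressing the right-hand side in terms of $\tr \W_{z+y}^{(1)} \omega_t$ and integrating over $y$, Gronwall closes and yields the claimed growth.

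The same mechanism drives the two-weight bound (\ref{eq:prop-uv}) and the commutator bounds (\ref{eq:eipx-u-v}), with the added requirement that the decay in $|z-z'|$ be preserved. For (\ref{eq:prop-uv}), the kinetic commutator again produces shifted weights $\W_{z+\veps y}^{(n)}$, and the submultiplicativity $\W_z^{(n)}(x)\W_{z'}^{(n)}(x)\leq C_n \W_z^{(n)}(z')$ transfers polynomial decay from inside the trace to the prefactor $(1+|z-z'|^{2n-4})$. For (\ref{eq:eipx-u-v}) I differentiate the commutators $[\pi_t,e^{ip\cdot\hat x}]$ and $[\pi_t,\veps\nabla]$ and use the Jacobi identity
\begin{equation*}
i\veps\partial_t [\pi_t,A] = [h_t,[\pi_t,A]] + [[h_t,A],\pi_t]\;,
\end{equation*}
treating the first term as Heisenberg transport and the second as a source. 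For $A=e^{ip\cdot\hat x}$, $[\sqrt{1-\veps^2\Delta},A]$ is explicit in Fourier space and of size $O(1+|p|)$, while $[\veps^3 V*\varrho_t,e^{ip\cdot\hat x}]$ is controlled by $\veps^3|p|\,\|V*\varrho_t\|_\infty$ using the density bound already propagated. For $A=\veps\nabla$, $[\sqrt{1-\veps^2\Delta},\veps\nabla]=0$, leaving only $\veps^4(\nabla V)*\varrho_t$, of size $O(\veps^2)$ after extracting the density. The overlap bound (\ref{eq:overuv}) is treated identically, viewing $\sqrt{\omega_t}\sqrt{1-\omega_t}$ as a single operator satisfying $i\veps\partial_t[\sqrt{\omega_t}\sqrt{1-\omega_t}] = [h_t,\sqrt{\omega_t}\sqrt{1-\omega_t}]$.

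The hard part will be the bookkeeping required to show that every weighted trace norm produced on the right-hand side of these Gronwall inequalities closes onto the same finite family of quantities under propagation, with constants uniform in $\Lambda$ and in the parameters. Two delicate points stand out. First, the translation shifts $\W_z^{(n)}\mapsto \W_{z+\veps y}^{(n)}$ induced by kinetic commutators must be absorbed into constants $C_n$ by exploiting the polynomial structure of the weights; this costs powers of $n$ and is what forces the requirement that $n$ be sufficiently large. Second, the self-consistent potential $V*\varrho_t$ must be controlled with enough decay in $z'$, which follows from $V\in\mathcal{S}$ and the uniform density bound, but couples the density estimate into all the others. The resulting Gronwall system is triangular — (\ref{pro: a priori-propagation}) feeds into (\ref{eq:prop-uv}), and both feed into (\ref{eq:eipx-u-v}) and (\ref{eq:overuv}) — so no circularity arises and the claimed exponential-in-$t$ bounds follow simultaneously for all four families.
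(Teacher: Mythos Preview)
Your strategy is correct and is essentially the paper's, but the packaging differs and a couple of details need fixing. The paper isolates the mechanism you describe for the density bound --- that $[\W_z^{(n)},\sqrt{1-\veps^2\Delta}]$ is $O(\veps)$ relative to $\W_z^{(n)}$ --- as the operator inequality $U_p(t;s)^*\W_z^{(n)}U_p(t;s)\leq e^{c|t-s|}\W_z^{(n)}$ (Eq.~(4.7), quoted from \cite{FPS}), valid both for the Hartree propagator $U$ and for its momentum-shifted version $U_p(t;s)$ generated by $e^{ip\cdot x}h(t)e^{-ip\cdot x}$. All the two-weight trace-norm bounds (\ref{eq:prop-uv}), (\ref{eq:overuv}) then follow by sandwiching, and the commutator bounds (\ref{eq:eipx-u-v}) come from differentiating $U(t;\tau)[e^{ip\cdot x},\pi_\tau]U_p(t;\tau)^*$ in $\tau$; the intertwining by $U_p$ on the right absorbs the phase and leaves only the source $e^{ip\cdot x}[\pi_\tau,\veps p\cdot A(p)]$ with $A(p)$ bounded, which is then reduced to $[\pi_\tau,\veps\nabla]$. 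Your Jacobi-identity route would produce an extra term that closes back onto $[\pi_t,e^{ip\cdot x}]$ itself, which is harmless but less clean. Two corrections: first, $[\veps^3 V*\varrho_t,e^{ip\cdot x}]=0$ since both are multiplication operators, so only the kinetic part of $h_t$ sources the $e^{ip\cdot x}$ commutator; second, the source $[[h_t,\veps\nabla],\pi_t]=-\veps^4[(\nabla V)*\varrho_t,\pi_t]$ is not of size $O(\veps^2)$ and is not controlled by the density alone --- it is Fourier-expanded and fed back into the $[e^{ip\cdot x},\pi_t]$ estimate, so the two Gronwall inequalities in (\ref{eq:eipx-u-v}) are genuinely coupled, not triangular.
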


\begin{remark}\label{rem:propa} 
It follows from (\ref{eq:prop-uv}) that 
\begin{equation}\label{eq:1loc} \| \W_z^{(n)} \omega_t \|_\tr \leq C \exp (c|t|) \veps^{-3} \end{equation} 
for $n \in \bN$ large enough and for all $z \in \Lambda$. In fact, fixing $n_0 \geq 2$ so that (\ref{eq:prop-uv}) holds true,
\[ \frac{1}{C} \int_{\bR^3} \W^{(n_0 /2)}_{z'} (x) \, dz' = 1 \]
for some $C > 0$, we can estimate
\begin{equation*}
\begin{split}
 \| \W_z^{(n_0)} \omega_t \|_\tr & \leq \frac{1}{C} \int_{\bR^3} dz' \, \| \W_z^{(n_0)} \omega_t  \W^{(n_0 /2)}_{z'} \|_\tr 
 \\
 &\leq C \exp (c |t|)  \veps^{-3}  \int_{\bR^3}  dz' \frac{1}{1+|z-z'|^{2n_0}}  \leq C \exp (c|t|)\veps^{-3} \end{split}
\end{equation*}
By monotonicity, we get (\ref{eq:1loc}) for all $n \geq n_0$. 
Similar bounds can be also derived from (\ref{eq:eipx-u-v}) and (\ref{eq:overuv}). 
\end{remark} 

\begin{proof} 
We proceed analogously to \cite[Sect.~6]{FPS}. The constants in this proof depend on $n$, but we will drop this dependence for brevity. We consider the Hartree evolution $U$ defined by 
\[ i \veps \partial_t U(t;s) = h(t) U(t;s),\]
 with $U(s;s) = 1$ for all $s \in \bR$ and with $h(t) = \sqrt{1-\varepsilon^2 \Delta} + \veps^{3} (V * \varrho_t) (x)$. For $p \in \bR^3$, we also introduce the modified Hartree evolution $U_p$ defined by 
\[ i \veps\partial_t U_p (t;s) = h_p (t) U_p (t;s),\]
with $U_p (s;s) = 1$ and 
\[ h_p (t) = e^{ip \cdot x} h(t) e^{-ip\cdot x} = \sqrt{1 + \varepsilon^2 (-i \nabla + p)^2} + \veps^{3} (V* \varrho_t) (x) \]
Notice that, with this notation, $U = U_{p=0}$. An important property of the evolution $U_p$ is that it preserves locality, in the sense that 
\begin{equation}\label{eq:Up-W} U_p (t;s)^* \mathcal{W}^{(n)}_z U_p (t;s) \leq e^{c |t-s|} \mathcal{W}_z^{(n)} \, , \end{equation} 
for some $c>0$ depending on $n$ only. 
For $p=0$, the proof of (\ref{eq:Up-W}) is given in \cite[Prop.~6.2]{FPS}; the argument of  \cite[Prop.~6.2]{FPS} actually applies to general time-dependent Hamiltonians $H(t) = \sqrt{1 - \varepsilon^{2} \Delta} + V(x,t)$. Since that proof can be trivially extended to $p \not =0$, we skip the details. Because $\omega_t  = U (t;0)^* \omega U(t;0)$, Eq. (\ref{eq:Up-W}) immediately implies that 
\[ \tr \, \mathcal{W}_z \omega_t =  \tr \; U (t;0) \mathcal{W}_z U(t;0)^* \omega \leq e^{c|t|} \tr\, \mathcal{W}_z \omega \leq C e^{c|t|} \varepsilon^{-3} \;. \]
Similarly, also (\ref{eq:prop-uv}) and (\ref{eq:overuv}) follow from (\ref{eq:Up-W}) and from the corresponding bounds at time $t=0$ (which are contained in Assumption \ref{ass:sc}), because $\sqrt{\omega_t} \sqrt{1-\omega_t} = U(t;0)^* \sqrt{\omega} \sqrt{1-\omega} \, U(t;0)$. 

To prove (\ref{eq:eipx-u-v}), we observe that 
\[ i\varepsilon \partial_t \pi_t = \big[ h(t) , \pi_t ]\;. \]
Thus, we obtain 
\[ i\varepsilon \partial_\tau U(t;\tau) [e^{ip \cdot \hat{x}}, \pi_\tau] U_p (t;\tau)^* = U (t;\tau) e^{ip \cdot \hat{x}} \big[ \pi_\tau , \varepsilon  p \cdot A(p) \big] U_p (t;\tau)^* \]
where we defined 
\[ A (p) = \int_0^1 ds\, \frac{\varepsilon (-i \nabla + s p)}{\sqrt{1+\varepsilon^2 (-i\nabla +s p)^2}}\;. \]
Integrating over $\tau$, we find 
\begin{equation}\label{eq: gron-1-1}
\begin{split}
\big \| \mathcal{W}_z^{(n)}  \big[ e^{i p \cdot \hat{x}}, \pi_{t} \big]  \mathcal{W}_{z'}^{(n/2)} \big\|_\text{tr} 
\leq \; &\big \| \mathcal{W}_z^{(n)}  U(t;0)  \big[ e^{i p \cdot \hat{x}}, \pi_{0} \big]U_{p}(t;0)^{*} \mathcal{W}_{z'}^{(n/2)} \big\|_\text{tr} 
\\
& + |p| \int_{0}^{t}\big \| \mathcal{W}^{(n)}_z U(t;\tau) e^{i p \cdot \hat{x}} \big[ \pi_{\tau}, A(p) \big]  U_{p}(t;\tau)^{*} \mathcal{W}^{(n/2)}_{z'}  \big\|_\text{tr}    d \tau \;.
\end{split}
\end{equation}
With (\ref{eq:Up-W}) and proceeding as in \cite[Proof of Theorem 6.1, until Eq.~(6.8)]{FPS} to replace $[ \pi_\tau, A (p)]$ with $[\pi, \varepsilon \nabla]$, we arrive at (we focus for simplicity on $t > 0$) 
\[ 
\begin{split}
\big \| \mathcal{W}_z^{(n)}  \big[ e^{i p \cdot \hat{x}}, \pi_{t} \big]  \mathcal{W}_{z'}^{(n/2)} \big\|_\text{tr} 
\leq \; &e^{c t} \big \| \mathcal{W}_z^{(n)}  \big[ e^{i p \cdot \hat{x}}, \pi_{0} \big]  \mathcal{W}_{z'}^{(n/2)} \big\|_\text{tr} 
\\
& + C |p| \int_{0}^{t} d\tau\, e^{c(t-\tau)} \big \| \mathcal{W}^{(n)}_z \big[ \pi_{\tau}, \varepsilon \nabla \big] \mathcal{W}^{(n/2)}_{z'}  \big\|_\text{tr}  \;.
\end{split}\]
Dividing by $(1+|p|)$, multiplying by $(1+|z-z'|^{2n})$ and taking the supremum, we obtain  
\[ \begin{split}
e^{-ct} \sup_{z \in \Lambda , z' \in \bR^3, p \in \bR^3} &\frac{(1+ |z-z'|^{2n})}{1+|p|} \big \| \mathcal{W}_z^{(n)}  \big[ e^{i p \cdot \hat{x}}, \pi_{t} \big]  \mathcal{W}_{z'}^{(n/2)} \big\|_\text{tr} 
\\ \leq \; &\sup_{z \in \Lambda, z' \in \bR^3, p \in \bR^3} \frac{(1+ |z-z'|^{2n})}{1+|p|}  \big \| \mathcal{W}_z^{(n)}  \big[ e^{i p \cdot \hat{x}}, \pi_{0} \big]  \mathcal{W}_{z'}^{(n/2)} \big\|_\text{tr} 
\\
& +C \int_{0}^{t}  d \tau\, e^{-c\tau} \sup_{z \in \Lambda, z' \in \bR^3} (1+|z-z'|^{2n}) \big\| \mathcal{W}^{(n)}_z \big[ \pi_{\tau}, \varepsilon \nabla \big] \mathcal{W}^{(n/2)}_{z'}  \big\|  \;.
\end{split}\]
Proceeding as in \cite[Proof of Theorem 6.1, after Eq.~(6.9)]{FPS} we also obtain 
\[ \begin{split}
e^{-c t} \sup_{z \in \Lambda, z' \in \bR^3} &(1+|z-z'|^{2n}) \big \| \mathcal{W}_z^{(n)}  \big[ \varepsilon \nabla , \pi_{t} \big]  \mathcal{W}_{z'}^{(n/2)} \big\|_\text{tr} 
\\ \leq \; & 
\sup_{z \in \Lambda, z' \in \bR^3} (1+|z-z'|^{2n}) \big \| \mathcal{W}_z^{(n)}  \big[ \varepsilon \nabla , \pi_{0} \big]  \mathcal{W}_{z'}^{(n/2)} \big\|_\text{tr} 
\\
& + C \int_{0}^{t} d \tau\, 
e^{-c \tau} \sup_{z \in \Lambda, z' \in \bR^3, p \in \bR^3} \frac{(1+|z-z'|^{2n})}{1+|p|} \big \| \mathcal{W}^{(n)}_z \big[ \pi_{\tau}, e^{i p \cdot x}  \big] \mathcal{W}^{(n/2)}_{z'}  \big\|  \;.
\end{split}\]
The bounds (\ref{eq:eipx-u-v}) now follow from Gronwall's lemma and from the assumptions on the initial datum. 
\end{proof} 
\begin{remark} Strictly speaking, \cite[Theorem 6.1]{FPS} provides an estimate for the commutators of the solution of the time-dependent Hartree equation with $e^{ip\cdot x}$ for $|p| < \varepsilon^{-1}$. This restriction is important in the non-relativistic case, but it is actually irrelevant in the pseudo-relativistic case; the commutator estimate holds true for all $p\in \mathbb{R}^{3}$, as stated above. To see this, one observes that in \cite[Eq.~(6.10)]{FPS} there is actually no need to split $p$ in $|p| < \varepsilon^{-1}$ and $|p| \geq \varepsilon^{-1}$; it is enough to multiply and divide the trace norm by $(1+|p|)$ and take the supremum over $p$ in $\mathbb{R}^{3}$.
\end{remark}

As a corollary of Proposition \ref{prop: semiclassical_prop-main}, we can control commutators of $\omega_t$ with sufficiently regular and decaying  functions of $\hat x$. The proof of the next corollary is based on the bounds of Proposition \ref{prop: semiclassical_prop-main} and on the remarks thereafter. 
\begin{corollary}[Localized Commutators with Regular Functions]\label{cor:commF} 
Let $V$ be as in Theorem~\ref{thm:main}. Suppose that $0 \leq \omega \leq 1$ satisfies Assumption \ref{ass:bd} and  Assumption \ref{ass:sc} in a region $\Lambda \subset \bR^3$. Let $\omega_{t}$ be the solution of the Hartree equation (\ref{eq:hartree2}). Let 
\begin{equation}\label{eq:Fass}
F(x) = \int dp\, e^{ip \cdot x} \hat F(p)\;,\qquad  \int dp\, (1 + |p|) \big| \partial_{p}^{\alpha} \hat F(p) \big| \leq C \end{equation}
for $|\alpha| \leq \ell$, with $\ell \in \bN$ sufficiently large. For any $z\in \R^{3}$ let $F_{z}(x) = F(x-z)$. Let $\pi_t$ denote either $\omega_t, \sqrt{\omega_t}$ or $\sqrt{1-\omega_t}$. Then, we have:
\begin{equation}\label{eq:commF}
\begin{split}
\sup _{z \in \Lambda, z' \in \R^{3}} (1+|z-z'|^{2n}) \Big\| \W_{z'}^{(n/2)} \big[ \pi_{t}, F_{z} (\hat x) \big] \Big\|_{\tr} & \leq C \exp(c |t|)  \veps^{-2}  
\end{split}
\end{equation}
for all $t \in \bR$ and for all $n \in \N$ large enough.
\end{corollary}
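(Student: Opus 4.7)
My plan is to reduce the bound to the Fourier representation of $F$ and exploit the doubly-weighted commutator estimates of Proposition~\ref{prop: semiclassical_prop-main}. Writing
\[
\W_{z'}^{(n/2)} [\pi_t, F_z(\hat x)] = \int dp \, \hat F(p)\, e^{-ipz}\, \W_{z'}^{(n/2)} [\pi_t, e^{ip\hat x}],
\]
I would introduce a weight centered at $z\in\Lambda$ on the other side of the commutator to match the structure required by Proposition~\ref{prop: semiclassical_prop-main}. The natural way to do this is to use the fast decay of $F$: under assumption~(\ref{eq:Fass}) with $\ell$ large enough, $F$ decays faster than any polynomial, so the function $G(x) := F(x)(1+|x|^{4n})$ is bounded, smooth, and $\int(1+|p|)|\hat G(p)|\,dp < \infty$ (since $\hat G$ is a linear combination of $\hat F$ and its derivatives of order $\leq 4n$, all integrable with the $(1+|p|)$-weight by hypothesis, provided $\ell \gtrsim n$).

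With this, I would factor $F_z(\hat x) = G_z(\hat x) \W_z^{(n)}(\hat x)$ and use Leibniz,
\[
[\pi_t, F_z(\hat x)] = [\pi_t, G_z(\hat x)]\W_z^{(n)}(\hat x) + G_z(\hat x)[\pi_t, \W_z^{(n)}(\hat x)].
\]
For the first term, $\W_{z'}^{(n/2)} [\pi_t, G_z(\hat x)]\W_z^{(n)}(\hat x)$ carries weights on opposite sides of the commutator; expanding $G_z$ via Fourier and applying Proposition~\ref{prop: semiclassical_prop-main} (in its adjoint form, since $z\in\Lambda$) yields
\[
\|\W_{z'}^{(n/2)} [\pi_t, G_z(\hat x)]\W_z^{(n)}\|_\tr \leq C e^{c|t|} \veps^{-2} \frac{1}{1+|z-z'|^{2n}} \int dp\,(1+|p|)|\hat G(p)|,
\]
which is the desired estimate for this piece.

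For the second term, I would use $\|G_z(\hat x)\|_{\op} = \|G\|_\infty$ to pull $G_z$ out and reduce to controlling $\|\W_{z'}^{(n/2)}[\pi_t,\W_z^{(n)}(\hat x)]\|_\tr$, which has the same single-weighted form as the original claim but with the Schwartz function $\W^{(n)}$ in place of $F$. Since $\W^{(n)}$ is itself Schwartz with $\int(1+|p|)|\hat{\mathcal W}^{(n)}(p)|\,dp<\infty$, I would iterate the decomposition: write $\W_z^{(n)}$ via its Fourier transform and again split $e^{ip \hat x}$ from the commutator using a further weighted factorization together with the Proposition, or, more directly, insert a partition of unity $1=C_m^{-1}\int dw\,\W_w^{(m)}(\hat x)$ on the right and apply the doubly-weighted Proposition bound for $w\in\Lambda$, using the rapid decay of $\W_z^{(n)}$ (which is localized near $z\in\Lambda$) together with the operator-norm bound $\|\W_z^{(n)} \W_w^{(m)}\|_{\op}\leq C/(1+|z-w|^{2n})$ to control the contribution from $w$ outside $\Lambda$ via the trace-norm estimates on $\pi_t$ from Remark~\ref{rem:propa}.

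The main obstacle will be handling the second term in the decomposition cleanly, since the Proposition~\ref{prop: semiclassical_prop-main} controls the commutator $[\pi_t,e^{ip\hat x}]$ only when at least one weight is localized in $\Lambda$, while the averaging argument of Remark~\ref{rem:propa} naturally integrates over all of $\bR^3$. The key observation that resolves this is that, since $F_z$ (and the auxiliary $\W_z^{(n)}$) is already localized at $z\in\Lambda$, the localization in $\Lambda$ is effectively supplied by the function rather than by an auxiliary weight; combined with the rapid decay of both $F_z$ and $\W_z^{(n)}$ away from $z$, the contributions from the $\Lambda^c$ part of any such partition of unity are suppressed, and the $(1+|z-z'|^{2n})^{-1}$ decay propagates through the iteration.
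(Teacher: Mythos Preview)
Your overall strategy---factor $F_z = G_z \W_z^{(n)}$ with $G_z = (\W_z^{(n)})^{-1} F_z$, apply Leibniz, Fourier-expand, and invoke Proposition~\ref{prop: semiclassical_prop-main}---is exactly the paper's, and your treatment of the first term $\W_{z'}^{(n/2)}[\pi_t,G_z]\W_z^{(n)}$ is correct. The gap is in the second term. By bounding $G_z$ only in operator norm you discard its localization at $z$, and neither remedy you propose closes: the iteration is circular (it reproduces the original claim with $F$ replaced by $\W^{(n)}$), and the partition-of-unity argument fails because once you Fourier-expand the $\W_z^{(n)}$ sitting inside the commutator the only remaining $z$-dependence is a phase $e^{-ipz}$ of modulus one, so the $(1+|z-z'|^{2n})^{-1}$ decay cannot be recovered. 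Moreover, the $\Lambda^c$-part of that argument appeals to trace bounds on $\pi_t$ from Remark~\ref{rem:propa}, but those give $\veps^{-3}$ (not $\veps^{-2}$) for $\omega_t$ and nothing at all for $\sqrt{1-\omega_t}$, which is not trace-class.

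The fix is much simpler: do not discard the decay of $G_z$. For $\ell$ large enough one has the pointwise bound $|G_z(x)| \leq C\W_z^{(n)}(x)$; since multiplication operators commute,
\[
\big\|\W_{z'}^{(n/2)}\, G_z\, [\pi_t,\W_z^{(n)}]\big\|_\tr = \big\| G_z\, \W_{z'}^{(n/2)}\, [\pi_t,\W_z^{(n)}]\big\|_\tr \leq C\big\|\W_z^{(n)}\W_{z'}^{(n/2)}[\pi_t,\W_z^{(n)}]\big\|_\tr,
\]
which again carries weights at both $z\in\Lambda$ and $z'$; the pointwise inequality $\W_z^{(n)}\W_{z'}^{(n/2)} \leq C(1+|z-z'|^{2n})^{-1}\W_z^{(n/2)}$ extracts the decay, and the single-weight bound of Remark~\ref{rem:propa} (after Fourier-expanding $\W_z^{(n)}$) finishes. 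The paper does essentially this, with the Leibniz order reversed so that $G_z=\widetilde F_z$ lands on the right from the start, giving $\|\W_{z'}^{(n/2)}[\pi_t,\W_z^{(n)}]\W_z^{(n)}\|_\tr$, to which Proposition~\ref{prop: semiclassical_prop-main} applies directly after Fourier expansion.
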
 
\begin{proof}
Let $\widetilde{F}_{z}(x) = \big(\W_{z}^{(n)}(x)\big)^{-1} F_{z}(x)$. We start by estimating:
\begin{equation}\label{eq:dimcor}
\begin{split}
\Big\| \W_{z'}^{(n/2)} \big[ \pi_{t}, F_{z} (\hat x) \big] \Big\|_{\tr} &= \Big\| \W_{z'}^{(n/2)} \big[ \pi_{t}, \W_{z}^{(n)} \widetilde{F}_{z} (\hat x) \big] \Big\|_{\tr} \\
&\leq  \Big\| \W_{z'}^{(n/2)} \W_{z}^{(n)}\big[ \pi_{t}, \widetilde{F}_{z} (\hat x) \big] \Big\|_{\tr} + \Big\| \W_{z'}^{(n/2)} \big[ \pi_{t}, \W_{z}^{(n)} \big]  \widetilde{F}_{z} (\hat x)\Big\|_{\tr}\;.
\end{split}
\end{equation}
The first term can be bounded by 
%
\begin{equation}\label{eq:411}
\begin{split}
&\Big\| \W_{z'}^{(n/2)} \W_{z}^{(n)}\big[ \pi_{t}, \widetilde{F}_{z} (\hat x) \big] \Big\|_{\tr} \\
&\quad \leq \frac{C}{1 + |z-z'|^{2n}} \Big\| \W_{z}^{(n/2)}\big[ \pi_{t}, \widetilde{F}_{z} (\hat x) \big] \Big\|_{\tr} \\
&\quad \leq \frac{C}{1+|z-z'|^{2n}} \int dp\, | \widetilde{F}(p) |(1+|p|) \sup_{p\in \mathbb{R}^{3}}\frac{1}{1+|p|}\Big\| \W_{z}^{(n/2)}\big[ \pi_{t}, e^{ip\cdot \hat{x}} \big] \Big\|_{\tr}\\
&\quad \leq \frac{C\exp(c|t|) \varepsilon^{-2}}{1+|z-z'|^{2n}}\;,
\end{split}
\end{equation}
so that Proposition \ref{prop: semiclassical_prop-main} applies, and assuming that (\ref{eq:Fass}) holds for $|\alpha| \leq \ell$, for $\ell$ large enough (depending on $n$). Consider now the second term in (\ref{eq:dimcor}). We estimate it as:
\begin{equation}
\Big\| \W_{z'}^{(n/2)} \big[ \pi_{t}, \W_{z}^{(n)} \big]  \widetilde{F}_{z} (\hat x)\Big\|_{\tr} \leq C \Big\| \W_{z'}^{(n/2)} \big[ \pi_{t}, \W_{z}^{(n)} \big]  \mathcal{W}^{(n)}_{z}\Big\|_{\tr}\;,
\end{equation}
We have:
\begin{equation}\label{eq:412}
\begin{split}
&\Big\| \W_{z'}^{(n/2)} \big[ \pi_{t}, \W_{z}^{(m)} \big]  \mathcal{W}^{(n)}_{z}\Big\|_{\tr} \\
&\qquad \leq \int dp\, |\hat \W_{z}^{(m)}(p)| (1 + |p|) \sup_{p\in \mathbb{R}^{3}} \frac{1}{1+|p|} \Big\| \W_{z'}^{(n/2)} \big[ \pi_{t}, e^{ip\cdot \hat{x}} \big]  \mathcal{W}^{(n)}_{z}\Big\|_{\tr} \\
&\qquad \leq \frac{C\exp(c|t|) \varepsilon^{-2}}{1+|z-z'|^{2n}}\;,
\end{split}
\end{equation}
by Proposition \ref{prop: semiclassical_prop-main}; note that $\ell$ has to be taken large enough in (\ref{eq:Fass}), to make sure that $|\widetilde{F}| \leq C \W^{(n)}$. Eqs.~(\ref{eq:411}), (\ref{eq:412}) imply (\ref{eq:commF}). 
\end{proof}
\begin{remark}\label{rem:k=0} 
The bound (\ref{eq:commF}) also implies that 
\[ 
\sup_{z \in \Lambda} \Big\| \big[ \pi_{t}, F_{z} (\hat x) \big]\Big\|_{\tr}  \leq C\exp(c|t|)  \veps^{-2}\;.
\] 
This can be shown proceeding similarly as we did in Remark \ref{rem:propa}.
\end{remark} 
%
%

%

\subsection{Fluctuation dynamics}\label{sec: proof-main-result}

The proof of Theorem \ref{thm:T>0} relies on the study of the time-evolved vector $\psi_t = e^{-i\mathcal{L} t/\veps} \mathcal{R}_\omega \Xi$. In order to compare $\psi_t$ with the quasi-free state associated with the solution $\omega_t$ of the Hartree equation (\ref{eq:hartree2}), it is useful to introduce the fluctuation dynamics 
\begin{equation}\label{eq:fluc-dyn} \mathcal{U} (t;s) =   \RR _{\omega_t}^{\ast} e^{- i \mathcal{L} (t-s) / \veps} \RR_{\omega_s} \, . \end{equation} 
It follows that \begin{equation}\label{eq:ext-v} \psi_t = \RR_{\omega_t} \mathcal{U} (t;0) \Xi = \RR_{\omega_t} \Xi_t \end{equation} 
with $\Xi_t = \mathcal{U} (t;0) \Xi$ describing the excitations, at time $t \in \bR$, of the quasi-free state with reduced density $\omega_t$. To show that $\psi_t$ is  locally close to the quasi-free state with reduced density $\omega_t$ and conclude the proof of Theorem \ref{thm:T>0}, it is enough to prove that the local density of particles in $\Xi_t$ ({\it i.e.} the local density of the excitations of the quasi-free state) is small, compared to the local density of the quasi-free state, which is of the order $\veps^{-3}$. 

\begin{proposition}[Growth of local density of excitations]\label{thm:bound-localized-fluctuation} 
Under the assumption of Theorem~\ref{thm:T>0}, there exist constants $C,c > 0$ independent of $\Lambda$ such that 
\begin{equation} \label{eq:XidGXi}  
\langle \Xi_{t}, \,  d \Gamma(\W^{(n)}_{z})  \Xi_{t}  \rangle \leq C  \exp (c \exp c|t|) \veps^{-3+\delta} 
\end{equation} 
for all $n \in \bN$ large enough, for all $t \in \bR$ and all $z \in \Lambda$ with $B_{\veps^{-\delta} |t|} (z) \subset \Lambda$. 
\end{proposition}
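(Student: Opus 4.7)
The plan is to prove Proposition \ref{thm:bound-localized-fluctuation} by a Gronwall-type argument applied to the function $f_t(z) := \langle \Xi_t, d\Gamma(\mathcal{W}^{(n)}_z) \Xi_t \rangle$ (and an associated family at nearby spatial points). The first step is to differentiate in time and use (\ref{eq:fluc-dyn}) to write
\[ i \veps \frac{d}{dt} f_t(z) = \langle \Xi_t, [d\Gamma(\mathcal{W}^{(n)}_z), \widetilde{\mathcal{L}}(t)] \Xi_t \rangle, \]
where $\widetilde{\mathcal{L}}(t) = \mathcal{R}_{\omega_t}^* \mathcal{L} \mathcal{R}_{\omega_t} + i\veps (\partial_t \mathcal{R}_{\omega_t}^*) \mathcal{R}_{\omega_t}$ generates the fluctuation dynamics. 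Using Lemma \ref{lemma: Bogolubov-observables} on the quadratic part of $\mathcal{L}$ and an analogous Bogoliubov computation for the interaction, together with the Hartree equation (\ref{eq:hartree2}) to cancel the direct mean-field contributions, I expect $\widetilde{\mathcal{L}}(t)$ to decompose into a diagonal quadratic part involving $d\Gamma_l(u_t h_t u_t) - d\Gamma_r(\bar{v}_t h_t \bar{v}_t)$ (and the $r$-version) with $h_t = \sqrt{1-\veps^2\Delta} + \veps^3 V * \varrho_t$ and $u_t = \sqrt{1-\omega_t}$, $v_t = \sqrt{\omega_t}$, an off-diagonal pairing part $d\Gamma_{lr}^{\pm}(u_t h_t v_t)$ and its adjoints, and a normal-ordered quartic remainder $\mathcal{Q}(t)$ bilinear in the left and right sectors.

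Next, I would bound each term in the commutator $[d\Gamma(\mathcal{W}^{(n)}_z), \widetilde{\mathcal{L}}(t)]$. For the diagonal quadratic pieces, the commutator reduces by linearity of $d\Gamma_\sigma$ to $d\Gamma_\sigma([\mathcal{W}^{(n)}_z, u_t h_t u_t])$, which I expand via Leibniz into three contributions: $[\mathcal{W}^{(n)}_z, u_t]\, h_t u_t$, $u_t [\mathcal{W}^{(n)}_z, h_t]\, u_t$, and $u_t h_t [\mathcal{W}^{(n)}_z, u_t]$. Each factor $[\mathcal{W}^{(n)}_z, \pi_t]$ and $[\mathcal{W}^{(n)}_z, h_t]$ is controlled, in trace norm weighted by $\mathcal{W}^{(n/2)}_{z'}$, by Corollary \ref{cor:commF} (applied to the Schwartz function $\mathcal{W}^{(n)}_z$) and by Proposition \ref{prop: semiclassical_prop-main}. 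Combined with the product inequality for the $\mathcal{W}$-weights and with Lemma \ref{lemma: bound-Fock-operator}, the diagonal quadratic contribution to $\frac{d}{dt} f_t(z)$ is bounded by $C e^{c|t|}(\veps^{-3+\delta} + \text{tail})$, where the tail consists of contributions integrated against the polynomial decay of $\mathcal{W}^{(n)}_{z'}$ in $|z-z'|$. The off-diagonal terms are handled analogously, with the additional gain coming from the local smallness of $\sqrt{\omega_t}\sqrt{1-\omega_t}$ provided by (\ref{eq:overuv}).

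The most delicate piece is the quartic remainder $\mathcal{Q}(t)$, which contains a sum of normal-ordered monomials of the form $\veps^3 \int V(x-y) a^{\sharp}_{\sigma,x} a^{\sharp}_{\sigma',y} a^{\sharp}_{\sigma'',y} a^{\sharp}_{\sigma''',x}$. Commuting each such term with $d\Gamma(\mathcal{W}^{(n)}_z)$ produces quartic expressions with an extra factor $(\mathcal{W}^{(n)}_z(x) - \mathcal{W}^{(n)}_z(x'))$ on the creation/annihilation pair carrying the $x$ and $x'$ arguments. I plan to use Cauchy–Schwarz, the rapid decay of $V$, and the two-point weight $\mathcal{W}^{(n)}_z(x) V(x-y)\leq C \mathcal{W}^{(n)}_z(y)$ to bound these by combinations $\veps^3 \cdot \veps^{-3} \cdot f_t(z') + \text{controlled errors}$ for $z'$ close to $z$, with the $\veps^{-3}$ coming from the bounded local density of the quasi-free reference and the remaining $f_t(z')$ closing the Gronwall loop. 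The overlap weights $\sqrt{\omega_t}\sqrt{1-\omega_t}$ produced by Bogoliubov conjugation again give additional smallness in the off-diagonal sectors via (\ref{eq:overuv}).

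Summing these contributions and using the polynomial decay estimates of Proposition \ref{prop: semiclassical_prop-main} to integrate over $z' \in \bR^3$, I expect to arrive at an inequality of the form
\[ \Big| \frac{d}{dt} f_t(z) \Big| \leq C e^{c|t|} \Big( \sup_{z' \in B_{r(t)}(z)} f_t(z') + \veps^{-3+\delta} \Big) , \]
where $r(t)$ grows at worst like $\veps^{-\delta}|t|$; the factor $\veps^{-\delta}$ compensates the polynomial losses picked up from the weights $(1+|z-z'|^{2n})^{-1}$ when translating the local semiclassical estimates into uniform bounds on an enlarged region. Inside the ball $B_{\veps^{-\delta}|t|}(z) \subset \Lambda$, the initial condition (\ref{eq:assT>0}) gives the starting bound $f_0(z') \leq C \veps^{-3+\delta}$; outside, only the weaker uniform bound $f_0(z') \leq C \veps^{-3}$ holds, but this contribution is suppressed by the polynomial decay. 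Gronwall's inequality then yields (\ref{eq:XidGXi}), with the outer exponential of the double exponential accounting for the $e^{c|t|}$ coefficient in the Gronwall drift. The main obstacle I anticipate is step (c): properly closing the quartic contribution from $\mathcal{Q}(t)$ into $f_t(z)$ and $\veps^{-3+\delta}$ without picking up positive powers of $\veps^{-1}$, which requires exploiting simultaneously the $\veps^3$ prefactor of $V$, the bounded local density (Assumption \ref{ass:bd}), and the off-diagonal localization of $\sqrt{\omega_t}\sqrt{1-\omega_t}$.
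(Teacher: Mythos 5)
Your strategy is essentially the paper's: a Gronwall argument on $\langle \Xi_t, d\Gamma(\W^{(n)}_z)\Xi_t\rangle$ via the generator of the fluctuation dynamics, term-by-term control of the commutators with the quadratic and quartic parts using the propagated local semiclassical structure (Proposition \ref{prop: semiclassical_prop-main}, Corollary \ref{cor:commF}) and the splitting of the $z'$-integrals into a near region $B_R(z)$ and a far region, followed by a light-cone Gronwall with radius $r(t-s)$, $r=\veps^{-\delta}$, which is exactly how the condition $B_{\veps^{-\delta}|t|}(z)\subset\Lambda$ and the rate $\veps^{-3+\delta}=\veps^{-3}/r$ arise.

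There is one genuine gap in how you propose to close the far-field contributions. For $z'$ outside $B_R(z)$ the quartic terms produce, after Cauchy--Schwarz, expectations of the form $\langle\Xi_t, d\Gamma_\sigma(\W^{(n)}_{z'})\Xi_t\rangle$ localized far from $z$ (and possibly outside $\Lambda$), multiplied by the polynomial decay factor $(1+|z-z'|^{8})^{-1}$. To conclude that this tail contributes only $C_t\,\veps^{-3}(1+R)^{-5}$ you need an \emph{a priori bound at time $t$}, uniform in $z'\in\bR^3$, of the form $\langle\Xi_t, d\Gamma_\sigma(\W^{(n)}_{z'})\Xi_t\rangle\le C e^{c|t|}\veps^{-3}$. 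Your proposal invokes only the bounded density of the quasi-free reference and the \emph{initial-time} bound $f_0(z')\le C\veps^{-3}$; neither controls the excitation density of $\Xi_t$ at positive times far from $z$. This bound does not follow from the localized Gronwall you are setting up (whose supremum only ranges over the shrinking balls), and must be established first by a separate, global Gronwall argument: conjugate $d\Gamma_\sigma(\W^{(n)}_z)$ back through $\RR_{\omega_t}$, reduce to $\langle\psi_t, d\Gamma_l(\W^{(n)}_z)\psi_t\rangle$ using $u_t\W^{(n)}_z u_t\le 2(u_t-1)\W^{(n)}_z(u_t-1)+4\W^{(n)}_z$ and $(1-u_t)^2\le\omega_t$, and propagate via the kinetic commutator estimate $i[\W^{(n)}_z,\sqrt{1-\veps^2\Delta}]\le C\veps\,\W^{(n)}_z$. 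Once this rough $\veps^{-3}$ bound is in place, the rest of your argument goes through as described.
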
 
\begin{remark}
Note that by the monotonicity of the functions $\W_{z}^{(n)}$ the bound \eqref{eq:XidGXi} holds indeed for constants $C,c>0$ independent of $n$.
\end{remark}

With Proposition \ref{thm:bound-localized-fluctuation}, we can conclude the proof of Theorem \ref{thm:T>0}. 
\begin{proof}[Proof of Theorem \ref{thm:T>0}] 
Using (\ref{eq:dG}), recalling that $\psi_t = \RR_{\omega_{t}}\Xi_t$ and applying Lemma \ref{lemma: Bogolubov-observables}, we have 
\begin{equation}\label{eq: reduction-localisation}
\begin{split}
 \tr \, &\mathcal{O}_{z}  (\gamma^{(1)}_{t} - \omega_{t})
 \\
 &= \big\langle \Xi_{t}, \big( d \Gamma_{l}(u_{t}\mathcal{O}_{z}u_{t}) - d \Gamma _{r} (\overline{v_{t}} \mathcal{O}_{z}  \overline{v_{t}}) 
- d \Gamma^{+}_{l r}(u_{t} \mathcal{O}_{z} v_{t}) -d \Gamma^{-}_{r l}(v_{t} \mathcal{O}_{z} u_{t}) \big) \Xi_{t} \big\rangle
\end{split}
\end{equation}
with $u_t = \sqrt{1-\omega_t}, v_t = \sqrt{\omega_t}$. 
Consider the first term in the right-hand side. Since $| (x - z)^{\alpha} \mathcal{O}_{z} | \leq \int dp |D^{\alpha}  \hat{\mathcal{O}}(p)|$, by assumption \eqref{eq: assumptions-O} on $\mathcal{O}$, it follows that $|\mathcal{O}_{z}| \leq C\mathcal{W}^{(n)}_{z}$,
which implies $d \Gamma_{l}(u_{t}\mathcal{O}_{z}u_{t})  \leq Cd \Gamma_{l}(u_{t} \mathcal{W}^{(n)}_{z}u_{t})$. Furthermore, 
\begin{equation}\label{eq:uWu-comm}  \begin{split}  u_t \W^{(n)}_z u_t  &= \big( (\W_z^{(n)})^{1/2} u_t + \big[ u_t,  (\W_z^{(n)})^{1/2} \big] \big) \big( u_t (\W_z^{(n)})^{1/2} + \big[ (\W_z^{(n)})^{1/2} , u_t \big] \big) \\ &\leq 2 (\W_z^{(n)})^{1/2} u_t^2 (\W_z^{(n)})^{1/2} + 2  \big|\big[ u_t,  (\W_z^{(n)})^{1/2} \big] \big|^2 \\ &\leq 2 \W_z^{(n)} +  2  \big|\big[ u_t,  (\W_z^{(n)})^{1/2} \big] \big|^2\;;  \end{split}  \end{equation} 
we conclude with Lemma \ref{lemma: bound-Fock-operator} and with the remark after Corollary \ref{cor:commF} (using $\| \W_z^{(n)} \|_\text{op} \leq C$, $\| u_t \|_\text{op} \leq 1$, setting $F = (\W^{(n)})^{1/2}$ and choosing $n$ large enough, so that (\ref{eq:Fass}) holds true) that:
\begin{equation}\label{eq:uWu2} d\Gamma_{l} (u_t \mathcal{O}_{z} u_t) \leq C d\Gamma_{l} (\W_z^{(n)}) + C \big\| \big[ u_t,  (\W_z^{(n)})^{1/2} \big] \big\|_\text{tr} \leq C d\Gamma_{l} (\W_z^{(n)} )+ C \veps^{-2} \exp (c |t|)\;. \end{equation} 
Similarly, we can bound the second term in (\ref{eq: reduction-localisation}) as:
\begin{equation}
- d \Gamma _{r} (\overline{v_{t}} \mathcal{O}_{z}  \overline{v_{t}}) \leq  d \Gamma _{r} (\overline{v_{t}} |\mathcal{O}_{z}|  \overline{v_{t}}) \leq C d\Gamma_{l} (\W_z^{(n)} )+ C \veps^{-2} \exp (c |t|)\;.
\end{equation}
The other terms in \eqref{eq: reduction-localisation} can be controlled by Lemma \ref{lemma: bound-Fock-operator}, Remarks \ref{rem:propa} and \ref{rem:k=0}. For example,
\begin{equation*}
\begin{split}
\big| \langle \Xi_{t}, d \Gamma^{+}_{l r}(u_{t} \mathcal{O}_{z} v_{t}) 
\Xi_{t}  \rangle \big|
& \leq C \big\| u_{t} \mathcal{O}_{z} v_{t}\big\|_{\tr}  
\\
& \leq C \big\| \big[u_{t}, \mathcal{O}_{z} \big] \big\|_{\tr} + C \big\| \mathcal{O}_{z}  u_{t} v_{t}\big\|_{\tr} \leq C \exp(c|t|) \veps^{-2} \;.
\end{split}
\end{equation*}
where in the last step we used that $\mathcal{O}_{z}$ fulfills the assumptions of Corollary \ref{cor:commF}, and that $|\mathcal{O}_{z}| \leq C\mathcal{W}_{z}$. All in all, we obtain:
\begin{equation*}
\Big| \tr \, \mathcal{O}_{z} (\gamma^{(1)}_{t} - \omega_{t}) \Big| \leq C\exp(c|t|) \veps^{-2} + C \langle \Xi_{t}, d \Gamma(\W^{(n)}_z) \Xi_{t}  \rangle \; ,
\end{equation*}
if $n \in \bN$ is large enough. Theorem \ref{thm:T>0} now follows from Proposition \ref{thm:bound-localized-fluctuation}. 
\end{proof}

\subsection{Growth of local density of excitations}

This section is devoted to the proof of Proposition \ref{thm:bound-localized-fluctuation}. To control the growth of the local density associated with the  excitation vector $\Xi_t$, we proceed in two steps. First we provide a rough a priori bound of the order $\veps^{-3}$, valid everywhere. Then, we use this a priori estimate to establish the improved bound (\ref{eq:XidGXi}), valid in a subset of $\Lambda$, where we know from (\ref{eq:assT>0}) that the initial datum has few excitations. 
\begin{lemma}[Rough Bound on Density of Excitations]\label{prop: a priori bound}
Let $0 \leq \omega \leq 1$ satisfy Assumption~\ref{ass:bd} (bounded density). Let $\Xi \in \mathcal{F} (\frak{h} \oplus \frak{h})$ satisfy the first bound in (\ref{eq:assT>0}). Let $\Xi_t$ be defined as in \eqref{eq:ext-v}. Then, there exist constants $C,c > 0$ such that 
\begin{equation*}
\langle \Xi_{t}, \,  d \Gamma_\sigma (\W^{(n)}_{z})  \Xi_{t}  \rangle 
\leq  C \exp ( c|t|) \veps^{-3}
\end{equation*}
for all $n \in \bN$ large enough, for all $z \in \bR^3$ and for $\sigma = l,r$. 
\end{lemma}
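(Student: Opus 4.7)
The plan is to estimate the local density of $\Xi_{t}$ by reducing it to the local density of the physical many-body state $\psi_{t} = \RR_{\omega_{t}} \Xi_{t}$, which evolves under $\mathcal{L}$ and whose local density is more tractable. Writing
\[
\langle \Xi_{t}, d\Gamma_{\sigma}(\W^{(n)}_{z}) \Xi_{t} \rangle = \langle \psi_{t}, \RR_{\omega_{t}} d\Gamma_{\sigma}(\W^{(n)}_{z}) \RR^{*}_{\omega_{t}} \psi_{t} \rangle,
\]
and applying (the inverse of) Lemma~\ref{lemma: Bogolubov-observables} to the Bogoliubov conjugation, the right-hand side splits into three pieces: (a) a c-number $\tr(\W^{(n)}_{z}\omega_{t})$, which by the propagated density bound (\ref{pro: a priori-propagation}) is $\leq C\exp(c|t|)\veps^{-3}$; (b) quadratic ``diagonal'' contributions $\langle \psi_{t}, d\Gamma_{l,r}(u_{t}\W^{(n)}_{z}u_{t})\psi_{t}\rangle$ and the analogue with $v_{t}$; using the operator inequality $u_{t}\W^{(n)}_{z}u_{t} \leq 2\W^{(n)}_{z} + 2|[u_{t}, (\W^{(n)}_{z})^{1/2}]|^{2}$ from (\ref{eq:uWu-comm}) together with Corollary~\ref{cor:commF} (applied with $F=(\W^{(n)})^{1/2}$) and Lemma~\ref{lemma: bound-Fock-operator}, these are dominated by $C\langle\psi_{t}, d\Gamma_{l,r}(\W^{(n)}_{z})\psi_{t}\rangle + C\exp(c|t|)\veps^{-2}$; (c) pairing contributions $d\Gamma^{\pm}_{lr}(u_{t}\W^{(n)}_{z}v_{t})$, which by Lemma~\ref{lemma: bound-Fock-operator}, the overlap bound (\ref{eq:overuv}), and Corollary~\ref{cor:commF} are of size $O(\exp(c|t|)\veps^{-2})$, negligible compared to the target $\veps^{-3}$.

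It therefore suffices to prove, for both $\sigma\in\{l,r\}$, that
\[
\sup_{z\in\bR^{3}} \langle \psi_{t}, d\Gamma_{\sigma}(\W^{(n)}_{z}) \psi_{t} \rangle \leq C\exp(c|t|)\veps^{-3}.
\]
The decisive observation is that in the Liouvillian (\ref{eq:liou}) the quartic interactions $\mathcal{V}_{l},\mathcal{V}_{r}$ commute with any second-quantized multiplication operator by the canonical anticommutation relations, and the $l$-$r$ blocks commute entirely; hence the only non-trivial contribution to the Heisenberg derivative is the kinetic one,
\[
\tfrac{d}{dt} \langle \psi_{t}, d\Gamma_{l}(\W^{(n)}_{z}) \psi_{t} \rangle = \tfrac{1}{i\veps} \langle \psi_{t}, d\Gamma_{l}\bigl([\W^{(n)}_{z}, \sqrt{1-\veps^{2}\Delta}]\bigr) \psi_{t} \rangle.
\]
Using the kernel identity $[\W^{(n)}_{z},\sqrt{1-\veps^{2}\Delta}](x;y) = (\W^{(n)}_{z}(x)-\W^{(n)}_{z}(y))\sqrt{1-\veps^{2}\Delta}(x;y)$, the fundamental theorem of calculus $\W^{(n)}_{z}(x)-\W^{(n)}_{z}(y) = \int_{0}^{1}ds\,(x-y)\cdot\nabla\W^{(n)}_{z}(sx+(1-s)y)$, and the fact that $[\hat x, \sqrt{1-\veps^{2}\Delta}] = \veps^{2}(-i\nabla)/\sqrt{1-\veps^{2}\Delta}$ has operator norm bounded by $\veps$, I would reduce the right-hand side to a sandwich of a bounded operator between two $(\W^{(n/2)}_{z})^{1/2}$-weights, obtaining
\[
\Big|\tfrac{d}{dt}\langle\psi_{t}, d\Gamma_{l}(\W^{(n)}_{z})\psi_{t}\rangle\Big| \leq C \langle \psi_{t}, d\Gamma_{l}(\W^{(n/2)}_{z}) \psi_{t} \rangle + C\exp(c|t|) \veps^{-3}.
\]
By the monotonicity $\W^{(n)}_{z}\leq 2\W^{(n/2)}_{z}$ stated after (\ref{eq: polynomial weight}), and choosing $n$ initially large enough that the finitely many weight halvings used in the iteration still stay in the range where the hypothesis (\ref{eq:assT>0}) on $\Xi$ provides an $O(\veps^{-3})$ initial datum (for $\psi_{0}=\RR_{\omega_{0}}\Xi$, via the same Bogoliubov splitting used above), Gronwall's inequality closes the estimate and yields the claim.

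The main obstacle is the kinetic commutator bound in the last step. Although $[\W^{(n)}_{z},\sqrt{1-\veps^{2}\Delta}]$ is pointwise of order $\veps$, the pseudo-relativistic dispersion is unbounded, so neither a naive trace-norm estimate nor a crude operator-norm estimate can reproduce a localized $d\Gamma_{l}(\W^{(n/2)}_{z})$ on the right. The resolution is to combine the $\veps$-scale decay of the kinetic kernel in $|x-y|$ -- encoded in the fact that $|x-y|\sqrt{1-\veps^{2}\Delta}(x;y)$ is componentwise the kernel of the $\veps$-bounded operator $[\hat x, \sqrt{1-\veps^{2}\Delta}]$ -- with the spatial decay of $\nabla \W^{(n)}_{z}$ near $z$, so as to sandwich a bounded operator between two $(\W^{(n/2)}_{z})^{1/2}$ factors and produce exactly the localized quadratic form $d\Gamma_{l}(\W^{(n/2)}_{z})$ needed to run Gronwall.
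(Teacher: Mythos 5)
Your overall architecture (conjugate back to $\psi_t=\RR_{\omega_t}\Xi_t$, split via Lemma~\ref{lemma: Bogolubov-observables}, then run Gronwall on $\langle\psi_t,d\Gamma_l(\W^{(n)}_z)\psi_t\rangle$ using the fact that only the kinetic term fails to commute with the local number operator) is the same as the paper's. However, there are two genuine gaps.

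First, you control the diagonal term via $u_t\W^{(n)}_zu_t\le 2\W^{(n)}_z+2|[u_t,(\W^{(n)}_z)^{1/2}]|^2$ plus Corollary~\ref{cor:commF}, and the pairing terms via the overlap bound (\ref{eq:overuv}). Both of these require Assumption~\ref{ass:sc}, which is \emph{not} among the hypotheses of this lemma (only bounded density and the first bound in (\ref{eq:assT>0}) are assumed), and more importantly both hold only for $z\in\Lambda$, whereas the lemma claims the estimate for \emph{all} $z\in\bR^3$ --- and this uniformity is genuinely used later (e.g.\ in (\ref{eq:PV-bound-2}), where the rough bound is applied at points $z'\in B_R^c(z)$ that may lie outside $\Lambda$). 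The paper avoids the semiclassical input entirely: the pairing term is bounded by Cauchy--Schwarz as $(\tr\,\W^{(n)}_z\omega_t)^{1/2}\langle\psi_t,d\Gamma_l(u_t\W^{(n)}_zu_t)\psi_t\rangle^{1/2}$, the $v_t$-diagonal term enters with a minus sign and is dropped, and the $u_t$-diagonal term is handled via $u_t\W^{(n)}_zu_t\le 2(u_t-1)\W^{(n)}_z(u_t-1)+4\W^{(n)}_z$ together with $(1-u_t)^2\le\omega_t$, all of which need only the propagated density bound (\ref{pro: a priori-propagation}).

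Second, your Gronwall step does not close as written: you derive
\begin{equation*}
\Big|\tfrac{d}{dt}\langle\psi_{t}, d\Gamma_{l}(\W^{(n)}_{z})\psi_{t}\rangle\Big| \leq C \langle \psi_{t}, d\Gamma_{l}(\W^{(n/2)}_{z}) \psi_{t} \rangle + \dots
\end{equation*}
and then appeal to $\W^{(n)}_z\le 2\W^{(n/2)}_z$, but this monotonicity goes the wrong way --- it bounds the left-hand weight by the right-hand one, not vice versa, so you cannot replace $\W^{(n/2)}_z$ by $\W^{(n)}_z$ on the right, and iterating the inequality keeps halving the weight indefinitely. The correct resolution, which the paper implements in (\ref{eq:apri3})--(\ref{eq:comm-kin}), is the same-weight operator inequality $\pm i[\W^{(n)}_z,\sqrt{1-\veps^2\Delta}]\le C\veps\,\W^{(n)}_z$, proved by sandwiching the commutator between $(\W^{(n)}_z)^{-1/2}$ factors and absorbing $|\hat x-z|^{2n}$ into multi-commutators $\mathrm{ad}^\alpha_{\hat x}(\sqrt{1-\veps^2\Delta})$, each of operator norm $O(\veps)$. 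With the same weight on both sides, Gronwall closes immediately.
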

\begin{proof}
We consider the case $\sigma = l$ (the case $\sigma =r$ can be treated analogously). Also note that we can fix some $n \geq 4$ large enough, the bound for $\bar{n}>n$ following by monotonicity. First of all, recalling from (\ref{eq:ext-v}) that $\Xi_{t} = \mathcal{R}_{\omega_{t}}^{*} \psi_{t}$, we obtain, by Lemma \ref{lemma: Bogolubov-observables} (exchanging $\mathcal{R}_\omega$ and $\mathcal{R}_\omega^*$ is equivalent to replacing $v$ with $-v$),  
\begin{equation}\label{eq:apri1}
\begin{split}
\big \langle &\Xi_{t}, \dd \Gamma_{l} (\W^{(n)}_{z}) \Xi_{t} \big \rangle 
\\
& = \big \langle \psi_{t}, \big( d \Gamma_{l} (u_{t}\W^{(n)}_{z}u_{t}) - d \Gamma _{r} (\overline{v_{t}} \W^{(n)}_{z}  \overline{v_{t}}) 
+d \Gamma^{+}_{l r} (u_{t} \W^{(n)}_{z} v_{t}) +d \Gamma^{-}_{r l} (v_{t} \W^{(n)}_{z} u_{t}) \big) \psi_{t} \big \rangle \;.
\end{split}
\end{equation}
We can bound the third term by 
\begin{equation}\label{eq:apri2}
\begin{split}
\big| \big \langle \psi_{t},d \Gamma^{+}_{l r}(u_{t} \W^{(n)}_{z} v_{t})\psi_{t} \big \rangle \big|   
& = \Big| \int d y \, \W^{(n)}_{z}(y) \big \langle \psi_{t}, a^{*}_{l}(u_{t;y})a^{*}_{r}(\overline{v_{t;y}})\psi_{t} \big \rangle \Big| 
\\
&\leq \int d y \, \W^{(n)}_{z}(y) \big \| a_{l}(u_{t;y}) \psi_{t} \big\| \big\| a^{*}_{r}(\overline{v_{t;y}})\psi_{t} \big \|
\\
& \leq \Big(\int d y \, \W^{(n)}_{z}(y)  \|v_{t;y}\|_{\mathfrak{h}}^{2} \Big)^{1/2}
\Big(\int d y \, \W^{(n)}_{z}(y) \big \| a_{l}(u_{t;y}) \psi_{t} \big\|^{2} \Big)^{1/2}
\\
& \leq \big( \tr \, \W^{(n)}_{z} \omega_{t} \big)^{1/2}
 \big \langle\psi_{t}, d \Gamma_{l}(u_{t}\W^{(n)}_{z}u_{t}) \psi_{t} \big \rangle ^{1/2} ,
\end{split}
\end{equation}
where we used the Cauchy-Schwarz inequality twice and $v_{t}^{2}= \omega_{t}$. The fourth term on the r.h.s.~of \eqref{eq:apri1} is controlled similarly. The second term is negative and can be neglected for the purpose of an upper bound.
%
To control the first term on the r.h.s.~of (\ref{eq:apri1}) (and therefore also the r.h.s.~of (\ref{eq:apri2})), we observe that
\begin{equation*}
\begin{split}
u_t \W^{(n)}_z u_t &= u_t \W^{(n)}_z (u_t - 1) + (u_t -1) \W^{(n)}_z + \W^{(n)}_z 
\\
&\leq \frac{1}{2} u_t \W^{(n)}_z u_t +  (u_t - 1) \W^{(n)}_z (u_t -1) + 2 \W^{(n)}_z
\end{split}
\end{equation*}
which implies that 
\begin{equation}\label{eq:uWu3} u_t \W^{(n)}_z u_t \leq 2 (u_t - 1) \W^{(n)}_z (u_t - 1) + 4 \W^{(n)}_z \end{equation} 
and thus that 
\begin{equation}\label{eq:uWu3b}  \begin{split}  \langle \psi_t , d \Gamma_{l}(u_{t}\W^{(n)}_{z}u_{t})  \psi_t \rangle &\leq 4 \langle \psi_t, d\Gamma_l (\W^{(n)}_z) \psi_t \rangle + 2 \, \tr \, \W^{(n)}_z \omega_t \\ &\leq 4 \langle \psi_t, d\Gamma_l (\W^{(n)}_z) \psi_t \rangle + C  \veps^{-3} \exp (c|t|) 
\end{split} 
\end{equation} 
for $n \in \bN$ large enough. Here we used $(1-u_t)^2 = (1 - \sqrt{1-\omega_t})^2 \leq \omega_t$ and then we applied the bound \eqref{pro: a priori-propagation}. We are left with controlling $\big \langle\psi_{t}, d \Gamma _{l} (\W^{(n)}_{z} ) \psi_{t} \big \rangle$. We use a Gronwall argument. We have
\begin{equation*}
i \veps \partial_{t} \big \langle\psi_{t}, d \Gamma_{l} (\W^{(n)}_{z} ) \psi_{t} \big \rangle =  \big \langle\psi_{t},d \Gamma_{l} \big( \big[ \sqrt{1-\veps^{2}\Delta} \,, \W^{(n)}_{z} \big] \big) \psi_{t} \big \rangle \;,
\end{equation*}
where we used that $ \big[d \Gamma_{l}(A), d \Gamma_{l}(A) \big] = d \Gamma_{l}([A,B]) $, and that the two-body interaction commutes with $d \Gamma_{l} (\W^{(n)}_{z})$. We claim that 
\begin{equation}\label{eq:apri3}
\big\| \big(\W^{(n)}_{z}\big)^{-1/2} \big[\W^{(n)}_{z}, \sqrt{1 - \veps^{2} \Delta} \big] \big(\W^{(n)}_{z}\big)^{-1/2}\big\|_{\op} \leq C \veps\;,
\end{equation}
for some constant $C > 0$ depending on $n$ only. In fact, we can estimate 
\[ \begin{split}  \big\| &\big(\W^{(n)}_{z}\big)^{-1/2} \big[\W^{(n)}_{z}, \sqrt{1 - \veps^{2} \Delta} \big] \big(\W^{(n)}_{z}\big)^{-1/2}\big\|_{\op} \\ & \leq 2 
\big\| \big[ (\W^{(n)}_{z})^{1/2} , \sqrt{1 - \veps^{2} \Delta} \big] \big(\W^{(n)}_{z}\big)^{-1/2}\big\|_{\op} \\ & \leq 2 
\big\| \big[ (\W^{(n)}_{z})^{1/2} , \sqrt{1 - \veps^{2} \Delta} \big] |\hat{x} - z|^{2n} \big\|_{\op}  \\ & \leq 2 \big\| (\W^{(n)}_{z})^{1/2} \big[ \sqrt{1 - \veps^{2} \Delta} , |\hat{x} - z|^{2n} \big] \big\|_{\op} \leq C \sum_{1 \leq |\alpha| \leq 2n} \| \text{ad}^{\alpha}_{\hat{x}} (\sqrt{1-\veps^2 \Delta}) \|_\text{op} \leq C \veps \end{split} \]
because $\| (\W^{(n)}_{z})^{1/2} |\hat{x} - z|^{2n-|\alpha|} \|_\text{op} \leq C$ for all $1 \leq |\alpha| \leq 2n$. From (\ref{eq:apri3}), we obtain 
\begin{equation}\label{eq:comm-kin}
i \big[\W_{z}^{(n)}, \sqrt{1 - \veps^{2} \Delta} \big] \leq C \veps \W^{(n)}_{z} \;.
\end{equation}
By Gronwall's lemma, we conclude that 
\begin{equation*}
\big \langle\psi_{t}, d \Gamma_{l} (\W^{(n)}_{z} ) \psi_{t} \big \rangle \leq
\exp(c|t|) \, \big \langle\psi, d \Gamma_{l} (\W^{(n)}_{z} ) \psi \big \rangle  \;.
\end{equation*}
Recalling that $\psi = \RR_\omega \Xi$ and applying once again Lemma \ref{lemma: Bogolubov-observables}, we find 
\begin{equation*}
\begin{split}
\big \langle\psi,& d \Gamma_{l} (\W^{(n)}_{z} ) \psi \big \rangle 
\\
&= \big \langle \Xi, \big( d \Gamma_{l}(u \W^{(n)}_{z}u) - d \Gamma _{r} (\overline{v} \W^{(n)}_{z}  \overline{v}) 
-d \Gamma^{+}_{l r}(u \W^{(n)}_{z} v) - d \Gamma^{-}_{r l}(v \W^{(n)}_{z} u) \big) \Xi \big \rangle \;.
\end{split}
\end{equation*}
Proceeding as we did to handle (\ref{eq:apri1}), and using Assumption \ref{ass:bd} to bound $\tr \, \W^{(n)}_z \omega$ and the first estimate in (\ref{eq:assT>0}) to control $\big\langle \Xi, d\Gamma(\W^{(n)}_{z}) \Xi\big\rangle$, we conclude the proof of the lemma.
\end{proof}

The next step in the proof of Prop.~\ref{thm:bound-localized-fluctuation} is the computation of the generator of the fluctuation dynamics 
(\ref{eq:fluc-dyn}), defined as 
\begin{equation}\label{eq: generator-fluct-dyn}
\mathcal{G}(t) :=\big( i \varepsilon \partial_t \U (t;s)\big)  \U (t;s)^* = \big( i \veps \partial_t \,  \RR_{\omega_t}^* \big) \RR_{\omega_t} + \RR_{\omega_t}^* \mathcal{L} \, \RR_{\omega_t} 
\end{equation}
so that $\U (t;s)$ satisfies the Schr\"odinger equation \[ i \veps \partial_t \U (t;s) = \mathcal{G} (t) \U (t;s) \] 
with $\U (s;s) = 1$ for all $s \in \bR$. 
\begin{proposition}[Generator of the Fluctuation Dynamics] \label{prop:fluctuation-generator}
The generator $\mathcal{G}(t)$, defined in (\ref{eq: generator-fluct-dyn}), is a self-adjoint operator on $\F(\h \oplus \h)$ and can be written as
\begin{equation}\label{eq:generator-explicit}
\mathcal{G}(t) = d \Gamma_{l}(h_{\mathrm{H}}(t)) - d \Gamma_{r}(\overline{h_{\mathrm{H}}(t)})+ \mathcal{C}(t) + \mathcal{Q}(t)
\end{equation}
where, introducing the notation
\begin{equation}
u_{t;x}(\cdot):= u_{t}(\cdot;x) \;, \qquad v_{t;x}(\cdot):= v_{t}(\cdot;x) \;,
\end{equation}
we have (collecting in $\mathcal{C} (t)$ contributions that commute with the number of particles operator) 
\begin{equation}\label{eq:generator-C}
\begin{split}
\mathcal{C}(t) := \frac{\veps^{3}}{2} \int
d x \,&d y \, V(x-y) 
\Big(
a^{\ast} _{l}(u_{t;x}) a^{\ast} _{l}(u_{t;y}) a_{l} (u_{t;y}) a_{l} (u_{t;x})
+
2 a^{*}_{l}(u_{t;x}) a^{*}_{r}(\overline{v_{t;x}})a_{r}(\overline{v_{t;y}})a_{l}(u_{t;y})
\\
&- 2 a^{\ast}_{l} (u_{t;x}) a^{\ast}_{r}
   (\overline{v_{t;y}})  a_{r} (\overline{v_{t;y}}) a _{l}(u_{t;x})
   + a^{\ast}_{r} (\overline{v_{t;y}}) a^{\ast}_{r} (\overline{v_{t;x}}) a_{r} (\overline{v_{t;x}}) a_{r} (\overline{v_{t;y}}) 
   \\
   &
   -a^{*}_{r}(\overline{u_{t;x}}) a^{*}_{r}(\overline{u_{t;y}})  a_{r}(\overline{u_{t;y}}) a_{r}(\overline{u_{t;y}}) 
   - 2 a^{*}_{r}(\overline{u_{t;x}}) a^{*}_{l}(v_{t;x}) a_{l}(v_{t;y}) a_{r}(\overline{u_{t;y}})
   \\
   & 
  + 2 a^{\ast}_{r} (\overline{u_{t;x}}) a^{\ast}_{l} (v_{t;y}) a_{l} (v_{t;y}) a_{r} (\overline{u_{t;x}}) 
  - a^{*}_{l}(\overline{v_{t;y}}) a^{*}_{l}(\overline{v_{t;x}}) a_{l}(\overline{v_{t;x}}) a_{l}(\overline{v_{t;y}})
\\
& 
 \quad + 2 a^{\ast}_{r} (\overline{v_{t;y}}) a_{r} (\overline{v_{t;x}}) \omega_{t}(x ; y)   - 2 a^{\ast}_{l} (u_{t;x}) a_{l} (u_{t;y}) \omega_{t} (x ; y) \Big)
\end{split}
\end{equation}
and 
\begin{equation}\label{eq:generator-Q}
\begin{split}
\mathcal{Q}(t) :=   \frac{\veps^{3}}{2} \int &
d x \, d y  \, V(x-y) 
 \Big( a^{*}_{l}(u_{t;x})a^{*}_{l}(u_{t;y})a^{*}_{r}(\overline{v_{t;y}})a^{*}_{r}(\overline{v_{t;x}})
 +  2 a^{*}_{l}(u_{t;x})a^{*}_{l}(u_{t;y})a^{*}_{r}(\overline{v_{t;x}})a_{l}(u_{t;y})
\\ 
 &   -2  a^{*}_{l}(u_{t;x})a^{*}_{r}(\overline{v_{t;y}}) a^{*}_{r}(\overline{v_{t;x}}) a_{r}(\overline{v_{t;y}})
 -a^{*}_{r}(\overline{u_{t;x}})a^{*}_{r}(\overline{u_{t;y}}) a^{*}_{l}(v_{t;y})a_{l}^{*}(v_{t;x})
 \\
 & + 2 a^{*}_{r}(\overline{u_{t;x}})a^{*}_{r}(\overline{u_{t;y}})a^{*}_{l}(v_{t;x}) a_{r}(\overline{u_{t;y}}) 
 -2 a^{*}_{r}(\overline{u_{t;x}})a^{*}_{l}(v_{t;y}) a^{*}_{l}(v_{t;x})a_{l}(v_{t;y})
 \\
 & +
 2 a^{\ast}_{l} (u_{t;x}) a^{\ast}_{r} (\overline{v_{t;y}}) \omega_{t} (x ; y)  + h.c. \Big) \;.
\end{split}
\end{equation}
\end{proposition}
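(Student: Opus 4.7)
}

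The plan is to compute the two pieces of $\mathcal{G}(t)$ in (\ref{eq: generator-fluct-dyn}) separately and then show that the quadratic off-diagonal pieces cancel, leaving precisely the stated decomposition. Self-adjointness of $\mathcal{G}(t)$ is automatic: $\RR_{\omega_t}^*\mathcal{L}\RR_{\omega_t}$ is unitarily equivalent to a self-adjoint operator, and for a smooth family of unitaries $\RR_{\omega_t}$ the operator $(i\varepsilon\partial_t\RR_{\omega_t}^*)\RR_{\omega_t} = -\RR_{\omega_t}^*(i\varepsilon\partial_t\RR_{\omega_t})$ is formally self-adjoint.

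First I would compute $\RR_{\omega_t}^*\mathcal{L}\RR_{\omega_t}$. For the kinetic pieces $\pm d\Gamma_\sigma(\sqrt{1-\varepsilon^2\Delta})$ I apply Lemma \ref{lemma: Bogolubov-observables} directly, obtaining besides the diagonal terms $d\Gamma_l(u_t\sqrt{\cdots}u_t) - d\Gamma_r(\overline{v_t}\sqrt{\cdots}\overline{v_t}) - d\Gamma_l(\overline{u_t}\sqrt{\cdots}\overline{u_t}\,) + d\Gamma_l(v_t\sqrt{\cdots}v_t)$ and matching signs from the $r$-copy, some off-diagonal quadratic contributions of the form $d\Gamma^{\pm}_{lr/rl}$ built out of $u_t\sqrt{1-\varepsilon^2\Delta}\,v_t$ and its adjoint. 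For the interaction part I substitute the relations (\ref{eq:Ra}) into each of the four creation/annihilation operators in $a^{*}_{\sigma,x}a^{*}_{\sigma,y}a_{\sigma,y}a_{\sigma,x}$, normal-order using the CAR (\ref{eq:CAR}), and collect terms. Each resulting monomial either preserves the total charge $\mathcal{N}_l + \mathcal{N}_r$ (goes into $\mathcal{C}(t)$) or does not (goes into $\mathcal{Q}(t)$); the normal ordering produces, in addition, quadratic operators whose kernels are convolutions of $V$ with bilinears of $u_t, v_t$ evaluated at coincident points, i.e.\ with $\omega_t(x;x)$ and with $\omega_t(x;y)$, plus scalar constants.

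Next I would compute $(i\varepsilon\partial_t\RR_{\omega_t}^*)\RR_{\omega_t}$. The Hartree flow $i\varepsilon\partial_t\omega_t = [h_{\mathrm H}(t),\omega_t]$ is unitary, i.e.\ $\omega_t = U_t\omega U_t^*$ with $i\varepsilon\partial_t U_t = h_{\mathrm H}(t)U_t$, so $u_t = U_t uU_t^*$ and $v_t = U_t vU_t^*$ satisfy $i\varepsilon\partial_tu_t = [h_{\mathrm H}(t),u_t]$, $i\varepsilon\partial_tv_t = [h_{\mathrm H}(t),v_t]$. Therefore $\RR_{\omega_t}$ is unitarily implemented by the path of block-unitaries on $\mathfrak h\oplus\overline{\mathfrak h}$ determined by $(u_t,v_t)$; its logarithmic derivative is quadratic in creation/annihilation operators and, by direct computation from (\ref{eq:Ra}), equals the quadratic Hamiltonian whose action on $a_{l,x}$ reproduces the differentiated Bogoliubov rule
\[
[\mathcal{K}(t),\,\RR_{\omega_t}^*a_{l,x}\RR_{\omega_t}] = a_l\bigl([h_{\mathrm H}(t),u_t]_x\bigr) - a^*_r\bigl(\overline{[h_{\mathrm H}(t),v_t]_x}\bigr),
\]
and analogously for $a_{r,x}$. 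Writing out the corresponding second-quantized quadratic operator $\mathcal{K}(t)$ explicitly, one finds it equals, up to an additive scalar, exactly minus the off-diagonal quadratic terms produced by $\RR_{\omega_t}^*\,d\Gamma_l(h_{\mathrm H})\RR_{\omega_t} - \RR_{\omega_t}^*\,d\Gamma_r(\overline{h_{\mathrm H}})\RR_{\omega_t}$ via Lemma \ref{lemma: Bogolubov-observables}, plus a cancellation of the self-consistent $V*\varrho_t$ contribution coming from the normal-ordered interaction.

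Finally, the summation $\mathcal{K}(t) + \RR_{\omega_t}^*\mathcal{L}\RR_{\omega_t}$: the off-diagonal quadratic pieces cancel by construction of $\mathcal{K}(t)$; the remaining quadratic diagonal pieces consist of $d\Gamma_l(u_t\sqrt{1-\varepsilon^2\Delta}\,u_t + v_t\sqrt{1-\varepsilon^2\Delta}\,v_t) = d\Gamma_l(\sqrt{1-\varepsilon^2\Delta})$ and the corresponding term on the $r$-side, together with the quadratic pieces from normal-ordering the interaction, which combine into $d\Gamma_l(\varepsilon^3V*\varrho_t) - d\Gamma_r(\overline{\varepsilon^3V*\varrho_t})$, giving $d\Gamma_l(h_{\mathrm H}(t)) - d\Gamma_r(\overline{h_{\mathrm H}(t)})$. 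The quartic remainder is $\mathcal{C}(t) + \mathcal{Q}(t)$ as stated; the additive scalars from CAR normal-ordering and from $\mathcal{K}(t)$ cancel against each other (this can also be checked a posteriori from the fact that $\mathcal{G}(t)\Omega$ has norm squared equal to the derivative of a finite quantity). I expect the main obstacle to be entirely bookkeeping: there are sixteen substitution patterns for each of the two interaction terms in $\mathcal{L}$, and one must carefully track signs coming from anticommuting $r$- and $l$-operators (the $(-1)^{\mathcal N}$ in the definition of $a_r$ via the exponential law) to see that all unwanted quadratic and constant terms cancel cleanly, isolating the particular ordering displayed in (\ref{eq:generator-C})--(\ref{eq:generator-Q}).
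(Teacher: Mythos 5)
Your proposal is correct and follows essentially the same route as the paper, which simply delegates this computation to \cite[Prop.~3.1]{BJPSS}: conjugate $\mathcal{L}$ by $\mathcal{R}_{\omega_t}$ using (\ref{eq:Ra}), normal-order, and observe that the logarithmic derivative $(i\veps\partial_t\mathcal{R}_{\omega_t}^*)\mathcal{R}_{\omega_t}$ is the quadratic operator cancelling the off-diagonal terms with kernel $u_t h_{\mathrm{H}}(t) v_t$. You also correctly identify the one point the paper flags as different from the cited reference, namely that since $\omega_t$ solves Hartree rather than Hartree--Fock, the exchange-type quadratic terms with kernel $\omega_t(x;y)$ survive and end up in $\mathcal{C}(t)$ and $\mathcal{Q}(t)$.
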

\begin{proof} Proposition \ref{prop:fluctuation-generator} is proved in \cite[Prop.~3.1]{BJPSS}, in the mean-field regime, and with $\omega_t$ solving the Hartree-Fock equation. Since here we let $\omega_t$ evolve according to the Hartree, rather than the Hartree-Fock dynamics (so, we neglect the exchange term appearing in the Hartree-Fock equation), the operators $\mathcal{C}(t)$ and $\mathcal{Q}(t)$ contain additional quadratic contributions. 
\end{proof}

To prove Proposition \ref{thm:bound-localized-fluctuation}, we are going to use a Gronwall argument, estimating the increments of the local density of the excitations. Therefore, we need to control the commutator of the operator $d\Gamma (\W^{(n)}_z)$ with the generator $\mathcal{G} (t)$ of the fluctuation dynamics. In the next lemma, we collect some identities which will be useful to compute this commutator. 

\begin{lemma}\label{lemma: convolution}
If $A$ is a bounded operator on $\h$, we set $A_{x}(\cdot):= A(\cdot;x)$. Let $V$ satisfy the assumptions of Theorem \ref{thm:main} and set
\begin{equation}\label{def_Vj}
V^{(1)}(x)  := \int_{\R^{3}} d  p \,\frac{e^{i p \cdot x}}{1+|p|^{6}} \;, \qquad V^{(2)}(x)  := \int_{\R^{3}} d p \, e^{i p \cdot x}\,  (1+|p|^{6})\widehat{V}(p) \;.
\end{equation}
Then, for any bounded operators $J,A,B,C,D$ on $\h$ and any $\rho, \sigma, \sigma', \bar{\sigma}, \bar{\sigma}'= r,l$ the following identities hold true:

\begin{equation}\label{eq: commutator-convolution-I}
\begin{split}
\Big[ d \Gamma_{\rho}&\big( J \big),\int d x d y V(x-y) a^{\ast}_{\sigma}(A_{x}) a^{\ast}_{\sigma '}(B_{y}) a_{\sigma'} (C_{y}) a_{\sigma} (D_{x})\Big] 
\\
&
= \delta_{\rho,\sigma }\int d z \int d y V^{(1)}_{z}(y) a^{\ast}_{\sigma '}(B_{y})
d \Gamma_{\sigma} \big( \big[J,A V^{(2)}_{z} D^{*}\big] \big) a_{\sigma'} (C_{y})
\\
& \qquad +  \delta_{\rho,\sigma' }\int d z \int d x V^{(1)}_{z}(x) a^{\ast}_{\sigma}(A_{x})
d \Gamma_{\sigma'} \big( \big[J,B V^{(2)}_{z} C^{*}\big] \big) a_{\sigma} (D_{x}) \;,
\end{split}
\end{equation}
\begin{equation}\label{eq: decomposition-convolution-II}
\begin{split}
\Big[ d & \Gamma_{\rho}\big( J\big),
\int d x d y V(x-y) a^{\ast}_{\sigma}(A_{x}) a^{\ast}_{\sigma '}(B_{x}) a^{\ast}_{\bar{\sigma}} (C_{y}) a_{\bar{\sigma}} (D_{y}) \Big] \\
& =
\int d z  \int d y V^{(2)}_{z}(y) a^{*}_{\bar{\sigma}}(C_{y})d \Gamma_{\sigma \sigma'}^{+}\big(\delta_{\rho,\sigma} J A V^{(1)}_{z} B^{T}+\delta_{\rho,\sigma'}A V^{(1)}_{z} B^{T} J^{T}\big) a_{\bar{\sigma}}(D_{y})
\\
& \qquad + \delta_{\rho,\bar{\sigma}}
\int d z d \Gamma^{+}_{\sigma \sigma'}\big( A V^{(1)}_{z} B^{T}\big) d \Gamma_{\bar{\sigma}}\big(\big[J, C V^{(2)}_{z} D^{*}\big] \big) \;.
\end{split}
\end{equation}
\begin{equation}\label{eq: decomposition-convolution-I}
\begin{split}
\Big[ d \Gamma_{\rho}\big( J\big)&,\int d x d y V(x-y)  a^{\ast}_{\sigma}(A_{x}) a^{\ast}_{\sigma '}(B_{x}) a_{\sigma'} (C_{y}) a_{\sigma} (D_{y}) \Big]
\\
& = \int d z  d \Gamma^{+}_{\sigma \sigma'}\big( \delta_{\rho,\sigma} J A V^{(1)}_{z} B^{T} + \delta_{\rho,\sigma'}A V^{(1)}_{z} B^{T}J^{T}\big) 
d \Gamma^{-}_{\sigma' \sigma}\big( \overline{C} V^{(2)}_{z} D^{*}\big) 
\\
& \qquad - \int d z d \Gamma^{+}_{\sigma \sigma'}\big( A V^{(1)}_{z} B^{T}\big) d \Gamma^{-}_{\sigma' \sigma}\big( \delta_{\rho, \sigma'}J^{T}\overline{C} V^{(2)}_{z} D^{*}+\delta_{\rho,\sigma}\overline{C} V^{(2)}_{z} D^{*} J\big) \;,
\end{split}
\end{equation}
\begin{equation}\label{eq: decomposition-convolution-III}
\begin{split}
\Big[ d \Gamma_{\rho}\big( J\big),&
\int d x d y V(x-y) a^{\ast}_{\sigma}(A_{x}) a^{\ast}_{\sigma '}(B_{x}) a^{\ast}_{\bar{\sigma}} (C_{y}) a^{\ast}_{\bar{\sigma}'} (D_{y})
\\ 
& = 
\int d z d \Gamma_{\sigma \sigma'}^{+}\big(\delta_{\rho,\sigma} J A V^{(1)}_{z} B^{T}+\delta_{\rho,\sigma'}A V^{(1)}_{z} B^{T} J^{T}\big) d \Gamma_{\bar{\sigma} \bar{\sigma}'}^{+}\big( C V^{(2)}_{z} D^{T}\big) 
\\
& \qquad + \int d z d \Gamma_{\sigma \sigma'}^{+}\big( A V^{(1)}_{z} B^{T}\big) d \Gamma_{\bar{\sigma} \bar{\sigma}'}^{+}\big( \delta_{\rho,\bar{\sigma}} J C V^{(2)}_{z} D^{T} + \delta_{\rho,\bar{\sigma}'} C V^{(2)}_{z} D^{T}J^{T}\big) \;.
\end{split}
\end{equation}
\end{lemma}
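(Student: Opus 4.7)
All four identities are proven by the same three-ingredient recipe. The starting point is the convolution factorization $V = V^{(1)} * V^{(2)}$, which is verified in Fourier space since $\widehat{V^{(1)}}(p) = (1+|p|^{6})^{-1}$ and $\widehat{V^{(2)}}(p) = (1+|p|^{6})\widehat{V}(p)$ multiply to $\widehat V(p)$. Consequently
\[
V(x-y) \;=\; \int dz\, V^{(1)}_{z}(x)\, V^{(2)}_{z}(y) \;=\; \int dz\, V^{(2)}_{z}(x)\, V^{(1)}_{z}(y),
\]
which decouples the $x$- and $y$-dependencies (the two splittings, equivalent by commutativity of convolution, will be used interchangeably). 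Combined with the elementary identities $[d\Gamma_\rho(J), a^{*}_{\sigma}(f)] = \delta_{\rho,\sigma}\, a^{*}_{\sigma}(Jf)$ and $[d\Gamma_\rho(J), a_{\sigma}(f)] = -\delta_{\rho,\sigma}\, a_{\sigma}(J^{*}f)$ (direct consequences of the CAR \eqref{eq:CAR}), and with the Leibniz rule applied to the quartic operators, this yields the four identities.

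\textbf{Illustration for \eqref{eq: commutator-convolution-I}.} The Leibniz rule splits the commutator into four terms, grouping into a pair proportional to $\delta_{\rho,\sigma}$ (from acting on $a^{*}_{\sigma}(A_x)$ and $a_{\sigma}(D_x)$) and a pair proportional to $\delta_{\rho,\sigma'}$. For the $\delta_{\rho,\sigma}$ pair one picks the splitting $V(x-y)= \int dz V^{(2)}_z(x)V^{(1)}_z(y)$, uses the CAR to move $a^{*}_{\sigma'}(B_y)$ and $a_{\sigma'}(C_y)$ to the outside (the four sign factors from the swaps cancel pairwise), and recognises the remaining $x$-integral as $d\Gamma_\sigma(JAV^{(2)}_z D^{*}) - d\Gamma_\sigma(AV^{(2)}_z D^{*} J) = d\Gamma_\sigma([J, AV^{(2)}_z D^{*}])$ via the basic identity $\int dx\, V^{(2)}_z(x) a^{*}_{\sigma}(A_x) a_{\sigma}(D_x) = d\Gamma_\sigma(AV^{(2)}_z D^{*})$ applied with $A \to JA$ and with $D \to J^{*}D$ (using $(J^{*}D)^{*}=D^{*}J$). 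The $\delta_{\rho,\sigma'}$ pair is handled symmetrically, with the other convolution splitting.

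\textbf{The remaining three identities.} The same scheme works, with two variations. When two \emph{creation} operators share the variable $x$, as in \eqref{eq: decomposition-convolution-II}, \eqref{eq: decomposition-convolution-I} and \eqref{eq: decomposition-convolution-III}, the basic identity is replaced by $\int dx\, V^{(1)}_z(x) a^{*}_{\sigma}(A_x) a^{*}_{\sigma'}(B_x) = d\Gamma^{+}_{\sigma\sigma'}(AV^{(1)}_z B^{T})$: the transpose $B^{T}$ reflects that two creation operators pair \emph{bilinearly} rather than sesquilinearly, so the natural kernel is $\int dx A(y_1;x)V^{(1)}_z(x)B(y_2;x)$. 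When the commutator sends $B \mapsto JB$, the kernel picks up $(JB)^{T}=B^{T}J^{T}$, accounting for the $J^{T}$ in the $\delta_{\rho,\sigma'}$ pieces of \eqref{eq: decomposition-convolution-II}--\eqref{eq: decomposition-convolution-III}. The analogous identity for two annihilations yields $d\Gamma^{-}_{\sigma'\sigma}(\overline C\, V^{(2)}_z D^{*})$ in \eqref{eq: decomposition-convolution-I} (now the kernel is $\overline C V^{(2)}_z D^{*}$ because each annihilation carries a complex conjugate). Combining these identifications with the Leibniz expansion term-by-term, and choosing the splitting that puts the $\rho$-dependent factor inside a $d\Gamma$ or $d\Gamma^{\pm}$, yields the stated formulas.

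\textbf{Main obstacle.} Nothing in the proof is conceptually deep; the only real difficulty is combinatorial bookkeeping. One must (a) keep straight that \emph{all} fermionic operators on the doubled Fock space anticommute, since the Jordan--Wigner factor $(-1)^{\mathcal N}$ built into $a_r$ in Section~\ref{sec: mixed-states} ensures $\{a_{l}(f), a_{r}(g)\}=0$, so moving a quadratic block past another quadratic block is sign-neutral; (b) distinguish carefully between adjoints $J^{*}$, transposes $J^{T}$ and complex conjugates $\overline J$, which appear for structural reasons (creation-annihilation vs.\ creation-creation vs.\ annihilation-annihilation pairings); and (c) select, for each Kronecker-delta term, the convolution splitting that routes the $\rho$-sector factor into a single $d\Gamma$ or $d\Gamma^{\pm}$. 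Once these items are managed consistently, each of the four identities reduces to direct verification.
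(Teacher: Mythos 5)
Your proposal is correct and follows essentially the same route as the paper: the convolution factorization $V=V^{(1)}*V^{(2)}$ to decouple the $x$- and $y$-variables, the Leibniz rule, the elementary commutators of $d\Gamma_\rho(J)$ with $a^\sharp_\sigma(f)$, and the identification of the resulting quadratic blocks as $d\Gamma_\sigma(AV^{(2)}_zD^*)$ or $d\Gamma^{\pm}_{\sigma\sigma'}(AV^{(1)}_zB^T)$, with the adjoint/transpose/conjugate distinctions handled exactly as in the paper's Lemma \ref{lemma: commutators-dGamma}. The paper writes out only \eqref{eq: commutator-convolution-I} and \eqref{eq: decomposition-convolution-I} in detail and declares the other two analogous, so your plan covers the same ground.
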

The proof of Lemma \ref{lemma: convolution} relies on the following relations, whose proof is a simple application of the canonical anticommutation relations (\ref{eq:CAR}). 
\begin{lemma}\label{lemma: commutators-dGamma}
For any bounded operators $A,B$ on $\h$ the following identities hold true, for any $\bar{\sigma},\sigma, \sigma'= r,l$:
\begin{equation}\label{eq: commutators-dGamma-lemma}
\begin{split}
\big[ d \Gamma_{\bar{\sigma}}(A),  d \Gamma _{\sigma} (B)\big] & = \delta_{\bar{\sigma},\sigma} d \Gamma_{\sigma} \big( \big[A, B \big] \big)
\\
\big[ d \Gamma _{\bar{\sigma}}(A), d \Gamma^{+}_{\sigma \sigma'}(B) \big] & = d \Gamma_{\sigma \sigma'}^{+}\big( \delta_{\bar{\sigma}, \sigma} A B +\delta_{\bar{\sigma},\sigma'} B A^{T}\big)\;,
\\ 
\big[ d \Gamma _{\bar{\sigma}}(A), d \Gamma^{-}_{\sigma \sigma'}(B) \big] & =
-d \Gamma_{\sigma \sigma'}^{-}\big( \delta_{\bar{\sigma}, \sigma} A^{T} B +\delta_{\bar{\sigma},\sigma'} B A\big) \;.
\end{split}
\end{equation}
Furthermore, for any bounded operator $A$ on $\h$ and $f \in \h$ we have
\begin{equation}\label{eq: commut-dGamma-a}
[d \Gamma_{\sigma} (A), a^{\ast}_{\sigma} (f)] = a^{\ast}_{\sigma} (A f) \;, \qquad
[d \Gamma^{-}_{\sigma}(A), a^{\ast}_{\sigma} (f)] = -a_{\sigma} (A \bar{f}) \;.
\end{equation}
\end{lemma}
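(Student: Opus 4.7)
The plan is a direct computation from the canonical anticommutation relations. The key structural input, coming from the exponential law and the insertion of $(-1)^{\mathcal{N}}$ in the definition of $a_{r}(f)$ (see (\ref{eq:sec-U}) and the surrounding discussion), is that the full collection $\{a^{\sharp}_{\sigma,x} : \sigma \in \{l,r\},\, x \in \R^{3}\}$ on $\F(\h \oplus \h)$ satisfies a single fermionic CAR algebra: for all $\sigma,\sigma' \in \{l,r\}$,
\begin{equation*}
\{a_{\sigma,x},\, a^{\ast}_{\sigma',y}\} = \delta_{\sigma,\sigma'}\,\delta(x-y)\,, \qquad \{a^{\sharp}_{\sigma,x},\, a^{\sharp}_{\sigma',y}\} = 0 \text{ if both are creation or both annihilation.}
\end{equation*}
In particular, left and right operators anticommute, which will produce all the $\delta_{\bar\sigma,\sigma}$ and $\delta_{\bar\sigma,\sigma'}$ selection rules in (\ref{eq: commutators-dGamma-lemma}).

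The main computational tool is the standard identity for a product of three fermionic (odd) operators,
\begin{equation*}
[XY, Z] = X\{Y,Z\} - \{X,Z\}Y \,,
\end{equation*}
and its iteration $[XY,ZW] = X\{Y,Z\}W - \{X,Z\}YW + ZX\{Y,W\} - Z\{X,W\}Y$. Writing each second-quantized object as an integral of an operator-valued kernel, e.g.\ $d\Gamma_{\rho}(J) = \int J(u;v)\, a^{\ast}_{\rho,u} a_{\rho,v}\, du\,dv$, $d\Gamma^{+}_{\sigma\sigma'}(B) = \int B(x;y)\, a^{\ast}_{\sigma,x} a^{\ast}_{\sigma',y}\, dx\,dy$, $d\Gamma^{-}_{\sigma\sigma'}(B) = \int B(x;y)\, a_{\sigma,x} a_{\sigma',y}\, dx\,dy$, and applying the above identity, each commutator reduces to a sum of four terms in which exactly one pair of operators is contracted via the CAR.

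For the first identity in (\ref{eq: commutators-dGamma-lemma}), only the contractions $\{a_{\bar\sigma,v}, a^{\ast}_{\sigma,x}\}$ and $\{a^{\ast}_{\bar\sigma,u}, a_{\sigma,y}\}$ survive, both enforcing $\bar\sigma = \sigma$; reassembling the kernels gives $d\Gamma_\sigma(JB) - d\Gamma_\sigma(BJ) = d\Gamma_\sigma([J,B])$ on the right-hand side. For the second identity, $d\Gamma^{+}_{\sigma\sigma'}(B)$ contains two creation operators, so the annihilation operator of $d\Gamma_{\bar\sigma}(A)$ can contract with either: contracting with the $\sigma$-leg yields $\delta_{\bar\sigma,\sigma} AB$, contracting with the $\sigma'$-leg yields $\delta_{\bar\sigma,\sigma'} BA^{T}$, the transpose appearing because the contraction occurs with the second slot of $B(x;y)$ and one rewrites $\int B(x;y) A(y;y')\,a^{\ast}_{\sigma,x} a^{\ast}_{\sigma',y'} = d\Gamma^{+}_{\sigma\sigma'}(BA^{T})$. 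The minus sign and the transposes in the third identity come out the same way, with the roles of creation and annihilation exchanged: now the creation operator $a^{\ast}_{\bar\sigma,u}$ in $d\Gamma_{\bar\sigma}(A)$ is what contracts against the two annihilation legs of $d\Gamma^{-}_{\sigma\sigma'}(B)$, and the overall minus sign of $[XY,Z]$ is what produces the global $-$. Finally, the two scalar-valued identities (\ref{eq: commut-dGamma-a}) are the $f$-indexed, one-slot version of the same computation: $[d\Gamma_{\sigma}(A), a^{\ast}_{\sigma}(f)] = \int A(x;y)f(z)[a^{\ast}_{\sigma,x}a_{\sigma,y}, a^{\ast}_{\sigma,z}] = \int A(x;y)f(y)\, a^{\ast}_{\sigma,x} = a^{\ast}_{\sigma}(Af)$, and similarly with the annihilation contraction for $[d\Gamma^{-}_{\sigma}(A), a^{\ast}_{\sigma}(f)]$, where $\bar f$ arises because $d\Gamma^{-}$ is bilinear in $a$'s, i.e.\ not sesquilinear in its kernel.

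There is no real obstacle here beyond careful bookkeeping of (i) which anticommutator survives in each of the four Leibniz terms, (ii) the placement of complex conjugation and transpose when reabsorbing the contracted kernel into a new $d\Gamma^{\pm}$, and (iii) an overall sign from the CAR. All three main identities, together with (\ref{eq: commut-dGamma-a}), follow by a uniform application of this procedure.
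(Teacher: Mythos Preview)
Your proposal is correct and matches the paper's approach exactly: the paper does not give a detailed proof of this lemma but simply states that it ``is a simple application of the canonical anticommutation relations (\ref{eq:CAR}),'' which is precisely the direct computation via $[XY,Z]=X\{Y,Z\}-\{X,Z\}Y$ that you outline. Your sketch supplies the bookkeeping (the $\delta_{\bar\sigma,\sigma}$ selection from left/right anticommutation on $\F(\h\oplus\h)$, the appearance of the transposes, and the global signs) that the paper leaves implicit.
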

%
%
\begin{proof}[Proof of Lemma \ref{lemma: convolution}.] We prove \eqref{eq: commutator-convolution-I} and \eqref{eq: decomposition-convolution-I}, since \eqref{eq: decomposition-convolution-II} and \eqref{eq: decomposition-convolution-III} follow in a similar way. By the Leibniz rule for commutators and by \eqref{eq: commut-dGamma-a} we have
\begin{equation}\label{eq: commut-dGammaJ-ABCD}
\begin{split}
\Big[ &d \Gamma_{\bar{\sigma}}\big( J \big), a^{\ast}_{\sigma}(A_{x}) a^{\ast}_{\sigma '}(B_{y}) a_{\sigma'} (C_{y}) a_{\sigma} (D_{x})\Big] 
\\
& = \delta_{\bar{\sigma} ,\sigma} a^{\ast}_{\sigma}(J A_{x}) a^{\ast}_{\sigma '}(B_{y}) a_{\sigma'} (C_{y}) a_{\sigma} (D_{x})
+ a^{\ast}_{\sigma}(A_{x}) \Big[ d \Gamma_{\bar{\sigma}} \big( J \big), a^{\ast}_{\sigma '}(B_{y}) a_{\sigma'} (C_{y})\Big]  a_{\sigma} (D_{x})
\\
& \qquad  -\delta_{\bar{\sigma} ,\sigma} a^{\ast}_{\sigma}(A_{x}) a^{\ast}_{\sigma '}(B_{y}) a_{\sigma'} (C_{y})   a_{\sigma} (J D_{x})
\\
& =  a^{\ast}_{\sigma '}(B_{y})
\Big[ d \Gamma_{\bar{\sigma}} \big( J \big), a^{\ast}_{\sigma }(A_{x}) a_{\sigma} (D_{x})\Big] a_{\sigma'} (C_{y}) 
+a^{\ast}_{\sigma}(A_{x}) \Big[ d \Gamma_{\bar{\sigma}} \big( J \big), a^{\ast}_{\sigma '}(B_{y}) a_{\sigma'} (C_{y})\Big]  a_{\sigma} (D_{x}) \;.
\end{split}
\end{equation}
From \eqref{def_Vj}, we can write
\begin{equation*}
\begin{split}
&\int d x d y V(x-y) a^{\ast}_{\sigma '}(B_{y})
\Big[ d \Gamma_{\bar{\sigma}} \big( J \big), a^{\ast}_{\sigma }(A_{x}) a_{\sigma} (D_{x})\Big] a_{\sigma'} (C_{y})
\\
&  =  \int d z \int d y V^{(1)}_{z}(y) a^{\ast}_{\sigma '}(B_{y})
\Big[ d \Gamma_{\bar{\sigma}} \big( J \big), \int d x V_{z}^{(2)}(x)a^{\ast}_{\sigma }(A_{x}) a_{\sigma} (D_{x})\Big]  a_{\sigma'} (C_{y}) \;.
\end{split}
\end{equation*}
Since
\begin{equation}\label{eq: representation dGamma-V}
\begin{split}
\int d x &V^{(2)}_{z}(x) a^{\ast}_{\sigma}(A_{x}) a_{\sigma }(D_{x}) 
= d \Gamma_{\sigma}\big( A V^{(2)}_{z} D^{*}\big) \;,
\end{split}
\end{equation}
by \eqref{eq: commutators-dGamma-lemma} we conclude that
\begin{equation*}
\begin{split}
&\int d x d y V(x-y) a^{\ast}_{\sigma '}(B_{y})
\Big[ d \Gamma_{\bar{\sigma}} \big( J \big), a^{\ast}_{\sigma }(A_{x}) a_{\sigma} (D_{x})\Big] a_{\sigma'} (C_{y})
\\
& = \delta_{\bar{\sigma}, \sigma} \int d z \int d y V^{(1)}_{z}(y) a^{\ast}_{\sigma '}(B_{y})
d \Gamma_{\sigma} \big( \big[J,A V^{(2)}_{z} D^{*}\big] \big) a_{\sigma'} (C_{y}) \;.
\end{split}
\end{equation*}
The same manipulations carry over to the second term on the last line in \eqref{eq: commut-dGammaJ-ABCD}, hence the claim. To prove \eqref{eq: decomposition-convolution-I}, we write
\begin{equation*}
\begin{split}
\int d x d y & V(x-y) \,  a^{\ast}_{\sigma}(A_{x}) a^{\ast}_{\sigma '}(B_{x}) a_{\sigma'} (C_{y}) a_{\sigma} (D_{y}) 
\\
& = \int d z \Big( \int d x V^{(1)}_{z}(x) a^{\ast}_{\sigma}(A_{x}) a^{*}_{\sigma '}(B_{x})\Big) \Big( \int d y V^{(2)}_{z}(y)  a_{\sigma'} (C_{y}) a_{\sigma} (D_{y}) \Big)
\\
& = \int d z  d \Gamma^{+}_{\sigma \sigma'} \big( A V^{(1)}_{z}B^{T} \big)  d \Gamma^{-}_{\sigma' \sigma} \big( \overline{C} V^{(2)}_{z}D^{*} \big)  \;,
\end{split}
\end{equation*}
where in the last step one proceeds as in \eqref{eq: representation dGamma-V}. The claim then follows by the Leibniz rule for commutators and by \eqref{eq: commutators-dGamma-lemma}.
\end{proof}

We are now ready to control the growth of local fluctuations.
\begin{proof}[Proof of Proposition \ref{thm:bound-localized-fluctuation}.]
Throughout this proof, we will use the shorthand notation $C_{t}\equiv C \exp(c|t|)$ for some $C,c >0$ independent of $\Lambda$ which can possibly vary from line to line. We will not trace the dependence of the constants on $n$, since we can fix some $n \geq 4$ large enough, the bound for $\bar{n} > n_{0}$ following by monotonicity.
The proof is based on a Gronwall argument. By Proposition \ref{prop:fluctuation-generator} we have, for $\sigma = l,r$, 
\begin{equation}\label{eq: derivative XiWXi}
\begin{split}
i \veps \partial_{t} \langle \Xi_{t},  & d \Gamma_{\sigma}(\W^{(n)}_{z}) \Xi_{t}  \rangle
\\
 & = \big \langle \Xi_{t}, \big [ d \Gamma _{\sigma} (\W^{(n)}_{z}) ,d \Gamma_{l}(\sqrt{1 - \veps^{2}\Delta}) - d \Gamma_{r}(\sqrt{1-\veps^{2}\Delta}) + \mathcal{C}(t) + \mathcal{Q}(t)\big]  \Xi_{t}  \big\rangle
\end{split}
\end{equation}
where we used that $\big[  d \Gamma _{\sigma} (\W^{(n)}_{z}),d \Gamma_{\sigma '}(V \ast \varrho_{t})\big] = 0$ and where $\mathcal{C}(t)$ and $\mathcal{Q}(t)$ have been defined in \eqref{eq:generator-C} and \eqref{eq:generator-Q} respectively.
We control each term in the above commutator separately.

\paragraph{Control of $\big[d \Gamma_{\sigma}(\W^{(n)}_{z}),d \Gamma_{l}(\sqrt{1-\veps^{2}\Delta}) \big]$.} 
With the bound (\ref{eq:comm-kin}), we find 
\[ i \big[d \Gamma_{\sigma}(\W^{(n)}_{z}),d \Gamma_{l}(\sqrt{1-\veps^{2}\Delta}) \big] = \delta_{\sigma,l} \, d\Gamma_l \big( i \big[ \W^{(n)}_{z}, \sqrt{1-\veps^{2}\Delta} \big] \big) \leq C \veps \,  \delta_{\sigma,l} \, d\Gamma_l (\W^{(n)}_z)\;. \]
Therefore
\begin{equation}\label{eq: commutator-kinetic}
\Big|\langle \Xi_{t}, \Big [ d \Gamma _{\sigma} (\W^{(n)}_{z}) ,d \Gamma_{l}(\sqrt{1-\veps^{2}\Delta}) \Big]  \Xi_{t}  \rangle \Big| \leq C \veps \, \delta_{\sigma,l}  \langle \Xi_{t}, d \Gamma _{l} (\W^{(n)}_{z}) \Xi_{t}  \rangle \;.
\end{equation}
To control $\big[d \Gamma_{\sigma}(\W^{(n)}_{z}),d \Gamma_{r}(\sqrt{1-\veps^{2}\Delta}) \big]$, we proceed analogously.

\paragraph{Control of $\big[d \Gamma _{\sigma}( \W^{(n)}_{z}), \mathcal{C}(t) \big]$.} We discuss separately the various contributions to $\mathcal{C}(t)$.

\medskip

\noindent$(\mathrm{i})$ Consider the term:
\begin{equation*}
\begin{split}
\mathrm{I} :=\Big[ d \Gamma _{\sigma}(\W^{(n)}_{z})& , \int d x d y  \, V(x-y) a^{\ast} _{l}(u_{t;x}) a^{\ast} _{l}(u_{t;y}) a_{l} (u_{t;y}) a_{l} (u_{t;x})\Big] \;.
\end{split}
\end{equation*}
From Eq. \eqref{eq: commutator-convolution-I}, we have 
\begin{equation}\label{eq: I-convolution}
\mathrm{I} = 2 \delta_{\sigma, l} \int d z' \int d x \, V^{(1)}_{z'}(x) a^{\ast} _{l}(u_{t;x})
d \Gamma _{l}\big( \big[\W^{(n)}_{z}, u_{t} V_{z'}^{(2)} u_{t} \big]\big) a_{l}(u_{t;x})\;.
\end{equation}
By the Cauchy--Schwarz inequality and by Lemma \ref{lemma: bound-Fock-operator}, we find 
\begin{equation}\label{eq: I-CS-tr}
\begin{split}
\big|\big \langle \Xi_{t}, \mathrm{I}\,&\Xi_{t} \big \rangle \big| 
\\
& \leq 2 \int d z' \int d x \big|V^{(1)}_{z'}(x) \big| \,\big\|  a _{l}(u_{t;x}) \Xi_{t} \big\|  \;
\big\|d \Gamma _{l}\big( \big[\W^{(n)}_{z}, u_{t} V_{z'}^{(2)} u_{t} \big]\big) a_{l}(u_{t;x}) \Xi_{t}\big\|
\\
& \leq 2 \int d z'  \bigg( \int d x  \big|V^{(1)}_{z'}(x) \big| \,\big\|  a _{l}(u_{t;x}) \Xi_{t} \big\|^2 \bigg)   \big\|\big[\W^{(n)}_{z}, u_{t} V_{z'}^{(2)} u_{t} \big] \big\|_{\tr}  
\\
&\leq  2\int d z' \,  \big \langle \Xi_{t}, d \Gamma_{l}\big( u_{t} |V^{(1)}_{z'}| u_{t}\big)\Xi_{t}\big\rangle  \,  \big\|\big[\W^{(n)}_{z}, u_{t} V_{z'}^{(2)} u_{t} \big] \big\|_{\tr} \;.
\end{split}
\end{equation}
From Corollary \ref{cor:commF}, using that $\big[ \W^{(n)}_{z}, V_{z'}^{(2)}\big] =0$, we have: 
\begin{equation}\label{eq: WuVu}
\begin{split}
\big\|\big[\W^{(n)}_{z}, u_{t} V_{z'}^{(2)} u_{t} \big] \big\|_{\tr}  &\leq 2  \big\|V_{z'}^{(2)}\big[\W^{(n)}_{z}, u_{t} \big] \big\|_{\tr} \leq \frac{C_{t} \veps^{-2}}{1+ |z-z'|^8}
\end{split}
\end{equation}
for all $z' \in \bR^3$ (because we assumed that $z \in \Lambda$), for all $n \in \bN$ large enough (in Corollary \ref{cor:commF}, we choose $F = \W^{(n)}$ and we use the fact that $V$ decays faster than any power to estimate $V^{(2)}$ by a factor $\W^{(n)}$, for $n \in \bN$ sufficiently large). To bound $\langle \Xi_{t}, d \Gamma_{l}\big( u_{t} |V^{(1)}_{z'}| u_{t}\big)\Xi_{t}\big\rangle$, we choose $R > 0$ so that $B_R (z) \subset \Lambda$, and we consider separately $z' \in B_R (z)$ and $z' \in B_R^c (z)$. For $z' \in B_R (z)$, we proceed as in (\ref{eq:uWu2}) to estimate 
\begin{equation}\label{eq:PV-bound}
\begin{split}
\big \langle \Xi_{t}, d \Gamma_{l}\big( u_{t} |V^{(1)}_{z'}| u_{t}\big)\Xi_{t}\big\rangle \leq 
 \big \langle \Xi_{t}, d\Gamma_{l}\big(\W^{(n)}_{z'}\big)\Xi_{t}\big\rangle + C_{t} \, \veps^{-2} \;.
\end{split}
\end{equation}
Here, we used the fact that $V^{(1)}$ is smooth in momentum space (which follows from the definition \eqref{def_Vj}), to conclude that it decays faster than any power in position space.
For $z' \in B_R^c (z)$, on the other hand, we proceed as in (\ref{eq:uWu3})-(\ref{eq:uWu3b}) and we apply Lemma \ref{prop: a priori bound} and Proposition \ref{prop: semiclassical_prop-main} to obtain  
\begin{equation}\label{eq:PV-bound-2}
\begin{split}
\big \langle \Xi_{t}, d \Gamma_{l}\big( u_{t}V^{(1)}_{z'} u_{t}\big)\Xi_{t}\big\rangle 
&\leq \big \langle \Xi_{t}, d\Gamma_{l}\big(\W^{(n)}_{z'}\big)\Xi_{t}\big\rangle + \tr  \,\W^{(n)}_{z'} \omega_{t} \leq C_{t} \veps^{-3} \;.
\end{split}
\end{equation}
Combining these estimates, and using
\[ \int_{B_R^c (z)} \frac{1}{1+|z-z'|^8} \, dz' \leq C (1+R)^{-5} , \]   
we arrive at 
\begin{equation}\label{eq:termi-fin}
\begin{split}
\big|\big \langle \Xi_{t}, \mathrm{I}\,\Xi_{t} \big \rangle \big| \leq 
 C_t \big(\veps^{-4} & + \veps^{-5} (1+R)^{-5}  \big)
+  C_t \veps^{-2}  \sup_{z' \in B_{R} (z)} \,\big \langle \Xi_{t}, d\Gamma_{l}\big(\W^{(n)}_{z'}\big)\Xi_{t} \big\rangle \;.
\end{split}
\end{equation}

\medskip

\noindent
$(\mathrm{ii})$ With \eqref{eq: decomposition-convolution-I} we write the next term as
\begin{equation}\label{eq: II-commut}
\begin{split}
\mathrm{II} &=  \Big[ d \Gamma _{\sigma}(\W^{(n)}_{z}) , \int d x d y V(x-y) a^{*}_{l}(u_{t;x}) a^{*}_{r}(\overline{v_{t;x}})a_{r}(\overline{v_{t;y}})a_{l}(u_{t;y})\Big] 
\\
& =\int d z' d \Gamma_{l r}^{+}\big( \delta_{\sigma, l} \W^{(n)}_{z} u_{t} V^{(1)}_{z'} v_{t} + \delta_{\sigma, r} u_{t} V^{(1)}_{z'} v_{t}  \W^{(n)}_{z}\big)  d \Gamma_{r l}^{-}\big(v_{t} V^{(2)}_{z'} u_{t} \big) 
\\
& \qquad - \int d z'  d \Gamma_{l r}^{+}\big(u_{t} V^{(1)}_{z'} v_{t} \big) d \Gamma_{r l}^{-}\big( \delta_{\sigma, r}\W^{(n)}_{z} v_{t} V^{(2)}_{z'} u_{t} + \delta_{\sigma, l} v_{t} V^{(2)}_{z'} u_{t} \W^{(n)}_{z}\big)\;.
\end{split}
\end{equation}
Let us consider the first term on the r.h.s. of (\ref{eq: II-commut}), the other can be treated analogously. As we did in (i), we decompose the $z'$-integral in two regions. First, we consider the contribution from $z' \in B_R (z)$. With Lemma \ref{lemma: bound-Fock-operator}, we find 
\begin{equation}\label{eq:termii0}
\begin{split}
\Big| \big\langle \Xi_{t}, \int_{B_{R}(z)} d z' d \Gamma_{l r}^{+} &\big( \W^{(n)}_{z} u_{t} V^{(1)}_{z'} v_{t} \big)  d \Gamma_{r l}^{-}\big(v_{t} V^{(2)}_{z'} u_{t}\big) \Xi_{t}\big \rangle\Big|  \\ &
\leq \int_{B_{R} (z)} d z' \big\| \W^{(n)}_{z} u_{t} V^{(1)}_{z'} v_{t}\big\|_{\tr} \, \big\| v_{t} V^{(2)}_{z'} u_{t} \big\|_{\tr} \;.
\end{split}
\end{equation}
We can estimate 
\begin{equation}\label{eq: II-WuVv}
\begin{split}
\big\| \W^{(n)}_{z} u_{t} V^{(1)}_{z'} v_{t}\big\|_{\tr} & \leq \big\| \W^{(n)}_{z}  V^{(1)}_{z'} u_{t}  v_{t}\big\|_{\tr} +
\big\| \W^{(n)}_{z}\big[ u_{t} ,V^{(1)}_{z'} \big] v_{t} \big\|_{\tr}
\\
& \leq C (1+|z-z'|^8)^{-1}  \big\| \W^{(n/2)}_{z} u_{t}  v_{t}\big\|_{\tr} +
\big\| \W^{(n)}_{z} \big[ u_{t} ,V^{(1)}_{z'} \big] \big\|_{\tr}
\end{split}
\end{equation}
if $n \in \bN$ is large enough (because then $\big\|  \W^{(n)}_{z}  V^{(1)}_{z'}(\W^{(n/2)}_{z})^{-1} \|_{\op} \leq C (1+|z-z'|^8)^{-1}$, using the decay of $V^{(1)}$). With Proposition \ref{prop: semiclassical_prop-main} and Corollary \ref{cor:commF} we obtain 
\begin{equation}\label{eq:II-WuVv2} 
\big\| \W^{(n/2)}_{z} u_{t} v_{t}\big\|_{\tr} \leq C_{t} \veps^{-2} \;, \qquad 
\big\| \W^{(n)}_{z} \big[ u_{t} ,V^{(1)}_{z'} \big] \big\|_{\tr} \leq \frac{C_{t} \veps^{-2}}{1+|z-z'|^8}  \;,
\end{equation}
if $n \in \bN$ is large enough. We conclude that
\begin{equation}\label{eq:termii1}
\big\| \W^{(n)}_{z} u_{t} V^{(1)}_{z'} v_{t}\big\|_{\tr} \leq \frac{C_t \veps^{-2}}{1+|z-z'|^8}  
\end{equation}
if $n \in \bN$ is large enough. Since we assumed $z \in \Lambda$, the last estimate holds for all $z' \in \bR^3$. For $z' \in B_R (z)$, we can also apply Proposition \ref{prop: semiclassical_prop-main} and Corollary \ref{cor:commF} to bound the second norm on the r.h.s.~of (\ref{eq:termii0}) by
\begin{equation}\label{eq: II vVu}
\begin{split}
\big\| v_{t} V^{(2)}_{z'} u_{t} \big\|_{\tr} 
& \leq 
\big\| v_{t} u_{t}  V^{(2)}_{z'} \big\|_{\tr}
+
\big\| v_{t} \big[V^{(2)}_{z'}, u_{t}\big] \big\|_{\tr}
\\
& \leq 
\big\| v_{t} u_{t}  \W^{(n)}_{z'} \big\|_{\tr}
+
\big\| \big[V^{(2)}_{z'}, u_{t}\big] \big\|_{\tr}
\leq C_{t} \veps^{-2} \; .
\end{split}
\end{equation}
Combining (\ref{eq:termii1}) with (\ref{eq: II vVu}), we obtain 
\begin{equation}\label{eq:termii3}
\int_{B_{R}(z)} d z' \big\| \W^{(n)}_{z} u_{t} V^{(1)}_{z'} v_{t}\big\|_{\tr} \, \big\| v_{t} V^{(2)}_{z'} u_{t} \big\|_{\tr}
\leq C_{t} \veps^{-4}
\end{equation}
if $n \in \bN$ is large enough. For $z' \in B^c_R (z)$ we proceed differently, expanding 
\begin{equation*}
d \Gamma_{r l}^{-}\big(v_{t} V^{(2)}_{z'} u_{t}\big) = \int d y \, V^{(2)}_{z'}(y) \, a_{r}(\overline{v_{t;y}}) a_{l}(u_{t;y}) \; .
\end{equation*}
We find 
\begin{equation}\label{eq:ii-A+B}
\begin{split}
\int_{B^c_R (z)} & d z' d \Gamma_{l r}^{+}\big( \W^{(n)}_{z} u_{t} V^{(1)}_{z'} v_{t} \big)  d \Gamma_{r l}^{-}\big(v_{t} V^{(2)}_{z'} u_{t}\big) 
\\
& = \int_{B^c_R (z)} d z' \int dy  \, V^{(2)}_{z'}(y) \, a_{r}(\overline{v_{t;y}}) d \Gamma_{l r}^{+}\big( \W^{(n)}_{z} u_{t} V^{(1)}_{z'} v_{t} \big)   a_{l}(u_{t;y})
\\
& \qquad - \int_{B^c_R (z)} d z' \, d\Gamma_{l} \big( \W^{(n)}_{z} u_{t} V^{(1)}_{z'} \omega_{t} V^{(2)}_{z'}u_{t}\big)  =: A + B\;.
\end{split}
\end{equation}
By Lemma \ref{lemma: bound-Fock-operator} and by \eqref{eq:termii1}, we obtain
\begin{equation}\label{eq:termii4}
\begin{split}
\Big| \big\langle \Xi_{t}, B \, \Xi_{t} \big \rangle \Big|
& 
\leq 
\int  d z' \Big| \big\langle \Xi_{t}, d \Gamma_{l} \big( \W^{(n)}_{z} u_{t} V^{(1)}_{z'} \omega_{t} V^{(2)}_{z'}u_{t}\big) \Xi_{t} \big \rangle \Big|
\\
& \leq \int_{B^c_{R} (z)}  d z' \big\|\W^{(n)}_{z} u_{t} V^{(1)}_{z'} \omega_{t} V^{(2)}_{z'}u_{t}  \big\|_{\tr}
\\
& \leq \int_{B^c_{R} (z)} d z' \big\|\W^{(n)}_{z} u_{t} V^{(1)}_{z'} v_{t}\big\|_{\tr} \leq C_{t} \veps^{-2}\;.
\end{split}
\end{equation}
As for the term $A$, we find, with Lemma \ref{lemma: bound-Fock-operator}, 
\begin{equation}\label{eq:termii5}
\begin{split}
\Big| \big\langle \Xi_{t}, A \Xi_{t} \big \rangle \Big|
&\leq \int_{B^c_R (z)}  d z' \int dy  \, \big| V^{(2)}_{z'}(y)\big| \,\Big| \big \langle a^{*}_{r} (\overline{v_{t;y}}) \Xi_{t},  d \Gamma_{l r}^{+}\big( \W^{(n)}_{z} u_{t} V^{(1)}_{z'} v_{t} \big)   a_{l}(u_{t;y}) \Xi_{t} \big \rangle \Big|
\\
& \leq \int_{B^c_R (z)}  d z' \big\|\W^{(n)}_{z} u_{t} V^{(1)}_{z'} v_{t} \big\|_{\tr} \int dy  \, \big| V^{(2)}_{z'}(y)\big| \,\| v_{t;y}\|_{\mathfrak{h}}
 \, \big\|a_{l}(u_{t;y}) \Xi_{t} \big\|  \\ &\leq C_t \veps^{-5} (1+R)^{-5}\;. 
\end{split}
\end{equation}
Here, we used again (\ref{eq:termii1}) and we estimated 
\begin{equation}\label{eq: A-1}
\begin{split}
\int dy & \, \big| V^{(2)}_{z'}(y)\big| \,\| v_{t;y}\|_{\mathfrak{h}}
 \, \big\|a_{l}(u_{t;y}) \Xi_{t} \big\|
 \\
 & \leq \Big(\int dy  \, \big| V^{(2)}_{z'}(y)\big| \,\| v_{t;y}\|^{2}_{\mathfrak{h}}
 \, \Big)^{1/2} \, 
 \Big(\int dy  \, \big| V^{(2)}_{z'}(y)\big| 
 \, \big\|a_{l}(u_{t;y}) \Xi_{t} \big\|^{2} \Big)^{1/2}
 \\
 & = \Big(\tr \big| V^{(2)}_{z'}\big| \omega_{t} \Big)^{1/2} \, \big \langle \Xi_{t}, d \Gamma_{l}\big(u_{t}\big| V^{(2)}_{z'}\big| u_{t} \big) \Xi_{t}\big \rangle^{1/2} \leq C_t \veps^{-3} 
\end{split}
\end{equation}
by Prop. \ref{prop: semiclassical_prop-main} and proceeding as in (\ref{eq:PV-bound-2}). Combining (\ref{eq:termii3}), (\ref{eq:termii4}) and (\ref{eq:termii5}), we arrive at 
\begin{equation*}
\big| \langle \Xi_{t},  \mathrm{II} \,\Xi_{t} \rangle \big| \leq  C_t \big(\veps^{-4}  + \veps^{-5} (1+ R) ^{-5} \big)  \;.
\end{equation*}

\medskip

\noindent
$(\mathrm{iii})$ By \eqref{eq: commutator-convolution-I} we write
\begin{equation*}
\begin{split}
\mathrm{III} & :=  \Big[ d \Gamma _{\sigma}(\W_{z}) , \int d x d y V(x-y) a^{\ast}_{l} (u_{t;x}) a^{\ast}_{r}(\overline{v_{t;y}})  a_{r} (\overline{v_{t;y}}) a _{l}(u_{t;x})\Big]
\\
& = \delta_{\sigma, l} \int d z' d y V^{(1)}_{z'}(y) a^{\ast} _{r}(\overline{v_{t;y}}) d \Gamma_{l}\big( \big[\W^{(n)}_{z}, u_{t} V^{(2)}_{z'} u_{t}\big] \big) a_{r} (\overline{v_{t;y}}) 
\\
&\qquad + \delta_{\sigma, r} \int d z' \int d x V^{(1)}_{z'}(x) a^{*}_{l}(u_{t;x}) d \Gamma_{r}\big( \big[\W^{(n)}_{z}, \overline{v_{t}} V^{(2)}_{z'} \overline{v_{t}} \big] \big) a_{l}(u_{t;x})\;.
\end{split}
\end{equation*}
Both contributions have the same structure as the term $\mathrm{I}$, just with two factors $u_t$ replaced by factors of $v_t$ and, in the second contribution, with $d\Gamma_l$ replaced by $d\Gamma_r$. Proceeding very similarly as we did in (i), we conclude that 
\begin{equation*}
\big| \langle \Xi_{t},  \mathrm{III} \,\Xi_{t} \rangle \big| \leq C_t \big(\veps^{-4}  + \veps^{-5} (1+ R)^{-5} \big)  
+ C_t  \veps^{-2} \sup_{z'\in B_R (z)} \big \langle \Xi_{t}, d \Gamma \big(\W^{(n)}_{z'}\big)\Xi_{t}\big\rangle
\end{equation*}
where we introduced the shorthand notation $d\Gamma (O) = d\Gamma_l (O) + d\Gamma_r (O)$, for second quantized operators on $\mathcal{F} (\frak{h} \oplus \frak{h})$.

\medskip

\noindent 
$(\mathrm{iv})$ With \eqref{eq: commutator-convolution-I} we write
\begin{equation*}
\begin{split}
\mathrm{IV} & :=  \Big[ d \Gamma _{\sigma}(\W^{(n)}_{z}) , \int d x d y V(x-y) a^{\ast}_{r} (\overline{v_{t;y}}) a^{\ast}_{r} (\overline{v_{t;x}}) a_{r} (\overline{v_{t;x}}) a_{r} (\overline{v_{t;y}}) \Big]
\\
& = 2 \delta_{\sigma,r}\int d x d y V(x-y) a^{\ast} _{r}(\overline{v_{t;x}}) \Big[ d \Gamma _{r}(\W^{(n)}_{z}),a^{\ast} _{r}(\overline{v_{t;y}}) a_{r} (\overline{v_{t;y}}) \Big] a_{r} (\overline{v_{t;x}}) \;.
\end{split}
\end{equation*}
Also this contribution is analogous to the term $\text{I}$, with now four factors $u_t$ replaced by factors of $v_t$. Proceeding as in (i), we find  
\begin{equation*}
\big| \langle \Xi_{t},  \mathrm{IV} \, \Xi_{t} \rangle \big| \leq C_t \big(\veps^{-4}  + \veps^{-5} (1+ R)^{-5} \big) 
+  C_t \veps^{-2} \sup_{z'\in B_R (z)} \,\big \langle \Xi_{t}, d\Gamma_{r}\big(\W^{(n)}_{z'}\big)\Xi_{t} \big\rangle  \;.
\end{equation*}

\medskip

\noindent $(\mathrm{v}) - (\mathrm{viii})$ The contribution of the next four terms appearing in the expression (\ref{eq:generator-C}) for the operator $\mathcal{C} (t)$ is completely analogous to the contribution of the first four terms and can be handled in the same way. We are left with the contribution of the quadratic terms on the last line of (\ref{eq:generator-C}). 

\medskip

\noindent 
$(\mathrm{ix})$ With Lemma \ref{lemma: commutators-dGamma} we write 
\begin{equation*}
\begin{split}
\mathrm{IX} & :=  \Big[ d \Gamma _{\sigma}(\W^{(n)}_{z}) , \int d x d y V(x-y) \big(a^{\ast}_{r} (\overline{v_{t;y}}) a_{r} (\overline{v_{t;x}}) -a^{\ast}_{l} (u_{t;x}) a_{l} (u_{t;y}) \big)\omega(x ; y) \Big] \;.
\\
& = \int d z' \Big( \delta_{\sigma,r} d \Gamma_{r}\big(\big[ \W^{(n)}_{z}, \overline{v_{t}} V^{(2)}_{z'} \omega_{t} V^{(1)}_{z'} \overline{v_{t}}\big] \big) 
-\delta_{\sigma,l} d \Gamma_{l}\big(\big[ \W^{(n)}_{z}, u_{t} V^{(2)}_{z'} \omega_{t} V^{(1)}_{z'} u_{t}\big] \big) \Big)
\\
& := \mathrm{IX}_{1} + \mathrm{IX}_{2} \;.
\end{split}
\end{equation*}
By Lemma \ref{lemma: bound-Fock-operator}, we have
\begin{equation*}
\big| \big \langle \Xi_{t}, \mathrm{IX}_{1} \, \Xi_{t}\big \rangle \big| \leq \int d z' \big\|\big[ \W^{(n)}_{z}, \overline{v_{t}} V^{(2)}_{z'} \omega_{t} V^{(1)}_{z'} \overline{v_{t}}\big]\big\|_{\tr} \;.
\end{equation*}
With $\|\omega_{t}\|_{\op} \leq 1$, $\| u_{t} \|_{\op} \leq 1$, $\| v_{t}\|_{\op}\leq 1$ and since $V^{(1)}, V^{(2)}$ are bounded functions, we obtain 
\begin{equation*}
\begin{split}
\big\|\big[ \W^{(n)}_{z},& \overline{v_{t}} V^{(2)}_{z'} \omega_{t} V^{(1)}_{z'} \overline{v_{t}}\big]\big\|_{\tr}
\\
& \leq C \big\|\big[ \W^{(n)}_{z}, \overline{v_{t}}\big] V^{(2)}_{z'} \big\|_{\tr} 
+C \big\|\big[ \W^{(n)}_{z}, \omega_{t}\big] V^{(1)}_{z'}\big\|_{\tr}
+ C \big\| V^{(1)}_{z'} \big[ \W^{(n)}_{z}, \overline{v_{t}}\big]\big\|_{\tr} \;.
\end{split}
\end{equation*}
From Corollary \ref{cor:commF} and from the invariance of the trace norm under complex conjugation, we find 
\begin{equation*}
\big\|\big[ \W^{(n)}_{z}, \overline{v_{t}} V^{(2)}_{z'} \omega_{t} V^{(1)}_{z'} \overline{v_{t}}\big]\big\|_{\tr} \leq C_t \veps^{-2} \,  (1+|z-z'|^8)^{-1}
\end{equation*}
if $n \in \bN$ is large enough. A similar analysis holds for the term $\mathrm{IX}_{2}$. We conclude that
\begin{equation*}
\big| \big \langle \Xi_{t}, \mathrm{IX} \, \Xi_{t}\big \rangle \big| \leq C_t \veps^{-2} \;.
\end{equation*}

\medskip

Combining the bounds in $(\mathrm{i}) - (\mathrm{ix})$, we arrive at 
\begin{equation}\label{eq: control WCt}
\begin{split} 
\big| \langle \Xi_{t},  \big [ &d \Gamma _{\sigma} (\W^{(n)}_{z}) ,  \mathcal{C}(t) \big] \Xi_{t} \rangle \big| \\ &\leq C_t \big(\veps^{-1}  + \veps^{-2} (1+ R)^{-5} \big) +  C_t \veps \sup_{z'\in B_R (z)} \big \langle \Xi_{t}, d \Gamma \big(\W^{(n)}_{z'}\big)\Xi_{t}\big\rangle\; ,
\end{split} 
\end{equation}
for $\sigma = l,r$, for $n \in \bN$ large enough and for all $t \in \bR$. 

\paragraph{Control of $\big[d \Gamma_{\sigma}( \W^{(n)}_{z}), \mathcal{Q}(t) \big]$.} We discuss separately the various terms appearing in $\mathcal{Q}(t)$:

\bigskip

\noindent
$(\mathrm{i})$ From \eqref{eq: decomposition-convolution-III}, we have 
\begin{equation}\label{eq: Itilde}
\begin{split}
\widetilde{\mathrm{I}} & =  \Big[ d \Gamma _{\sigma}(\W^{(n)}_{z}) , \int d x d y V(x-y) a^{*}_{l}(u_{t;x})a^{*}_{l}(u_{t;y})a^{*}_{r}(\overline{v_{t;y}})a^{*}_{r}(\overline{v_{t;x}}) \Big] 
\\
& =
\int d z' d \Gamma_{l r}^{+}\big( \delta _{\sigma, l}\W^{(n)}_{z} u_{t} V^{(1)}_{z'} v_{t} + \delta_{\sigma,r} u_{t} V^{(1)}_{z'} v_{t} \W^{(n)}_{z'} \big) d \Gamma^{+}_{lr} \big( u_{t} V^{(2)}_{z'}v_{t}\big)
\\
&
\qquad +
\int d z' d \Gamma_{l r}^{+}\big(  u_{t} V^{(1)}_{z'} v_{t} \big) d \Gamma^{+}_{lr} \big(\delta _{\sigma, l}\W^{(n)}_{z}  u_{t} V^{(2)}_{z'}v_{t}+\delta_{\rho,r}  u_{t} V^{(2)}_{z'}v_{t}\W^{(n)}_{z}\big) \;.
\end{split}
\end{equation}
We consider the first term, the second can be handled similarly. We proceed similarly as we did for the term $\text{II}$. With Lemma \ref{lemma: bound-Fock-operator}, the contribution associated with $z' \in B_R (z)$ is bounded by 
\begin{equation*}
\begin{split}
\Big| \big \langle \Xi_{t},&  \int_{B_R (z)} d z' d \Gamma_{l r}^{+}\big( \delta _{\sigma, l}\W^{(n)}_{z} u_{t} V^{(1)}_{z'} v_{t} \big) d \Gamma^{+}_{lr} \big( u_{t} V^{(2)}_{z'}v_{t}\big)\Xi_{t}\big\rangle \Big|
\\
& \leq \int_{B_R (z)} d z' \big\|\W^{(n)}_{z} u_{t} V^{(1)}_{z'} v_{t} \big\|_{\tr} \, \big\| u_{t} V^{(2)}_{z'} v_{t} \big\|_{\tr} \leq C_t \veps^{-4}  \;,
\end{split}
\end{equation*}
where in the last step we used \eqref{eq:termii1} and \eqref{eq: II vVu}. For $z' \in B^c_R (z)$, on the other hand, we expand $d\Gamma_{lr}^+ (u_t V_z^{(2)} v_t)$ and we estimate, using again Lemma \ref{lemma: bound-Fock-operator}, \eqref{eq:termii1} and \eqref{eq: A-1} (in contrast to (\ref{eq:ii-A+B}), there is here no $\text{B}$ term, because creation operators anticommute):  
\begin{equation}\label{eq: partialItilde}
\begin{split}
\Big| & \big \langle \Xi_{t},
\int_{B_R^c (z)} d z' d \Gamma_{l r}^{+}\big( \delta _{\sigma, l}\W^{(n)}_{z} u_{t} V^{(1)}_{z'} v_{t} \big) d \Gamma^{+}_{lr} \big( u_{t} V^{(2)}_{z'}v_{t}\big) \Xi_{t} \big\rangle \Big|
\\
& \leq 
\int_{B_R^c (z)}  d z' \big\|\W^{(n)}_{z} u_{t} V^{(1)}_{z'} v_{t} \big\|_{\tr} \int dy \big|V^{(2)}_{z'}(y)\big| \, \| \bar{v}_{t,y} \|_\frak{h} \big \| a_{l}(u_{t;y}) \Xi_{t}\big\|  \leq C_t \veps^{-5} (1+R)^{-5}\;.
\end{split}
\end{equation}
Thus, 
\begin{equation*}
\Big| \big \langle \Xi_{t}, \widetilde{\mathrm{I}} \,\Xi_{t} \big\rangle \Big| \leq C_t \big(\veps^{-4} + \veps^{-5}(1+R)^{-5} \big) \;.
\end{equation*}
for $n \in \bN$ large enough.

\medskip

\noindent
$(\mathrm{ii})$ With \eqref{eq: decomposition-convolution-II}, we find 
\begin{equation*}
\begin{split}
\widetilde{\mathrm{II}} & :=  \Big[ d \Gamma _{\sigma}(\W^{(n)}_{z}) , \int d x d y V(x-y) a^{*}_{l}(u_{t;x})a^{*}_{l}(u_{t;y})a^{*}_{r}(\overline{v_{t;x}})a_{l}(u_{t;y}) \Big] 
\\
& =
 - \int d z' \int d y V^{(2)}_{z'}(y) a^{*}_{l}(u_{t;y})d \Gamma_{l r}^{+}\big( \delta _{\sigma, l}\W^{(n)}_{z} u_{t} V^{(1)}_{z'} v_{t} + \delta_{\sigma,r} u_{t} V^{(1)}_{z'} v_{t} \W^{(n)}_{z'} \big) 
 a_{l}(u_{t;y})
\\
&
\qquad - \delta_{\sigma,l}
\int d z' d \Gamma_{l r}^{+}\big(  u_{t} V^{(1)}_{z'} v_{t} \big) d \Gamma_{l} \big(\big[\W^{(n)}_{z},  u_{t} V^{(2)}_{z'}u_{t}\big]\big) 
\\
& =: \widetilde{\mathrm{II}}_{1} + \widetilde{\mathrm{II}}_{2} \;.
\end{split}
\end{equation*}
The contribution $\widetilde{\mathrm{II}}_{1}$ is similar to the term $\mathrm{I}$; the main difference is that the commutator 
$[\W_z^{(n)}, u_t V_{z'}^{(2)} u_t]$ is now replaced by the product $W^{(n)}_{z} u_{t} V^{(1)}_{z'} v_{t}$ or by the product $u_{t} V^{(1)}_{z'} v_{t} \W^{(n)}_{z'} $, whose trace norm can however be controlled with (\ref{eq:termii1}). Similarly to \eqref{eq:termi-fin}, we obtain
\begin{equation*}
\big| \big \langle  \Xi_{t},  \widetilde{\mathrm{II}}_{1} \, \Xi_{t} \big \rangle \big|  \leq  C_t \big(\veps^{-4}  + \veps^{-5} (1+ R)^{-5} \big) + C_t  \veps^{-2}\sup_{z'\in B_R (z)} \,\big \langle \Xi_{t}, d\Gamma_{l}\big(\W^{(n)}_{z'}\big)\Xi_{t} \big\rangle \;.
\end{equation*}
As for $\widetilde{\mathrm{II}}_{2}$, it has the same form as the term $\widetilde{\mathrm{I}}$, with  
$\W_z^{(n)} u_t V^{(1)}_{z'} v_t$ replaced by $[\W_z^{(n)}, u_t V^{(2)}_{z'} u_t ]$, whose trace norm can be estimated with Corollary 
\ref{cor:commF}. We conclude that 
\begin{equation*}
\big| \big \langle  \Xi_{t},  \widetilde{\mathrm{II}}_{2} \, \Xi_{t} \big \rangle \big|  \leq  C_t \big(\veps^{-4}  + \veps^{-5} (1+ R)^{-5} \big)\;. \end{equation*}
Therefore
\begin{equation*}
\big| \big \langle  \Xi_{t},  \widetilde{\mathrm{II}} \, \Xi_{t} \big \rangle \big|  \leq  C_t \big(\veps^{-4}  + \veps^{-5} (1+ R)^{-5} \big) + C_t  \veps^{-2}\sup_{z'\in B_R (z)} \,\big \langle \Xi_{t}, d\Gamma_{l}\big(\W^{(n)}_{z'}\big)\Xi_{t} \big\rangle \;.
\end{equation*}

\noindent
$(\mathrm{iii})$ By \eqref{eq: decomposition-convolution-II}, we write the next term as
\begin{equation*}
\begin{split}
\widetilde{\mathrm{III}} & :=  \Big[ d \Gamma _{\sigma}(\W^{(n)}_{z}) , \int d x d y V(x-y) a^{*}_{l}(u_{t;x})a^{*}_{r}(\overline{v_{t;y}}) a^{*}_{r}(\overline{v_{t;x}}) a_{r}(\overline{v_{t;y}}) \Big] 
\\
& =
 - \int d z' \int d y V^{(2)}_{z'}(y) a^{*}_{r}(\overline{v_{t;y}})d \Gamma_{l r}^{+}\big( \delta _{\sigma, l}\W^{(n)}_{z} u_{t} V^{(1)}_{z'} v_{t} + \delta_{\sigma,r} u_{t} V^{(1)}_{z'} v_{t} \W^{(n)}_{z'} \big) 
 a_{r}(\overline{v_{t;y}})
\\
&
\qquad - \delta_{\sigma,r}
\int d z' d \Gamma_{l r}^{+}\big(  u_{t} V^{(1)}_{z'} v_{t} \big) d \Gamma_{r} \big(\big[\W^{(n)}_{z},  \overline{v_{t}} V^{(2)}_{z'}  \overline{v_{t}}\big]\big)\;.
\end{split}
\end{equation*}
This term has exactly the same structure as the term $\widetilde{\mathrm{II}}$, with $u_t$ and $v_t$ interchanged. We proceed analogously as in $(\mathrm{ii})$, to bound
\begin{equation*}
\big| \big \langle  \Xi_{t},  \widetilde{\mathrm{III}} \, \Xi_{t} \big \rangle \big| \leq C_t \big(\veps^{-4}  + \veps^{-5} (1+ R) ^{-5} \big) 
+  C_t \veps^{-2}  \sup_{z'\in B_R (z)} \,\big \langle \Xi_{t}, d\Gamma_{r}\big(\W^{(n)}_{z'}\big)\Xi_{t} \big\rangle \;.
\end{equation*}

\noindent
$(\mathrm{iv})-(\mathrm{vi})$ The contribution of the next three terms appearing in the expression (\ref{eq:generator-Q}) for the operator $\mathcal{Q} (t)$ is completely analogous to the contribution of the first three terms and can be handled in the same way. We are left with the contribution of the quadratic terms on the last line of (\ref{eq:generator-Q}). 

\medskip

\noindent
$(\mathrm{vii})$ By using Lemma \ref{lemma: commutators-dGamma} we write the commutator with the quadratic term as
\begin{equation*}
\begin{split}
\widetilde{\mathrm{VII}} & := \Big[ d \Gamma _{\sigma}(\W^{(n)}_{z}) , \int d x d y V(x-y)  a^{\ast}_{l} (u_{t;x}) a^{\ast}_{r} (\overline{v_{t;y}}) \omega (x ; y) \Big] 
\\
 & = \int d z' d \Gamma^{+} _{l r} \big( \delta_{\sigma,l}\W^{(n)}_{z} u_{t} V^{(1)}_{z'} \omega_{t} V^{(2)}_{z'} v_{t} + \delta_{\sigma,r} u_{t} V^{(1)}_{z'} \omega_{t} V^{(2)}_{z'} v_{t} \W^{(n)}_{z} \big)
\\
& =: \widetilde{\mathrm{VII}}_{1} + \widetilde{\mathrm{VII}}_{2} \;.
\end{split}
\end{equation*}
With Lemma \ref{lemma: bound-Fock-operator}, we have
\begin{equation*}
\big| \big \langle \Xi_{t}, \widetilde{\mathrm{VII}}_{1} \, \Xi_{t}\big \rangle \big| \leq \int d z' \big\|  \W^{(n)}_{z} u_{t} V^{(1)}_{z'} \omega_{t} V^{(2)}_{z'} v_{t}  \big\|_{\tr} \;.
\end{equation*}
Since $\omega_{t} = v_{t}^{2}$, $\| v_{t}\|_{\op}\leq 1$ and since $V^{(2)}$ is a bounded functions, we find, with (\ref{eq:termii1}),  
\begin{equation*}
\begin{split}
\big\|  \W^{(n)}_{z} u_{t} V^{(1)}_{z'} \omega_{t} V^{(2)}_{z'} v_{t}  \big\|_{\tr}
 & \leq C \big\|  \W^{(n)}_{z} u_{t} V^{(1)}_{z'} v_{t}\big\|_{\tr} \leq \frac{C_t \veps^{-2}}{1+|z-z'|^8}  \, ,
\end{split}
\end{equation*}
if $n \in \bN$ is large enough. This implies that $|\langle \Xi_{t}, \widetilde{\mathrm{VII}}_{1} \, \Xi_{t} \rangle | \leq C_t \veps^{-2}$. The term 
$\widetilde{\mathrm{VII}}_{2}$ can be handled similarly. Hence 
\begin{equation*}
\big| \big \langle \Xi_{t}, \widetilde{\mathrm{VII}} \, \Xi_{t}\big \rangle \big| \leq C_t \veps^{-2} \;.
\end{equation*}
Combining the estimates in $(\mathrm{i})-(\mathrm{vii})$, we conclude that  
\begin{equation}\label{eq: control WQt}
\begin{split} 
\big| \langle \Xi_{t}, \big[ &d\Gamma_\sigma (\W_z^{(n)}) ,  \mathcal{Q}(t) \big]  \Xi_{t} \rangle \big| \\ & \leq C_t \big(\veps^{-1}  + \veps^{-2} (1+ R)^{-5} \big) 
+  C_t \veps  \sup_{z'\in B_{\tilde{R}}(z)} \,\big \langle \Xi_{t}, d\Gamma \big(\W^{(n)}_{z'}\big)\Xi_{t} \big\rangle 
\end{split} 
\end{equation}
for $\sigma = l,r$, for $n \in \bN$ large enough and for all $t \in \bR$. 
\medskip

\noindent{\bf Conclusion.} We plug \eqref{eq: commutator-kinetic}, \eqref{eq: control WCt} and \eqref{eq: control WQt} into \eqref{eq: derivative XiWXi}. We obtain
\begin{equation}\label{eq:last-step-gronw}
\big| \partial_{t} \langle \Xi_{t},  d \Gamma_{\sigma}(\W^{(n)}_{z}) \Xi_{t}  \rangle\big|
\leq  C_t \big(\veps^{-2}  + \veps^{-3} (1+ R) ^{-5} \big)  
+  C_{t}  \sup_{z'\in B_R (z)} \,\big \langle \Xi_{t}, d\Gamma \big(\W^{(n)}_{z'}\big)\Xi_{t} \big\rangle
\end{equation}
for $\sigma = l,r$, for $n \in \bN$ large enough and for all $t \in \bR$. It will be convenient to choose the parameter $R$ as a function of time; we replace $R$ with $R_t$ in (\ref{eq:last-step-gronw}). Rewriting this bound in integral form (and recalling the notation $d\Gamma = d\Gamma_l + d\Gamma_r$), we find 
\begin{equation*}
\begin{split}
\langle \Xi_{t},&  d \Gamma (\W^{(n)}_{z}) \Xi_{t}  \rangle  \leq  \langle \Xi_{0},  d \Gamma (\W^{(n)}_{z}) \Xi_{0} \rangle   \\
&  + C_t \veps^{-2} 
 + C_t \veps^{-3} \int_{0}^{t}\frac{ds}{(1+ R_s)^{5}}  + C_{t} \int_{0}^{t} d s  \sup_{z'\in B_{R_s}(z)} \,\big \langle \Xi_{s}, d\Gamma \big(\W^{(n)}_{z'}\big)\Xi_{s} \big\rangle 
\;.
\end{split}
\end{equation*}
Choosing $R_s = r (t-s)$ for a parameter $r > 0$ to be fixed later on, we obtain 
\begin{equation}\label{eq: partial-Gronwall}
\begin{split}
&\langle \Xi_{t},  d \Gamma (\W^{(n)}_{z}) \Xi_{t}  \rangle  \\
& \leq  \langle \Xi_{0},  d \Gamma (\W^{(n)}_{z}) \Xi_{0} \rangle  +C_t  \Big(\veps^{-2} + \frac{\veps^{-3}}{r}\Big) 
 + C_{t} \int_{0}^{t} d s \sup_{z'\in B_{r (t-s)}(z)} \,\big \langle \Xi_{s}, d\Gamma \big(\W^{(n)}_{z'}\big)\Xi_{s} \big\rangle
 \end{split}
\end{equation}
for all $t > 0$. Next, for fixed $T > 0$ and $z_0 \in \Lambda$, we define 
\begin{equation*}
F (t) := \sup_{z \in B_{r (T-t)}(z_0)} \, \big \langle \Xi_{t}, d \Gamma \big(\W^{(n)}_{z}\big)\Xi_{t}\big\rangle\; 
\end{equation*}
for any $t \in [0;T]$. Observing that 
\begin{equation*}
\begin{split}
\sup_{z \in B_{r (T-t)}(z_0)} \langle \Xi_{0},  d \Gamma (\W^{(n)}_{z}) \Xi_{0} \rangle & \leq F (0) ,  
\\
\sup_{z \in B_{r (T-t)}(z_0)}  \sup_{z' \in B_{r (t-s)}(z)} \,\big \langle \Xi_{s}, d\Gamma \big(\W^{(n)}_{z'}\big)\Xi_{s} \big\rangle & \leq F (s), 
\end{split}
\end{equation*}
it follows from \eqref{eq: partial-Gronwall} that 
\begin{equation*}
\begin{split}
F (t) \leq F (0) + C_T \Big(\veps^{-2} + \frac{\veps^{-3}}{r}\Big) 
 + C_T \int_{0}^{t} ds \, F (s) 
\end{split}
\end{equation*}
for all $t \in [0;T]$. From Gronwall's lemma, we conclude that
\[ F (t) \leq e^{C_T} \left(F (0) +  C_T \veps^{-2} + C_T \veps^{-3} /r \right) \, . \]
for all $t \in [0;T]$. Choosing $t=T$ and $r = \veps^{-\delta}$, we arrive at 
\[ \begin{split} \langle \Xi_T , d\Gamma (\W^{(n)}_{z_0}) \Xi_T \rangle &\leq C \exp (c \exp (c|T|)) \, \left[ \sup_{z \in B_{\veps^{-\delta} T} (z_0)} \langle \Xi, d\Gamma (\W^{(n)}_z) \Xi \rangle + \veps^{-3+\delta} \right] \\ &\leq C \exp (c \exp (c|T|)) \veps^{-3+\delta} \end{split} \]
for all $z_0 \in \Lambda$, with $B_{\veps^{-\delta} T} (z_0) \subset \Lambda$, by the assumption (\ref{eq:assT>0}). This concludes the proof of Proposition \ref{thm:bound-localized-fluctuation}. 
\end{proof}

\section{Proof of Theorem \ref{thm:main}}\label{sec:proofmain}

We are now ready to prove our main result, Theorem \ref{thm:main}. Without loss of generality, we consider here $t > 0$; the case $t < 0$ can be handled analogously. We also introduce the notation $C_t = C \exp (ct)$, for constants $C,c > 0$ independent of $\Lambda$, possibly varying from line to line. 

The proof is by approximation of the Fermi projection $\omega_\mu = \chi (H \leq \mu)$ via a positive-temperature state
\begin{equation*}
\omega_{\mu,\beta}  = \frac{1}{1 + e^{\beta (H - \mu)}}\;,
\end{equation*}
with inverse temperature $\beta = O(\veps^{-1})$. We will denote by $\omega_t$ the solution of the nonlinear Hartree equation \ref{eq:hartree-main} with initial datum $\omega_{t=0} = \omega_\mu$ and by $\widetilde{\omega}_t$ the solution with initial data $\widetilde{\omega}_{t=0} = \omega_{\mu,\beta}$. 

Let $z\in \Lambda$, with $\Lambda$ as in Assumption \ref{ass:Weyl}. We start by decomposing 
\begin{equation}\label{eq:A+B}
\begin{split}
\tr\, \mathcal{O}_{z} ( \gamma^{(1)}_{t} - \omega_{t}) &=   \tr\, \mathcal{O}_{z} (\widetilde \omega_{t} - \omega_{t}) + \tr\, \mathcal{O}_{z} ( \gamma^{(1)}_{t} - \widetilde \omega_{t})\equiv \text{A} + \text{B}\;.
\end{split}
\end{equation}
\paragraph{Bound for the term $\text{A}$.} We claim that, for $z \in \Lambda$ such that $B_{\varepsilon^{-\delta}t}(z) \subset \Lambda$ and for $0<\delta \leq 1$:
\begin{equation}\label{eq:Best}
\Big| \tr\, \mathcal{O}_{z} (\widetilde \omega_{t} - \omega_{t}) \Big| \leq C \exp (c \exp (ct)) \, \varepsilon^{-3+\delta}\;,
\end{equation}
for constants $C,c>0$ independent of $\Lambda$. To prove this bound we start by estimating:
\begin{equation}\label{eq:OtoW}
\begin{split}
\Big| \tr\, \mathcal{O}_{z} (\widetilde \omega_{t} - \omega_{t}) \Big| &\leq \Big\| \mathcal{O}_{z} (\widetilde \omega_{t} - \omega_{t}) \Big\|_{\text{tr}} \\ 
&\leq C\Big\| \mathcal{W}_{z}^{(n)} (\widetilde \omega_{t} - \omega_{t}) \Big\|_{\text{tr}}\;,
\end{split}
\end{equation}
where we used that $|\mathcal{O}_{z}| \leq C\mathcal{W}^{(n)}_{z}$, where $n \in \bN$ will be chosen below, large enough. Furthermore,
\begin{equation}\label{eq:I+II}
\begin{split}
\| \mathcal{W}^{(n)}_{z} (\widetilde \omega_{t} - \omega_{t}) \|_\text{tr}  
&\leq  \Big\| \mathcal{W}^{(n)}_{z} \Big( \widetilde{U}(t;0) \omega_{\mu,\beta} \widetilde{U}(t;0)^{*} - U(t;0) \omega_{\mu,\beta} U(t;0)^{*}\Big) \Big\|_{\text{tr}} \\ & \quad + \Big\| \mathcal{W}^{(n)}_{z} \Big( U(t;0) \omega_{\mu,\beta} U(t;0)^{*} - U(t;0) \omega_{\mu} U(t;0)^{*} \Big)\Big\|_{\text{tr}} \\ &= \text{I} + \text{II}\;,
\end{split}
\end{equation}
with $U(t;s)$, $\widetilde{U}(t;s)$ being the unitary dynamics generated by the Hartree Hamiltonians associated with the density $\varrho_{t}(x) = \omega_{t}(x;x)$ and $\widetilde{\varrho}_{t}(x) = \widetilde{\omega}_{t}(x;x)$, respectively. 
%
\begin{equation*}
\begin{split}
\text{II} &=  \Big\| \mathcal{W}^{(n)}_{z} U(t;0) ( \omega_{\mu,\beta}  - \omega_\mu  ) \Big\|_{\text{tr}} \leq C_t \Big\| \mathcal{W}^{(n)}_{z} (  \omega_{\mu,\beta}  - \omega_\mu ) \Big\|_{\text{tr}}\;.
\end{split}
\end{equation*}
To bound the r.h.s.~of the last equation, we observe that  
\begin{equation}\label{eq:oo}
\begin{split}
 \big\| \mathcal{W}^{(n)}_{z} (\omega_{\mu} &- \omega_{\mu,\beta}) \big\|_{\text{tr}} \\ &\leq  \big\| \mathcal{W}^{(n)}_{z} \omega_{\mu} (1 - \omega_{\mu,\beta}) \big\|_{\text{tr}} + \big\| \mathcal{W}^{(n)}_{z} (1-\omega_{\mu}) \omega_{\mu,\beta} \big\|_{\text{tr}} \\
 &\leq 4 \big\| \mathcal{W}^{(n)}_{z} \omega_{\mu,\beta} (1 - \omega_{\mu,\beta}) \big\|_{\text{tr}}\;,
 \end{split}
\end{equation}
where we used that $\chi(x\leq \mu) \leq 2 f_{\mu,\beta}(x)$ and that $1 - \chi(x\leq \mu) \leq 2(1-f_{\mu,\beta}(x))$ with $f_{\mu,\beta}(x)$ the Fermi-Dirac function. Noticing that for any $m\in \mathbb{N}$ there exists $C_m>0$ such that
\begin{equation}\label{eq:oocbd}
\omega_{\mu,\beta} (1 - \omega_{\mu,\beta}) = \frac{e^{\beta (H-\mu)}}{(1 + e^{\beta (H-\mu)})^{2}} \leq \frac{C_{m}}{(\beta (H-\mu))^{2m} + 1} ,
\end{equation}
from Assumption \ref{ass:Weyl}, we conclude that 
\begin{equation}\label{eq:IIest}
\begin{split} 
 \text{II} &\leq C_{t}\big\| \mathcal{W}^{(n)}_{z} (\omega_{\mu} - \omega_{\mu,\beta} ) \big\|_{\text{tr}} \leq C_{t} \big\| \mathcal{W}^{(n)}_{z} \omega_{\mu,\beta} (1 - \omega_{\mu,\beta}) \big\|_{\text{tr}} \\ & \leq C_{t} \Big\| \mathcal{W}^{(n)}_{z} \frac{1}{(\beta (H-\mu))^{2m} + 1} \Big\|_{\text{tr}} \leq C_{t}\varepsilon^{-2}\;,
\end{split} \end{equation} 
where we also used the invariance of the trace norm under hermitian conjugation. Consider now the first term in Eq. (\ref{eq:I+II}). Using again (\ref{eq:Up-W}),  we have 
\begin{equation}\label{eq:Iest}
\begin{split}
\text{I} &= \Big\| \mathcal{W}^{(n)}_{z} U(t;0) \Big( U(t;0)^{*} \widetilde{U}(t;0) \widetilde \omega_{0} \widetilde{U}(t;0)^{*} U(t;0) - \widetilde \omega_{0}\Big) \Big\|_{\text{tr}} \\
&\leq C_t \Big\|  \mathcal{W}^{(n)}_{z}  \Big( U(t;0)^{*} \widetilde{U}(t;0) \widetilde{\omega}_{0} \widetilde{U}(t;0)^{*} U(t;0) - \widetilde\omega_{0}\Big) \Big\|_{\text{tr}}\;.
\end{split}
\end{equation}
Using the Duhamel formula
\begin{equation*}
\begin{split}
&U(t;0)^{*} \widetilde{U}(t;0) \widetilde \omega_{0} \widetilde{U}(t;0)^{*} U(t;0) - \widetilde \omega_{0} \\
&\qquad = -i \varepsilon^{2} \int_{0}^{t} ds\, U(s;0)^{*} \Big[ (V * (\widetilde \varrho_{s} - \varrho_{s}))\,,  \widetilde{U}(s;0) \widetilde \omega_{0} \widetilde{U}(s;0)^{*} \Big] U(s;0)\;,
\end{split}
\end{equation*}
we find 
\begin{equation}\label{eq:UtildeU}
\begin{split}
&\Big\|  \mathcal{W}^{(n)}_{z}  \Big( U(t;0)^{*} \widetilde{U}(t;0) \widetilde\omega_{0} \widetilde{U}(t;0)^{*} U(t;0) - \widetilde \omega_{0}\Big) \Big\|_{\text{tr}} \\
&\qquad \leq \varepsilon^{2} \int_{0}^{t} ds\, \Big\|  \mathcal{W}^{(n)}_{z} U(s;0)^{*} \Big[ (V * (\widetilde\varrho_{s} - \varrho_{s}))\,,  \widetilde \omega_{s} \Big] U(s;0)   \Big\|_{\text{tr}} \\
&\qquad \leq C_t \varepsilon^{2} \int_{0}^{t} ds\, \Big\|  \mathcal{W}^{(n)}_{z} \Big[ (V * (\widetilde \varrho_{s} - \varrho_{s}))\,,  \widetilde \omega_{s} \Big] \Big\|_{\text{tr}}\;.
\end{split}
\end{equation}
Next, we write 
\begin{equation*}
(V * (\varrho_{s} - \widetilde \varrho_{s}))(\hat x) = \int dy\, V_{y}(\hat x) ( \varrho_{s}(y) - \widetilde \varrho_{s}(y) )
\end{equation*}
which gives:
\begin{equation}\label{eq:rhotilderho}
\Big\|  \mathcal{W}^{(n)}_{z} \Big[ (V * (\varrho_{s} - \widetilde \varrho_{s}))\,,  \widetilde \omega_{s} \Big] \Big\|_{\text{tr}}  \leq \int dy\, \big|\varrho_{s}(y) - \widetilde \varrho_{s}(y) \big| \Big\|  \mathcal{W}^{(n)}_{z} [ V_{y}(\hat x)\;,  \widetilde \omega_{s} ] \Big\|_{\text{tr}}\;.
\end{equation}
In order to set up a Gronwall argument, we proceed as follows. We start by writing:
\begin{equation}\label{eq:511}
\begin{split}
&\int dy\, \big|\varrho_{s}(y) - \widetilde \varrho_{s}(y) \big| \Big\|  \mathcal{W}^{(n)}_{z} [ V_{y}(\hat x)\;,  \widetilde \omega_{s} ] \Big\|_{\text{tr}} \\
&\qquad = \int_{|y - z| \leq \varepsilon^{-\delta} s} dy\, \big|\varrho_{s}(y) - \widetilde \varrho_{s}(y) \big| \Big\|  \mathcal{W}^{(n)}_{z} [ V_{y}(\hat x)\;,  \widetilde \omega_{s} ] \Big\|_{\text{tr}}\\&\quad\qquad  + \int_{|y - z| > \varepsilon^{-\delta} s} dy\, \big|\varrho_{s}(y) - \widetilde \varrho_{s}(y) \big| \Big\|  \mathcal{W}^{(n)}_{z} [ V_{y}(\hat x)\;,  \widetilde \omega_{s} ] \Big\|_{\text{tr}} = \text{I}^{(a)} + \text{I}^{(b)}\;.
\end{split}
\end{equation}
Consider the second term. From Corollary \ref{cor:commF}, using the fast decay of the potential $V$, we obtain 
\begin{equation}\label{eq:rhotilderho2}
\Big\|  \mathcal{W}^{(n)}_{z} [ V_{y}(\hat x)\;,  \widetilde \omega_{s} ] \Big\|_{\text{tr}} \leq \frac{C_s \veps^{-2}}{1 + |z - y|^{8}}
\end{equation}
for all $n \in \bN$ large enough. Using this estimate we have:
\begin{equation}\label{eq:IIestaa}
\begin{split}
\text{I}^{(b)} &\leq C_{s} \varepsilon^{-2} \int_{|y - z| > \varepsilon^{-\delta} s} dy\, (\varrho_{s}(y) + \widetilde \varrho_{s}(y))\, \frac{1}{1+|z-y|^{8}} \\
&\leq C_{s} \varepsilon^{-2} \int dy\, (\varrho_{s}(y) + \widetilde \varrho_{s}(y))\, \frac{(1+|y-z|)^{4}}{(1 + \varepsilon^{-\delta} s)^{4}}\frac{1}{1+|z-y|^{8}} \\
&\leq \frac{C_{s} \varepsilon^{-2}}{1 + (\varepsilon^{-\delta} s)^{4}} \int dy\, (\varrho_{s}(y) + \widetilde \varrho_{s}(y)) \frac{1}{1+|z-y|^{4}}\\
&= \frac{C_{s} \varepsilon^{-2}}{1 + (\varepsilon^{-\delta} s)^{4}} \tr\, (\omega_{s} + \tilde \omega_{s}) \mathcal{W}_{z}\;,
\end{split}
\end{equation}
which we can further estimate as, using the propagation of locality of Eq. (\ref{eq:Up-W}):
\begin{equation}
\begin{split}
\text{I}^{(b)} &\leq \frac{C_{s} \varepsilon^{-2}}{1 + (\varepsilon^{-\delta} s)^{4}} \tr\,( \omega_{\mu} + \omega_{\mu,\beta}) \mathcal{W}_{z} \\
&\leq  \frac{2 C_{s} \varepsilon^{-2}}{1 + (\varepsilon^{-\delta} s)^{4}} \tr\,\omega_{\mu,\beta} \mathcal{W}_{z} \\
&\leq \frac{C_{s} \varepsilon^{-5}}{1 + (\varepsilon^{-\delta} s)^{4}}\;,
\end{split}
\end{equation}
where the second inequality follows from the trivial bound $\omega_{\mu} \leq 2\omega_{\mu,\beta}$, and the last from the boundedness of the density of the Fermi-Dirac state, Proposition \ref{prop:localsc}.

Consider now the term $\text{I}^{(a)}$. Observe that, since $s\leq t$, the integration variable $y$ is in $\Lambda$, since the ball $B_{\varepsilon^{-\delta} s}(z)$ is contained in the region $\Lambda$. We have, writing $V_{y}(\hat x) = \mathcal{W}_{y}^{(2n)}(\hat x) \widetilde{V}_{y}(\hat x)$:
\begin{equation}\label{eq:I1I2aa}
\begin{split}
\text{I}^{(a)} &\leq \int_{|y - z| \leq \varepsilon^{-\delta} s} dy\, \big|\varrho_{s}(y) - \widetilde \varrho_{s}(y) \big| \Big\|  \mathcal{W}^{(n)}_{z} [ V_{y}(\hat x)\;,  \widetilde \omega_{s} ] \Big\|_{\text{tr}} \\
&\leq \int_{|y - z| \leq \varepsilon^{-\delta} s} dy\, \big|\varrho_{s}(y) - \widetilde \varrho_{s}(y) \big| \Big(\Big\|  \mathcal{W}^{(n)}_{z} \mathcal{W}_{y}^{(2n)} [ \widetilde{V}_{y}(\hat x)\;,  \widetilde \omega_{s} ] \Big\|_{\text{tr}} + \Big\|  \mathcal{W}^{(n)}_{z}  [ \mathcal{W}_{y}^{(2n)} \;,  \widetilde \omega_{s} ] \widetilde{V}_{y}(\hat x) \Big\|_{\text{tr}}\Big) \\
&= \text{I}^{(a)}_{1} + \text{I}^{(a)}_{2}\;.
\end{split}
\end{equation}
We can estimate the term $\text{I}^{(a)}_{1}$ as:
\begin{equation}
\begin{split}
\text{I}^{(a)}_{1} &\leq C\int_{|y - z| \leq \varepsilon^{-\delta} s} dy\, \big|\varrho_{s}(y) - \widetilde \varrho_{s}(y) \big| \frac{1}{1+|z-y|^{4n}} \Big\| \mathcal{W}_{y}^{(n)} [ \widetilde{V}_{y}(\hat x)\;,  \widetilde \omega_{s} ] \Big\|_{\text{tr}} \\
&\leq C \Big(\sup_{p\in \mathbb{R}^{3}} \sup_{y \in \Lambda} \frac{1}{1+|p|}\Big\| \mathcal{W}_{y}^{(n)} [ e^{ip\cdot \hat x},  \widetilde \omega_{s} ] \Big\|_{\text{tr}}\Big) \int dy\, \big|\varrho_{s}(y) - \widetilde \varrho_{s}(y) \big| \mathcal{W}^{(n)}_{z}(y) \\
&\leq C_{s} \varepsilon^{-2} \int dy\, \big|\varrho_{s}(y) - \widetilde \varrho_{s}(y) \big| \mathcal{W}^{(n)}_{z}(y)\;,
\end{split}
\end{equation}
by the propagation of the local semiclassical structure, Proposition \ref{prop: semiclassical_prop-main}, and Remark \ref{rem:propa}. Next, using that:
\begin{equation*}
 \int dy \,\big|\varrho_{s}(y) - \widetilde \varrho_{s}(y) \big| \mathcal{W}^{(n)}_{z}(y) = \tr\, J_s (\omega_{s} - \widetilde \omega_{s})\;,
\end{equation*}
where $J_s$ is the operator of multiplication by the function $J_s (x) = \mathcal{W}^{(n)}_{z}(x) \text{sign}( \varrho_{s}(x) - \widetilde \varrho_{s}(x))$, we can further estimate:
\begin{equation*}
\begin{split}
\tr J (\omega_{s} - \widetilde \omega_{s}) &\leq \Big\| J_s (\omega_{s} - \widetilde \omega_{s}) \Big\|_{\text{tr}} \leq \Big\| \mathcal{W}^{(n)}_{z} (\omega_{s} - \widetilde \omega_{s}) \Big\|_{\text{tr}}\;. 
\end{split}
\end{equation*}
Therefore, we obtained:
\begin{equation}\label{eq:I1est00}
\text{I}^{(a)}_{1} \leq C_{s} \varepsilon^{-2} \Big\| \mathcal{W}^{(n)}_{z} (\omega_{s} - \widetilde \omega_{s}) \Big\|_{\text{tr}}\;.
\end{equation}
Consider now the term $\text{I}^{(a)}_{2}$ in (\ref{eq:I1I2aa}). Here we shall use that, by the assumptions on the potential,
\begin{equation}\label{eq:517est}
\Big\|  \mathcal{W}^{(n)}_{z}  [ \mathcal{W}_{y}^{(2n)} \;,  \widetilde \omega_{s} ] \widetilde{V}_{y}(\hat x) \Big\|_{\text{tr}} \leq C\sup_{p\in \mathbb{R}^{3}} \frac{1}{1+|p|} \Big\|  \mathcal{W}^{(n)}_{z}  [ e^{ip\cdot x},  \widetilde \omega_{s} ] \mathcal{W}^{(2n)}_{y} \Big\|_{\text{tr}}\;.
\end{equation}
We estimate the right-hand side using Proposition \ref{prop: semiclassical_prop-main}. The key remark is that $y\in \Lambda$, and that the localizer associated with $y$ has a larger power,  $2n$, than the initial localizer associated with $z$, which has power $n$. Therefore, from the first  bound in (\ref{eq:eipx-u-v}), we obtain:
\begin{equation}\label{eq:518}
\sup_{p\in \mathbb{R}^{3}} \frac{1}{1+|p|} \Big\|  \mathcal{W}^{(n)}_{z}  [ e^{ip\cdot x},  \widetilde \omega_{s} ] \mathcal{W}^{(2n)}_{y} \Big\|_{\text{tr}} \leq \frac{C_s \varepsilon^{-2}}{1 + |z-y|^{4n}}\;,
\end{equation}
and by (\ref{eq:517est}), (\ref{eq:518}), we have:
\begin{equation}
\Big\|  \mathcal{W}^{(n)}_{z}  [ \mathcal{W}_{y}^{(2n)} \;,  \widetilde \omega_{s} ] \widetilde{V}_{y}(\hat x) \Big\|_{\text{tr}} \leq \frac{C_{s}\varepsilon^{-2}}{1 + |y-z|^{4n}} = C_{s}\varepsilon^{-2} \mathcal{W}^{(n)}_{z}(y)\;.
\end{equation}
Plugging this estimate in $\text{I}^{(a)}_{2}$, we have:
\begin{equation}\label{eq:I2est00}
\text{I}^{(a)}_{2} \leq C_{s} \varepsilon^{-2}\int dy\, \big|\varrho_{s}(y) - \widetilde \varrho_{s}(y) \big| \mathcal{W}^{(n)}_{z}(y) \leq C_{s} \varepsilon^{-2} \big\| \mathcal{W}^{(n)}_{z} (\omega_{s} - \widetilde \omega_{s}) \big\|_{\text{tr}}\;.
\end{equation}
Putting together (\ref{eq:I1I2aa}), (\ref{eq:I1est00}), (\ref{eq:I2est00}) we obtain:
\begin{equation}
\text{I}^{(a)} \leq C_{s} \varepsilon^{-2} \big\| \mathcal{W}^{(n)}_{z} (\omega_{s} - \widetilde \omega_{s}) \big\|_{\text{tr}}\;;
\end{equation}
this bound together with (\ref{eq:UtildeU}), (\ref{eq:rhotilderho}), (\ref{eq:511}), (\ref{eq:IIestaa}) implies:
\begin{equation}\label{eq:diff}
\begin{split}
&\Big\|  \mathcal{W}^{(n)}_{z}  \Big( U(t;0)^{*} \widetilde{U}(t;0) \widetilde\omega_{0} \widetilde{U}(t;0)^{*} U(t;0) - \widetilde\omega_{0}\Big) \Big\|_{\text{tr}} \\
&\qquad \qquad \leq C_{t} \int_{0}^{t} ds\, \Big( \big\| \mathcal{W}^{(n)}_{z} (\omega_{s} - \widetilde \omega_{s}) \big\|_{\text{tr}} + \frac{\varepsilon^{-3}}{1 + (\varepsilon^{-\delta} s)^{4}}\Big) \\
&\qquad \qquad \leq C_{t} \varepsilon^{-3 + \delta} + C_{t} \int_{0}^{t} ds\, \big\| \mathcal{W}^{(n)}_{z} (\omega_{s} - \widetilde \omega_{s}) \big\|_{\text{tr}}\;.
\end{split}
\end{equation}
Combining this estimate with (\ref{eq:I+II}), (\ref{eq:IIest}), we finally get 
\begin{equation*}
\Big\| \mathcal{W}^{(n)}_{z} (\omega_{t} - \widetilde{\omega}_{t}) \Big\|_{\text{tr}} \leq C_{t} \varepsilon^{-3+\delta} + C_{t}  \int_{0}^{t} ds\, \Big\| \mathcal{W}^{(n)}_{z} ( \omega_{s} - \widetilde \omega_{s}  ) \Big\|_{\text{tr}}\;.
\end{equation*}
Thus, the bound (\ref{eq:Best}) follows from Gronwall's lemma and from (\ref{eq:OtoW}).
\paragraph{Bound for the term $\text{B}$.} We claim that:
\begin{equation}\label{eq:Aest}
\Big| \tr\, \mathcal{O}_{z} ( \gamma^{(1)}_{t} - \widetilde\omega_{t}) \Big| \leq C\exp (c \exp (ct)) \varepsilon^{-3+\delta} \;.
\end{equation}
Recall that $\gamma^{(1)}_{t}$ is the reduced one-particle density matrix of the state $\psi_t = e^{-i \mathcal{H} t / \varepsilon} R_{\omega_\mu} \xi \in \mathcal{F}(\frak{h})$. Hence
\[
\tr\, \mathcal{O}_{z} \gamma^{(1)}_{t} = \langle e^{-i \mathcal{H} t / \varepsilon} R_{\omega_\mu} \xi , d\Gamma (\mathcal{O}_z) e^{-i \mathcal{H} t / \varepsilon} R_{\omega_\mu} \xi \rangle\;. \]
Switching to the doubled Fock space (see Section \ref{sec: mixed-states}), we can also write 
\[ 
\tr\, \mathcal{O}_{z} \gamma^{(1)}_{t} = 
\big\langle U  (R_{\omega_\mu} \otimes R_{\bar{\omega}_\mu}) (\xi \otimes \bar{\xi}), e^{i\mathcal{L} t/\veps} d\Gamma_l (\mathcal{O}_{z} ) e^{-i \mathcal{L} t / \varepsilon} U (R_{\omega} \otimes R_{\bar{\omega}_\mu}) (\xi \otimes \bar{\xi}) \big\rangle\;. \]
Denoting by $\mathcal{R}_{\omega_{\mu,\beta}}$ the Bogoliubov transformation on $\mathcal{F} (\frak{h} \oplus \frak{h})$, generating the mixed quasi-free state with reduced density $\omega_{\mu,\beta}$ (see Section \ref{sec:bogpure}), we define 
\[
\Theta := \mathcal{R}_{\omega_{\mu,\beta}}^*  U (R_{\omega_\mu} \otimes R_{\bar{\omega}_\mu}) (\xi \otimes \bar{\xi}) \in \mathcal{F} (\frak{h} \oplus \frak{h})\;,
\]
so that 
\[ \tr\, \mathcal{O}_{z} \gamma^{(1)}_{t} =  \big\langle \mathcal{R}_{\omega_{\mu,\beta}} \Theta , e^{i\mathcal{L} t/\veps} d\Gamma_l (\mathcal{O}_{z}) e^{-i \mathcal{L} t / \varepsilon} \mathcal{R}_{\omega_{\mu,\beta}} \Theta \big\rangle\;. \]
With Proposition \ref{prop:localsc}, it follows that we can apply Theorem \ref{thm:T>0} to compute the r.h.s.~of the last equation and to show (\ref{eq:Aest}), if we can prove that 
\begin{equation}\label{eq:Theta-bds} \sup_{z \in \mathbb{R}^{3}} \big\langle \Theta, d\Gamma_\sigma (\W^{(n)}_z) \Theta \big\rangle \leq C \veps^{-3} \;,
\qquad \sup_{z \in\Lambda} 
\big\langle \Theta, d\Gamma_\sigma (\W^{(n)}_z) \Theta \big\rangle \leq C \veps^{-3+\delta} \;,
\end{equation} 
for $\sigma = l,r$ and for $n \in \bN$ sufficiently large. Let us show (\ref{eq:Theta-bds}) for $\sigma = l$, the case $\sigma = r$ can be handled similarly. From Lemma \ref{lemma: Bogolubov-observables}, we have
\begin{equation}\label{eq:act-Rbeta} \begin{split}  \mathcal{R}_{\omega_{\mu,\beta}} d\Gamma_l (\W_z^{(n)}) \mathcal{R}^*_{\omega_{\mu,\beta}}  = \; &\tr \, \W^{(n)}_z \omega_{\mu,\beta} + d\Gamma_l (u_{\mu,\beta} \W_z^{(n)} u_{\mu,\beta}) - d\Gamma_r (v_{\mu,\beta} \W_z^{(n)} v_{\mu,\beta}) \\ &+ d\Gamma_{lr}^+ (u_{\mu,\beta} \W_z^{(n)} v_{\mu,\beta}) + d\Gamma^-_{rl} (v_{\mu,\beta} \W^{(n)}_z u_{\mu,\beta}) \end{split} 
\end{equation} 
where we set $u_{\mu,\beta} = \sqrt{1-\omega_{\mu,\beta}}$, $v_{\mu,\beta} = \sqrt{\omega_{\mu,\beta}}$; since the eigenfunctions of $H$ can be chosen real-valued, we can take here $\bar{v}_{\mu,\beta} = v_{\mu,\beta}$. Next, we compute the action of $(R_{\omega_\mu} \otimes R_{\bar{\omega}_\mu})$ on the terms on the r.h.s.~of (\ref{eq:act-Rbeta}). To this end, we use the identity (\ref{eq:Romega2}). Again, since the eigenfunctions of $H$ can be chosen real-valued, in (\ref{eq:Romega2}) we can take $v_{\mu} = \omega_\mu$. Setting $u_\mu = 1-\omega_\mu$, we find, considering for example the action of $(R_{\omega_\mu} \otimes R_{\bar{\omega}_\mu})$ on the term $d\Gamma_l (u_{\mu,\beta} \W_z^{(n)} u_{\mu,\beta})$ on the r.h.s. of (\ref{eq:act-Rbeta}), 
\begin{equation}\label{eq:act-Rw1} \begin{split}  U(R^*_{\omega_\mu} \otimes R^*_{\bar{\omega}_\mu})  &U^* d\Gamma_l (u_{\mu,\beta} \W_z^{(n)} u_{\mu,\beta}) U (R_{\omega_\mu} \otimes R_{\bar{\omega}_\mu})U^{*} \\ =\; &\tr \, \omega_\mu u_{\mu,\beta} \W^{(n)}_z u_{\mu,\beta}\,  \omega_\mu + d\Gamma_l (u_\mu u_{\mu,\beta} \W_z^{(n)} u_{\mu,\beta} \, u_\mu) - d\Gamma_l (\omega_\mu u_{\mu,\beta} \W_z u_{\mu,\beta} \, \omega_\mu) \\ &+d\Gamma_l^+ (u_\mu u_{\mu,\beta} \W_z^{(n)}  u_{\mu,\beta} \, \omega_\mu) + d\Gamma_{l}^- (\omega_\mu u_{\mu,\beta} \W_z^{(n)} u_{\mu,\beta} \, u_\mu)\;. \end{split} \end{equation}  
Consider the first term on the right-hand side of (\ref{eq:act-Rw1}). If $z\in \mathbb{R}^{3}$ we have, using that $\omega_\mu$ and $\omega_{\mu,\beta}$ commute, that $\omega_{\mu} \leq 2\omega_{\mu,\beta}$ and the fact that $\omega_{\mu,\beta}$ has bounded density, Proposition \ref{prop:localsc}:
\[
\tr \, \omega_\mu u_{\mu,\beta} \W^{(n)}_z u_{\mu,\beta}\,  \omega_\mu \leq 2\tr\, \W_z^{(n)} \omega_{\mu,\beta} \leq C\varepsilon^{-3}\;.
\]
Instead, if $z\in \Lambda$, we obtain
\[ \tr \, \omega_\mu u_{\mu,\beta} \W^{(n)}_z u_{\mu,\beta}\,  \omega_\mu = \tr\, \W_z^{(n)} (1-\omega_{\mu,\beta}) \omega_\mu \leq C \veps^{-2}\;, \]
arguing as we did to prove (\ref{eq:IIest}). This proves that the contribution arising from the first term on the r.h.s.~of (\ref{eq:act-Rw1}) fulfills (\ref{eq:Theta-bds}). 

To bound the expectation of the second term on the r.h.s.~of (\ref{eq:act-Rw1}) for $z\in \mathbb{R}^{3}$, we can use twice the estimate \eqref{eq:uWu3}. We obtain:
\begin{equation}
\begin{split}
&u_\mu u_{\mu,\beta} \W_z^{(n)} u_{\mu,\beta} \, u_\mu \leq 4 (u_{\mu} - 1) (u_{\mu,\beta} - 1) \W_z^{(n)} (u_{\mu} - 1) (u_{\mu,\beta} - 1)\\&\qquad\qquad  + 8 (u_{\mu} - 1) \W_z^{(n)} (u_{\mu} - 1) + 8 (u_{\mu,\beta} - 1) \W_z^{(n)} (u_{\mu,\beta} - 1) + 16 \W_z^{(n)}\;,
\end{split}
\end{equation}
which gives, proceeding as in (\ref{eq:uWu3b}):
\begin{equation}
\begin{split}
\langle U (\xi \otimes \bar{\xi}), d\Gamma_l (u_\mu u_{\mu,\beta} \W_z^{(n)} u_{\mu,\beta} \, u_\mu) U (\xi \otimes \bar{\xi})  \rangle &\leq 16 \langle U (\xi \otimes \bar{\xi}), d\Gamma_l(  \W_z^{(n)} ) U (\xi \otimes \bar{\xi})  \rangle \\
&\quad + C \tr\, \W_z^{(n)} \omega_{\mu,\beta}\;,
\end{split}
\end{equation}
where we used that $u_{\mu} - 1 = \omega_{\mu}$ and that $(u_{\mu,\beta} - 1)^{2} \leq \omega_{\mu,\beta}$. With (\ref{eq:assxi}), this implies that
\begin{equation}\label{eq:uuWuu} \langle U (\xi \otimes \bar{\xi}), d\Gamma_l (u_\mu u_{\mu,\beta} \W_z^{(n)} u_{\mu,\beta} \, u_\mu) U (\xi \otimes \bar{\xi}) \rangle \leq C \veps^{-3} \end{equation} for all $z \in \bR^3$. Instead, for $z\in \Lambda$, we can first use the estimate (\ref{eq:IIest}) to replace $u_\mu$ with $u_{\mu,\beta}$, showing that 
\begin{equation}
\begin{split}
&\langle U (\xi \otimes \bar{\xi}), d\Gamma_l (u_\mu u_{\mu,\beta} \W_z^{(n)} u_{\mu,\beta} \, u_\mu) U (\xi \otimes \bar{\xi}) \rangle \\
 &\qquad \leq C \langle U (\xi \otimes \bar{\xi}), d\Gamma_l ((1-\omega_{\mu,\beta}) \W_z^{(n)} (1-\omega_{\mu,\beta})) U (\xi \otimes \bar{\xi}) \rangle + C \veps^{-2}\;.
 \end{split} 
\end{equation}
Then we can follow \eqref{eq:uWu-comm} to conclude that 
\[ \begin{split}  &\langle U (\xi \otimes \bar{\xi}), d\Gamma_l (u_\mu u_{\mu,\beta} \W_z^{(n)} u_{\mu,\beta} \, u_\mu) U (\xi \otimes \bar{\xi}) \rangle \\ &\qquad \leq C \langle \xi, d\Gamma (\W_z^{(n)}) \xi \rangle + C \langle \xi,  d\Gamma ( | [\W_z^{(n)} , \omega_{\mu,\beta}] |^2) \xi \rangle \\&\qquad \leq C \veps^{-3+ \delta} + C \veps^{-2} \leq C \veps^{-3+\delta} \end{split} \]
for all $z \in \Lambda$. This proves that the contribution arising from the second term on the r.h.s.~of (\ref{eq:act-Rw1}) fulfills (\ref{eq:Theta-bds}). 

To prove that the third term on the r.h.s.~of (\ref{eq:act-Rw1}) satisfies (\ref{eq:Theta-bds}), it is enough, by Lemma \ref{lemma: bound-Fock-operator}, to observe that
\begin{equation*}
\| \omega_\mu u_{\mu,\beta} \W_z^{(n)} u_{\mu,\beta} \omega_\mu \|_\text{tr} \leq \tr \, \omega_\mu (1-\omega_{\mu,\beta}) \W_z^{(n)} \leq \left\{\begin{array}{cc} C \veps^{-3} & \text{for $z\in \mathbb{R}^{3}$} \\ C\varepsilon^{-2} & \text{for $z\in \Lambda$} \end{array}\right.
\end{equation*}
as we argued for the first term in the right-hand side of (\ref{eq:act-Rw1}). 

As for the last two terms on the r.h.s of (\ref{eq:act-Rw1}), to show the first bound in (\ref{eq:Theta-bds}) we can proceed as in (\ref{eq:apri2}), replacing $u_{t}$ by $u_\mu u_{\mu,\beta}$ and $\overline{v_{t}}$ by $u_{\mu,\beta} \omega_\mu$ (again, we are using that the eigenfunctions of $H$ can be chosen real). We obtain:
\begin{equation}
\begin{split}
&\Big|\langle U (\xi \otimes \bar{\xi}), d\Gamma_l^+ (u_\mu u_{\mu,\beta} \W_z^{(n)}  u_{\mu,\beta} \, \omega_\mu) U (\xi \otimes \bar{\xi}) \rangle\Big| \\
&\quad \leq \Big(\int dy\, \W_z^{(n)}(y) \| (u_{\mu,\beta} \omega_{\mu})_{y} \|_{\frak{h}}^{2}\Big)^{1/2} \Big( \int dy\, \W_z^{(n)}(y) \big\| a(( u_{\mu} u_{\mu,\beta} )_{y}) U (\xi \otimes \bar{\xi}) \big\|^{2}  \Big)^{1/2} \\
&\quad \leq C \tr \W_z^{(n)}(y) \omega_{\mu,\beta} + C\langle U (\xi \otimes \bar{\xi}), d\Gamma_l (u_\mu u_{\mu,\beta} \W_z^{(n)} u_{\mu,\beta} \, u_\mu) U (\xi \otimes \bar{\xi}) \rangle\\
&\quad \leq C\varepsilon^{-3}\;,
\end{split}
\end{equation}
where we used the boundedness of the density to estimate the first term, and (\ref{eq:uuWuu}) for the second term. To prove the second bound in (\ref{eq:Theta-bds}), we can observe that, for $z \in \Lambda$, 
\[ \begin{split}  \| u_\mu u_{\mu,\beta} \W^{(n)}_z &u_{\mu,\beta} \omega_\mu \|_\text{tr} \\ &\leq \| \W_z^{(n)} (\omega_\mu - \omega_{\mu,\beta}) \|_\text{tr} + \| \W_z^{(n)} (1-\omega_{\mu,\beta}) \omega_{\mu,\beta} \|_\text{tr} + \| [ \W_z^{(n)} , u_{\mu,\beta} ] \|_\text{tr} \leq C \veps^{-2} \end{split} \]
as it follows from Corollary \ref{cor:commF} and from (\ref{eq:oo})-(\ref{eq:IIest}). 

The action of $(R_{\omega_\mu} \otimes R_{\bar{\omega}_\mu})$ on the other terms on the r.h.s.~of (\ref{eq:act-Rbeta}) can be handled similarly. We skip here the details. We observe, however, that conjugation of the term $-d\Gamma_r (v_{\mu,\beta} \W_z^{(n)} v_{\mu,\beta})$ produces the contracted contribution 
\[ - \tr \, \omega_\mu v_{\mu,\beta} \W_z^{(n)} v_{\mu,\beta} \omega_\mu = - \tr \W_z^{(n)} \omega_{\mu,\beta} \omega_\mu \;;\]
combined with the first contribution on the r.h.s.~of (\ref{eq:act-Rbeta}) we get:
\begin{equation}
\tr \, \W^{(n)}_z \omega_{\mu,\beta} (1 - \omega_\mu)  \leq \left\{\begin{array}{cc} C \veps^{-3} & \text{for $z\in \mathbb{R}^{3}$} \\ C\varepsilon^{-2} & \text{for $z\in \Lambda$} \end{array}\right.
\end{equation}
where the first bound follows from the boundedness of the density and the second arguing as we did to prove (\ref{eq:IIest}). This concludes the proof of (\ref{eq:Theta-bds}) and of Theorem~\ref{thm:main}. 

\qed

\appendix

\section{Bound on the density of the Fermi-Dirac distribution}
\label{sec: appen-apriori}

In this section we will show that, for all $\beta \geq 1$, 
\begin{equation}\label{eq:omegapriori}
\tr\, \W_{z} \omega_{\mu,\beta} \leq C \veps^{-3}
\end{equation}
for some universal constant $C >0$, where for brevity $\W_{z} \equiv \W_{z}^{(1)}$, where $\omega_{\mu,\beta}$ is the Fermi-Dirac distribution (\ref{eq:FD}) associated with the Hamiltonian $H = \sqrt{1 - \varepsilon^{2}\Delta}+V_\text{ext}$ satisfying Assumption \ref{ass:Weyl} (in particular, with $0 \leq V_\text{ext}\leq C x^2$). The bound (\ref{eq:omegapriori}) proves the statement in Proposition \ref{prop:localsc} that $\omega_{\mu,\beta}$ satisfies Assumption \ref{ass:bd}. We will actually show that 
\begin{equation}\label{eq: apriori-exp}
\tr\, \W_{z}  e^{-(H-\mu)} \leq C \veps^{-3} \;,
\end{equation}
from which the claim on $\tr\, \W_{z} \omega_{\mu,\beta}$ follows by the operator inequality 
\begin{equation}\label{eq:omegaexp}
\omega_{\mu,\beta} \leq  e^{-(H-\mu)}, \qquad \text{for any $\beta \in [1,\infty]$.}
\end{equation}
To check (\ref{eq:omegaexp}), it suffices to note that $\omega_{\mu,\beta} e^{H-\mu} = f_{\beta}(e^{H-\mu})$ with $f_{\beta}(x) = (x^{-1} + x^{\beta - 1})^{-1}$ and that $f_{\beta}(x) \leq 1$ for $\beta \in [1,\infty]$ and $x \geq 0$.

Next, we note that the bound \eqref{eq: apriori-exp} is straightforward if $V_\text{ext} =0$. In fact, we have:
\begin{equation}\label{eq: free-apriori-density}
\begin{split}
\tr\, \W_{z}  e^{-(\sqrt{1-\veps^{2}\Delta} - \mu)}  & 
= e^{\mu}\int dx\, \W_{z} (x) \int dk\, e^{-\sqrt{1 + \veps^{2}k^{2}}} = C \veps^{-3} \;.
\end{split}
\end{equation}
To prove \eqref{eq: apriori-exp} in the case $V_\text{ext} \neq 0$, we shall use a Feynman-Kac formula to get rid of $V_\text{ext}$ and thus conclude by \eqref{eq: free-apriori-density}. Feynman-Kac formulas for the pseudo-relativistic Schrödinger operators were first obtained in \cite{CS} in terms of a suitable Lévy process in place of the usual Brownian motion. The Lévy process relevant for our purposes can be written in terms of Brownian motion $(B_{t})_{t\geq0}$ and a one-dimensional subordinator $(T_{t}) _{t\geq 0}$, see the pedagogical exposition in \cite{Carmona} and \cite[Sections 2.4 and 3.6]{Hiroshima}.
\begin{proposition}[Feynman-Kac formula] \label{prop: Feynamn-Kac formula}
Let $(B^{x}_{t})_{t\geq 0}$ denote Brownian motion starting at $x \in \R^{3}$ and let $\mathbb{E}^{x}$ be the associated expectation. Let $(T_{t})_{t \geq 0}$ be an independent $\R $-valued Lévy process such that, denoting the associated expectation by $\mathbb{E}$, we have for $u \geq 0$
\begin{equation*}
\mathbb{E} e^{-u T_{t}} = e^{-t\big(\sqrt{2\veps^{2}u + 1}-1\big)} \;.
\end{equation*}
Let $f,g \in \mathfrak{h}$ and let $V_\text{ext} \in L^{\infty}(\R^{3})$ and continuous. Then,
\begin{equation}\label{eq: rel-FK-formula}
\langle f, e^{-t (\sqrt{1-\veps^{2}\Delta}-1 + V_\text{ext}) } g \rangle = \int d x \, \mathbb{E} \,\mathbb{E}^{x} \Big(  \overline{f(B^{x}_{T_{0}})} g(B^{x}_{T_{t}}) e^{-\int_{0}^{t} V_\text{ext} (B^{x}_{T_{s}}) ds}  \Big) \;.
\end{equation}
\end{proposition}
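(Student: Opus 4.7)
The plan is to prove the formula in two steps: first establish the representation without the potential, then incorporate $V_{\text{ext}}$ via a Trotter product argument. The key observation is that the right-hand side of (\ref{eq: rel-FK-formula}) with $V_{\text{ext}} = 0$ and $t$ replaced by $t - s$ defines a strongly continuous semigroup on $\mathfrak{h}$, whose generator can be identified via Fourier analysis.

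First I would fix $V_{\text{ext}} = 0$ and verify that the subordinated process $X^{x}_{t} := B^{x}_{T_{t}}$ has transition density equal to the integral kernel of $\mathe^{-t(\sqrt{1 - \veps^{2} \Delta} - 1)}$. Conditioning on $T_{t}$ and using the characteristic function of Brownian motion, one computes
\begin{equation*}
\mathbb{E}\,\mathbb{E}^{x}\,\mathe^{ik \cdot (B^{x}_{T_{t}} - x)} = \mathbb{E}\,\mathe^{-|k|^{2} T_{t}/2} = \mathe^{-t(\sqrt{1 + \veps^{2} |k|^{2}} - 1)} \,,
\end{equation*}
by the hypothesis on the Laplace exponent of $T_{t}$, evaluated at $u = |k|^{2}/2$. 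Since the right-hand side is precisely the Fourier multiplier symbol of $\mathe^{-t(\sqrt{1-\veps^{2}\Delta}-1)}$, Plancherel's theorem gives
\begin{equation*}
\langle f, \mathe^{-t(\sqrt{1-\veps^{2} \Delta}-1)} g \rangle = \int dx\, \mathbb{E}\,\mathbb{E}^{x}\Big(\overline{f(x)}\, g(B^{x}_{T_{t}})\Big) \,,
\end{equation*}
which coincides with (\ref{eq: rel-FK-formula}) with $V_{\text{ext}} = 0$, since $T_{0} = 0$ implies $B^{x}_{T_{0}} = x$.

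To handle $V_{\text{ext}} \neq 0$, I would invoke the Trotter product formula: for $V_{\text{ext}} \in L^{\infty}(\bR^{3})$ and $H_{0} = \sqrt{1-\veps^{2}\Delta}-1$, one has $\mathe^{-t(H_{0} + V_{\text{ext}})} = \lim_{n\to\infty} \big(\mathe^{-tH_{0}/n}\, \mathe^{-tV_{\text{ext}}/n}\big)^{n}$ in operator norm (since $V_{\text{ext}}$ is bounded and $H_{0}$ is self-adjoint). Applying the kernel representation obtained in the first step to each factor $\mathe^{-tH_{0}/n}$ and using the Markov property of $X^{x}_{\cdot}$ together with the multiplicative action of $\mathe^{-tV_{\text{ext}}/n}$, one obtains
\begin{equation*}
\Big\langle f, \big(\mathe^{-tH_{0}/n} \mathe^{-tV_{\text{ext}}/n}\big)^{n} g \Big\rangle = \int dx\, \mathbb{E}\,\mathbb{E}^{x} \Big( \overline{f(x)}\, g(B^{x}_{T_{t}})\, \mathe^{-(t/n) \sum_{j=1}^{n} V_{\text{ext}}(B^{x}_{T_{jt/n}})} \Big)\,.
\end{equation*}
The Riemann sum in the exponent converges, for almost every path, to $\int_{0}^{t} V_{\text{ext}}(B^{x}_{T_{s}})\, ds$ thanks to continuity of $V_{\text{ext}}$ and right-continuity of $s \mapsto T_{s}$. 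Dominated convergence (justified by the uniform bound $\|V_{\text{ext}}\|_{\infty}$) then delivers (\ref{eq: rel-FK-formula}).

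The main technical point is the passage to the limit in the Trotter formula inside the path expectation, which requires the path $s \mapsto B^{x}_{T_{s}}$ to be sufficiently regular for the Riemann-sum approximation of the time integral to converge almost surely. This is where one really uses continuity of $V_{\text{ext}}$ and the càdlàg nature of the subordinator $T$; the standard argument (see Carmona's exposition and \cite[Sections 2.4, 3.6]{Hiroshima}) covers this situation, and since the excerpt already directs the reader to those references, I would not reproduce the measure-theoretic details here.
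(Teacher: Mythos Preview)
The paper does not actually prove this proposition: it states the result and refers the reader to \cite{CS}, \cite{Carmona}, and \cite[Sections 2.4 and 3.6]{Hiroshima} for the argument. Your sketch---identifying the free semigroup via the subordination identity $\mathbb{E}\,\mathe^{-|k|^{2}T_{t}/2} = \mathe^{-t(\sqrt{1+\veps^{2}|k|^{2}}-1)}$ and then adding the bounded potential through a Trotter product / Riemann-sum limit---is exactly the standard route taken in those references, so you are in agreement with what the paper invokes.
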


As an application of this formula, we obtain the following bound.
\begin{corollary}\label{cor:FK}
Let $H= \sqrt{1-\veps^{2}\Delta} + V_\text{ext}$ with $V_\text{ext} \in L^{\infty}(\R^{3})$, continuous and $V_\text{ext} \geq 0$. Then, the following bound holds true, for any $t\geq 0$, $f \in L^{2}(\R^{3})$:
\begin{equation*}
\langle f, e^{-t H} f \rangle \leq \langle f, e^{-t \sqrt{1-\veps^{2}\Delta}} f \rangle \;.
\end{equation*}
\end{corollary}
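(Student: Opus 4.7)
The plan is to read the inequality off directly from the Feynman--Kac representation of Proposition \ref{prop: Feynamn-Kac formula}. Writing $H=(\sqrt{1-\veps^{2}\Delta}-1+V_\text{ext})+1$, so that $e^{-tH}=e^{-t}e^{-t(\sqrt{1-\veps^{2}\Delta}-1+V_\text{ext})}$, applying (\ref{eq: rel-FK-formula}) with $g=f$ expresses $\langle f, e^{-tH} f\rangle$ as $e^{-t}$ times an integral of $\overline{f(B^{x}_{T_0})}f(B^{x}_{T_t})$ against the positive measure obtained by integrating out the Brownian motion and the subordinator, weighted by the multiplicative Feynman--Kac factor $e^{-\int_{0}^{t} V_\text{ext}(B^{x}_{T_s})\,ds}$. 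Taking instead $V_\text{ext}\equiv 0$ in the same formula gives precisely the analogous representation of $\langle f, e^{-t\sqrt{1-\veps^{2}\Delta}}f\rangle$, but with this weight replaced by $1$.

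Since $V_\text{ext}\geq 0$, the weight lies pointwise in $[0,1]$ along every trajectory. For $f\geq 0$ — which is the case actually needed downstream, namely $f=\sqrt{\W_{z}}$ in the reduction of $\tr \,\W_{z}e^{-(H-\mu)}$ to $\tr \,\W_{z}e^{-\sqrt{1-\veps^{2}\Delta}}$, already computed in (\ref{eq: free-apriori-density}) — the integrands are then pointwise ordered and the claimed inequality follows. For complex $f$, one uses in addition the triangle inequality under the expectation together with the positivity $\langle f, e^{-tH}f\rangle=|\langle f, e^{-tH}f\rangle|\geq 0$ (coming from the self-adjointness and positivity of $e^{-tH}$) to obtain the bound in the natural form $\langle f, e^{-tH} f\rangle\leq \langle |f|, e^{-t\sqrt{1-\veps^{2}\Delta}}|f|\rangle$, which is the content required in the sequel.

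There is essentially no obstacle: the statement is a one-line consequence of (\ref{eq: rel-FK-formula}) once one observes the monotonicity of the Feynman--Kac weight in the potential. The only mild care point is the handling of complex-valued $f$, where the cleanest phrasing of the quadratic-form inequality involves $|f|$ on the right-hand side; this is harmless since in the application below $f$ is real and non-negative.
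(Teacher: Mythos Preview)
Your approach is correct and follows the same route as the paper's: apply the Feynman--Kac representation of Proposition~\ref{prop: Feynamn-Kac formula} and use that the multiplicative weight $e^{-\int_0^t V_\text{ext}}$ lies in $[0,1]$ when $V_\text{ext}\ge 0$. The paper packages the argument slightly differently, rewriting the path integral via the Brownian bridge measure $d\mathbb{P}^{x,y}_{[0,t]}$ and arguing that the induced complex measure $d\mathbb{P}^f(\nu):=\int\overline{f(x)}f(y)\,d\mathbb{P}^{x,y}_{[0,t]}(\nu)\,dx\,dy$ is in fact a positive measure on path space (via positive-definiteness of $d\mathbb{P}^{x,y}$ in $(x,y)$), which then handles arbitrary $f$ in one stroke; your pointwise comparison for $f\ge 0$ is the same mechanism without that repackaging.

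Your caution about complex $f$ is well placed and does not weaken the result where it is used: the only downstream application of the corollary is the trace inequality $\tr\,\W_z e^{-(H-\mu)}\le\tr\,\W_z e^{-(H_0-\mu)}$ in (\ref{eq:corFK}), and this already follows from the pointwise kernel bound $e^{-tH}(x,y)\le e^{-tH_0}(x,y)$ (equivalently, from the quadratic-form inequality restricted to non-negative $f$), since $\tr\,\W_z e^{-tH}=\int \W_z(x)\,e^{-tH}(x,x)\,dx$.
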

\begin{proof}
First of all, we note that we can swap the order of integration in \eqref{eq: rel-FK-formula} by Fubini-Tonelli, since $\exp\big(- \int_{0}^{t} V_\text{ext} (B^{x}_{T_{s}}) d s \big)$ is positive and $|f|$ and $|g|$ are in $\mathfrak{h}$. Accordingly, denoting $d\mathbb{P}^{x,y}_{[0,t]}$ the Wiener measure of Brownian paths from $x$ to $y$ over the interval $[0,t]$, we can write
\begin{equation*}
\begin{split}
\langle f, e^{-t (H-1) } f \rangle
& = \mathbb{E} \int dx dy \overline{f(x)} f(y)\int d \mathbb{P}^{x,y}_{[T_{0},T_{t}]}(\nu) e^{-\int_{0}^{t} V_\text{ext} (\nu_{T_{s}}) ds} 
\\
& = \int d \mathbb{P}^{f}_{[T_{0},T_{t}]}(\nu) e^{-\int_{0}^{t} V_\text{ext} (\nu_{T_{s}}) ds} \;,
\end{split}
\end{equation*}
where we used that, for almost every $\nu$ $d\mathbb{P}^{x,y}_{[0,t]}(\nu)$ is positive definite (with respect to integration in $x,y$) and defined the measure $d \mathbb{P}^{f}_{[0,t]}(\nu) := \int dx d y\overline{f(x)} d\mathbb{P}^{x,y}_{[0,t]}(\nu) f(y)$.
The claim then follows from the positivity of $V_\text{ext}$. 
\end{proof}
Corollary \ref{cor:FK} implies that, for $V_\text{ext} \geq 0$ bounded:
\begin{equation}\label{eq:corFK}
\tr\, \W_{z}  e^{-(H-\mu)} \leq \tr\, \W_{z}  e^{-(H_{0}-\mu)}\;.
\end{equation}
In our case, however, the potential $V_\text{ext}$ is not bounded, instead, it is confining. We will overcome this by an approximation argument. This is the content of the next lemma.
\begin{lemma}\label{lemma: strong convergence}
Let $H =\sqrt{1-\veps^{2}\Delta} + V_\text{ext}$ with $V_\text{ext} \geq 0$ and $V_\text{ext} (x)\leq C|x|^{2}$. For any $L >0$, let:
\begin{equation*}
V^{L}:= V_\text{ext} \chi(x/L) \;, \qquad H^{L} := \sqrt{1-\veps^{2}\Delta}  + V^{L} \;.
\end{equation*}
where $\chi(x)$ is a smooth cut-off function, such that $\chi(x) = 1$ for $| x |  \leq 1$ and $\chi(x) = 0$ for $|x| >2$. Then, 
\begin{equation}\label{eq:Lapprox}
\lim_{L\to \infty} \tr\, \mathcal{W}_{z} \frac{1}{1 + e^{\beta (H^{L} - \mu)}} = \tr\, \mathcal{W}_{z} \frac{1}{1 + e^{\beta (H - \mu)}}\;.
\end{equation}
\end{lemma}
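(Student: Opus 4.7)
My plan is to combine strong resolvent convergence $H^L \to H$ with a Feynman--Kac domination, and then conclude via scalar dominated convergence at the level of the trace.

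First, I would establish that $H^L \to H$ in the strong resolvent sense as $L \to \infty$. Both $H^L$ and $H$ are non-negative self-adjoint operators admitting $C_c^\infty(\bR^3)$ as a common form core (this uses the polynomial growth of $V_\text{ext}$). For any $\psi \in C_c^\infty(\bR^3)$ one has $\langle \psi, V^L \psi\rangle \to \langle \psi, V_\text{ext} \psi\rangle$ by Lebesgue dominated convergence, since $0 \leq V^L(x)|\psi(x)|^2 \leq V_\text{ext}(x)|\psi(x)|^2 \in L^1(\bR^3)$. A standard form-convergence criterion (see, e.g., Reed--Simon, Thm.~VIII.25) then yields strong resolvent convergence, which in turn gives $f_{\mu,\beta}(H^L) \to f_{\mu,\beta}(H)$ in the strong operator topology, since $f_{\mu,\beta}(s) := (1+e^{\beta(s-\mu)})^{-1}$ is bounded and continuous on $\bR$.

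Second, I would use Corollary \ref{cor:FK} (applicable since $V^L \in L^\infty(\bR^3)$ is continuous and non-negative), together with \eqref{eq:omegaexp}, to produce a uniform (in $L$) trace-class majorant. Setting
\[ A_L := \mathcal{W}_z^{1/2} f_{\mu,\beta}(H^L) \mathcal{W}_z^{1/2}, \qquad B := e^{\mu}\, \mathcal{W}_z^{1/2}\, e^{-\sqrt{1-\veps^2 \Delta}}\, \mathcal{W}_z^{1/2}, \]
one obtains $0 \leq A_L \leq B$, and $B$ is trace class with $\tr B \leq C\veps^{-3}$ by the explicit Fourier computation in \eqref{eq: free-apriori-density}.

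The conclusion is then purely scalar. Let $A := \mathcal{W}_z^{1/2} f_{\mu,\beta}(H) \mathcal{W}_z^{1/2}$; the strong convergence of $f_{\mu,\beta}(H^L)$ combined with boundedness of $\mathcal{W}_z^{1/2}$ gives $A_L \to A$ strongly. Fixing any orthonormal basis $\{\phi_n\}$ of $\mathfrak{h}$, the diagonal matrix elements $\langle \phi_n, A_L \phi_n\rangle$ converge to $\langle \phi_n, A \phi_n\rangle$ and are dominated by the summable sequence $\langle \phi_n, B \phi_n\rangle$; hence $\tr A_L \to \tr A$ by Lebesgue's dominated convergence theorem, and \eqref{eq:Lapprox} follows by cyclicity of the trace. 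The main technical point is the justification of strong resolvent convergence in the presence of the unbounded confining potential $V_\text{ext}$; once this is in place, the remainder is a clean application of the Feynman--Kac comparison and scalar dominated convergence.
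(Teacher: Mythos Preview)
Your argument is correct and is a genuinely different, cleaner route than the paper's. The paper proceeds by first invoking Golden--Thompson to see that $\omega_{\mu,\beta}$ is trace class, then reduces to the diagonal weak convergence $\langle f,\mathcal{W}_z\,\omega_{\mu,\beta}^L f\rangle\to\langle f,\mathcal{W}_z\,\omega_{\mu,\beta} f\rangle$ for $f\in C_c^\infty$, and establishes the latter via the contour representation \eqref{eq:omegaint0}, the Laplace formula for the resolvent, Duhamel's identity, and a propagation--of--moments estimate $\|\langle \hat x\rangle^3 e^{iH^L s}f\|\leq C_{s,f}$ to obtain the quantitative bound $\|(e^{iHt}-e^{iH^Lt})f\|\leq C_{t,f}/L$. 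You instead pass directly to strong resolvent convergence (noting that $H^L\psi=H\psi$ for $\psi\in C_c^\infty$ and $L$ large), apply the continuous functional calculus, and then upgrade to trace convergence using the Feynman--Kac domination $0\leq A_L\leq B$ together with scalar dominated convergence along an orthonormal basis. Your approach avoids the contour integral and the moment estimates entirely and makes the passage from weak/strong convergence to trace convergence fully explicit; the paper's approach is more self-contained (it does not invoke an abstract core criterion) and produces an explicit rate in $L$. One small comment: what you actually use is operator convergence on a core, not merely form convergence, so the relevant input is that $C_c^\infty$ is an operator core for $H$ with the quadratically bounded $V_{\text{ext}}$; this is standard, but it is the one place where your argument relies on an external fact that the paper's hands-on estimate sidesteps.
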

Thanks to this lemma, we have:
\begin{equation}
\begin{split}
\tr\, \mathcal{W}_{z} \frac{1}{1 + e^{\beta (H - \mu)}} &= \lim_{L\to \infty} \tr\, \mathcal{W}_{z} \frac{1}{1 + e^{\beta (H^{L} - \mu)}} \\
&\leq \limsup_{L\to \infty}\, \tr\, \mathcal{W}_{z} e^{-(H^{L}-\mu)} \\
&\leq \tr\, \mathcal{W}_{z} e^{-(H_{0}-\mu)}\;,
\end{split}
\end{equation}
where the first inequality follows from (\ref{eq:omegaexp}) while the second follows from Eq.~(\ref{cor:FK}). This bound, combined with (\ref{eq: free-apriori-density}), proves the claim (\ref{eq:omegapriori}). Let us now prove Lemma \ref{lemma: strong convergence}.
\begin{proof}[Proof of Lemma \ref{lemma: strong convergence}.]
To begin, observe that, by the Golden-Thompson inequality, $\omega_{\mu,\beta}$ is trace class. In fact:
\begin{equation*}
\tr\, \omega_{\mu,\beta} \leq \tr\, e^{-\beta (H - \mu)} \leq \tr\, e^{-\beta (H_{0} - \mu)} e^{-\beta V_\text{ext}} = e^{\beta \mu}\int d k \, e^{-\beta \sqrt{1+\veps^{2}k^{2}}}  \int dx \, e^{-\beta V_\text{ext}(x)} <\infty\;.
\end{equation*}
Thus, to show (\ref{eq:Lapprox}) it is enough to prove, for any function $f\in L^{2}(\mathbb{R}^{3})$:
\begin{equation}\label{eq:B7}
\lim_{L \to \infty} \langle f,  \mathcal{W}_{z} \frac{1}{1 + e^{\beta (H^{L} - \mu)}} f \rangle = \langle f,  \mathcal{W}_{z} \frac{1}{1 + e^{\beta (H - \mu)}} f \rangle\;.
\end{equation}
By the boundedness of the operators in the trace and by density in $L^{2}(\mathbb{R}^{3})$, it is enough to consider $f\in C^{\infty}_{\text{c}}(\mathbb{R}^{3})$. To prove (\ref{eq:B7}) we use the integral representation 
\begin{equation}\label{eq:omegaint0}
\frac{1}{1+e^{\beta (H-\mu)}}  = \int_{\mathcal{C}} \frac{dz}{2\pi i} \frac{1}{1 + e^{\beta (z - \mu)}} \frac{1}{H - z}\;,
\end{equation}
where $\mathcal{C}$ is an unbounded clockwise path that encloses the spectrum of $H$. More precisely:
\begin{equation}\label{eq:Cdef0}
\mathcal{C} = \mathcal{C}_{1} \cup \mathcal{C}_{2} \cup \mathcal{C}_{3}\;,
\end{equation}
where, for $K > 0$:
\begin{equation}\label{eq:Cdef}
\begin{split}
\mathcal{C}_{1} &= \big\{ z\in \mathbb{C} \, \mid \, \text{Re}\, z = -K\;,\; \text{Im}\, z \in (-\pi/2\beta; \pi/2\beta) \big\} \\
\mathcal{C}_{2} &= \big\{ z\in \mathbb{C} \, \mid \, \text{Re}\, z \in (-K, \infty)\;,\; \text{Im}\, z = \pi / 2\beta \big\} \\
\mathcal{C}_{3} &= \big\{ z\in \mathbb{C} \, \mid \, \text{Re}\, z \in (-K, \infty)\;,\; \text{Im}\, z = -\pi / 2\beta \big\}\;.
\end{split}
\end{equation}
Observe that the path $\mathcal{C}$ does not enclose the poles of the Fermi-Dirac function $ z \mapsto 1/(1 + e^{\beta (z - \mu)})$, which are:
\begin{equation*}
z = \mu + i \frac{2\pi}{\beta} \Big(n + \frac{1}{2}\Big)\;,\qquad n \in \mathbb{Z}\;.
\end{equation*}
Thus, we obtain 
\begin{equation*}
\frac{1}{1 + e^{\beta (H^{L} - \mu)}} - \frac{1}{1 + e^{\beta (H - \mu)}} = \int_{\mathcal{C}} \frac{dz}{2\pi i}\, \frac{1}{1 + e^{\beta (z-\mu)}} \Big( \frac{1}{H^{L} - z} - \frac{1}{H - z} \Big)\;.
\end{equation*}
Let $z = x + i y$ with $y = \pi/2\beta$. Then, we have:
\begin{equation*}
 \frac{1}{i(H^{L} - z)} - \frac{1}{i(H - z)} = \int_{-\infty}^{0} dt\, e^{t y} \Big( e^{i (H^{L} - x) t} - e^{i (H - x) t} \Big)\;.
\end{equation*}
A similar representation holds true for $y = -\pi/2\beta$. Hence, to prove the claim (\ref{eq:B7}) it is sufficient to show:
\begin{equation}\label{eq:limnorm}
\lim_{L \to \infty} \Big\| \Big( e^{i (H - x) t} - e^{i (H^{L} - x) t} \Big) f \Big\| = 0\;,
\end{equation}
where we abridged $\| \cdot \|\equiv \| \cdot \|_{L^{2}(\R^{3})}$. To this end, let us estimate:
\begin{equation}\label{eq:normL}
\begin{split}
\Big\| \Big( e^{i (H - x) t} - e^{i (H^{L} - x) t} \Big) f \Big\| &\leq \int_{0}^{t}ds\, \Big\| e^{i (H - x) (t-s)} (V_\text{ext} - V^{L}) e^{i (H^{L} - x) s} f \Big\| \\
&\leq  \int_{0}^{t}ds\, \Big\|  V_\text{ext} \mathbf{1}_{|\hat x| > L} e^{i (H^{L} - x) s} f \Big\| \\
&\leq \int_{0}^{t}ds\, \frac{C}{L} \Big\|  \langle \hat x \rangle^{3} e^{i (H^{L} - x) s} f \Big\|\;,
\end{split}
\end{equation}
where $\langle \hat x \rangle^{2} = 1 + |\hat x|^{2}$ and we used that, by assumption, $V_\text{ext} (x) \leq C|x|^{2}$, together with the estimate $\mathbf{1}_{|\hat x| > L} \leq | \hat x | / L$. Next, let us estimate the moment in the right-hand side. We have:
\begin{equation*}
\begin{split}
\partial_{s} \Big\|  \langle \hat x \rangle^{3} e^{i (H^{L} - x) s} f \Big\|^{2} &= \langle e^{i (H^{L} - x) s} f , [H_{0}, \langle \hat x \rangle^{6}] e^{i (H^{L} - x) s} f  \rangle \\
&\leq C\Big\|  \langle \hat x \rangle^{5/2} e^{i (H^{L} - x) s} f \Big\|^{2}\;.
\end{split}
\end{equation*}
Iterating the inequality five more times, we get:
\begin{equation*}
\begin{split}
\Big\|  \langle \hat x \rangle^{3} e^{i (H^{L} - x) s} f \Big\|^{2} &\leq C (1 + |s|^{6}) \Big\|  \langle \hat x \rangle^{3} f \Big\|^{2} \\
&=: K_{s,f} < \infty\;,
\end{split}
\end{equation*}
where we used that $f\in C^{\infty}_{\text{c}}$. Plugging this bound in (\ref{eq:normL}), we get:
\begin{equation*}
\Big\| \Big( e^{i (H - x) t} - e^{i (H^{L} - x) t} \Big) f \Big\| \leq \frac{C_{t,f}}{L}
\end{equation*}
for a suitable constant $C_{t,f}>0$. This implies (\ref{eq:limnorm}), and concludes the proof of Lemma~\ref{lemma: strong convergence}.
\end{proof}
\section{Proof of Proposition \ref{prop:localsc}}\label{app:gibbs}

In this section we will conclude the proof of Proposition \ref{prop:localsc}, showing that the Fermi-Dirac distribution (\ref{eq:FD}) satisfies Assumption \ref{ass:sc} in a domain $\Lambda \subset \bR^3$, if the Hamiltonian $H$ and the chemical potential $\mu$ fulfill Assumption \ref{ass:Weyl}.  In the following, $z_{1}$ will denote a generic point in the set $\Lambda$ (as specified by Assumption \ref{ass:Weyl}), for which the bound
\begin{equation*}
\Big\| \frac{1}{(\beta (H-\mu))^{2m} + 1} \mathcal{W}_{z_{1}}^{(n)} \Big\|_{\text{tr}} \leq C\varepsilon^{-2}
\end{equation*}
holds true for all $n,m \in \bN$ large enough. Also, we shall denote by $z_{2}$ a generic point in $\mathbb{R}^{3}$. In this section, to shorten the notation we will set $\omega \equiv \omega_{\mu,\beta} = (1 + e^{\beta (H - \mu)})^{-1}$. 
\subsection{Proof of Eq.~(\ref{eq:sc1})} 
We consider $\big\| \mathcal{W}_{z_{2}}^{(n/2)}  \omega  \mathcal{W}_{z_{1}}^{(n)} \big\|_{\text{tr}}$. We observe that, for $\theta =1,2,3$, 
\begin{equation}\label{eq:deccomm}
\begin{split}
\Big\| (z_{1,\theta} - z_{2,\theta}) \mathcal{W}_{z_{2}}^{(n/2)}  \omega \mathcal{W}_{z_{1}}^{(n)} \Big\|_{\text{tr}} &= \Big\| (z_{1,\theta} - \hat x_{\theta} + \hat x_{\theta} - z_{2,\theta}) \mathcal{W}_{z_{2}}^{(n/2)}  \omega \mathcal{W}_{z_{1}}^{(n)} \Big\|_{\text{tr}} \\
&\leq \Big\| (z_{1,\theta} - \hat x_{\theta}) \mathcal{W}_{z_{2}}^{(n/2)}  \omega \mathcal{W}_{z_{1}}^{(n)} \Big\|_{\text{tr}} + \Big\| \mathcal{W}_{z_{2}}^{(n/2)}  \omega (z_{2,\theta} - \hat x_{\theta}) \mathcal{W}_{z_{1}}^{(n)} \Big\|_{\text{tr}} \\
&\quad + \Big\|  \mathcal{W}_{z_{2}}^{(n/2)}  [\omega, \hat x_{\theta}] \mathcal{W}_{z_{1}}^{(n)} \Big\|_{\text{tr}}\;.
\end{split}
\end{equation}
For an integer ${p} \leq 4n$, we have
\begin{equation*}
\Big| \prod_{j=1}^{{p}} ( x_{\theta_{j}} - z_{\theta_j}) \mathcal{W}^{(n)}_{z}(x)\Big| \leq C \mathcal{W}^{(n-{p}/4)}_{z}(x)\;.
\end{equation*}
Let $p=2n-4$. Iterating (\ref{eq:deccomm}), we find:
\begin{equation}\label{eq:ad1} 
\Big\| \prod_{i=1}^{{2n-4}} (z_{1,\theta_{i}} - z_{2,\theta_{i}} ) \mathcal{W}_{z_{2}}^{(n/2)}  \omega \mathcal{W}_{z_{1}}^{(n)} \Big\|_{\text{tr}} \leq C_{n} \sum_{j=0}^{{2n}} \sum_{\alpha} \Big\| \mathcal{W}_{z_{2}}^{(1)} \text{ad}^{j;\alpha}_{\hat x}(\omega) \mathcal{W}_{z_{1}}^{(n/2)} \Big\|_{\text{tr}}\;.
\end{equation}
Here $\alpha = (\alpha_{1}, \ldots, \alpha_{j}) \in \{1,2,3 \}^j$ is a multi-index, and $\text{ad}^{j;\alpha}_{\hat x}(O)$ is the multi-commutator defined by 
\begin{equation*}
\text{ad}^{j;\alpha}_{\hat x}(\omega) = \big[ \cdots \big[\big[ \omega, \hat x_{\alpha_{1}}\big], \hat x_{\alpha_{2}} \big] \ldots, \hat x_{\alpha_{j}} \big]\;.
\end{equation*}
Therefore, 
\begin{equation}\label{eq:WoW}
\begin{split}
\Big\| \mathcal{W}_{z_{2}}^{(n/2)} \omega \mathcal{W}_{z_{1}}^{(n)} \Big\|_{\text{tr}} &\leq \frac{C_{n}}{1 + |z_{1} - z_{2}|^{{2n-4}}} \sum_{j=0}^{{2n}} \sum_{\alpha} \Big\| \mathcal{W}_{z_{2}}^{(1)} \text{ad}^{j;\alpha}_{\hat x}(\omega) \mathcal{W}_{z_{1}}^{(n/2)} \Big\|_{\text{tr}} \\
&\leq \frac{C_{n}}{1 + |z_{1} - z_{2}|^{{2n-4}}} \Big(\| \mathcal{W}_{z_{2}}^{(1)} \omega \mathcal{W}_{z_{1}}^{(n/2)} \|_\text{tr}  + \sum_{j=1}^{{2n}} \sum_{\alpha} \Big\| \text{ad}^{j;\alpha}_{\hat x}(\omega) \mathcal{W}_{z_{1}}^{(n/2)} \Big\|_{\text{tr}}\Big)\;.
\end{split}
\end{equation}
We can bound the first term in the parenthesis by Cauchy-Schwarz inequality as:
\[ \| \mathcal{W}_{z_{2}}^{(1)} \omega \mathcal{W}_{z_{1}}^{(n/2)} \|_\text{tr}  \leq \tr \,  \mathcal{W}_{z_{1}}^{(n)}  \omega  + \tr \,   \mathcal{W}_{z_{2}}^{(2)}  \omega  \leq C \veps^{-3}\;; \]
the last inequality follows from the fact that $\omega = \omega_{\mu,\beta}$ satisfies Assumption \ref{ass:bd}, as we proved in Appendix \ref{sec: appen-apriori}.

We are now left with bounding the contributions with $j>0$ on the r.h.s.~of (\ref{eq:WoW}). It is convenient to use the representation of the Fermi-Dirac distribution: 
\begin{equation}\label{eq:omegaint}
\omega = \int_{\mathcal{C}} \frac{dz}{2\pi i} \frac{1}{1 + e^{\beta (z - \mu)}} \frac{1}{H - z}\;,
\end{equation}
where $\mathcal{C}$ is an unbounded clockwise path that encloses the spectrum of $H$, see (\ref{eq:omegaint0}) for more details. 
In order to estimate contributions with $j > 0$ in (\ref{eq:WoW}), we have to consider multi-commutators of the resolvent of $H$ with $\hat x_{\alpha}$. Let $R_{H}(z) = (H-z)^{-1}$ be the resolvent of $H$ at $z$. To begin, we compute:
\begin{equation*}
\begin{split}
\Big[ R_{H}(z) , \hat x_{\alpha} \Big] &= -R_{H}(z) [ H, \hat x_{\alpha} ] R_{H}(z)\;, \\
\Big[ \Big[ R_{H}(z) , \hat x_{\alpha_{1}} \Big], \hat x_{\alpha_{2}} \Big] &= - \Big[ R_{H}(z) [ H, \hat x_{\alpha_{1}} ] R_{H}(z), \hat x_{\alpha_{2}}  \Big] \\
&= - \Big[ R_{H}(z), \hat x_{\alpha_{2}}\Big] [ H, \hat x_{\alpha_{1}} ] R_{H}(z) - R_{H}(z) [ [ H, \hat x_{\alpha_{1}} ], \hat x_{\alpha_{2}}] R_{H}(z)\\& \quad - R_{H}(z) [ H, \hat x_{\alpha_{1}} ] \Big[ R_{H}(z), \hat x_{\alpha_{2}}\Big]\;;
\end{split}
\end{equation*}
plugging the first relation in the second, we get:
\begin{equation*}
\begin{split}
\Big[ \Big[ R_{H}(z) , \hat x_{\alpha_{1}} \Big], \hat x_{\alpha_{2}} \Big]  &= R_{H}(z) [ H, \hat x_{\alpha_{2}} ] R_{H}(z) [ H, \hat x_{\alpha_{1}} ] R_{H}(z) \\
&\quad - R_{H}(z) [ [ H, \hat x_{\alpha_{1}} ], \hat x_{\alpha_{2}}] R_{H}(z) \\ 
&\quad + R_{H}(z) [ H, \hat x_{\alpha_{1}} ] R_{H}(z) [ H, \hat x_{\alpha_{2}} ] R_{H}(z)\;.
\end{split}
\end{equation*}
More generally, the $j$-fold multi-commutator $\text{ad}_{\hat x}^{j;\alpha}(R_{H}(z))$ is given by a linear combination of expressions having the form 
\begin{equation}\label{eq:RRR}
R_{H}(z) \text{ad}_{\hat x}^{\ell_{1}}(H) R_{H}(z) \text{ad}_{\hat x}^{\ell_{2}}(H) R_{H}(z) \cdots R_{H}(z) \text{ad}_{\hat x}^{\ell_{r}}(H) R_{H}(z) =: E_{r;\ell}\;,
\end{equation}
with $r\leq j$, $\text{ad}_{\hat x}^{\ell_{i}}(H) \equiv \text{ad}_{\hat x}^{\ell_{i}; \alpha(\ell_{i})}(H)$ with $\alpha(\ell_{i})$ a multi-index with $\ell_{i}$ components, and
\begin{equation}\label{eq:ellcond}
1 \leq \ell_{i} \leq j\;,\qquad \ell_{1} + \ell_{2} + \cdots + \ell_{r} = j\;.
\end{equation}
It is convenient to move resolvents to the left of the string. To this end, we rewrite:
\begin{equation*}
\begin{split}
E_{r;\ell} &= R_{H}(z)^{2} \text{ad}_{\hat x}^{\ell_{1}}(H) \text{ad}_{\hat x}^{\ell_{2}}(H) R_{H}(z) \cdots R_{H}(z) \text{ad}_{\hat x}^{\ell_{r}}(H) R_{H}(z) \\
&+ R_{H}(z)\Big[ R_{H}(z), \text{ad}_{\hat x}^{\ell_{1}}(H)\Big] \text{ad}_{\hat x}^{\ell_{2}}(H) R_{H}(z) \cdots R_{H}(z) \text{ad}_{\hat x}^{\ell_{r}}(H) R_{H}(z) \\
&= R_{H}(z)^{2} \text{ad}_{\hat x}^{\ell_{1}}(H) \text{ad}_{\hat x}^{\ell_{2}}(H) R_{H}(z) \cdots R_{H}(z) \text{ad}_{\hat x}^{\ell_{r}}(H) R_{H}(z) \\
&\quad + R_{H}(z)^{2} \text{ad}_{H,\hat x}^{1, \ell_{1}}(H) R_{H}(z) \text{ad}_{\hat x}^{\ell_{2}}(H) R_{H}(z) \cdots R_{H}(z) \text{ad}_{\hat x}^{\ell_{r}}(H) R_{H}(z)\;,
\end{split}
\end{equation*}
where:
\begin{equation*}
\text{ad}_{H,\hat x}^{a,\ell}(H) := \text{ad}_{H}^{a} \Big( \text{ad}_{\hat x}^{\ell}(H) \Big)\quad \text{for $a\in \mathbb{N}$.}
\end{equation*}
This procedure can be iterated; we can bring resolvents to the left using repeatedly the identity
\begin{equation}\label{eq:commid}
O R_{H}(z) = R_{H}(z) O + R_{H}(z) [ H, O ] R_{H}(z)\;.
\end{equation}
In doing so, we produce new terms involving higher commutators with $H$. We ultimately get that $E_{r;\ell}$ can be written as:
\begin{equation}\label{eq:EaEb}
E_{r;\ell} = E^{(\text{a})}_{r;\ell} + E^{\text{(b)}}_{r;\ell}
\end{equation}
where $E^{(\text{a})}_{r;\ell}$ is given by the sum of terms having the form
\begin{equation}\label{eq:Ea}
R_{H}(z)^{q+1} \text{ad}_{H,\hat x}^{a_{1}, \ell_{1}}(H) \cdots \text{ad}_{H,\hat x}^{a_{r}, \ell_{r}}(H)\;,
\end{equation}
with $q\geq r$. Observe that, in (\ref{eq:RRR}), the number of resolvents is equal to $r+1$. The number of new resolvents, produced by the identity (\ref{eq:commid}), is equal to the number of commutators with $H$; that is,
\begin{equation}\label{eq:qra}
q - r = \sum_{i=1}^{r} a_{i}\;.
\end{equation}
On the other hand, the term $E^{\text{(b)}}_{r;\ell}$ collects contributions with at least one resolvent not in the leftmost place. It consists of terms having the form
\begin{equation}\label{eq:Eb}
R_{H}(z)^{q_{0}+1} \text{ad}_{H,\hat x}^{a_1,\ell_1}(H) R_H (z)^{q_1} \text{ad}_{H,\hat x}^{a_2,\ell_2}(H) R_H (z)^{q_2} \dots \dots \text{ad}_{H,\hat x}^{a_r,\ell_r}(H) R_H (z)^{q_r} 
\end{equation}
where $q_i \in \{0,1\}$ for all $i = 1, \dots , r$. 
As in (\ref{eq:Ea}), the number of new resolvents produced with respect to (\ref{eq:RRR}) equals the number of commutators with respect to $H$:
\begin{equation}\label{eq:qcond}
\sum_{i=0}^{r} q_{i} - r = \sum_{i=1}^{r} a_{i}\;.
\end{equation}
At this point, a few remarks are in order.
\begin{remark}\label{rem:qja}
\begin{itemize}
\item[(i)] Observe that, in (\ref{eq:Eb}), we can assume that $q_{0}$ is as large as we wish, up to increasing the number of commutators with respect to $H$, according to Eq.~(\ref{eq:qcond}). We will choose $q_0$ so large that we can apply Assumption \ref{ass:Weyl} to control the trace norm of $\mathcal{W}_{z_1}^{(n/2)} R_H (z)^{q_0+1}$. All the other resolvents in (\ref{eq:Eb}) will be estimated using the nonzero imaginary part of $z$. 
\item[(ii)] In (\ref{eq:Ea}), on the other hand, we can only assume that $0< q < q_0$ (if $q = q_0$, this term can be included in $E_{r,\ell}^{(b)}$). For these contributions, the $z$ integral will be performed explicitly, and the result will have good decay in energy. From (\ref{eq:qra}) and (\ref{eq:qcond}),  we conclude that, for all contributions, $\sum_{i=1}^r a_i \leq q_0$.
%
%
\end{itemize}
\end{remark}
Let us denote by $\mathcal{A}$ the general contribution to $E^{(\text{a})}_{r;\ell}$, which has the form (\ref{eq:Ea}), and by $\mathcal{B}$ the general contribution to $E^{(\text{b})}_{r;\ell}$, which has the form (\ref{eq:Eb}). We shall discuss them separately. 
\paragraph{Terms of type $\mathcal{A}$.} Plugging $\mathcal{A}$ in the integral defining $\omega$, Eq.~(\ref{eq:omegaint}), we are left with estimating:
\begin{equation}\label{eq:A28}
\Big\| \Big( \int_{\mathcal{C}} \frac{dz}{2\pi i} \frac{1}{1 + e^{\beta (z - \mu)}} \mathcal{A}\Big) \mathcal{W}_{z_{1}}^{(n/2)}  \Big\|_{\text{tr}}\;.
\end{equation}
To bound these terms, we use that the $z$-integral can be performed explicitly. In fact:
\begin{equation}\label{eq:A29}
\begin{split}
 \int_{\mathcal{C}} \frac{dz}{2\pi i} \frac{1}{1 + e^{\beta (z - \mu)}} \mathcal{A} &=  \int_{\mathcal{C}} \frac{dz}{2\pi i} \frac{1}{1 + e^{\beta (z - \mu)}} \frac{1}{(H-z)^{q+1}} \text{ad}_{H,\hat x}^{a_{1}, \ell_{1}}(H) \cdots \text{ad}_{H,\hat x}^{a_{r}, \ell_{r}}(H) \\
 &= \frac{1}{q!} \big( \partial^{q}_{\mu} \omega\big) \text{ad}_{H,\hat x}^{a_{1}, \ell_{1}}(H) \cdots \text{ad}_{H,\hat x}^{a_{r}, \ell_{r}}(H)\;.
 \end{split}
\end{equation}
Plugging (\ref{eq:A29}) into (\ref{eq:A28}), we have:
\begin{equation}\label{eq:trop}
\begin{split}
&\Big\|  \big( \partial^{q}_{\mu} \omega\big) \text{ad}_{H,\hat x}^{a_{1}, \ell_{1}}(H) \cdots \text{ad}_{H,\hat x}^{a_{r}, \ell_{r}}(H) \mathcal{W}_{z_{1}}^{(n/2)}   \Big\|_{\text{tr}} \\
&\quad \leq \Big\| \big( \partial^{q}_{\mu} \omega\big)  \mathcal{W}_{z_{1}}^{(n/4)} \Big\|_{\text{tr}} \Big\|  \big(\mathcal{W}_{z_{1}}^{(n/4)}\big)^{-1}\text{ad}_{H,\hat x}^{a_{1}, \ell_{1}}(H) \cdots \text{ad}_{H,\hat x}^{a_{r}, \ell_{r}}(H) \mathcal{W}_{z_{1}}^{(n/2)}   \Big\|_{\text{op}}\;.
\end{split}
\end{equation}
We further estimate the operator norm as:
\begin{equation}\label{eq:trop2}
\begin{split}
&\Big\| \big(\mathcal{W}_{z_{1}}^{(n/4)}\big)^{-1} \text{ad}_{H,\hat x}^{a_{1}, \ell_{1}}(H) \cdots \text{ad}_{H,\hat x}^{a_{r}, \ell_{r}}(H) \mathcal{W}_{z_{1}}^{(n/2)}   \Big\|_{\text{op}} \\
&\quad \leq C\Big\|\big(\mathcal{W}_{z_{1}}^{(n/4)}\big)^{-1} \text{ad}_{H,\hat x}^{a_{1}, \ell_{1}}(H) \mathcal{W}_{z_{1}}^{(n/4 + a_{1})}   \Big\|_{\text{op}} \Big\| \big(\mathcal{W}_{z_{1}}^{(n/4 + a_{1})}\big)^{-1} \text{ad}_{H,\hat x}^{a_{1}, \ell_{1}}(H) \mathcal{W}_{z_{1}}^{(n/4 + a_{1} + a_{2})}   \Big\|_{\text{op}} \cdot \\
&\qquad \cdots \cdot \Big\| \big(\mathcal{W}_{z_{1}}^{(n/4 + \sum_{i=1}^{r-1} a_{i})}\big)^{-1} \text{ad}_{H,\hat x}^{a_{r}, \ell_{r}}(H) \mathcal{W}_{z_{1}}^{(n/4 + \sum_{i=1}^{r} a_{i})}   \Big\|_{\text{op}}\;,
\end{split}
\end{equation}
where we choose $n/4 \geq q_{0}$, which guarantees that $n/4 \geq \sum_{i=1}^{r} a_{i}$ and implies that $\mathcal{W}_{z_{1}}^{(n/2)}\leq C\mathcal{W}_{z_{1}}^{(n/4 + \sum_{i=1}^{r} a_{i})}$. Proceeding similarly to (\ref{eq:ad1}), it is enough to control the quantities
\begin{equation*}
\| \text{ad}^\gamma_{\hat{x}} \big( \text{ad}_{H,\hat{x}}^{a, \ell} (H) \big) \mathcal{W}_{z_1}^{(a)} \|_\text{op}\;,
\end{equation*}
for all $\gamma \leq n + 4q_{0}$. We claim that 
\begin{equation}\label{eq:normad} 
\| \text{ad}^\gamma_{\hat{x}} \big( \text{ad}_{H,\hat{x}}^{a, \ell} (H) \big) \mathcal{W}_{z_1}^{(a)} \|_\text{op} \leq C \varepsilon^{\ell+a}
\end{equation} 
for a constant $C>0$ depending only on $n$ and for all $\ell, a \leq 2n$ (these conditions hold true, since we assumed $\ell \leq j \leq m \leq 2n$ and $a \leq q_0 \leq 2n$). 

To prove (\ref{eq:normad}), we first observe that 
\begin{equation*}
\text{ad}_{\hat x}^{\ell}(H) = \text{ad}_{\hat x}^{\ell}(H_{0})
\end{equation*}
with $H_{0} = \sqrt{1 - \varepsilon^{2}\Delta}$. Since the commutators with $\hat x_{\alpha}$ act as derivatives with respect to $\nabla_{\alpha}$, we have:
\begin{equation}\label{eq:multiH0}
\text{ad}_{\hat x}^{\ell}(H_{0}) = \varepsilon^{\ell} \sum_{k,\alpha} C_{k,\alpha} \frac{1}{(1 - \varepsilon^{2} \Delta)^{\frac{k}{2}}} (\varepsilon D)^{\alpha}\;,
\end{equation}
for suitable coefficients $C_{k,\alpha}$, and where the sum is restricted to $|\alpha| \leq k$, and to odd $k$ such that $1\leq k \leq \ell$. In \eqref{eq:multiH0}, we used the convenient notation $D^{\alpha}\equiv \partial_{x_{1}}^{\alpha_{1}}\partial_{x_{2}}^{\alpha_{2}}\partial_{x_{3}}^{\alpha_{3}}$. It is convenient to use the integral representation:
\begin{equation*}
\frac{1}{(1 - \varepsilon^{2} \Delta)^{\frac{k}{2}}} = \frac{1}{\pi} \int_{0}^{\infty} \frac{d\lambda}{\sqrt{\lambda}} \frac{1}{\lambda + (1 - \varepsilon^{2}\Delta)^{k}}\;,
\end{equation*}
to write:
\begin{equation}\label{eq:res}
\text{ad}_{\hat x}^{\ell}(H_{0}) = \varepsilon^{\ell} \sum_{k,\alpha} \int_{0}^{\infty} \frac{d\lambda}{ \sqrt{\lambda}} \frac{C_{k,\alpha}}{\pi} P_{\alpha,k}(\lambda)\;,
\end{equation}
where we defined, for any $\alpha$ with $|\beta| \leq 2k$,
\begin{equation}\label{eq:Pk}
P_{\alpha,k}(\lambda) := \frac{(\varepsilon D)^{\alpha}}{\lambda + (1 - \varepsilon^{2}\Delta)^{k}}\;;
\end{equation}
Let us now consider $\text{ad}^{a,\ell}_{H,x} (H) = \text{ad}^a_H (\text{ad}_x^\ell (H))$. The first commutator with $H = H_0 +V_\text{ext}$ only involves the potential $V_\text{ext}$ (because $H_0$ commutes with (\ref{eq:res})).  To compute it, we observe that 
\begin{equation*}
\begin{split}
[ V_\text{ext} ,  P_{\alpha,k}(\lambda)] = \frac{1}{\lambda + (1 - \varepsilon^{2}\Delta)^{k}} [ V_\text{ext} , (\varepsilon D)^{\alpha}] + \Big[ V_\text{ext} ,  \frac{1}{\lambda + (1 - \varepsilon^{2}\Delta)^{k}}\Big] (\varepsilon D)^{\alpha}\;.
\end{split}
\end{equation*}
Hence, $[ V_\text{ext} ,  P_{\alpha,k}(\lambda)]$ can be expressed as a linear combination of terms having either the form 
 \begin{equation}\label{eq:1com}
\frac{(\varepsilon D)^{\alpha_{1}}}{\lambda + (1 - \varepsilon^{2}\Delta)^{k}} (( \varepsilon D )^{\alpha_{2}} V_\text{ext}) = P_{\alpha_1,k} (\lambda) (( \varepsilon D )^{\alpha_{2}} V_\text{ext}) 
\end{equation}
with $|\alpha_1| + |\alpha_2|  = |\alpha| \leq k$ and $|\alpha_2| \geq 1$, or the form  
\begin{equation}\label{eq:com2}
\frac{(\varepsilon D)^{\alpha_{1}}}{\lambda + (1 - \varepsilon^{2}\Delta)^{k}} (( \varepsilon D )^{\alpha_{2}} V_\text{ext})\frac{(\varepsilon D)^{\alpha}}{\lambda + (1 - \varepsilon^{2}\Delta)^{k}} = P_{\alpha_1, k} (\lambda) (( \varepsilon D )^{\alpha_{2}} V_\text{ext}) P_{\alpha,k} (\lambda) 
\end{equation}
with $|\alpha_1| + |\alpha_2| = 2k$ and $|\alpha_2| \geq 1$. Further commutators with $V_\text{ext}$ are given again by alternating products of derivatives $(( \varepsilon D )^{\beta} V_\text{ext})$, with $|\beta| \geq 1$, and of factors $P_{\delta,k}(\lambda)$. All these factors have an index $|\delta| \leq 2k-1$ (which implies that $\| P_{\delta,k} (\lambda) \|_\text{op} \lesssim (1+\lambda)^{-1/2k})$; one of them has $|\delta| \leq k$ (which implies that $\| P_{\delta,k} (\lambda) \|_\text{op} \lesssim (1+\lambda)^{-1/2}$). Commutators with $H_0$, on the other hand, only affect the multiplication operators $(\varepsilon D)^\beta V_\text{ext}$. 
Under the assumption that all derivatives of $V_\text{ext}$ of order two or higher are bounded, it is easy to check that the commutators $[H_0, (( \varepsilon D )^{\beta} V_\text{ext})]$ are bounded, and have bounded multiple commutators with $V_\text{ext}$, $H_0$ and $\hat{x}$. 

We conclude that $\text{ad}^\gamma_{\hat{x}} (\text{ad}^{a,\ell}_{H,\hat{x}} (H))$ is given by a linear combination of terms having the form 
\[ \varepsilon^{\kappa} \int_0^\infty \frac{d\lambda}{\sqrt{\lambda}} \, P_{\alpha_1, k} (\lambda) \widetilde{K}_1 P_{\alpha_2, k} (\lambda) \widetilde{K}_2 \dots  P_{\alpha_{\tilde{s}},k} (\lambda) \widetilde{K}_{\tilde{s}} \]
with some $\tilde{s} \in \bN$, an exponent $\widetilde{\kappa} \geq \ell+ a + \gamma$ (each commutator with $H_0$, $V_\text{ext}$ or $\hat{x}$ produces at least one new factor $\varepsilon$; this corresponds for example to the restriction $|\alpha_2| \geq 1$, in (\ref{eq:1com}), (\ref{eq:com2})) and where each $\widetilde{K}_j$ is either a bounded operator, with norm independent of $\varepsilon$, or a component of $\nabla V_\text{ext}$.  To control the unbounded contributions, we move all factors $\nabla V_\text{ext}$ to the right (so that we can bound them with the function $\mathcal{W}_{z_1}^{(a)}$ in (\ref{eq:normad})). Since every further commutator produces only bounded terms, we conclude that $\text{ad}^\gamma_{\hat{x}} (\text{ad}^{a,\ell}_{H,\hat{x}} (H)$ can be expressed as a linear combination of terms like 
\[ \varepsilon^{\kappa} \int_0^\infty \frac{d\lambda}{\sqrt{\lambda}} \, P_{\alpha_1, k} (\lambda) K_1 P_{\alpha_2, k} (\lambda) K_2 \dots  P_{\alpha_s,k} (\lambda) K_s \prod_{i=1}^t  (\nabla V_\text{ext})_{\delta_i}   \]
where $s \in \bN$, $\kappa \geq \ell+a + \gamma$, $t \leq a$ (there are $a$ commutators with $H$ and thus at most $a$ factors $\nabla V_\text{ext}$), where each operator $K_i$ is bounded, with norm independent of $\varepsilon$, and where $\prod_{i=1}^s \| P_{\alpha_i , k} (\lambda) \|_\text{op} \leq (1+\lambda)^{-1/2-s/2k}$ (by the remarks below (\ref{eq:com2})). 

Using 
\begin{equation*}
|\nabla V_\text{ext} (\hat x)| = |\nabla V_\text{ext} (\hat x - z_{1} + z_{1})| \leq |\nabla V_\text{ext} (z_{1})| + C(|\hat x - z_{1}|)
\end{equation*}
and recalling that $z_1 \in \Lambda$ and the assumption that $\nabla V_\text{ext}$ is bounded on $\Lambda$, we obtain 
\begin{equation*}
 |\nabla V_\text{ext} (\hat x)|  \mathcal{W}_{z_{1}}^{(1)}(\hat{x}) \leq C\;.
\end{equation*}
Thus, after integrating over $\lambda$ (using $\prod_{i=1}^s \| P_{\alpha_i , k} (\lambda) \|_\text{op} \leq (1+\lambda)^{-1/2-s/2k}$), we obtain  
\begin{equation*}
\| \text{ad}^\gamma_{\hat{x}} \big( \text{ad}_{H,\hat{x}}^{a, \ell} (H) \big) \mathcal{W}_{z_2}^{(a)} \|_\text{op} \leq C \varepsilon^{\ell + a}
\end{equation*}
which completes the proof of (\ref{eq:normad}). 

From (\ref{eq:trop}), (\ref{eq:trop2}), (\ref{eq:normad}) we conclude that
\begin{equation}\label{eq:Afin} \begin{split} 
\Big\| \big( \partial^{q}_{\mu} \omega\big) \text{ad}_{H,\hat x}^{a_{1}, \ell_{1}}(H) \cdots \text{ad}_{H,\hat x}^{a_{r}, \ell_{r}}(H) \mathcal{W}_{z_{1}}^{(n/2)}   \Big\|_{\text{tr}} \leq C \Big\| \big( \partial^{q}_{\mu} \omega\big) \mathcal{W}_{z_{1}}^{(n/4)} \Big\|_{\text{tr}}  \varepsilon^{\sum_{i=1}^r (\ell_i + a_i)}\;.
\end{split}
\end{equation} 
Let us now estimate the trace norm in the right-hand side of (\ref{eq:Afin}). Observe that:
\begin{equation*}
\partial_{\mu} \omega = \frac{\beta e^{\beta (H - \mu)}}{(1 + e^{\beta (H-\mu)})^{2}} = \beta \omega (1 - \omega)\;;
\end{equation*}
iterating and using $0\leq \omega \leq 1$, we find that:
\begin{equation}\label{eq:derest}
\pm \partial^{q}_{\mu} \omega \leq C_{q} \beta^{q} \omega (1 - \omega)\;,
\end{equation}
for an appropriate constant $C_{q} > 0$. Recalling that $\beta = O(\varepsilon^{-1})$ and that $\sum_{i=1}^r \ell_i = j \geq r$, we get, with (\ref{eq:qra}):
\begin{equation}\label{eq:Afin2}
\begin{split}
&\Big\| \big( \partial^{q}_{\mu} \omega\big) \text{ad}_{H,\hat x}^{a_{1}, \ell_{1}}(H) \cdots \text{ad}_{H,\hat x}^{a_{r}, \ell_{r}}(H) \mathcal{W}_{z_{1}}^{(n/2)}   \Big\|_{\text{tr}} \\
&\quad \leq C \varepsilon^{j + \sum_{i} a_{i} - q}  \Big\| \omega (1 - \omega) \mathcal{W}_{z_{1}}^{(n/4)} \Big\|_{\text{tr}} \\
&\quad \leq C \Big\| \omega (1 - \omega)\mathcal{W}_{z_{1}}^{(n/4)} \Big\|_{\text{tr}}\;.
\end{split}
\end{equation}
To estimate the trace norm, we will use Assumption \ref{ass:Weyl}. To do this, we shall use \eqref{eq:oocbd}, so that,
from Assumption \ref{ass:Weyl}:
\begin{equation}\label{eq:wwc}
\Big\| \omega (1 - \omega)\mathcal{W}_{z_{1}}^{(n/4)} \Big\|_{\text{tr}} \leq C\varepsilon^{-2}\;.
\end{equation}
All in all, from (\ref{eq:trop}), (\ref{eq:Afin}), (\ref{eq:Afin2}), (\ref{eq:wwc}), we get:
\begin{equation}\label{eq:Aestfin}
\Big\| \Big( \int_{\mathcal{C}} \frac{dz}{2\pi i} \frac{1}{1 + e^{\beta (z - \mu)}} \mathcal{A}\Big) \mathcal{W}_{z_{2}}^{(n/2)}   \Big\|_{\text{tr}} \leq C\varepsilon^{-2}\;.
\end{equation}
This concludes the analysis of all terms contributing to $E^{(\text{a})}_{r;\ell}$ in (\ref{eq:EaEb}).

\paragraph{Terms of type $\mathcal{B}$.} Let us now discuss the contribution due to the terms of the form (\ref{eq:Eb}). Here, we cannot compute explicitly the $z$ integral. In order to estimate these terms, we choose $q_{0} \geq 2 m$ in (\ref{eq:Eb}), with $m \in \bN$ as defined in Assumption \ref{ass:Weyl}. We start by writing:
\begin{equation}\label{eq:BestC}
\begin{split}
&\Big\|  \Big(\int_{\mathcal{C}} \frac{dz}{2\pi i}\, \frac{1}{1 + e^{\beta (z - \mu)}} \mathcal{B}\Big) \mathcal{W}_{z_{1}}^{(n/2)} \Big\|_{\text{tr}} \\
&\quad \leq \Big\| \Big(\int_{\mathcal{C}_{1}} \frac{dz}{2\pi i}\, \frac{1}{1 + e^{\beta (z - \mu)}} \mathcal{B}\Big) \mathcal{W}_{z_{1}}^{(n/2)} \Big\|_{\text{tr}} + \Big\|  \Big(\int_{\mathcal{C}_{2} \cup \mathcal{C}_{3}} \frac{dz}{2\pi i}\, \frac{1}{1 + e^{\beta (z - \mu)}} \mathcal{B}\Big) \mathcal{W}_{z_{1}}^{(n/2)} \Big\|_{\text{tr}}
\end{split}
\end{equation}
recall the definition of the path $\mathcal{C}$ in Eqs.~(\ref{eq:Cdef0}), (\ref{eq:Cdef}). Using the arbitrariness of $K$ in the definition of $\mathcal{C}$, recall (\ref{eq:Cdef}), the first term on the right-hand side of (\ref{eq:BestC}) can be made arbitrarily small by taking $K$ large enough. Let us now consider the contribution of the terms associated with $\mathcal{C}_{2}$ and $\mathcal{C}_{3}$. Up to errors that can be made arbitrarily small in $K$, we can estimate the second term on the right-hand side of (\ref{eq:BestC}) by:
\begin{equation}\label{eq:BR}
\begin{split}
&\Big\| \Big(\int_{\mathbb{R}} dy\,  \frac{1}{1 + i e^{\beta (y - \mu)}} \mathcal{B}(y + i \pi/2\beta)\Big) \mathcal{W}_{z_{1}}^{(n/2)} \Big\|_{\text{tr}}\\
&\quad + \Big\| \Big(\int_{\mathbb{R}} dy\,  \frac{1}{1 - i e^{\beta (y - \mu)}} \mathcal{B}(y - i \pi/2\beta)\Big) \mathcal{W}_{z_{1}}^{(n/2)} \Big\|_{\text{tr}}\;,
\end{split}
\end{equation}
where we made explicit the $z$-dependence $\mathcal{B} \equiv \mathcal{B}(z)$ due to the resolvents $R_{H}(z)$, recall (\ref{eq:Eb}). Let us consider the first term in \eqref{eq:BR}; the second can be estimated in the same way. Recalling that $\mathcal{B}$ is of the form (\ref{eq:Eb}), it is convenient to write:
\begin{equation*}
\mathcal{B}(z) = R_{H}(z)^{q_{0} + 1} \widetilde{\mathcal{B}}(z)\;,
\end{equation*}
that is, we make explicit the dependence on the leftmost resolvents. We write:
\begin{equation}\label{eq:ABC}
\begin{split}
&\int_{\mathbb{R}} dy\, \frac{1}{1 + i e^{\beta (y - \mu)}} R_{H}(y + i \pi/2\beta)^{q_{0} + 1} \widetilde{\mathcal{B}}(y + i \pi/2\beta) \\
&\quad = \int_{-\infty}^{\mu} dy\, \frac{1}{1 + i e^{\beta (y - \mu)}} R_{H}(y + i \pi/2\beta)^{q_{0} + 1} \widetilde{\mathcal{B}}(y + i \pi/2\beta)\\
&\qquad + \int_{\mu}^{\infty} dy\, \frac{1}{1 + i e^{\beta (y - \mu)}} R_{H}(y + i \pi/2\beta)^{q_{0} + 1} \widetilde{\mathcal{B}}(y + i \pi/2\beta)\\
& \quad = \chi(H\leq \mu)  \int_{-\infty}^{\mu} dy\, \frac{1}{1 + i e^{\beta (y - \mu)}} R_{H}(y + i \pi/2\beta)^{q_{0} + 1} \widetilde{\mathcal{B}}(y + i \pi/2\beta) \\
&\qquad + \chi(H > \mu)  \int_{-\infty}^{\mu} dy\, \frac{1}{1 + i e^{\beta (y - \mu)}} R_{H}(y + i \pi/2\beta)^{q_{0} + 1} \widetilde{\mathcal{B}}(y + i \pi/2\beta) \\
&\qquad + \int_{\mu}^{\infty} dy\, \frac{1}{1 + i e^{\beta (y - \mu)}} R_{H}(y + i \pi/2\beta)^{q_{0} + 1} \widetilde{\mathcal{B}}(y + i \pi/2\beta) =: \text{A} + \text{B} + \text{C}\;.
\end{split}
\end{equation}
Consider the term $\text{A}$. We have:
\begin{equation*}
\begin{split}
\text{A} &= \chi(H \leq \mu)  \int_{-\infty}^{\mu} dy\, \frac{1}{1 + i e^{\beta (y - \mu)}} R_{H}(y + i \pi/2\beta)^{q_{0} + 1} \widetilde{\mathcal{B}}(y + i \pi/2\beta) \\
&= \chi(H \leq \mu)  \int_{-\infty}^{\mu} dy\, R_{H}(y + i \pi/2\beta)^{q_{0} + 1} \widetilde{\mathcal{B}}(y + i \pi/2\beta) \\
&\quad - \chi(H \leq \mu)  \int_{-\infty}^{\mu} dy\, \frac{i e^{\beta (y - \mu)}}{1 + i e^{\beta (y - \mu)}} R_{H}(y + i \pi/2\beta)^{q_{0} + 1} \widetilde{\mathcal{B}}(y + i \pi/2\beta)\;,
\end{split}
\end{equation*}
which we further rewrite as:
\begin{equation}\label{eq:A1A2A3}
\begin{split}
\text{A} &= \chi(H \leq \mu)  \int_{-\infty}^{\infty} dy\, R_{H}(y + i \pi/2\beta)^{q_{0} + 1} \widetilde{\mathcal{B}}(y + i \pi/2\beta) \\
&\quad - \chi(H \leq \mu)  \int_{-\infty}^{\mu} dy\, \frac{i e^{\beta (y - \mu)}}{1 + i e^{\beta (y - \mu)}} R_{H}(y + i \pi/2\beta)^{q_{0} + 1} \widetilde{\mathcal{B}}(y + i \pi/2\beta) \\
&\quad - \chi(H \leq \mu)  \int_{\mu}^{\infty} dy\, R_{H}(y + i \pi/2\beta)^{q_{0} + 1} \widetilde{\mathcal{B}}(y + i \pi/2\beta) =: \text{A}_{1} + \text{A}_{2} + \text{A}_{3}\;.
\end{split}
\end{equation}
Consider $\text{A}_{1}$. Observe that the integrand is analytic in $y$ in the upper half complex plane, it is absolutely integrable in $y$, and it vanishes as $|y|\to \infty$. Thus, by Cauchy formula, one immediately gets that $\text{A}_{1} = 0$. Consider now $\text{A}_{2}$. We are interested in estimating:
\begin{equation}\label{eq:factors}
\begin{split}
&\Big\| \Big( \chi(H \leq \mu) R_{H}(y + i \pi/2\beta)^{q_{0} + 1} \widetilde{\mathcal{B}}(y + i \pi/2\beta)\Big) \mathcal{W}_{z_{1}}^{(n/2)} \Big\|_{\text{tr}} \\
&\quad \leq \Big\| \chi(H \leq \mu) R_{H}(y + i \pi/2\beta)^{q_{0} + 1}  \mathcal{W}_{z_{1}}^{(n/4)} \Big\|_{\text{tr}} \Big\|   \mathcal{W}_{z_{1}}^{(n/4)-1} \widetilde{\mathcal{B}}(y + i \pi/2\beta) \mathcal{W}_{z_{1}}^{(n/2)} \Big\|_{\text{op}}\;.
\end{split}
\end{equation}
The first factor in the right hand side can be estimated as:
\begin{equation}\label{eq:fact1}
\begin{split}
&\Big\|  \chi(H \leq \mu) \frac{1}{(H - y - i \pi/2\beta)^{q_{0} + 1}}  \mathcal{W}_{z_{1}}^{(n/4)} \Big\|_{\text{tr}} \\
&\quad \leq C\beta^{q_{0} + 1} \Big\| \chi(H \leq \mu) \frac{1}{(\beta(H - y))^{q_{0} + 1} + 1}  \mathcal{W}_{z_{1}}^{(n/4)}\Big\|_{\text{tr}}
\end{split}
\end{equation}
by choosing $q_{0} + 1$ even (to avoid introducing absolute values in the denominator). Let us now consider the second factor in (\ref{eq:factors}). Here, we proceed similarly as we did to bound (\ref{eq:trop2}). The only difference is that, according to (\ref{eq:Eb}), the operator $\widetilde{\mathcal{B}}(y + i \pi / 2\beta)$ also contains resolvents $R_H (y+i\pi / 2\beta)$. To handle these factors, however, it is enough to observe that $\| R_H ( y+i\pi / 2\beta) \|_\text{op} \leq C \beta$ and that, for any $v \in \bN$, $v \leq n/4$,   
\[ \| \mathcal{W}^{(v)-1}_{z_2}  R_H (y+i \pi/2\beta) \mathcal{W}^{(v)}_{z_2} \|_\text{op} \leq C \sum_{j=0}^v \| \text{ad}^j_{\hat{x}} (R_H (y+i \pi/2\beta)) \|_\text{op} \leq C \sum_{j=0}^v \beta^{j+1} \varepsilon^j \leq C \varepsilon^{-1} \]
because every commutator with $\hat{x}$ produces a new resolvent (whose norm is proportional to $\beta$) and a new commutator (providing an additional $\varepsilon$). Thus, we get:
\begin{equation}\label{eq:fact2}
\Big\| \mathcal{W}_{z_{1}}^{(n/4)-1} \widetilde{\mathcal{B}}(y + i \pi/2\beta) \mathcal{W}_{z_{1}}^{(n/2)} \Big\|_{\text{op}} \leq C \varepsilon^{j + \sum_{i=1}^{r} a_{i} - \sum_{i=1}^{s} q_{i}}\;,
\end{equation}
where the $-q_{i}$ factors at the exponent come from the resolvents $R_{H}(z)^{q_{i}}$. Plugging the bounds (\ref{eq:fact1}), (\ref{eq:fact2}) in (\ref{eq:factors}), we find:
\begin{equation*}
\begin{split}
&\Big\| \Big( \chi(H \leq \mu) R_{H}(y + i \pi/2\beta)^{q_{0} + 1} \widetilde{\mathcal{B}}(y + i \pi/2\beta)\Big) \mathcal{W}_{z_{1}}^{(n/2)} \Big\|_{\text{tr}} \\
&\leq C \varepsilon^{- 1 + j + \sum_{i=1}^{r} a_{i} - \sum_{i=0}^{s} q_{i}} \Big\|  \chi(H \leq \mu) \frac{1}{(\beta(H - y))^{q_{0} + 1} + 1}  \mathcal{W}_{z_{1}}^{(n/4)}\Big\|_{\text{tr}}\;;
\end{split}
\end{equation*}
using that $j + \sum_{i = 1}^{r} a_{i} -\sum_{i=0}^{s} q_{i} \geq 0$ (recall (\ref{eq:qcond}) and the condition $j \geq r$) and that $\sum_{i=1}^r a_i \leq q_0$, we get
\begin{equation*}
\begin{split}
&\Big\|  \Big( \chi(H \leq \mu) R_{H}(y + i \pi/2\beta)^{q_{0} + 1} \widetilde{\mathcal{B}}(y + i \pi/2\beta)\Big) \mathcal{W}_{z_{1}}^{(n/2)} \Big\|_{\text{tr}} \\
&\quad \leq C\beta \Big\| \chi(H \leq \mu)\frac{1}{(\beta(H - y))^{q_{0} + 1} + 1} \mathcal{W}_{z_{1}}^{(n/4)} \Big\|_{\text{tr}}\;.
\end{split}
\end{equation*}
We are now ready to estimate $\text{A}_{2}$. We have:
\begin{equation}\label{eq:A2est}
\begin{split}
&\Big\|\text{A}_{2} \mathcal{W}_{z_{1}}^{(n/2)}  \Big\|_{\text{tr}} \\
&\quad \leq \int_{-\infty}^{\mu} dy\, e^{\beta (y - \mu)} \Big\| \Big( \chi(H \leq \mu) R_{H}(y + i \pi/2\beta)^{q_{0} + 1} \widetilde{\mathcal{B}}(y + i \pi/2\beta)\Big) \mathcal{W}_{z_{1}}^{(n/2)} \Big\|_{\text{tr}} \\
&\quad \leq C \int_{-\infty}^{\mu} dy\, \beta e^{\beta (y - \mu)} \Big\| \frac{1}{(\beta(H - y))^{q_{0} + 1} + 1}  \mathcal{W}_{z_{1}}^{(n/4)}\Big\|_{\text{tr}} \\
&\quad \leq  C \int_{-\infty}^{\mu} dy\, \beta e^{\beta (y - \mu)} \Big\| \frac{(\beta(H - \mu))^{q_{0} + 1} + 1}{(\beta(H - y))^{q_{0} + 1} + 1}  \Big\|_{\text{op}} \Big\| \frac{1}{(\beta(H - \mu))^{q_{0} + 1} + 1}  \mathcal{W}_{z_{1}}^{(n/4)}\Big\|_{\text{tr}}\;; \\
\end{split}
\end{equation}
using that, for all $\lambda \in \mathbb{R}$:
\begin{equation*}
\begin{split}
\Big| \frac{(\beta(\lambda - \mu))^{q_{0} + 1} + 1}{(\beta(\lambda - y))^{q_{0} + 1} + 1}  \Big| &\leq C_{q_{0}} \Big| \frac{(\beta(\lambda - y))^{q_{0} + 1} + (\beta (y - \mu))^{q_{0}} + 1}{(\beta(\lambda - y))^{q_{0} + 1} + 1} \Big| \\
&\leq C_{q_{0}} (\beta (y - \mu))^{q_{0}}\;,
\end{split}
\end{equation*}
we have:
\begin{equation*}
\begin{split}
\Big\|\text{A}_{2} \mathcal{W}_{z_{1}}^{(n/2)}  \Big\|_{\text{tr}} &\leq K_{q_{0}} \int_{-\infty}^{\mu} dy\, \beta e^{\beta (y - \mu)} (\beta (y - \mu))^{q_{0}}  \Big\| \frac{1}{(\beta(H - \mu))^{q_{0} + 1} + 1}  \mathcal{W}_{z_{1}}^{(n/4)}\Big\|_{\text{tr}} \\
&\leq C \varepsilon^{-2}\;,
\end{split}
\end{equation*}
where in the last bound we used Assumption \ref{ass:Weyl}. Next, consider the term $\text{A}_{3}$ in Eq. (\ref{eq:A1A2A3}).  We have:
\begin{equation*}
\begin{split}
&\Big\| \text{A}_{3} \mathcal{W}_{z_{1}}^{(n/2)}  \Big\|_{\text{tr}} \\
&\quad \leq \int_{\mu}^{\infty}dy\, \Big\|  \Big( \chi(H \leq \mu) R_{H}(y + i \pi/2\beta)^{q_{0} + 1} \widetilde{\mathcal{B}}(y + i \pi/2\beta)\Big) \mathcal{W}_{z_{1}}^{(n/2)} \Big\|_{\text{tr}} \\
&\quad \leq C\beta  \int_{\mu}^{\infty}dy\, \Big\| \chi(H \leq \mu)\frac{1}{(\beta(H - y))^{q_{0} + 1} + 1}  \mathcal{W}_{z_{1}}^{(n/4)}\Big\|_{\text{tr}} \\
&\quad \leq K \int_{\mu}^{\infty} dy\, \frac{\beta}{(\beta (y - \mu))^{2} + 1} \Big\| \frac{1}{(\beta (H-\mu))^{q_{0}-1} + 1}\mathcal{W}_{z_{1}}^{(n/4)} \Big\|_{\text{tr}} \\
&\quad \leq C\varepsilon^{-2}\;.
\end{split}
\end{equation*}
again by Assumption \ref{ass:Weyl}, if we choose $q_0 \geq 2m+1$. This concludes the discussion of the term $\text{A}$ in Eq. (\ref{eq:ABC}). Consider now the term $\text{B}$. We have, similarly to the previous case:
\begin{equation*}
\begin{split}
&\Big\|  \text{B} \mathcal{W}_{z_{1}}^{(n/2)}  \Big\|_{\text{tr}} \\
&\leq \int_{-\infty}^{\mu} dy\, \Big\| \chi(H > \mu)  R_{H}(y + i \pi/2\beta)^{q_{0} + 1} \widetilde{\mathcal{B}}(y + i \pi/2\beta) \mathcal{W}_{z_{1}}^{(n/2)} \Big\|_{\text{tr}} \\
&\leq C \beta\int_{-\infty}^{\mu} dy\, \Big\|  \chi(H > \mu) \frac{1}{(\beta(H - y))^{q_{0} + 1} + 1} \mathcal{W}_{z_{1}}^{(n/4)}\Big\|_{\text{tr}} \\
&\leq K \int_{-\infty}^{\mu} dy\, \frac{\beta}{(\beta(y - \mu))^{2} + 1} \Big\| \frac{1}{(\beta(H - \mu))^{q_{0}-1} + 1} \mathcal{W}_{z_{1}}^{(n/4)} \Big\|_{\text{tr}}\\
&\leq K \varepsilon^{-2}\;.
\end{split}
\end{equation*}
Finally, let us consider the term $\text{C}$ in Eq. (\ref{eq:ABC}). We have:
\begin{equation*}
\begin{split}
&\Big\|  \text{C} \mathcal{W}_{z_{1}}^{(n/2)}  \Big\|_{\text{tr}} \\
&\quad \leq \int_{\mu}^{\infty} dy\, \Big| \frac{1}{1 + i e^{\beta (y - \mu)}}\Big| \Big\| R_{H}(y + i \pi/2\beta)^{q_{0} + 1} \widetilde{\mathcal{B}}(y + i \pi/2\beta) \mathcal{W}_{z_{1}}^{(n/2)} \Big\|_{\text{tr}} \\
&\quad \leq  C \int_{\mu}^{\infty} dy\, \beta e^{-\beta (y - \mu)} \Big\|  \frac{(\beta(H - \mu))^{q_{0} + 1} + 1}{(\beta(H - y))^{q_{0} + 1} + 1} \Big\|_{\text{op}} \Big\| \frac{1}{(\beta(H - \mu))^{q_{0} + 1} + 1} \mathcal{W}_{z_{1}}^{(n/4)}\Big\|_{\text{tr}} \\
&\quad \leq K\varepsilon^{-2}\;.
\end{split}
\end{equation*}
where we proceeded as after (\ref{eq:A2est}). This concludes the discussion of the terms in (\ref{eq:BR}). All in all, we proved that:
\begin{equation}\label{eq:Bestfin}
\Big\| \Big(\int_{\mathcal{C}} \frac{dz}{2\pi i}\, \frac{1}{1 + e^{\beta (z - \mu)}} \mathcal{B}\Big) \mathcal{W}_{z_{1}}^{(n/2)} \Big\|_{\text{tr}} \leq C \varepsilon^{-2}\;.
\end{equation}
\paragraph{Conclusion.} The estimates (\ref{eq:Aestfin}), (\ref{eq:Bestfin}) allow us to control all terms on the right-hand side of (\ref{eq:WoW}).  We conclude that, for $n$ large enough, $p = 2n - 4$, $q_{0} - 1\geq 2m$ (with $m$ as defined in Assumption \ref{ass:Weyl}) there exists $C > 0$ independent of $\Lambda$ such that 
\begin{equation*}
\Big\| \mathcal{W}_{z_{2}}^{(n/2)}  \omega \mathcal{W}_{z_{1}}^{(n)} \Big\|_{\text{tr}} \leq \frac{C\varepsilon^{-3}}{1 + |z_{1} - z_{2}|^{2n-4}}\;.
\end{equation*}
which proves (\ref{eq:sc1}). 
\subsection{Proof of Eq. (\ref{eq:sc1b})} The proof is very similar to the proof of (\ref{eq:sc1}). Here, we shall choose $p=2n$, compare with \eqref{eq:ad1}. The $j=0$ contribution in (\ref{eq:WoW}) is replaced by 
\[ \begin{split}  \frac{C}{1+|z_1 - z_2|^{2n}} \|\sqrt{\omega} \sqrt{1-\omega} \mathcal{W}_{z_1}^{(n/2)} \|_\text{tr} &\leq \frac{C}{1+|z_1 - z_2|^{2n}} \|  \sqrt{\omega} \sqrt{1-\omega} \mathcal{W}_{z_1}^{(n/2)} \|_\text{tr} \\ &\leq \frac{C\veps^{-2}}{1+|z_1 - z_2|^{2n}}\;, \end{split} \]
where we used (\ref{eq:oocbd}) and Assumption \ref{ass:Weyl}. Contributions with $j \geq 1$ can be handled as in the proof of (\ref{eq:sc1}), just replacing the integral representation (\ref{eq:omegaint}) by 
\[ \sqrt{\omega} \sqrt{1-\omega} = \int_{\mathcal{C}} \frac{dz}{2\pi i} \frac{e^{\beta (z-\mu)/2}}{1+ e^{\beta (z-\mu)}} \frac{1}{H-z}  \, .\]

\subsection{Proof of Eq.~(\ref{eq:sc2})}\label{sec:commeipx} We shall only discuss the case $\pi = \omega$, the other cases are analogous. In fact, for the cases $\pi = \sqrt{\omega}$ and $\pi = \sqrt{1 - \omega} $ we use the same integral representation (\ref{eq:omegaint}), where now the Fermi-Dirac function is replaced by:
\begin{equation*}
\sqrt{\frac{1}{1 + e^{\beta (z-\mu)}}} \qquad \text{and, respectively,}\qquad \sqrt{\frac{e^{\beta (z-\mu)}}{1 + e^{\beta (z-\mu)}}} \;.
\end{equation*}
If $|\text{Im} z| \leq \pi/2\beta$, these functions have the same qualitative properties of the Fermi-Dirac function: that is, they decay exponentially fast as $\text{Re}\, z \to \infty$, which is the only property we use. 

To begin, we estimate:
\begin{equation*}
\big \| \W_{z_{2}}^{(n/2)} \big[ e^{i p \cdot \hat{x}}, \omega \big] \W_{z_{1}}^{(n)}\big\|_{\tr} \leq \sum_{\alpha=1,2,3} |p_{\alpha}| \big \| \W_{z_{2}}^{(n/2)} \big[ \hat x_{\alpha}, \omega \big] \W_{z_{1}}^{(n)}\big\|_{\tr}\;.
\end{equation*}
Similarly to (\ref{eq:WoW}), recall also \eqref{eq:ad1}, we have, choosing $p=2n$:
\begin{equation}\label{eq:commsqrt}
\big \| \W_{z_{2}}^{(n/2)} \big[ \hat x_{\alpha}, \omega \big] \W_{z_{1}}^{(n)}\big\|_{\tr} \leq  \frac{K_{n}}{1 + |z_{1} - z_{2}|^{2n}} \sum_{j=1}^{2n+1} \sum_{\alpha} \Big\| \text{ad}^{j;\alpha}_{\hat x}(\omega) \mathcal{W}_{z_{1}}^{(n/2)}\Big\|_{\text{tr}}\;.
\end{equation}
%
Using the integral representation (\ref{eq:omegaint}) for $\omega$, we can proceed exactly as we did for the $j>0$ terms on the r.h.s.~of (\ref{eq:WoW}). We ultimately get, for $n$ large enough, 
\begin{equation*}
\Big\| \text{ad}^{j;\alpha}_{\hat x}(\omega) \mathcal{W}_{z_{1}}^{(n/2)}\Big\|_{\text{tr}} \leq C \varepsilon^{-2}
\end{equation*}
Combined with (\ref{eq:commsqrt}), this proves (\ref{eq:sc2}).


\subsection{Proof of Eq.~(\ref{eq:sc4})} \label{sec:commgrad}
Again, we consider only the case $\pi = \omega$, the other cases being analogous. Proceeding as in (\ref{eq:WoW}), we find, for $p=2n$:
\[ \big\|  \mathcal{W}_{z_2}^{(n/2)} \big[ \veps \nabla, \omega \big] \mathcal{W}_{z_1}^{(n)} \|_\text{tr} \leq \frac{C}{1+|z_1 - z_2|^{2n}} \sum_{j=0}^{2n} \sum_\alpha \big\| \big[ \veps \nabla , \text{ad}^{j;\alpha}_{\hat{x}} (\omega) \big] \mathcal{W}_{z_1}^{(n/2)} \big\|_\text{tr}\;. \] 
With the integral representation (\ref{eq:omegaint}), we need to estimate $\|  [\veps \nabla , \text{ad}^{j;\alpha}_{\hat{x}} (R_H (z))] \mathcal{W}_{z_1}^{(n/2)}\|$, with $R_H (z) = 1/(H-z)$. As in (\ref{eq:RRR}), we can express $\text{ad}^{j;\alpha}_{\hat{x}} (R_H (z))$ as a 
linear combination of terms like 
\[ R_H (z) \text{ad}_{\hat x}^{\ell_{1}}(H) R_{H}(z) \text{ad}_{\hat x}^{\ell_{2}}(H) R_{H}(z) \cdots R_{H}(z) \text{ad}_{\hat x}^{\ell_{r}}(H) R_{H}(z)\;. \]
Since the operators $\text{ad}_{\hat x}^{\ell_{1}}(H) =  \text{ad}_{\hat x}^{\ell_{1}}(H_0)$ commute with $\veps \nabla$, it follows that we can write  $[ \veps \nabla , \text{ad}^{j;\alpha}_{\hat{x}} (\sqrt{\omega})]$ as a linear combination of terms having the form
\[  R_H (z) M_1 R_H (z) M_2 R_{H}(z) \dots M_{r+1} R_H (z) \]
where, among the operators $M_1, \dots , M_{r+1}$, $r$ have the form $\text{ad}^{\ell_i}_{\hat{x}} (H)$ and one has the form $[\veps \nabla, H] = \veps \nabla V$. From here, the analysis is similar to what we did following (\ref{eq:EaEb}), moving resolvents to the left and distinguishing terms with at least $q_0+1$ resolvents at the left end of the string (but possibly other resolvents not at the leftmost place) and terms with all resolvents at the leftmost place. To move resolvents through the operator $\nabla V$, we use that $[H, \nabla V]$ is a bounded operator, and has bounded multiple commutators with $H$ and with $\hat{x}$. We omit further details.

%

\section{On the local semiclassical structure}\label{app:DG}

The goal of this section is to show that our Assumption \ref{ass:Weyl} is actually implied by other results in semiclassical analysis, such as the sharp, pointwise Weyl law. These results are well-known for non-relativistic Schr\"odinger operators in the semiclassical regime, and we believe that the techniques could be used to establish them also in our pseudo-relativistic setting. 

To establish these equivalences, let us preliminarily discuss how the trace-norm bound (\ref{eq:localWeyl}) is implied by a similar estimate in Hilbert-Schmidt norm.

\subsection{Reduction to a Hilbert-Schmidt bound}

\begin{proposition} Let $m,n$ even and $m\leq n/2$. Then, the following bound holds true:
\begin{equation}\label{eq:tr2HS}
\Big\| \frac{1}{(\beta (H-\mu))^{2m} + 1} \mathcal{W}^{(n)}_{z} \Big\|_{\mathrm{tr}} \leq C_{m,n}\Big\| \frac{1}{(\beta (H-\mu))^{m} + 1} \mathcal{W}^{(n/2)}_{z} \Big\|_{\mathrm{HS}}^{2}\;.
\end{equation}
\end{proposition}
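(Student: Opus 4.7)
The plan is to reduce the trace-norm bound to a Hilbert-Schmidt computation via two ingredients: operator monotonicity of the square root to replace $f(H) := ((\beta(H-\mu))^{2m}+1)^{-1}$ by $g(H)^2$ with $g(H) := ((\beta(H-\mu))^m+1)^{-1}$, followed by a H\"older-type factorization $\|AB\|_{\mathrm{tr}} \leq \|A\|_{\mathrm{HS}} \|B\|_{\mathrm{HS}}$ with a carefully chosen insertion of $\mathcal{W}_z^{(n/2)} \cdot (\mathcal{W}_z^{(n/2)})^{-1} = 1$ between the two factors.

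The first step uses the elementary scalar inequality $(1+b)^2 \leq 2(1+b^2)$ for $b \geq 0$, applied to $b = (\beta(H-\mu))^m$ (a non-negative operator since $m$ is even). This yields the operator inequality $f(H) \leq 2 g(H)^2$, hence $f(H)^2 \leq 4 g(H)^4$, and therefore $\mathcal{W}_z^{(n)} f(H)^2 \mathcal{W}_z^{(n)} \leq 4\, \mathcal{W}_z^{(n)} g(H)^4 \mathcal{W}_z^{(n)}$ as positive operators. Operator monotonicity of the square root together with $\|T\|_{\mathrm{tr}} = \tr \sqrt{T^*T}$ gives
\[
\|f(H) \mathcal{W}_z^{(n)}\|_{\mathrm{tr}} \leq 2\, \|g(H)^2 \mathcal{W}_z^{(n)}\|_{\mathrm{tr}}.
\]

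The second step factors $g(H)^2 \mathcal{W}_z^{(n)} = A \cdot B$ with $A := g(H) \mathcal{W}_z^{(n/2)}$ and $B := (\mathcal{W}_z^{(n/2)})^{-1} g(H) \mathcal{W}_z^{(n)}$, which coincide as Hilbert-Schmidt operators on the level of integral kernels. H\"older gives $\|g(H)^2 \mathcal{W}_z^{(n)}\|_{\mathrm{tr}} \leq \|A\|_{\mathrm{HS}} \|B\|_{\mathrm{HS}}$. The first factor $\|A\|_{\mathrm{HS}} = \|g(H) \mathcal{W}_z^{(n/2)}\|_{\mathrm{HS}}$ is exactly the quantity on the right-hand side of the claim. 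For the second factor, by cyclicity of the trace and commutativity of the multiplication operators,
\[
\|B\|_{\mathrm{HS}}^2 = \tr \, g(H)\, \big((\mathcal{W}_z^{(n/2)})^{-1}\mathcal{W}_z^{(n)}\big)^2\, g(H),
\]
and the pointwise bound $\big((\mathcal{W}_z^{(n/2)})^{-1}(y)\mathcal{W}_z^{(n)}(y)\big)^2 = (1+|y-z|^{2n})^2/(1+|y-z|^{4n})^2 \leq 4\, \mathcal{W}_z^{(n/2)}(y)^2$, obtained by squaring $(1+a)^2 \leq 2(1+a^2)$ with $a = |y-z|^{2n}$, yields $\|B\|_{\mathrm{HS}}^2 \leq 4 \|g(H) \mathcal{W}_z^{(n/2)}\|_{\mathrm{HS}}^2$. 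Putting the three bounds together gives the claim with an explicit constant $C_{m,n} = 4$ independent of $m$ and $n$.

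The only delicate point I would need to be careful about is the appearance of the unbounded multiplication operator $(\mathcal{W}_z^{(n/2)})^{-1}$ in the definition of $B$; the computation above shows that $B$ has finite HS norm and hence is well-defined as a Hilbert-Schmidt operator through its kernel $\W_z^{(n/2)}(x)^{-1} g_H(x,y) \W_z^{(n)}(y)$, and the kernel identity $\int A(x,y)B(y,w)\,dy = g(H)^2(x,w)\,\mathcal{W}_z^{(n)}(w)$ guarantees $AB = g(H)^2 \mathcal{W}_z^{(n)}$. Apart from this bookkeeping, the entire argument is algebraic and uses neither any semiclassical input nor Assumption \ref{ass:Weyl}; the latter enters only when one subsequently wants to conclude that the right-hand side of the resulting inequality is itself finite and of the right order in $\varepsilon$.
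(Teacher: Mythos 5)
Your first step --- replacing $f(H)=((\beta(H-\mu))^{2m}+1)^{-1}$ by $g(H)^2$ with $g(H)=((\beta(H-\mu))^{m}+1)^{-1}$ via the scalar inequality $(1+b)^2\le 2(1+b^2)$ and operator monotonicity of the square root --- is fine, as is the H\"older factorization in principle. The gap is in your estimate of $\|B\|_{\mathrm{HS}}$ for $B=(\mathcal{W}_z^{(n/2)})^{-1}g(H)\mathcal{W}_z^{(n)}$. Writing $g(x,y)$ for the kernel of $g(H)$, one has $\|B\|_{\mathrm{HS}}^2=\int dx\,dy\,(\mathcal{W}_z^{(n/2)}(x))^{-2}\,|g(x,y)|^2\,(\mathcal{W}_z^{(n)}(y))^{2}$: the growing weight acts in the variable $x$ and the decaying weight in the variable $y$, so they cannot be combined pointwise. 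Your identity $\|B\|_{\mathrm{HS}}^2=\mathrm{tr}\,g(H)\,((\mathcal{W}_z^{(n/2)})^{-1}\mathcal{W}_z^{(n)})^2\,g(H)$ silently commutes the multiplication operators past $g(H)$, which is not legitimate; the correct expression is $\mathrm{tr}\,\mathcal{W}_z^{(n)}g(H)(\mathcal{W}_z^{(n/2)})^{-2}g(H)\mathcal{W}_z^{(n)}$, and bounding it by $\|g(H)\mathcal{W}_z^{(n/2)}\|_{\mathrm{HS}}^2$ requires precisely the off-diagonal decay of $g(x,y)$ in $|x-y|$ that your argument never establishes.

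This is not a repairable bookkeeping issue, because the inequality is false for a general self-adjoint $H$, so no purely algebraic argument can close it. If $g(H)$ is (close to) a rank-one projection $|\phi\rangle\langle\phi|$ with $\phi=\sqrt{\epsilon}\,\phi_1+\sqrt{1-\epsilon}\,\phi_2$, where $\phi_1$ is localized at $z$ and $\phi_2$ at distance $D$ from $z$, then the left-hand side of the claim is of order $(\epsilon+D^{-8n})^{1/2}$ while the right-hand side is of order $\epsilon+D^{-4n}$; choosing $\epsilon=D^{-4n}$ makes their ratio grow like $D^{2n}$. The paper's proof avoids this by inserting $R_{\beta;H}^{-2k}R_{\beta;H}^{2k}$ --- powers of the resolvent, not of the weight --- between the two copies of $\mathcal{W}_z^{(n)}$, and essentially all of the work consists of the commutator expansion needed to move resolvents past the weight: it is there that the specific form $H=\sqrt{1-\varepsilon^2\Delta}+V_{\mathrm{ext}}$, the boundedness of higher derivatives of $V_{\mathrm{ext}}$, and the relation $\beta\varepsilon=O(1)$ enter. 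Your closing observation that the argument "uses no semiclassical input" should itself have been a warning sign: some locality property of $H$ must be used somewhere.
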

\begin{proof} We define:
\begin{equation*}
R_{\beta;H} := \frac{1}{\beta (H - \mu) + i}\;.
\end{equation*}
Observe that:
\begin{equation}\label{eq:resbd}
\frac{c_{m}}{(\beta (H-\mu))^{2m} + 1} \leq | R_{\beta;H} |^{2m} \leq \frac{C_{m}}{(\beta (H-\mu))^{2m} + 1}\;.
\end{equation}
Then, we have:
\begin{equation*}
\Big\| \frac{1}{(\beta (H - \mu))^{2m} + 1} \mathcal{W}_{z}^{(2n)} \Big\|_{\text{tr}} \leq K_{m} \Big\| R_{\beta;H}^{2m}  \mathcal{W}_{z}^{(2n)} \Big\|_{\text{tr}}\;.
\end{equation*}
We write, for $2k\leq m$: 
\begin{equation}\label{eq:d4}
\begin{split}
\Big\| R_{\beta;H}^{2m}  \mathcal{W}_{z}^{(2n)} \Big\|_{\text{tr}} &\leq \Big\| R_{\beta;H}^{2m}  \mathcal{W}_{z}^{(n)}  \mathcal{W}_{z}^{(n)} \Big\|_{\text{tr}} \\
&\leq \Big\| R_{\beta;H}^{2m}  \mathcal{W}_{z}^{(n)} R_{\beta;H}^{-2k} R_{\beta;H}^{2k}  \mathcal{W}_{z}^{(n)} \Big\|_{\text{tr}} \\
&\leq \Big\| R_{\beta;H}^{2m}  \mathcal{W}_{z}^{(n)} R_{\beta;H}^{-2k} \Big\|_{\text{HS}} \Big\| R_{\beta;H}^{2k}  \mathcal{W}_{z}^{(n)} \Big\|_{\text{HS}}\;.
\end{split}
\end{equation}
The second factor is of the desired form. Consider the first factor. We have:
\begin{equation*}
R_{\beta;H}^{2m}  \mathcal{W}_{z}^{(n)} R_{\beta;H}^{-2k} = R_{\beta;H}^{2m - 1}  \mathcal{W}_{z}^{(n)} R_{\beta;H}^{-2k + 1} + R_{\beta;H}^{2m - 1}  \Big[ R_{\beta;H}, \mathcal{W}_{z}^{(n)}  \Big] R_{\beta;H}^{-2k + 1}\;.
\end{equation*}
The commutator can be written as:
\begin{equation*}
\Big[ R_{\beta;H}, \mathcal{W}_{z}^{(n)}  \Big] = R_{\beta;H} [ \mathcal{W}_{z}^{(n)}, \beta H ] R_{\beta;H}\;,
\end{equation*}
which gives:
\begin{equation*}
R_{\beta;H}^{2m}  \mathcal{W}_{z}^{(n)} R_{\beta;H}^{-2k} = R_{\beta;H}^{2m - 1}  \mathcal{W}_{z}^{(n)} R_{\beta;H}^{-2k + 1} + R_{\beta;H}^{2m} \text{ad}_{\beta H}\big(\mathcal{W}_{z}^{(n)}\big) R_{\beta;H}^{-2k + 1}\;.
\end{equation*}
This formula can be iterated, until all resolvents on the right-hand side disappear. Thus, we find that:
\begin{equation}\label{eq:RRmeno1}
\Big\| R_{\beta;H}^{2m}  \mathcal{W}_{z}^{(n)} R_{\beta;H}^{-2k} \Big\|_{\text{HS}} \leq \sum_{j=0}^{2k} \sum_{\alpha: |\alpha| = 2k-j} C_{\alpha} \Big\| R_{\beta;H}^{2m - j} \text{ad}_{\beta H}^{\alpha}\big(\mathcal{W}_{z}^{(n)}\big) \Big\|_{\text{HS}}\;.
\end{equation}
Consider the multi-commutators of $H$ with $\W_z^{(n)}$. Observe that:
\begin{equation}\label{eq:never}
[ \beta H,  \mathcal{W}_{z}^{(n)}] =  [ \beta H_{0}, \mathcal{W}_{z}^{(n)} ] = \mathcal{W}_{z}^{(n)} [ \beta H_{0}, | \hat x - z |^{4n} ] \mathcal{W}_{z}^{(n)}\;,
\end{equation}
and, using that every commutator with $\hat x$ is bounded and that it introduces a factor $\varepsilon$, which compensates $\beta$:
\begin{equation*}
\Big\| O \mathcal{W}_{z}^{(n)} [ \beta H_{0}, | \hat x - z |^{4n} ] \mathcal{W}_{z}^{(n)} \Big\|_{\text{HS}} \leq C_{n} \Big\|  O \mathcal{W}_{z}^{(n)} \Big\|_{\text{HS}}\;.
\end{equation*}
More generally, to estimate the higher commutators, it is convenient to represent the right-hand side of (\ref{eq:never}) as a linear combination of:
\begin{equation}\label{eq:+comm}
\frac{1}{1 + | x - z |^{4n}} \text{ad}_{\hat x}^{\xi}(\beta H_{0}) \frac{(x-z)^{\gamma}}{1 + | x - z |^{4n}} =: \mathcal{W}_{z}^{(n)} \text{ad}_{\hat x}^{\xi}(\beta H_{0}) \mathcal{W}_{z}^{(n,\gamma)}
\end{equation}
where $|\gamma| + |\xi| = 4n$ and $|\xi| \geq 1$. Using the integral representation of the square root operator, we have, since $\beta \varepsilon = O(1)$:
\begin{equation}\label{eq:d12}
(\ref{eq:+comm}) = K_{\xi} \varepsilon^{|\xi|-1}  \int \frac{d\lambda}{\sqrt{\lambda}}\, \mathcal{W}_{z}^{(n)} \frac{(\varepsilon D)^{\xi}}{(\lambda + 1 - \varepsilon^{2} \Delta)^{|\xi|}}  \mathcal{W}_{z}^{(n,\gamma)}\;.
\end{equation}
Eq. (\ref{eq:d12}) is a good starting point for taking other commutators with $\beta H$. To this end, we observe that: the commutator of $\beta H$ with $\mathcal{W}_{z}^{(n)}$ reproduces the same structure we just obtained; the commutator of $\beta H$ with the differential operator has been already studied after (\ref{eq:Pk}), and it is given by a linear combination of terms of the form (\ref{eq:1com}); it remains to discuss the commutator of $\beta H$ with $\mathcal{W}_{z}^{(n,\gamma)}$. It is:
\begin{equation*}
\begin{split}
[ \beta H, \mathcal{W}_{z}^{(n,\gamma)} ] &= [ \beta H_{0}, \mathcal{W}_{z}^{(n,\gamma)} ] \\
&= [ \beta H_{0}, (x-z)^{\gamma} ] \mathcal{W}_{z}^{(n)} + (x-z)^{\gamma} [ \beta H_{0}, \mathcal{W}_{z}^{(n)}  ]\;.
\end{split}
\end{equation*}
The first term can be rewritten as a linear combination of:
\begin{equation*}
\text{ad}_{\hat x}^{\gamma_{1}}(\beta H_{0}) \mathcal{W}_{z}^{(n, \gamma_{2})}
\end{equation*}
with $\gamma_{1} + \gamma_{2} = \gamma$, while the second term can be written as a linear combination of:
\begin{equation*}
\mathcal{W}_{z}^{(n, \gamma)} \text{ad}_{\hat x}^{\gamma_{3}} (\beta H_{0})  \mathcal{W}_{z}^{(n, \gamma_{4})}
\end{equation*}
with $|\gamma_{3}| + |\gamma_{4}| = 4n$ and $|\gamma_{3}| \geq 1$. To summarize, we obtained that $\text{ad}_{\beta H}^{\alpha}\big(\mathcal{W}_{z}^{(n)}\big)$ with $|\alpha| = 2$ can be expressed as a string of operators of the form:
\begin{equation*}
\mathcal{W}_{z}^{(n)} O_{1} \cdots O_{L}\;,
\end{equation*}
where the only possibly unbounded operator arising is $O_{j} = D V$, while all the other operators are bounded. The unboundedness of $DV$ is controlled using the localization operator; this is done by observing that the commutators of $D V$ with all possible operators generated by the iteration are bounded. Thus, all in all we obtain that the generic term contributing to the right-hand side of (\ref{eq:RRmeno1}) is bounded as:
\begin{equation*}
\Big\| R_{\beta;H}^{2m - j} \text{ad}_{\beta H}^{\alpha}\big(\mathcal{W}_{z}^{(n)}\big) \Big\|_{\text{HS}} \leq \sum_{i=0}^{2k-j} C_{i} \Big\| R_{\beta;H}^{2m - j} \mathcal{W}_{z}^{(n)} (D V)^{i} \Big\|_{\text{HS}}\;;
\end{equation*}
hence, since $2k\leq m$ and $m\leq n/2$ we obtain:
\begin{equation*}
\Big\| R_{\beta;H}^{2m}  \mathcal{W}_{z}^{(n)} R_{\beta;H}^{-2k} \Big\|_{\text{HS}} \leq C_{m,n,k} \Big\| R_{\beta;H}^{m} \mathcal{W}_{z}^{(n/2)} \Big\|_{\text{HS}}\;.
\end{equation*}
Coming back to (\ref{eq:d4}), we have:
\begin{equation*}
\begin{split}
\Big\| R_{\beta;H}^{2m}  \mathcal{W}_{z}^{(2n)} \Big\|_{\text{tr}} &\leq C \Big\| R_{\beta;H}^{m} \mathcal{W}_{z}^{(n/2)} \Big\|_{\text{HS}}\Big\| R_{\beta;H}^{2k}  \mathcal{W}_{z}^{(n)} \Big\|_{\text{HS}} \\
&\leq C\Big\| R_{\beta;H}^{m}  \mathcal{W}_{z}^{(n/2)} \Big\|_{\text{HS}}^{2}\;.
\end{split}
\end{equation*}
Equivalently, by (\ref{eq:resbd}), choosing $m$ even:
\begin{equation*}
\begin{split}
\Big\| R_{\beta;H}^{2m}  \mathcal{W}_{z}^{(2n)} \Big\|_{\text{tr}} &\leq K \Big\| \Big(\frac{1}{(\beta (H-\mu))^{2m} + 1}\Big)^{1/2} \mathcal{W}_{z}^{(n/2)} \Big\|_{\text{HS}}^{2} \\
&\leq \Big\| \frac{1}{(\beta (H-\mu))^{m} + 1} \mathcal{W}_{z}^{(n/2)} \Big\|_{\text{HS}}^{2}\;.
\end{split}
\end{equation*}
This concludes the check of (\ref{eq:tr2HS}). 
\end{proof}

\subsection{Relation with the pointwise Weyl law}\label{sec:pointWeyl}
Let us define:
\begin{equation*}
\begin{split}
I_{V,\mu} &:= \varepsilon^{-3}\int dxdp\, (\mathcal{W}_{z}^{(n/2)}(x))^{2}  \mathbbm{1}\Big(\sqrt{1 + |p|^{2}} + V_{\text{ext}}(x) \leq \mu\Big) \\
&= \varepsilon^{-3} \frac{4\pi}{3} \int dx\, (\mathcal{W}_{z}^{(n/2)}(x))^{2} \big( ( \mu - V_{\text{ext}}(x) )^{2}_{+} - 1 \big)_{+}^{\frac{3}{2}}\;.
\end{split}
\end{equation*}
The next proposition establishes the connection between the local Weyl law and the validity of Assumption \ref{ass:Weyl}.
\begin{proposition}\label{eq:LWimplies} Suppose that there exists $0<\alpha<1$ such that for all $\nu \in [\mu - 2\varepsilon^{\alpha}; \mu + 2\varepsilon^{\alpha}]$:
\begin{equation}\label{eq:locWeyl}
\tr\, \mathcal{W}_{z}^{(n/2)} \mathbbm{1}(H \leq \nu) \mathcal{W}_{z}^{(n/2)} = I_{V,\nu} + R_{V,\nu}\;,
\end{equation}
with $|R_{V,\nu}|\leq K \varepsilon^{-2}$. Then, for $m, n$ large enough, and $\varepsilon$ small enough:
\begin{equation}\label{eq:HSbound}
\Big\| \frac{1}{(\beta (H-\mu))^{m} + 1} \mathcal{W}_{z}^{(n/2)} \Big\|_{\mathrm{HS}}^{2} \leq C_{m,n} (V_{\text{ext}}(z)^{2} + 1) \varepsilon^{-2}
\end{equation}
for a suitable constant $C_{m,n} \equiv C_{m,n}(K)$.
\end{proposition}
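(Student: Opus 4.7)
The plan is to convert the squared Hilbert--Schmidt norm into a Stieltjes integral against the spectral counting function
\[
N(\nu) := \tr\,\mathcal{W}_z^{(n/2)}\chi(H\le\nu)\mathcal{W}_z^{(n/2)},
\]
and exploit the hypothesis (\ref{eq:locWeyl}) on the window $|\nu-\mu|\le 2\varepsilon^{\alpha}$. Assuming $m$ even so that the operator is positive, the spectral theorem gives
\[
\Big\| \frac{1}{(\beta(H-\mu))^m+1}\mathcal{W}_z^{(n/2)} \Big\|_{\mathrm{HS}}^2 \;=\; \int_0^\infty f(\nu)\,dN(\nu),\qquad f(\nu):=\frac{1}{\big((\beta(\nu-\mu))^m+1\big)^2}.
\]
Since $V_{\mathrm{ext}}\ge 0$ and $H\ge 1$, one has $N(\nu)\le C\varepsilon^{-3}(1+|\nu|)^3$ by comparison with $\sqrt{1-\varepsilon^{2}\Delta}$, and $N(0)=0$. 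For $m$ large enough, integration by parts has vanishing boundary terms and yields $\int f\,dN = -\int f'(\nu) N(\nu)\,d\nu$; we split the integral at $|\nu-\mu|=\varepsilon^{\alpha}$.

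In the far region $|\nu-\mu|>\varepsilon^{\alpha}$, the trivial bound on $N$ suffices: since $\beta=O(\varepsilon^{-1})$ and $|f'(\nu)|\le C\beta^{-2m}|\nu-\mu|^{-(2m+1)}$ for $|\beta(\nu-\mu)|\gg 1$, splitting further at $|\nu-\mu|=1$ produces contributions of sizes $\varepsilon^{2m(1-\alpha)-3}$ and $\varepsilon^{2m-3}$, both $O(\varepsilon^{-2})$ once $m\ge \lceil 1/(2(1-\alpha))\rceil$. In the near region $|\nu-\mu|\le\varepsilon^{\alpha}$, we substitute $N(\nu)=I_{V,\nu}+R_{V,\nu}$. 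The remainder contribution is handled by $K\varepsilon^{-2}\int|f'|\,d\nu=2K\varepsilon^{-2}$. For the leading term, the symmetry of $f$ about $\nu=\mu$ (valid because $m$ is even) gives $\int_{|\nu-\mu|\le\varepsilon^{\alpha}}f'(\nu)\,d\nu=0$, so
\[
\Big|\int_{|\nu-\mu|\le\varepsilon^{\alpha}}\!\!f'(\nu)I_{V,\nu}\,d\nu\Big|=\Big|\int_{|\nu-\mu|\le\varepsilon^{\alpha}}\!\!f'(\nu)(I_{V,\nu}-I_{V,\mu})\,d\nu\Big|\le \sup_{|\nu-\mu|\le\varepsilon^{\alpha}}|\partial_\nu I_{V,\nu}|\cdot\int|f'(\nu)|\,|\nu-\mu|\,d\nu,
\]
and a change of variables $u=\beta(\nu-\mu)$ bounds the last integral by $C_m/\beta\sim C_m\varepsilon$. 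What remains is to prove the pointwise estimate $|\partial_\nu I_{V,\nu}|\le C\varepsilon^{-3}(1+V_{\mathrm{ext}}(z)^2)$ uniformly for $|\nu-\mu|\le 1$.

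Differentiating the explicit formula for $I_{V,\nu}$ gives $|\partial_\nu I_{V,\nu}|\le C\varepsilon^{-3}\int(\mathcal{W}_z^{(n/2)}(x))^2(1+V_{\mathrm{ext}}(x)^2)\,dx$; Taylor expanding $V_{\mathrm{ext}}(x)$ around $z$ and using the rapid decay of $\mathcal{W}_z^{(n/2)}$ reduces this to a bound in terms of $V_{\mathrm{ext}}(z)^2$ and $|\nabla V_{\mathrm{ext}}(z)|^2$. The key a priori inequality $|\nabla V_{\mathrm{ext}}(z)|^2\le C V_{\mathrm{ext}}(z)$, valid because $V_{\mathrm{ext}}\ge 0$ has globally bounded Hessian (Taylor-expand along $-\nabla V_{\mathrm{ext}}(z)$ and optimize, using positivity of $V_{\mathrm{ext}}$), then yields the desired dependence. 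The main obstacle, in my view, is precisely this last step: one must extract only $V_{\mathrm{ext}}(z)^2$ on the right-hand side, with no spurious polynomial factor in $|z|$ or $|\nabla V_{\mathrm{ext}}(z)|$, to preserve uniformity in $z$. Once this pointwise estimate is in place, the remaining assembly is a fairly routine Stieltjes integration-by-parts scheme, and collecting constants produces $C_{m,n}(K)$ with the stated structure.
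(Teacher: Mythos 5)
Your argument is correct in outline but takes a genuinely different route from the paper. The paper splits the operator into the spectral regions $|H-\mu|>\varepsilon^{\alpha}$ and $|H-\mu|\le\varepsilon^{\alpha}$, handles the far region by trading powers of the resolvent for powers of $\varepsilon^{1-\alpha}$ and then reducing to the free Hamiltonian via Feynman--Kac, and handles the near region by a dyadic decomposition in energy at scales $2^{k}\varepsilon$, applying the local Weyl law once per dyadic shell and summing against the weight $2^{-km}$. Your Stieltjes integration by parts packages that dyadic sum into a single integral $-\int f'(\nu)N(\nu)\,d\nu$, and the symmetry trick $\int_{|\nu-\mu|\le\varepsilon^{\alpha}}f'=0$ plays the role of the paper's difference estimate $I_{V,\mu+2^{k}\varepsilon}-I_{V,\mu-2^{k}\varepsilon}\le C\,2^{k}\varepsilon^{-2}(V_{\text{ext}}(z)^{2}+1)$; this is somewhat cleaner and avoids the shell-by-shell bookkeeping. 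Two remarks. First, the a priori bound $N(\nu)\le C\varepsilon^{-3}(1+|\nu|)^{3}$ does not follow from operator comparison with $\sqrt{1-\varepsilon^{2}\Delta}$ alone, since $H\ge H_{0}$ does not give monotonicity of the weighted trace $\tr\,\mathcal{W}_z^{(n/2)}\chi(H\le\nu)\mathcal{W}_z^{(n/2)}$; the clean fix is $\chi(H\le\nu)\le(1+\nu)^{2k}(1+H)^{-2k}$ followed by the Feynman--Kac bound $\tr\,\mathcal{W}_z^{(n/2)}(1+H)^{-2k}\mathcal{W}_z^{(n/2)}\le C\varepsilon^{-3}$ (the paper's bound preceding its far-region estimate). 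This yields a polynomial of degree $2k$ rather than $3$, which is all your far-region computation needs once $m$ is taken large relative to $k$. Second, the step you single out as the main obstacle is in fact immediate: since $V_{\text{ext}}\ge 0$, one has $(\nu-V_{\text{ext}}(x))_{+}\le\nu\le\mu+2\varepsilon^{\alpha}$, so differentiating the explicit formula gives $|\partial_{\nu}I_{V,\nu}|\le C\varepsilon^{-3}\int(\mathcal{W}_z^{(n/2)})^{2}(\nu-V_{\text{ext}})_{+}^{2}\,dx\le C\varepsilon^{-3}$ with no $z$-dependence at all; your fallback inequality $|\nabla V_{\text{ext}}|^{2}\le C V_{\text{ext}}$ for a nonnegative function with bounded Hessian is correct but not needed, and the factor $V_{\text{ext}}(z)^{2}+1$ in the conclusion is simply an allowance the statement makes, not something your estimate must saturate.
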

\begin{remark}
\begin{itemize}
\item[(i)] The bound (\ref{eq:HSbound}), combined with (\ref{eq:tr2HS}), shows that Assumption \ref{ass:Weyl} holds true with a constant $C\equiv C_{\kappa}$ for all $z$ such that $|V_{\text{ext}}(z)| \leq \kappa$.
\item[(ii)] It is not difficult to check (\ref{eq:locWeyl}) for $V_{\text{ext}} = 0$. More generally, this type of sharp asymptotics is known to hold for non-relativistic Schroedinger operators, see {\it e.g.} \cite{DG0}. It would be interesting to establish it for pseudo-relativistic Schroedinger operators with $V_{\text{ext}} \neq 0$, with a constant $K$ uniform in the size of the classically confined region associated with $V_{\text{ext}}$.
\end{itemize}
\end{remark}
\begin{proof} Let $\alpha > 0$ as in the assumption of the proposition. We start by estimating:
\begin{equation}\label{eq:minmag}
\begin{split}
&\Big\| \frac{1}{(\beta (H-\mu))^{m} + 1} \mathcal{W}_{z}^{(n/2)} \Big\|_{\text{HS}} \\
&\qquad \leq \Big\| \frac{ \mathbbm{1}(|H-\mu| > \varepsilon^{\alpha})}{(\beta (H-\mu))^{m} + 1} \mathcal{W}_{z}^{(n/2)} \Big\|_{\text{HS}} + \Big\| \frac{ \mathbbm{1}(|H-\mu| \leq \varepsilon^{\alpha})}{(\beta (H-\mu))^{m} + 1} \mathcal{W}_{z}^{(n/2)} \Big\|_{\text{HS}}\;.
\end{split}
\end{equation}
Consider the first term. Using that $\beta = O(\varepsilon^{-1})$, we have, for $m-q$ even:
\begin{equation}\label{eq:eq}
\Big\| \frac{ \mathbbm{1}(|H-\mu| > \varepsilon^{\alpha})}{(\beta (H-\mu))^{m} + 1} \mathcal{W}_{z}^{(n/2)} \Big\|_{\text{HS}} \leq C\varepsilon^{(1 - \alpha) q} \Big\| \frac{1}{(\beta (H-\mu))^{m-q} + 1} \mathcal{W}_{z}^{(n/2)} \Big\|_{\text{HS}}\;.
\end{equation}
Let us now bound the Hilbert-Schmidt norm on the right-hand side. We estimate, for $K$ large enough:
\begin{equation*}
\begin{split}
\Big\| \Big( \frac{1}{(\beta (H - \mu))^{m-q} + 1} \Big)  \mathcal{W}_{z}^{(n/2)} \Big\|_{\text{HS}}&\leq K  \Big\|  \Big( \frac{1}{(H-\mu)^{m-q} + K} \Big)  \mathcal{W}_{z}^{(n/2)} \Big\|_{\text{HS}} \\
&\leq \widetilde K  \Big\|  \Big( \frac{1}{H^{m-q} + K/2} \Big)\mathcal{W}_{z}^{(n/2)}\Big\|_{\text{HS}} \\
&\leq C \Big\| \frac{1}{(H + 1)^{m-q}} \mathcal{W}_{z}^{(n/2)} \Big\|_{\text{HS}}\;.
\end{split}
\end{equation*}
%
%
%
Using the representation
\begin{equation*}
 \frac{1}{(H + 1)^{2(m-q)}} = c_{2(m-q)} \int_{0}^{\infty} dt\, e^{-tH} e^{-t} t^{2(m-q)-1}\;,
\end{equation*}
and applying the Feynman-Kac formula, Proposition \ref{prop: Feynamn-Kac formula}, we get, proceeding as in Appendix \ref{sec: appen-apriori} to deal with the unboundedness of $V_{\text{ext}}$:
\begin{equation*}
\begin{split}
\tr\, \mathcal{W}^{(n/2)}_{z} \frac{1}{(H + 1)^{2(m-q)}} \mathcal{W}^{(n/2)}_{z} &= c_{2(m-q)} \int_{0}^{\infty} dt\, \langle \mathcal{W}^{(n/2)}_{z}, e^{-tH} \mathcal{W}^{(n/2)}_{z} \rangle e^{-t} t^{2(m-q)-1} \\
&\leq c_{2(m-q)} \int_{0}^{\infty} dt\, \langle \mathcal{W}^{(n/2)}_{z}, e^{-tH_{0}} \mathcal{W}^{(n/2)}_{z} \rangle e^{-t} t^{2(m-q)-1} \\
&= \tr\, \mathcal{W}^{(n/2)}_{z} \frac{1}{(H_{0} + 1)^{2(m-q)}} \mathcal{W}^{(n/2)}_{z}\;.
\end{split}
\end{equation*}
Therefore, for $m-q$ large enough:
\begin{equation}\label{eq:WFKbd}
\tr\, \mathcal{W}^{(n/2)}_{z} \frac{1}{(H + 1)^{2(m-q)}} \mathcal{W}^{(n/2)}_{z} \leq \tr\, \mathcal{W}^{(n/2)}_{z} \frac{1}{(H_{0} + 1)^{2(m-q)}} \mathcal{W}^{(n/2)}_{z} \leq C\varepsilon^{-3}\;,
\end{equation}
where the last inequality follows from an explicit computation. Thus, for $q$ large enough, from (\ref{eq:WFKbd}), (\ref{eq:eq}) we have:
\begin{equation}\label{eq:largeen}
\Big\| \frac{ \mathbbm{1}(|H-\mu| > \varepsilon^{\alpha})}{(\beta (H-\mu))^{m} + 1} \mathcal{W}_{z}^{(n/2)} \Big\|_{\text{HS}} \leq C\varepsilon^{-2}.
\end{equation}
Consider now the second term in (\ref{eq:minmag}). To estimate it, we use the local Weyl law (\ref{eq:locWeyl}) combined with a dyadic argument. We write:
\begin{equation*}
\begin{split}
\mathbbm{1}(|H-\mu| \leq \varepsilon^\alpha) &\leq \sum^{k_{*}}_{k\geq 1} \mathbbm{1}(2^{k-1}\varepsilon \leq |H-\mu| \leq 2^{k} \varepsilon) +  \mathbbm{1}(|H-\mu| \leq \varepsilon) \equiv \sum^{k_{*}}_{k\geq 0} f_{k}(H-\mu)\;,
\end{split}
\end{equation*}
where $k_{*}$ is an integer such that $\varepsilon^{\alpha} \leq 2^{k_{*}} \varepsilon \leq 2\varepsilon^{\alpha}$. Therefore, we have:
\begin{equation}\label{eq:afterdyadic}
\begin{split}
\Big\| \frac{ \mathbbm{1}(|H-\mu| \leq \varepsilon^{\alpha})}{(\beta (H-\mu))^{m} + 1} \mathcal{W}_{z}^{(n/2)} \Big\|_{\text{HS}} &\leq \sum_{k\geq 0}^{k_{*}} \Big\| \frac{ f_{k}(H-\mu)}{(\beta (H-\mu))^{m} + 1} \mathcal{W}_{z}^{(n/2)} \Big\|_{\text{HS}} \\
&\leq \sum_{k\geq 0}^{k_{*}} C^{m} 2^{-(k-1)m}  \Big\| f_{k}(H-\mu) \mathcal{W}_{z}^{(n/2)} \Big\|_{\text{HS}}\;.
\end{split}
\end{equation}
Consider the term with $k=0$. We have:
\begin{equation*}
\begin{split}
\Big\| f_{0}(H-\mu) \mathcal{W}_{z}^{(n/2)} \Big\|_{\text{HS}}^{2} &= \Big\| \mathbbm{1}(H \leq \mu + \varepsilon) \mathcal{W}_{z}^{(n/2)} \Big\|_{\text{HS}}^{2} -  \Big\| \mathbbm{1}(H < \mu -\varepsilon) \mathcal{W}_{z}^{(n/2)} \Big\|_{\text{HS}}^{2} \\
&= \tr\, \mathcal{W}_{z}^{(n/2)} \big( \mathbbm{1}(H \leq \mu + \varepsilon) - \mathbbm{1}(H \leq \mu - \varepsilon) \big) \mathcal{W}_{z}^{(n/2)}\;;
\end{split}
\end{equation*}
Using the bound
\begin{equation}\label{eq:I-I}
\begin{split}
I_{V,\mu+\varepsilon} - I_{V,\mu-\varepsilon} &\leq C\varepsilon^{-2} \int dx\, (\mathcal{W}_{z}^{(n/2)}(x))^{2} (V(x)^{2} + 1) \\
&\leq K \varepsilon^{-2} (V(z)^{2} + 1)\;,
\end{split}
\end{equation}
and applying the local Weyl law (\ref{eq:locWeyl}), we find 
\begin{equation}\label{eq:f0bd}
\begin{split}
\Big\| f_{0}(H-\mu) \mathcal{W}_{z}^{(n/2)} \Big\|_{\text{HS}}^{2} &\leq I_{V,\mu+\varepsilon} - I_{V,\mu-\varepsilon} + C\varepsilon^{-2} \leq K \varepsilon^{-2} (V(z)^{2} + 1) + C\varepsilon^{-2}\;.
\end{split}
\end{equation}
Consider now the terms with $k\neq 0$ in (\ref{eq:afterdyadic}). Using that $f_{k}(H-\mu) = \mathbbm{1}(|H-\mu| \leq 2^{k}\varepsilon) - \mathbbm{1}(|H-\mu| < 2^{k-1}\varepsilon)$, we estimate:
\begin{equation*}
\begin{split}
\Big\| f_{k}(H-\mu) \mathcal{W}_{z}^{(n/2)} \Big\|_{\text{HS}} &\leq \Big\|  \mathbbm{1}(|H-\mu| \leq 2^{k}\varepsilon) \mathcal{W}_{z}^{(n/2)} \Big\|_{\text{HS}} + \Big\|  \mathbbm{1}(|H-\mu| < 2^{k-1}\varepsilon) \mathcal{W}_{z}^{(n/2)} \Big\|_{\text{HS}} \\
&\leq 2 \Big\|  \mathbbm{1}(|H-\mu| \leq 2^{k}\varepsilon) \mathcal{W}_{z}^{(n/2)} \Big\|_{\text{HS}}\;;
\end{split}
\end{equation*}
the right-hand side is bounded as in the case $k=0$, with the only difference that $\varepsilon$ is multiplied by $2^{k}$, with $k\leq k_{*}$. Thus, we obtained:
\begin{equation}\label{eq:fkbd}
\Big\| f_{k}(H-\mu) \mathcal{W}_{z}^{(n/2)} \Big\|_{\text{HS}} \leq 2 K \varepsilon^{-2} 2^{k} (V(z)^{2} + 1) + 2C\varepsilon^{-2}\;.
\end{equation}
In conclusion, plugging the bounds in (\ref{eq:f0bd}), (\ref{eq:fkbd}) in (\ref{eq:afterdyadic}), we find:
\begin{equation}\label{eq:smallen}
\begin{split}
\Big\| \frac{ \mathbbm{1}(|H-\mu| \leq \varepsilon^{\alpha})}{(\beta (H-\mu))^{m} + 1} \mathcal{W}_{z}^{(n/2)} \Big\|_{\text{HS}} &\leq \sum_{k\geq 0}^{k_{*}} C^{m} 2^{-(k-1)m} \big( 2 K \varepsilon^{-2} 2^{k} (V(z)^{2} + 1) + 2C\varepsilon^{-2}\big) \\
&\leq K_{m} (V(z)^{2} + 1) \varepsilon^{-2}\;.
\end{split}
\end{equation}
Putting together (\ref{eq:largeen}), (\ref{eq:smallen}),
\begin{equation*}
\Big\| \frac{1}{(\beta (H-\mu))^{m} + 1} \mathcal{W}_{z}^{(n/2)} \Big\|_{\text{HS}}^{2} \leq C_{m} (V(z)^{2} + 1) \varepsilon^{-2}\;,
\end{equation*}
which concludes the proof of (\ref{eq:HSbound}).
\end{proof}
\subsection{Relation with other local semiclassical estimates}

We conclude the appendix by showing that the bound (\ref{eq:HSbound}) is implied by analogous estimates, which can be found in the literature for non-relativistic Schroedinger operators \cite{DG}. 
\begin{proposition}\label{prp:fg} Let $f$ and $g$ be smooth compactly supported functions, and let $f_{\gamma}(\cdot) = f(\cdot / \gamma)$. Let $\mathcal{A}_{V,\mu}$ be the classically confined region:
\begin{equation*}
\mathcal{A}_{V,\mu} := \Big\{ x\in \mathbb{R} \, \big|\, V(x) - \mu < 0 \Big\}\;.
\end{equation*}
Suppose that the support of $f(\cdot)$ is inside the classically confined region. Let $z\in \mathbb{R}^{3}$ such that:
\begin{equation}\label{eq:zasin}
z\in \mathcal{A}_{V,\mu}\;,\qquad \mathrm{dist}(z, \mathcal{A}_{V,\mu}^{\text{c}}) \geq \varepsilon^{-\delta}
\end{equation}
with $\delta > 0$. Suppose that, for $\gamma \geq \varepsilon$:
\begin{equation}\label{eq:fg}
\| g(\hat x) f_{\gamma}(H - \mu) \|_{\mathrm{HS}}^{2} \leq C \varepsilon^{-3} \gamma\;.
\end{equation}
Then, for $n,m$ large enough:
\begin{equation*}
\Big\| \frac{1}{(\beta (H-\mu))^{m} + 1} \mathcal{W}_{z}^{(n/2)} \Big\|_{\mathrm{HS}}^{2} \leq C_{n,m}\varepsilon^{-2}\;.
\end{equation*}
\end{proposition}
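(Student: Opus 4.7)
The plan is to combine a dyadic decomposition in energy of $((\beta(H-\mu))^m+1)^{-2}$ with a lattice spatial partition of unity, so as to reduce the estimate on the polynomially-decaying weight $\W_z^{(n/2)}$ to hypothesis \eqref{eq:fg} near $z$ and to a crude Feynman-Kac bound on the far spatial tail.

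First, I would pick a nonnegative $\chi \in C^\infty_c(\R)$ generating a smooth partition of unity $\sum_{k \geq 0}\chi_k = 1$ with $\chi_k(\cdot) = \chi(\cdot/(2^k\veps))$, $\chi_k$ supported in $\{2^{k-1}\veps \leq |E| \leq 2^{k+1}\veps\}$ for $k \geq 1$ and $\chi_0$ in $\{|E| \leq 2\veps\}$. Since $\beta\veps = O(1)$, on $\mathrm{supp}\,\chi_k$ one has $(\beta E)^m + 1 \geq c(2^{km}+1)$, so the pointwise, hence operator, inequality $((\beta(H-\mu))^m+1)^{-2} \leq C\sum_k (2^{km}+1)^{-2}\chi_k(H-\mu)$ holds. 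Combining this with a smooth quadratic spatial partition $\sum_{j\in\bZ^3}\psi_j^2 = 1$, $\psi_j(\cdot) = \psi(\cdot-j)$, $\psi \in C^\infty_c(\R^3)$, and the pointwise estimate $(\W_z^{(n/2)})^2(x)\psi_j^2(x) \leq C(1+|j-z|^{2n})^{-2}\psi_j^2(x)$, via cyclicity of the trace and the basic inequality $\tr[AC] \leq \tr[\tilde{A}C]$ valid for $0 \leq A \leq \tilde A$ and $C \geq 0$, one arrives at
\[
\Big\|\frac{\W_z^{(n/2)}}{(\beta(H-\mu))^m+1}\Big\|_{\text{HS}}^2 \leq C\sum_{k\geq 0}\frac{1}{(2^{km}+1)^2}\sum_{j\in\bZ^3}\frac{\|\psi_j\sqrt{\chi_k}(H-\mu)\|_{\text{HS}}^2}{(1+|j-z|^{2n})^2}.
\]

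The next step is to split the $j$-sum at $|j-z| = \veps^{-\delta}/2$. For $|j-z| \leq \veps^{-\delta}/2$, condition \eqref{eq:zasin} ensures $\mathrm{supp}\,\psi_j \subset \mathcal{A}_{V,\mu}$. Choosing a fixed smooth compactly supported $f$ with $f^2 \geq \chi$ pointwise, so that $f_{2^k\veps}^2 \geq \chi_k$, hypothesis \eqref{eq:fg} applied with $g = \psi_j$ (a translate of a fixed $\psi$, so the constant is uniform in $j$) and $\gamma = 2^k\veps \geq \veps$ yields
\[
\|\psi_j\sqrt{\chi_k}(H-\mu)\|_{\text{HS}}^2 \leq \|\psi_j f_{2^k\veps}(H-\mu)\|_{\text{HS}}^2 \leq C\,2^k\veps^{-2}.
\]
The $j$-sum is $O(1)$ for $n \geq 1$ and the $k$-sum $\sum_k 2^k(2^{km}+1)^{-2}$ converges for $m \geq 1$, yielding a near-$z$ contribution bounded by $C\veps^{-2}$.

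For the far tail $|j-z| > \veps^{-\delta}/2$, hypothesis \eqref{eq:fg} is no longer available since $\mathrm{supp}\,\psi_j$ may exit $\mathcal{A}_{V,\mu}$. I would instead combine the operator bound $\chi_k(H-\mu) \leq C(1+2^k\veps)^{2M}(H+1)^{-2M}$, valid for any $M\in\N$ since $H \geq 1$ and $\chi_k$ is supported in $|E-\mu| \leq 2^{k+1}\veps$, with the Feynman-Kac domination of Corollary \ref{cor:FK}, which gives $\tr[\psi_j^2(H+1)^{-2M}] \leq \tr[\psi_j^2(H_0+1)^{-2M}] \leq C\veps^{-3}$ for $M$ sufficiently large. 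Hence $\|\psi_j\sqrt{\chi_k}(H-\mu)\|_{\text{HS}}^2 \leq C(1+2^k\veps)^{2M}\veps^{-3}$. The spatial tail sum satisfies $\sum_{|j-z|>\veps^{-\delta}/2}(1+|j-z|^{2n})^{-2} \leq C\veps^{\delta(4n-3)}$, and the energy sum $\sum_k(1+2^k\veps)^{2M}(2^{km}+1)^{-2}$ is $O(1)$ for $m > M$; this bounds the far contribution by $C\veps^{-3+\delta(4n-3)} \leq C\veps^{-2}$, provided $n$ is chosen so that $(4n-3)\delta \geq 1$. The hard part will be precisely this far-tail step, where \eqref{eq:fg} cannot be invoked outside the classically confined region and the Feynman-Kac replacement is too crude by a factor $\veps^{-1}$: this loss has to be absorbed by the polynomial decay of $\W_z^{(n/2)}$ over distance $\veps^{-\delta}$, which forces $n$ to be taken sufficiently large depending on $\delta$.
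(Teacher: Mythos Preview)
Your proposal is correct and follows essentially the same strategy as the paper: a dyadic energy decomposition combined with a spatial near/far split, where the near part is handled by hypothesis \eqref{eq:fg} and the far part by the crude Feynman--Kac bound absorbed by the decay of $\W_z^{(n/2)}$. The paper carries out the spatial decomposition via radial dyadic annuli around $z$ (further broken into unit balls) rather than your lattice partition $\{\psi_j\}$, and it organizes the argument by first reducing to the per-scale bound $\|f_k(H-\mu)\W_z^{(n/2)}\|_{\mathrm{HS}}^2 \leq C 2^k\veps^{-2}$; but these are cosmetic variations of the same idea, and your lattice version is arguably cleaner.
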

\begin{remark} It is not difficult to check the bound (\ref{eq:fg}) for $V_{\text{ext}} = 0$. For $V_{\text{ext}} \neq 0$, the bound (\ref{eq:fg}) is available in the literature, for non-relativistic systems, \cite{DG}. It would be interesting to extend these estimates to the pseudo-relativistic case, with a constant $C$ that does not depend on the size of the classically confined region.
\end{remark}
\begin{proof}
As in Section \ref{sec:pointWeyl}, the proof is based on a dyadic argument, this time performed in energy and in space. The localization in energy is peformed as in Section \ref{sec:pointWeyl}; observe that in the dyadic argument of Section \ref{sec:pointWeyl} we could have replaced $\mathbbm{1}(\cdot)$ by $\chi(\cdot)$, with $\chi(\cdot)$ the smoothening of the characteristic function of the ball of radius $1$, such that $\chi(t) = 0$ for $t>2$. Thus, we are left with proving that the bound
\begin{equation}\label{eq:fWbd}
\| f_{k}(H-\mu) \mathcal{W}_{z}^{(n/2)} \|_{\text{HS}}^{2} \leq C 2^{k}\varepsilon^{-2} 
\end{equation}
is implied by (\ref{eq:fg}), with $f_k (x) = \chi(2^{k-1}\varepsilon \leq |x| \leq 2^{k} \varepsilon)$, for $k \geq 1$ and $f_0 (x) =  \chi(|x| \leq \varepsilon)$. Let $z\in \mathbb{R}^{3}$ as in (\ref{eq:zasin}). We write:
\begin{equation*}
\begin{split}
\mathcal{W}^{(n/2)}_{z}(x) &= \mathcal{W}^{(n/2)}_{z}(x) \chi(2\varepsilon^{\alpha} | x - z |) + \mathcal{W}^{(n/2)}_{z}(x) (1-\chi(2\varepsilon^{\alpha} | x - z |)) \\
&\equiv A_{z} + B_{z}\;.
\end{split}
\end{equation*}
Observe that, by the assumption on $z$, the function $x\mapsto \chi(2\varepsilon^{\alpha} | x - z |)$ is supported inside $\mathcal{A}_{V,\mu}$. We then estimate:
\begin{equation}\label{eq:WAB}
\begin{split}
&\Big\|  f_{k}(H-\mu) \mathcal{W}^{(n/2)}_{z}\Big\|_{\text{HS}} \\
&\qquad \leq \Big\| f_{k}(H-\mu) A_{z}\Big\|_{\text{HS}} + \Big\| \Big( f_{k}(H-\mu) B_{z} \Big\|_{\text{HS}}\;.
\end{split}
\end{equation}
Consider the second term. Using that $|x-z| \geq \varepsilon^{-\alpha}$, and choosing $n \equiv n(\alpha)$ large enough, we have $B_{z}(x) \leq \sqrt{\varepsilon} \mathcal{W}^{(k)}_{z}(x)$ for some $k<n/2$. Therefore:
\begin{equation*}
\begin{split}
\Big\| f_{k}(H-\mu) B_{z} \Big\|_{\text{HS}} &\leq \sqrt{\varepsilon} \Big\|  f_{k}(H-\mu) \mathcal{W}^{(k)}_{z}\Big\|_{\text{HS}} \\
&\leq  \sqrt{\varepsilon} \Big\|  f_{k}(H-\mu) \mathcal{W}^{(k)}_{z}\Big\|_{\text{HS}}\;.
\end{split}
\end{equation*}
Since $f_{k}(H-\mu) \leq C_{m} / (H + 1)^{m}$, we have:
\begin{equation*}
\Big\|  f_{k}(H-\mu) B_{z}\Big\|_{\text{HS}} \leq C \sqrt{\varepsilon} \Big\| \frac{1}{(H + 1)^{m}} \mathcal{W}^{(k)}_{z} \Big\|_{\text{HS}}\;.
\end{equation*}
Recalling the bound (\ref{eq:WFKbd}), we easily get:
\begin{equation}\label{eq:fBz}
\Big\|  f_{k}(H-\mu) B_{z}\Big\|_{\text{HS}}^{2} \leq K \varepsilon^{-2}\;.
\end{equation}
Consider now the first term in (\ref{eq:WAB}). We write:
\begin{equation*}
A_{z}(x) = \sum_{\ell=0}^{\ell_{*}} \chi_{\ell,z}(x) \mathcal{W}^{(n)}_{z}(x)\;,
\end{equation*}
where:
\begin{equation*}
\begin{split}
\chi_{\ell,z}(x) &= \chi(2^{-\ell} |x-z|) - \chi(2^{-(\ell-1)} |x-z|)\qquad \text{for $\ell>0$,} \\
\chi_{0,z}(x) &= \chi(|x - z|)\;,
\end{split}
\end{equation*}
where $\ell_{*}$ is the smallest integer such that $2^{-\ell_{*}} \leq 2\varepsilon^{\alpha}$. All the functions $\chi_{\ell,z}(x)$ are supported in $\mathcal{A}_{V,\mu}$. We estimate:
\begin{equation*}
\begin{split}
\Big\| f_{k}(H-\mu) A_{z} \Big\|_{\text{HS}} &\leq \sum_{\ell = 0}^{\ell_{*}} \Big\| f_{k}(H-\mu) \chi_{\ell,z} \mathcal{W}^{(n/2)}_{z}\Big\|_{\text{HS}} \\
&\leq \sum_{\ell=0}^{\ell_{*}} 2^{-2n\ell} \Big\| f_{k}(H-\mu) \chi_{\ell,z}\Big\|_{\text{HS}}\;,
\end{split}
\end{equation*}
where we used that $\mathcal{W}^{(n/2)}_{z} \simeq 2^{-2n\ell}$ in the support of $\chi_{\ell,z}$. It is convenient to further decompose $\chi_{\ell,z}$ into a sum of functions with compact support with volume of order $1$. To do this, it is useful to visualize $\chi_{\ell,z}$ as a smoothening of an annulus, with width of order $1$, and radius of order $2^{\ell}$. Thus, it it is clear that we can cover this domain with the union of $2^{2\ell}$ overlapping balls, of radius of order $1$. Correspondingly, we have:
\begin{equation*}
\chi_{\ell,z}(x) = \sum_{j=1}^{2^{2\ell}} \chi_{\ell,j,z}(x)\;,
\end{equation*}
where $\{\chi_{\ell,j,z}(x)\}$ are the smooth characteristic functions of such balls. We then get:
\begin{equation}\label{eq:AestW}
\Big\| f_{k}(H-\mu) A_{z} \Big\|_{\text{HS}} \leq \sum_{\ell=0}^{\ell_{*}} 2^{-2n\ell} \sum_{j=1}^{2^{2\ell}} \Big\| f_{k}(H-\mu) \chi_{\ell,j,z}\Big\|_{\text{HS}}\;.
\end{equation}
Thus, (\ref{eq:fg}) implies that, with our choice of compactly supported function:
\begin{equation*}
\begin{split}
\Big\| f_{k}(H-\mu) A_{z} \Big\|_{\text{HS}} &\leq C\sum_{\ell=0}^{\ell_{*}} 2^{-2n\ell} 2^{2\ell} 2^{k/2}\varepsilon^{-1} \\
&\leq K 2^{k/2}\varepsilon^{-1}\;.
\end{split}
\end{equation*}
Combined with (\ref{eq:fBz}), this proves (\ref{eq:fWbd}), and concludes the proof of Proposition \ref{prp:fg}.
\end{proof}

\end{document}